
\documentclass[journal,final,letterpaper,twoside,twocolumn,10pt]{IEEEtran}
%

%
\usepackage{cite}

%
\ifCLASSINFOpdf
  \usepackage[pdftex]{graphicx}
  \graphicspath{{./figures/}}
\else
  \usepackage[dvips]{graphicx}
  \graphicspath{{./figures/}}
\fi

%
\usepackage{amssymb}
\usepackage{amsmath}
\usepackage{bm}

%
\usepackage{algorithmic}


\usepackage[colorlinks=false]{hyperref}
\hypersetup{
    pdftitle={Fast and Accurate Amplitude Demodulation of Wideband Signals},
    pdfauthor={Mantas Gabrielaitis},
    bookmarksopen=true,         
    colorlinks=true,            
    pdfpagemode=UseOutlines,
    allcolors=black
}



%
\usepackage{amsthm}
\theoremstyle{definition}

\theoremstyle{plain}
\newtheorem{proposition}{Proposition}[section]
\theoremstyle{remark}

%
%

\newcommand{\iD}{i_1 \cdots i_D}
\newcommand{\iDc}{i_1 \hspace{-1pt}\cdot\hspace{-2pt}\cdot\hspace{-2pt}\cdot\hspace{-1pt} i_D}

\newcommand{\In}{\mathcal{I}_{n}}
\newcommand{\Inw}{\mathcal{I}_{n}^{\varpi_{}}}
\newcommand{\Inwm}{\mathcal{I}_{n}^{\omega_{}}}
\newcommand{\Sw}{\mathcal{S}_{\varpi}}
\newcommand{\Swm}{\mathcal{S}_{\omega}}
\newcommand{\Swmbar}{\mathcal{S}_{\scalebox{0.67}{$\bm{\omega}$}}}
\newcommand{\Mwm}{\mathcal{M}_{\omega}}
\newcommand{\Mwmbar}{\mathcal{M}_{\bm{\omega}}}
\newcommand{\Sgeqz}{\mathcal{S}_{\raisebox{0.9pt}{\scalebox{0.5}{$\geq$}\raisebox{-0.95pt}{\scalebox{0.65}{$\mathbf{0}$}}}}}
\newcommand{\Sgeqzbar}{\mathcal{S}_{\raisebox{0.9pt}{\scalebox{0.5}{$\geq$}\raisebox{-0.95pt}{\scalebox{0.61}{$\mathbf{\bar{0}}$}}}}}
\newcommand{\Sgeqs}{\mathcal{S}_{\raisebox{1pt}{\scalebox{0.5}{$\geq\hspace*{-3pt}|$}\scalebox{0.72}{$\mathbf{s}$}\scalebox{0.5}{$|$}}}}
\newcommand{\Sgeqsbar}{\mathcal{S}_{\raisebox{1pt}{\scalebox{0.5}{$\geq\hspace*{-3pt}|$}\raisebox{-0.5pt}{\scalebox{0.72}{$\mathbf{\bar{s}}$}}\scalebox{0.5}{$|$}}}}
\newcommand{\Sgeqc}{\mathcal{S}_{\raisebox{1pt}{\raisebox{0.65pt}{\scalebox{0.58}{$\geq\hspace*{-3pt}|$}}\scalebox{0.72}{$\mathbf{\hat{c}}$}\raisebox{0.65pt}{\scalebox{0.58}{$|$}}}}}
\newcommand{\Sgeqss}{\mathcal{S}_{\raisebox{1pt}{\scalebox{0.4}{$\geq\hspace*{-3pt}|$}\scalebox{0.52}{$\mathbf{s}$}\scalebox{0.4}{$|$}}}}
\newcommand{\Sleqo}{\mathcal{S}_{\raisebox{1pt}{\scalebox{0.5}{$|..|\hspace*{-3pt}\leq\hspace*{0pt}$}}\scalebox{0.67}{$\mathbf{1}$}}}
\newcommand{\Sleqobar}{\mathcal{S}_{\raisebox{1pt}{\scalebox{0.5}{$|..|\hspace*{-3pt}\leq\hspace*{0pt}$}}\scalebox{0.67}{$\mathbf{\bar{1}}$}}}
\newcommand{\Sd}{\mathcal{S}_{\raisebox{1.0pt}{\scalebox{0.45}{\{}} \raisebox{0.27pt}{\scalebox{0.55}{1}} \raisebox{1.0pt}{\scalebox{0.45}{\}}},d}}
\newcommand{\Sdbar}{\mathcal{S}_{\raisebox{1.0pt}{\scalebox{0.45}{\{}} \raisebox{0.27pt}{\scalebox{0.55}{1}} \raisebox{1.0pt}{\scalebox{0.45}{\}}},\scalebox{0.70}{$\mathbf{d}$}}}
\newcommand{\Cd}{\mathcal{C}_{d}}
\newcommand{\Cdbar}{\mathcal{C}_{\mathbf{d}}}
%


%

\newcommand{\PropRef}[1]{{\color{black}\emph{\hyperref[#1]{Proposition~}\ref{#1}}}}
%


%


%


%


\hyphenation{im-por-tant-ly}

\begin{document}

\title{Fast and Accurate Amplitude Demodulation\\of Wideband Signals}

\author{Mantas~Gabrielaitis\thanks{M.~Gabrielaitis is with the Institute of Science and Technology Austria, 3400 Klosterneuburg, Austria (e-mail: \href{mailto:mantas.gabrielaitis@ist.ac.at}{mantas.gabrielaitis@ist.ac.at})}
}

\markboth{ACCEPTED FOR PUBLICATION IN IEEE TRANSACTIONS ON SIGNAL PROCESSING}%
{Gabrielaitis: Fast and Accurate Amplitude Demodulation of Wideband Signals}

\maketitle

\begin{abstract}
Amplitude demodulation is a classical operation used in signal processing. For a long time, its effective applications in practice have been limited to narrowband signals. In this work, we generalize amplitude demodulation to wideband signals. We pose demodulation as a recovery problem of an oversampled corrupted signal and introduce special iterative schemes belonging to the family of alternating projection algorithms to solve it. Sensibly chosen structural assumptions on the demodulation outputs allow us to reveal the high inferential accuracy of the method over a rich set of relevant signals. This new approach surpasses current state-of-the-art demodulation techniques apt to wideband signals in computational efficiency by up to many orders of magnitude with no sacrifice in quality. Such performance opens the door for applications of the amplitude demodulation procedure in new contexts. In particular, the new method makes online and large-scale offline data processing feasible, including the calculation of modulator-carrier pairs in higher dimensions and poor sampling conditions, independent of the signal bandwidth. We illustrate the utility and specifics of applications of the new method in practice by using natural speech and synthetic signals.
\end{abstract}


\begin{IEEEkeywords}
Alternating projections, amplitude demodulation, convex programming, fast algorithms, multidimensional signals, nonuniform sampling, speech processing, wideband signals.
\end{IEEEkeywords}

\section{Introduction}

\IEEEPARstart{A}{mplitude} demodulation refers to the decomposition of a signal into a product of a slow-varying modulator-envelope and a fast-varying carrier. First introduced in radio communications \cite{Vakman1998}, this procedure has found applications in data acquisition and processing related to a broad range of phenomena. Automatic speech recognition \cite{Kingsbury1998}, atomic force microscopy \cite{Ruppert2017}, ultrasound imaging \cite{Wachinger2012}, brainwave \cite{Ktonas1980}, seismic trace \cite{Taner1979}, and fingerprint \cite{Larkin2001} analyses are a few among many examples to mention.

Originally, amplitude demodulation was intended for use with signals built of locally sinusoidal, i.e., narrowband, carri-ers. Several classical approaches excel in this setting, with Gabor's analytic-signal (AS) method being a long-standing champion \cite{Gabor1946, Vakman1996}. Nonetheless, many relevant problems inevitably require demodulating signals that feature wideband carriers, typically of (quasi)-harmonic, (quasi)-random, or spike-train origin \cite{Wilson1991, Smith2002, Goswami2019, Lin2001, Duck2002, Gottlieb1970, Felblinger1997, Gill2005, Liu2016} (see Suppl.\,Mat.\,H for an overview). When applied to them, the classical techniques fail, misleadingly mixing the carrier and modulator information \cite{Sell2010, Sell2010b}.

For a long time, no consistent and accurate way of demodulating wideband signals was known. Typically, a proxy of the modulator would be obtained by rectifying and then low-pass filtering the signal. Different implementations of this procedure, each adapted for a specific signal class, were suggested (see, e.g., \cite{Libbey1994, Platt1998, Gill2005}). The estimates of signal modulators obtained in this way, however, are neither accurate nor consistent between different methods. The carriers and modulators are not appropriately separated either, i.e., they can be demodulated further by iterating the same procedure \cite{Turner2010}. Moreover, the carrier estimates are often unbounded, even in well-defined situations (see, e.g., \cite[Fig.\,3.1]{Turner2010}).

Recently, two promising demodulation approaches suitable to signals with arbitrary bandwidths have been formulated. Turner and Sahani shaped demodulation into a statistical inference problem \cite{Turner2010, Turner2011}. In this so-called probabilistic amplitude demodulation (PAD) approach, the modulator and carrier are inferred from the signal as latent variables of an appropriately selected statistical model. Mathematically, PAD defines a maximization of a posteriori probability, a high-dimensional nonlinear optimization task. In another work, Sell and Slaney chose a deterministic route to demodulation \cite{Sell2010}. In their linear-domain convex (LDC) approach, the modulator is described as a minimum-power signal with penalized high-frequency terms lying above the original waveform. This problem is convex and thus amenable to more efficient optimization methods than the PAD.

The PAD and LDC techniques separate the modulator and carrier information of various synthetic wideband signals with a high degree of accuracy \cite{Turner2010, Sell2010}. The principal weakness of these approaches is a huge associated computational burden, which impedes their use in practical situations (see Section~\ref{sec:Performance} for the performance evaluations). In particular, online or large-scale offline signal processing is out of reach for the PAD and LDC demodulations. Besides, derivations of these methods are guided more by high-level modulator or carrier properties and computational tractability rather than strict recovery conditions. Hence, the boundaries of their validity in the context of real-world signals are somewhat blurred.

In this work, we frame demodulation as a problem of modulator recovery from an unlabeled mix of its true and corrupted sample points. We show that, under some loose constraints on carriers and modulators, high-accuracy demodulation can be achieved through exact or approximate norm minimization. We introduce different versions of custom-made alternating projection algorithms and test them in numerical experiments to solve this task. The new approach is shown to be free of the performance limitations inherent to the PAD and LDC methods. In particular, it combines the computational economy of the classical AS technique with the capacity to recover a wide range of arbitrary-bandwidth signals. We reveal the power of the new approach in terms of efficiency, accuracy, consistency, and robustness to corrupted data through theoretical analysis and illustrate it using synthetic signals with known structure. The use of the new method in realistic online and offline settings is demonstrated by applying it to natural speech.

\section{Mathematical Formulation of the Problem \label{sec:Problem}}

In what follows, we assume the representation of a real-valued signal $s(t)$ formed by a finite collection of its values uniformly sampled over a limited time interval: $s_i \equiv s(t_i)$, $i \in \In = \{ 1,2,\ldots,n \}$. Thus, a realization of the signal, \mbox{$\mathbf{s} \equiv (s_1, s_2, \ldots, s_n)^T$}, is an element of an $n$-dimensional Euclidean space $\mathbb{R}^n$, i.e., a linear space equipped with the inner product $\langle \mathbf{s}^{(1)}, \mathbf{s}^{(2)}\rangle = \sum_{i=1}^n (s^{(1)}_i \cdot s^{(2)}_i)$, which induces the Euclidean norm $\| \mathbf{s} \|_2 = \sqrt{\langle \mathbf{s} , \mathbf{s} \rangle}$. We use modulo $n$ arithmetic for indexes of vector components in this work.

\subsection{Demodulation constraints \label{subsec:Problem-Set-Theo}}

The task of demodulation is to factorize a signal $\mathbf{s} \in \mathbb{R}^n$ into a modulator $\mathbf{m} \in \mathbb{R}^n$ and a carrier $\mathbf{c} \in \mathbb{R}^n$:
\begin{equation}
\mathbf{s} = \mathbf{m} \circ \mathbf{c},\label{eq:ProbForm1}
\end{equation}
where symbol $\circ$ denotes an elementwise product of two vectors. There exists an uncountable number of pairs of $\mathbf{m}$ and $\mathbf{c}$ that satisfy \eqref{eq:ProbForm1}. Thus, further constraints are needed to define its unique solution. It is precisely these constraints that give a distinct character to different demodulation methods and set the domain of their validity \cite{Vakman1996, Loughlin1996, Turner2010, Sell2010}.

In this work, we introduce the extra demodulation restrictions by imposing some general assumptions on $\mathbf{m}$ and $\mathbf{c}$.

We define feasible modulators as elements of a convex set
\begin{equation}
\Mwm = \Sgeqz \cap \Swm,\label{eq:ProbForm2_}
\end{equation}
where
\begin{align}
\Sgeqz &= \{\mathbf{x} \in \mathbb{R}^n: x_i \geq 0, \, i \in \In \}, \nonumber\\
\Swm &= \{\mathbf{x} \in \mathbb{R}^n: (\mathbf{F}\mathbf{x})_i=0, \, i \in (\In \setminus \Inwm) \}, \label{eq:ProbForm3_}\\
\Inwm &= \{i \in \In: i \leq \omega \} \cup \{i \in \In: i > n+1-\omega \}. \nonumber
\end{align}
In \eqref{eq:ProbForm3_}, $\mathbf{F}$ denotes the operator of the unitary discrete Fourier transform (DFT), and $(\ldots)_i$ marks the $i$-th component of the argument vector. Hence, in our framework, modulators are nonnegative low-pass signals whose rate of variation is limited by the cutoff frequency $\omega$ (with $1 \leq \omega \leq \lceil n/2 \rceil$), which parametrizes $\Mwm$. This is a formal definition of the classical modulator-envelope \cite{Vakman1998,Cohen1999}.

We declare feasible carriers as elements of a nonconvex set
\begin{equation}
\Cd = \Sleqo \cap \Sd,\label{eq:ProbForm4_}
\end{equation}
where
\begin{equation}
\begin{aligned}
\Sleqo &= \{\mathbf{x} \in \mathbb{R}^n: |x_i| \leq 1, \, i \in \In \}, \\
\Sd &= \big\{\mathbf{x} \in \mathbb{R}^n: {\textstyle(\forall i)\sum_{j=i}^{i+d-1} I_{\{1\}}(|x_j|) \geq 1,} \\ &\hspace{59pt}{\textstyle(\exists i)\sum_{j=i}^{i+d-1} I_{\{1\}}(|x_j|) = 1} \big\},
\end{aligned}
\label{eq:ProbForm5_}
\end{equation}
with $I_{\{1\}}$ being the indicator function of the singleton $\{1\}$. The set $\Sleqo$ implies the boundedness of $\mathbf{c}$ between $-1$ and $1$. This restriction follows from the standard notion that the time-dependent amplitude of an amplitude-modulated $\mathbf{s}$ is purely set by $\mathbf{m}$. Meanwhile, $\Sd$ fixes to $d$ the maximum gap between any two neighboring components of $\mathbf{c}$ whose absolute values are equal to $1$.\footnote{The requirement of the existence of at least one gap of length $d$ in the definition of $\Sd$ assures that $\mathcal{C}_{d_1} \hspace{-2.5pt} \cap \mathcal{C}_{d_2} \hspace{-1pt} = \hspace{-1pt} \emptyset$ if $d_1 \hspace{-1pt} \neq \hspace{-1pt} d_2$. Such parametrization of the carrier set allows specifying more definite demodulation conditions.} As shown next, this constraint allows formulating extensive demodulation guarantees while only moderately affecting the scope of relevant carriers. Bandwidth-wise, $\Cd$ covers the whole range, from zero (sinusoidal) to flat (random spike) bandwidth signals, and defines the qualifier ``wideband'' used in this work. Note that the bandwidth \mbox{of $\mathbf{c} \in \Cd$} is mostly determined not by $d$ but by the arrangement of the $|c_i|=1$ and other sample points.\footnote{For example, even $\mathcal{C}_1$, which features the most limited repertoire among all $\Cd$, has zero-bandwidth elements (consider the $\mathbf{c}$ with $c_i=(-1)^i$) and elements with approximately flat amplitude spectra (consider a $\mathbf{c}$ with $c_i$ randomly chosen from $\{-1,1\}$).} Instead, as we see next, $d$ decides whether a chosen $\mathbf{c} \in \Cd$ can be restored after modulation.

\subsection{Demodulation as modulator recovery \label{subsec:Problem-Formulation-B}}

Note that, assuming $\mathbf{c} \in \Cd$, $|\mathbf{s}|$ can be seen as a corrupted version of $\mathbf{m}$: $|s_i|=m_i$ when $|c_i|=1$, and $|s_i| \neq m_i$ otherwise.  Further, if $\mathbf{m}$ can be found from $|\mathbf{s}|$, $\mathbf{c}$ follows from \eqref{eq:ProbForm1} uniquely ($c_i = s_i / m_i$), except the sample points with $m_i=0$. The latter, if any, are sparse and can be typically interpolated from the neighboring points. Hence, in our case, demodulation is virtually a problem of reconstructing $\mathbf{m}$  from a mix of its \textit{true} ($i:|s_i|=m_i$) and \textit{corrupted} ($i:|s_i| \neq m_i$) sample points when the class of each point is unknown. This viewpoint is at the core of the developments that follow next.

\subsection{Modulator recovery through norm minimization \label{subsec:Problem-Formulation-C}}

Our approach to demodulation builds around the estimator
\begin{equation}
\mathbf{\hat{m}} = \underset{\mathbf{x} \in \Sgeqss \cap \Sw}{\arg\min} \| \mathbf{x} \|_2,
\label{eq:Recovery2_1_}
\end{equation}
where $\Sgeqs = \{\mathbf{x}\in\mathbb{R}^n: x_i \geq |s_i|, i \in \mathcal{I}_n \}$. Note that $\mathbf{m} \in $ $\Sgeqs \cap \Sw$ if $\varpi \geq \omega$. The restriction corresponding to $\Sgeqs$ assures that $\mathbf{x}$ does not fall below $\mathbf{m}$ at the true sample points, i.e., points where $m_i=|s_i|$. If, besides, the true sample points are spread densely enough, we expect the norm minimization to enforce $\hat{m}_i=m_i$ at these points. But then, $\mathbf{\hat{m}} = \mathbf{m}$ by the discrete sampling theorem. The foundation for this intuitive consideration is laid by the following results (see Suppl.\,Mat.\,B for the proofs).
\begin{proposition}
For almost every $\mathbf{m} \in \Mwm$, $\mathbf{\hat{m}} = \mathbf{m}$ only if $\varpi \geq \omega$, and \mbox{$\mathbf{c} \in \Cd$} with $n_s \equiv \sum_{i=1}^nI_{\{1\}}(|c_i|)\geq \varpi+\omega-1 \implies d \leq n-(\varpi+\omega-2)$.\footnote{In fact, as follows from the proof of this proposition in Suppl.\,Mat.\,B, the condition that $\mathbf{c} \in \Cd$ for at least some $d$ is necessary for strictly every $\mathbf{m}$.}
\label{prop:Recovery1_}
\end{proposition}
\begin{proposition}
Consider $\mathbf{m} \in \Mwm$ and $\mathbf{\tilde{c}} \in \mathcal{C}_{\tilde{d}}$ with $|\tilde{c}_i|=1$ for $i \in \mathcal{J}_n \subseteq \mathcal{I}_n$, and $\tilde{c}_i=0$ otherwise. If $\mathbf{\hat{m}} = \mathbf{m}$ holds for the $\mathbf{m}$ and $\mathbf{\tilde{c}}$, then it also holds for every pair made of the same $\mathbf{m}$ and any $\mathbf{c} \in \Cd$ with $d \leq \tilde{d}$ and $|c_i|=1$ for $i \in \mathcal{J}_n$.
\label{prop:Recovery4_}
\end{proposition}
\begin{proposition}
Assume $\mathbf{m} \in \Mwm$ and $\mathbf{c} \in \Cd$ with $\varpi \geq \omega$. If, additionally, there exist $d \in \mathcal{I}_n$ and $i \in \mathcal{I}_{d}$ such that $n_s \equiv (n/d) \in \mathbb{N}_+$, $n_s \geq \varpi+\omega-1$, and $|c_{i+(j-1)\cdot d}|=1$ for every $j\in \mathcal{I}_{n_s}$, then $\mathbf{\hat{m}} = \mathbf{m}$.
\label{prop:Recovery2_}
\end{proposition}

\PropRef{prop:Recovery1_} reveals the tight match of $\mathbf{\hat{m}}$ to $\Cd$: no $\mathbf{m} \in \Mwm$ can be inferred from $\mathbf{s}$ by $\mathbf{\hat{m}}$ precisely if $\mathbf{c} \notin \Cd$. It also establishes the central role of the presence of true sample points in the recovery: for almost every $\mathbf{m} \in \Mwm$, at least the number $\varpi + \omega-1$ of such points is needed. \PropRef{prop:Recovery4_} further consolidates the latter view by stating that the success of the exact recovery of an $\mathbf{m} \in \Mwm$ via $\mathbf{\hat{m}}$ is fully determined by the number and positions of the true sample points. In particular, if exact demodulation is possible for some $\mathbf{\tilde{c}}$ with $\tilde{c}_i \in \{0,1\}$, then it is possible for any $\mathbf{c}$ with $|c_i|=1$ at $i \in \{j:\tilde{c}_j=1\}$ independent of other sample points.

In \PropRef{prop:Recovery1_}, $\varpi \geq \omega$ and $n_s \geq \varpi + \omega - 1$ imply $n_s \geq 2\omega-1$, which is a sufficient condition for $\mathbf{m}$ recovery in the classical setup when all true sample points are known (see the remark below the proof of \textit{Proposition~A.1} in Suppl.\,Mat.\,A). Hence, the data corruption manifesting in our problem necessitates further constraints on the number or positions of true sample points. In particular, \PropRef{prop:Recovery2_} certifies a full recovery of $\mathbf{m}$ if $\varpi \geq \omega$, and there exists a (not necessarily known) subset of at least $\varpi+\omega-1$ regularly-spaced true sample points. The latter condition covers a wide range of practically relevant carriers, including: (1)~the classical $\sin(2\pi\nu\mathbf{t}+\phi)$ with $\nu \geq \omega$, (2)~harmonic signals, (3)~regular spike-trains of $|c_i|=1$. More generally, any (non)stationary time-series with regularly placed $|c_i|=1$ regardless of the remaining points are eligible.

In addition to the regularity of true sample points, \PropRef{prop:Recovery2_} requires $n/d$ to be an integer. Nevertheless, numerical experiments reveal that both of these conditions can be ignored without practically relevant consequences (see Suppl.\,Mat.\,C and Fig.\,10 there). In particular, we found that the discrepancy between $\mathbf{m}$ and $\mathbf{\hat{m}}$ is vanishing with an overwhelming probability for any $\mathbf{c} \in \Cd$ if $\varpi \geq \omega$, and $\lceil n/d \rceil \geq 2\varpi-1$. This result noticeably extends the scope of recovery conditions over the domain of practically relevant (quasi-)regular and stochastic carriers. Among the examples are nonstationary sinusoidal and harmonic signals and arbitrary spike-trains with the distance between neighboring spikes at or below $d$ points. Note that $n_s \geq \lceil n/d \rceil$ by the definition of $\Cd$. Hence the relaxation of the strict regularity condition on the $|c_i|=1$ sample points comes at the expense of a slightly tighter constraint on $n_s$ necessary for exact recovery of $\mathbf{m}$: compare $n_s \geq 2\varpi - 1$ vs. $n_s \geq \varpi + \omega -1$.\footnote{This statement is exact and is established as an intermediate result in the proof of \PropRef{prop:Recovery1_}.}

Another important generalization of the recovery conditions comes with the following inequality:
\begin{proposition}
Consider $\mathbf{m} \in \Mwm$ and $\mathbf{c} \in \Sleqo$. Take $n_s \geq 2\varpi-1$ sample points of $\mathbf{s} = \mathbf{m} \circ\mathbf{c}$ whose indexes are defined as entries of any chosen $\mathbf{r} \in \mathbb{N}_+^{n_s}$ with $r_{i+1}-r_{i} = n/n_s$ for every $i \in \mathcal{I}_{n_s}$. Then,
\label{prop:Recovery3_}
\begin{equation}
\textstyle
\|\mathbf{m}-\mathbf{\hat{m}}\|_2 / \|\mathbf{m}\|_2 \leq \sqrt{1-\sum_{i=1}^{n_s}s_{r_i}^2 / \sum_{i=1}^{n_s}m_{r_i}^2}.
\label{eq:Recovery3_1_}
\end{equation}
\end{proposition}
\noindent Hence, if one can find a sequence of at least $2\varpi-1$ regularly-spaced sample points with $|s_i|$ sufficiently close to $m_i$, then the relative recovery error is close to 0 in terms of \eqref{eq:Recovery3_1_}. This result endows $\mathbf{\hat{m}}$ with the stability to discrepancies from the recovery conditions discussed earlier. At the same time, it provides approximate recovery guarantees for a wider range of stochastic and (quasi-)regular carriers besides  those with fairly densely packed $|c_i|=1$ sample points. Due to the low-pass restriction on $\mathbf{m}$, \eqref{eq:Recovery3_1_} is expected to hold approximately for an irregular $\mathbf{r} \in \mathbb{N}_+^{n_s}$ with $r_{i+1}-r_i \leq \lceil n/n_s \rceil$ as well.

We finally note that, whereas $\omega$ and $d$ are fixed properties of $\mathbf{m}$ and $\mathbf{c}$, $\varpi$ is a control parameter that must be specified. An appropriate $\varpi$, which satisfies the recovery conditions formulated above, can only be selected by using prior knowledge on $\mathbf{m}$ and $\mathbf{c}$ or found in a supervised learning setup.

\subsection{Relaxation of the exact minimum-norm requirement \label{subsec:Problem-Formulation-D}}

The norm-minimizing property of $\mathbf{\hat{m}}$ in \eqref{eq:Recovery2_1_} is critical in formulating sharp recovery conditions. However, from a practical point of view, little would be lost if another estimator $\mathbf{\hat{m}}$ with only slightly larger than the minimum norm among all elements of $\Sgeqs \cap \Sw$ is used. Thus, we relax \eqref{eq:Recovery2_1_} to 
\begin{equation}
\begin{aligned}
\text{find}& \qquad \mathbf{\hat{m}} \in \Sgeqs \cap \Sw \\
\text{subject to}& \qquad \|\mathbf{\hat{m}} \|_2 \simeq \underset{\mathbf{x} \in \Sgeqss \cap \Sw}{\arg\min} \| \mathbf{x} \|_2
\end{aligned}
\label{eq:Recovery4_1_}
\end{equation}
To specify the otherwise ambiguous relation operator $\simeq$, we request that $\mathbf{\hat{m}}$ obtained through \eqref{eq:Recovery4_1_} recovers $\mathbf{m}$ exactly, i.e., is norm-minimizing, for sinusoidal, harmonic, and spike-train carriers covered by \PropRef{prop:Recovery2_}. As we see later, this restriction regularizes the numerical algorithms formulated in the present work for sufficiently accurate demodulation well beyond those three classes of $\mathbf{c}$. The advantage brought by the approximation is computational efficiency.

\subsection{Method of solution \label{subsec:Problem-Method}}

The algorithms that we introduce to solve \eqref{eq:Recovery2_1_} and \eqref{eq:Recovery4_1_} in this work fall in the domain of the so-called methods of alternating projections (APs). The defining feature of each AP method is an iterative calculation of a feasible point ($\mathbf{\hat{m}} \in \Sgeqs \cap \Sw$ in our case) via alternating metric projections of its current estimate onto the separate constraint sets. Initially proposed by von Neumann for two closed subspaces \cite{Neumann1951}, this approach was later extended to arbitrary closed convex sets of a Hilbert space (see \cite{Escalante2011} for a review). Various implementations of the AP algorithms exist, featuring different domains of application, rate of convergence, and additional requirements satisfied by the solutions \cite{Bauschke1996, Escalante2011}.

We provide a rigorous mathematical basis on which the AP algorithms for solving the demodulation problem rely in Suppl.\,Mat.\,D,\,E,\,F. For a practical comprehension of the material that follows next, it is sufficient to know that:
\begin{itemize}
\item The sets $\Sgeqs$ and $\Sw$ are closed and convex.
\item A metric projection, or simply a projection henceforth, of $\mathbf{z} \in \mathbb{R}^n$ onto a closed convex set $\mathcal{S} \subset \mathbb{R}^n$ is a unique $\mathbf{x_z} \hspace{-1.75pt} \in \hspace{-1.65pt}\mathcal{S}$ with the smallest distance, i.e., $\|\mathbf{x_z}\hspace{-1pt} - \hspace{-0.4pt} \mathbf{z}\|_2$, from $\mathbf{z}$.
\item The projections onto $\Sgeqs$ and $\Sw$ are respectively achieved by operators
\begin{align}
\mathbf{P}_{\Sgeqss}[\mathbf{z}] = |\mathbf{s}| + (\mathbf{z}-|\mathbf{s}|) \circ \theta (\mathbf{z}-|\mathbf{s}|)
\label{eq:MathPrel1x}
\end{align}
and
\begin{equation}
\mathbf{P}_{\Sw}[\mathbf{z}] = (\mathbf{F^{-1}} \, \mathbf{W}_\varpi \, \mathbf{F}) \, \mathbf{z}.\label{eq:MathPrel3x}
\end{equation}
Here, $\theta(\ldots)$ is the Heaviside step function evaluated elementwise. $\mathbf{W}_\varpi$ is a diagonal matrix such that $(W_{\varpi})_{ii} = 1$ if $i \in \Inw$, and $(W_{\varpi})_{ii} = 0$ otherwise.
\end{itemize}

To emphasize the nature of the underlying numerical algorithms, we name our new approach as AP demodulation.

\subsection{Relation to other problems and approaches}

\begin{figure*}
\centering
\includegraphics[width=1\textwidth]{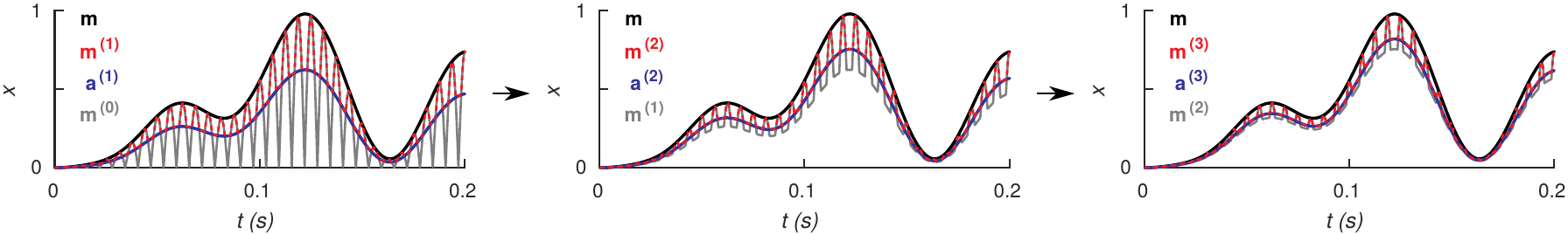}
\caption{The first three iterations of the AP-B algorithm applied to an amplitude-modulated sinusoidal signal. $\mathbf{m}$ stands for the real modulator.}
\label{fig:1}
\end{figure*}

Demodulation is a counterpart of a widely known and studied problem of blind deconvolution: $\mathbf{s} = \mathbf{m} \, \raisebox{0.7pt}{{$\scriptstyle\circledast$}} \, \mathbf{c}$. 
Indeed, both tasks admit the algebraic form of each other in the Fourier domain. Nevertheless, the properties of $\mathbf{m}$ and $\mathbf{c}$ inherent to practical instantiations of amplitude demodulation and blind deconvolution differ significantly. These differences predetermine the need for distinctive strategies to solve the respective tasks, as discussed next.

One of the most powerful convex-programming-based deconvolution approaches, introduced in \cite{Ahmed2014}, builds on the assumption that $\mathbf{m}$ and $\mathbf{c}$ belong to known low-dimensional subspaces. There, recovery of $\mathbf{m}$ and $\mathbf{c}$ is achieved by minimiz-ing the nuclear, atomic, $\ell_1$, or $\ell_{2,1}$ norms of their outer product (in the subspace representation) subject to linear measurement constraints of $\mathbf{s}$ \cite{Ahmed2014, Ling2015, Chi2016, Xie2019}. This scheme successfully solves many practically relevant blind deconvolution cases, such as image deblurring, multipath channel protection, and super-resolution microscopy \cite{Ahmed2014, Xie2019}. However, the low-dimension subspace assumption, a crucial prerequisite of the approach, renders it inapt to deal with realistic carriers in the amplitude demodulation context. Indeed, even a sinusoidal carrier with a fluctuating phase is hardly representable in this frame, not to mention more complex wideband signals met in practice. Moreover, the subspace model of $\mathbf{m}$ and $\mathbf{c}$ does not allow enforcing the amplitude contents to $\mathbf{m}$ exclusively.

Deconvolution problems have also been approached by using AP-like methods \cite{Oppenheim1981, Trussell1984, Kundur1998, Yang1994}. A general strategy of the existing algorithms is to achieve deconvolution by an iterative refinement of both $\mathbf{m}$ and $\mathbf{c}$ upon the requirement of exact \cite{Oppenheim1981, Kundur1998} or approximate \cite{Trussell1984, Yang1994} adherence to the defining equality $\mathbf{s} = \mathbf{m} \, \raisebox{0.7pt}{{$\scriptstyle\circledast$}} \, \mathbf{c}$ and the support region, intensity range, and spectrum constraints implied on $\mathbf{m}$ and $\mathbf{c}$ or $\mathbf{r} = \mathbf{s} - \mathbf{m} \, \raisebox{0.7pt}{{$\scriptstyle\circledast$}} \, \mathbf{c}$. These methods differ significantly between themselves. Each of them achieves satisfactory recovery by a judicious combination of specific constraint sets and the iterative scheme adjusted to specific classes of $\mathbf{m}$ and $\mathbf{c}$. The nonconvexity of $\Cd$ and the absence of efficient explicit projections onto this set makes the application of the known deconvolution methods unsuitable to amplitude demodulation. None of the current AP-like deconvolution methods allow assigning the amplitude contents to $\mathbf{m}$ purely either.

We next note that our formulation of the amplitude demodulation problem in Section~\ref{subsec:Problem-Formulation-B} reveals it as a generalization of the classical task of band-limited signal recovery from true sample points. An AP method known under the name Papoulis-Gerchberg and its variants were successfully applied in the latter setting (see \cite{Ferreira2001} for a review). The differences in the available information on the recoverable signal lead to distinct strategies in algorithmic approaches to these two problems. In particular, the Papoulis-Gerchberg methods rely entirely on known true data. Thus, they are impossible \mbox{to use for} demodulation purposes. The AP algorithms introduced in the present work can be applied in the classical setting. However, not using the available information about the true data makes them inferior to their classical counterparts unless the sample points are fairly uniformly spread, as discussed in Section~\ref{subsec:Problem-Formulation-C}.

The approach suggested in the present work also has some parallels with the LDC demodulation method by \cite{Sell2010}. There, \eqref{eq:ProbForm1} is accompanied by a constraint on the modulator $\mathbf{m}$ expressed as the solution of the quadratic programming problem
\begin{equation}
\begin{aligned}
\text{minimize}& \qquad \| \mathbf{w} \circ \mathbf{F} \mathbf{m} \|_2^2 + \|\mathbf{m}\|_2^2 \\
\text{subject to}& \qquad |s_i| \leq m_i \leq \max[\mathbf{s}] \qquad \forall i \in \In,
\end{aligned}
\label{eq:ProbForm4}
\end{equation}
where $\mathbf{w}$ denotes the weighting vector. \eqref{eq:ProbForm4} was introduced heuristically, trying to quantify the intuitive notion of the modulator-envelope as a signal wrapping $\mathbf{s}$ from above.

Practical applications suggest the LDC method defined by \eqref{eq:ProbForm1} and \eqref{eq:ProbForm4} being computationally most efficient and precise among all current techniques designed for demodulating signals unreachable to classical algorithms \cite{Turner2011, Sell2010}. Thus, we use it as a reference when evaluating the performance of the newly-formulated approach of the present work.

\section{Demodulation Algorithms \label{sec:Algorithms}}

In this section, we formulate three algorithms representing the core arsenal of the AP approach to demodulation. Simplicity, efficiency, and estimation accuracy of the algorithms are the main aspects under consideration. We refer the reader to Suppl.\,Mat.\,F for proofs of all propositions found here.

\subsection{AP-Basic \label{subsec:Algorithms1}}

We start with the simplest possible AP algorithm, therefore named ``AP-Basic'' (AP-B).


\vspace{6pt}
\hrule
\vspace{2pt}
\noindent\textbf{Algorithm:} AP-Basic (AP-B)
\vspace{1pt}
\hrule
\vspace{2.7pt}
\begin{algorithmic}[1]
\STATE \textbf{Set:} $N_{iter}$, $\epsilon_{tol}$
\vspace{0pt}
\STATE \textbf{Initialize:} $i=0$, $\epsilon^{(0)}=\|\mathbf{s}\|_2/\sqrt{n}$, $\mathbf{m}^{(0)} = |\mathbf{s}|$, $\mathbf{a}^{(0)} = \mathbf{0}$
\vspace{0pt}
\WHILE{$\epsilon^{(i)} > \epsilon_{tol}$ \AND $i<N_{iter}$}
\STATE $i = i + 1$
\vspace{0.92pt}
\STATE $\mathbf{a}^{(i)} = \mathbf{P}_{\Sw}[\mathbf{m}^{(i-1)}]$
\vspace{0.92pt}
\STATE $\mathbf{m}^{(i)} = \mathbf{P}_{\Sgeqss}[\mathbf{a}^{(i)}]$
\vspace{0.92pt}
\STATE $\epsilon^{(i)} = \| \mathbf{m}^{(i)} - \mathbf{a}^{(i)}\|_2 / \sqrt{n}$
\ENDWHILE
\vspace{0pt}
\STATE \textbf{Finalize:} $\mathbf{\hat{m}}=\mathbf{m}^{(i)}$
\vspace{2.9pt}
\hrule
\end{algorithmic}
\vspace{7pt}
Here, $N_{iter}$ stands for the maximum number of algorithm iterations. $\epsilon^{(i)}$ is the infeasibility error at the $i$-th iteration, which is used to control the termination of the algorithm. Specifically, $\epsilon^{(i)}$ measures the distance of the modulator estimate $\mathbf{m}^{(i)} \in \Sgeqs$ from $\Sw$ and sets a lower bound on the convergence error: $\epsilon^{(i)} \leq \| \mathbf{m}^{(i-1)} - \mathbf{m}^\dagger \|_2 / \sqrt{n}$ (see Suppl.\,Mat.\,G). The iterative process is stopped when $\epsilon^{(i)}$ drops to the level of a predetermined threshold $\epsilon_{tol} > 0$ or below. $\epsilon_{tol} \leq 0$ would force the completion of all $N_{iter}$ iterations of the algorithm. $\mathbf{\hat{m}}$ denotes the final estimate of the modulator. $\mathbf{\hat{m}}$ arbitrarily close to $\Sgeqs \cap \Sw$ can be reached if $N_{iter}$ is sufficiently large:


\begin{proposition}
A sequence $\mathbf{m}^{(0)},\mathbf{m}^{(1)}, \ldots, \mathbf{m}^{(i)},\ldots$ formed by the AP-B algorithm for $\epsilon_{tol} = 0$ and $N_{iter} \to +\infty$ converges to some $\mathbf{m}^\dagger \in \Sgeqs \cap \Sw$. The convergence is geometric and monotonic, i.e., there exist $\gamma>0$ and $0<r<1$ such that $\| \mathbf{m}^{(i)} - \mathbf{m}^\dagger \|_2 \leq \gamma \cdot r^i$ and $\| \mathbf{m}^{(i+1)} - \mathbf{m}^\dagger \|_2 \leq \| \mathbf{m}^{(i)} - \mathbf{m}^\dagger \|_2$ for $i \geq 0$.
\label{prop:APBSolConv}
\end{proposition}

It can be shown by example that the AP-B does not always provide minimum-norm estimators $\mathbf{\hat{m}}$. However, it is expected to do so at least approximately if some conditions are met. We clarify this next with the help of Fig.\,\ref{fig:1}, which displays the first three iterations of the AP-B applied to an example signal.

First, note that the starting point $\mathbf{m}^{(0)}$ is elementwise not-higher than the real modulator $\mathbf{m}$ (black). $\mathbf{P}_{\Sw}$ maps $\mathbf{m}^{(0)}$ to $\mathbf{a}^{(1)}$, which, by definition of a metric projection, is its best mean-squared-error (MSE) approximation in $\Sw$ (blue). By the definition of $\Sw$, $\mathbf{a}^{(1)}$ is nearly constant over time windows shorter than $n/(2\pi\varpi)$ points. In general, the best constant MSE estimator of a sample of numbers is its average. Thus, as the best MSE estimator of $\mathbf{m}^{(0)}$ over $\Sw$, $\mathbf{a}^{(1)}$ approximates the local average of $\mathbf{m}^{(0)}$ values in a window of $\approx n/(2\pi\varpi)$ points at every moment. If $\varpi \geq \omega$, $\mathbf{c} \in \Sleqo$, and $\approx n/(2\pi\varpi)$ sample points are sufficient to average out the local variations of $\mathbf{c}$, $\mathbf{a}^{(1)}$ is supposed to be proportional to $\mathbf{m}$, at least roughly. The first iteration is completed by the projection of $\mathbf{a}^{(1)}$ back onto $\Sgeqs$ to obtain $\mathbf{m}^{(1)}$ (red).

Applying the same reasoning as above, we deduce that, with each iteration, $\mathbf{a}^{(i)}$, and thus $\mathbf{m}^{(i)}$, approaches $\mathbf{m}$ elementwise (see Fig.\,\ref{fig:1}). In general, $\mathbf{m}^{(i)}$ may exceed the level of the real modulator $\mathbf{m}$ over time windows longer than $\geq n/(2\pi\varpi)$ points for higher $i$ before $\mathbf{m}^\dagger \in \Sgeqs \cap \Sw$ is reached. However, as follows from the considerations of the previous paragraph, such segments of $\mathbf{m}^{(i)}$ would be approximately compatible with $\Sgeqs \cap \Sw$ and would not be considerably affected in subsequent iterations. Hence, $\mathbf{\hat{m}}$ obtained by the AP-B is expected to follow the true sample points of $\mathbf{m}$ tightly. If the number of these points is sufficient, then $\mathbf{\hat{m}} \simeq \mathbf{m}$ as well.

The basis for the above considerations is laid by the fact that they are exact for some important types of carriers:
\\

\begin{proposition}
Consider $\mathbf{m} \in \Mwm$ and $\mathbf{c} \in \Cd$ with $|c_j| = \sum_{k=1}^{n/\nu}(\tilde{c}_{\nu \cdot k} \cdot e^{\imath 2 \pi \nu (k-1) (j-1) / n})$, where $\tilde{c}_{\nu \cdot k} \in \mathbb{C}$ and $n / \nu \in \mathbb{N}$. If $\varpi \geq \omega$ and $\nu \geq \varpi + \omega - 1$, then  a sequence $\mathbf{m}^{(0)}, \mathbf{m}^{(1)}, \ldots, \mathbf{m}^{(i)},\ldots$ formed by the AP-B algorithm for $\epsilon_{tol}=0$ and $N_{iter} \to +\infty$ converges to $\mathbf{m}$.
\label{prop:APBSolConv_b_}

\end{proposition}

\noindent Among others, \PropRef{prop:APBSolConv_b_} encompasses the sinusoidal, harmonic, and regular spike-train carriers covered by \PropRef{prop:Recovery2_}. Thus, in these cases, AP-B satisfies the minimum-norm property, i.e., provides $\mathbf{\hat{m}}$ that converges to a solution of \eqref{eq:Recovery4_1_}. The condition $\nu \geq \varpi + \omega - 1$ in \PropRef{prop:APBSolConv_b_} plays the role of the inequality $n/d \geq \varpi + \omega - 1$ in \PropRef{prop:Recovery2_}.

\subsection{AP-Accelerated \label{subsec:Algorithms2}}

One of the potential weak points of AP algorithms based on pure alternating projections onto convex sets, like the \mbox{AP-B}, is relatively slow convergence \cite{Gubin1967, Youla1982, Franchetti1986}. Indeed, despite the geometric nature of the convergence, the actual number of iterations necessary to reach a specific error level may be arbitrarily large if the factor $r$ in $\| \mathbf{m}^{(i)} - \mathbf{m}^\dagger \|_2 \leq \gamma \cdot r^i$ is sufficiently close to 1. To address this issue, various accelerated AP schemes have been suggested for specific classes of the constraint sets \cite{Gubin1967, Gearhart1989, Bauschke2003}. Here, we propose a parameter-free accelerated version of the AP-B algorithm specifically designed for the demodulation problem. We refer to it as ``AP-Accelerated'' (\mbox{AP-A}).


\vspace{7pt}
\hrule
\vspace{2pt}
\noindent\textbf{Algorithm:} AP-Accelerated (AP-A)
\vspace{1pt}
\hrule
\vspace{3pt}
\begin{algorithmic}[1]
\STATE \textbf{Set:} $N_{iter}$, $\epsilon_{tol}$
\STATE \textbf{Initialize:} $i=0$, $\epsilon^{(0)}=\|\mathbf{s}\|_2/\sqrt{n}$, $\mathbf{m}^{(0)} = |\mathbf{s}|$, $\mathbf{a}^{(0)} = \mathbf{0}$
\WHILE{$\epsilon^{(i)} > \epsilon_{tol}$ \AND $i<N_{iter}$}
\STATE $i=i+1$
\vspace{1pt}
\STATE $\mathbf{b}^{(i)} = \mathbf{P}_{\Sw}[\mathbf{m}^{(i-1)} - \mathbf{a}^{(i-1)}]$
\vspace{1pt}
\STATE $\lambda = \| \mathbf{m}^{(i-1)} - \mathbf{a}^{(i-1)} \|_2^2 / \|\mathbf{b}^{(i)} \|_2^2$
\vspace{1pt}
\STATE $\mathbf{a}^{(i)} = \mathbf{a}^{(i-1)} + \lambda \cdot \mathbf{b}^{(i)}$
\vspace{1pt}
\STATE $\mathbf{m}^{(i)} = \mathbf{P}_{\Sgeqss}[\mathbf{a}^{(i)}]$
\vspace{1pt}
\STATE $\epsilon^{(i)} = \| \mathbf{m}^{(i)} - \mathbf{a}^{(i)} \|_2/\sqrt{n}$
\ENDWHILE
\STATE \textbf{Finalize:} $\mathbf{\hat{m}}=\mathbf{m}^{(i)}$
\vspace{3pt}
\hrule
\end{algorithmic}
\vspace{7pt}

\begin{proposition}
\label{prop:APASolConv}
A sequence $\mathbf{m}^{(0)},\mathbf{m}^{(1)}, \ldots, \mathbf{m}^{(i)},\ldots$ formed by the AP-A algorithm for $\epsilon_{tol} = 0$ and $N_{iter} \to +\infty$ converges to some $\mathbf{m}^\dagger \in \Sgeqs \cap \Sw$. The convergence is monotonic, i.e., $\| \mathbf{m}^{(i+1)} - \mathbf{m}^\dagger \|_2 \leq \| \mathbf{m}^{(i)} - \mathbf{m}^\dagger \|_2$ for $i \geq 0$.
\end{proposition}

Note that $\lambda>1$ except when $\mathbf{P}_{\Sw}$ is the identity operator, i.e., the trivial case of $\mathbf{m}=|\mathbf{s}|$. Indeed, it follows from the definition of $\mathbf{P}_{\Sw}$ [see \eqref{eq:MathPrel3x}] and the unitary property of $\mathbf{F}$ that $\|\mathbf{m}^{(i-1)}-\mathbf{a}^{(i-1)} \|_2^2 > \| \mathbf{P}_{\Sw}[\mathbf{m}^{(i-1)}-\mathbf{a}^{(i-1)}] \|_2^2=\|\mathbf{b}^{(i)}\|_2^2$, if $\mathbf{P}_{\Sw}$ is not the identity operator. It is easy to see that $(\mathbf{a}^{(i)}-\mathbf{a}^{(i-1)})=\lambda \cdot (\mathbf{P}_{\Sw}[\mathbf{m}^{(i-1)}]-\mathbf{a}^{(i-1)})$ in the above algorithm. Moreover, if $\lambda$ is fixed to 1 by force, the AP-A and AP-B algorithms become identical. Therefore, the AP-A produces increments from $\mathbf{a}^{(i-1)}$ to $\mathbf{a}^{(i)}$ that are scaled up compared with those that were obtained by applying the AP-B algorithm for the same iterations. 

To understand the working principle of the AP-A better, recall that $\mathbf{P}_{\Sw}[\mathbf{m}^{(i)}]$, and thus $\mathbf{a}^{(i)}$, are nearly constant over time windows consisting of $<n/(2\pi\varpi)$ points (see Section~\ref{subsec:Algorithms1}). For a semiquantitative analysis, we can assume that this holds exactly. Let us denote a segment of $(\mathbf{m}^{(i-1)}-\mathbf{a}^{(i-1)})$ restricted to such a window by $\mathbf{z}$. Then, $\mathbf{b}^{(i)}$ defined in the same window is just $(l^{-1} \cdot \sum_{j=1}^l z_j) \cdot \mathbf{1}$, and $\|\mathbf{m}^{(i-1)} - \mathbf{a}^{(i-1)} \|_2^2$ corresponds to $\sum_{j=1}^l z_j^2$, where, $l = \lfloor n/(2\pi\varpi) \rfloor$. Consequently, $\lambda \cdot \mathbf{b}^{(i)}$, i.e., the increment from $\mathbf{a}^{(i-1)}$ to $\mathbf{a}^{(i)}$, is given by $\big(\sum_{j=1}^l{z_j^2} / \sum_{j=1}^l{z_j}\big) \cdot \mathbf{1}$. It follows from $\mathbf{m}^{(i-1)}=\mathbf{P}_{\Sgeqss}[\mathbf{a}^{(i-1)}]$ that $\mathbf{z}$ is elementwise nonnegative. Therefore, $\big(\sum_{j=1}^l{z_j^2} / \sum_{j=1}^l{z_j}\big) \leq \max[\mathbf{z}]$. However, $\max[\mathbf{z}]$ corresponds to the difference between the real modulator and $\mathbf{a}^{(i-1)}$ in the considered time window, at least approximately, if $\lceil n/d \rceil \geq 2\varpi-1$. Thus, while up-scaling $\mathbf{a}^{(i)}-\mathbf{a}^{(i-1)}$ at each iteration to accelerate the convergence, the AP-A also ensures that $\mathbf{a}^{(i)}$ stays approximately within the bounds of the real modulator $\mathbf{m}$. This property ensures that $\mathbf{\hat{m}}$ tightly follows $\mathbf{m}$ if the same conditions as required by the AP-B are met.

We further note that $\big(\sum_{j=1}^l{z_j^2} / \sum_{j=1}^l{z_j}\big) = \max[\mathbf{z}]$, i.e., $\mathbf{a}^{(i)}$ reaches $\mathbf{m}$ in a single iteration, if all but one element of $\mathbf{z}$ are equal to zero. Importantly, approximately this situation is faced in reality, as illustrated in Fig.\,\ref{fig:1}. Specifically, with increased $i$, $(\mathbf{m}^{(i-1)}-\mathbf{a}^{(i-1)})$ becomes mainly flat with only a few separate elements considerably above 0 over time windows shorter than $n/(2\pi\varpi)$ points. For comparison, the analogous increment from $\mathbf{a}^{(i-1)}$ to $\mathbf{a}^{(i)}$ is moderate and equals only $\max[\mathbf{z}]/l$ in the case of the AP-B method. These considerations explain the substantial speed-up provided by the AP-A algorithm in practice. They also reveal that any additional acceleration steps in the AP-A would result in overscaled $\mathbf{\hat{m}}$, hence reducing the demodulation accuracy. 

The AP-A algorithm repeats the AP-B in terms of exact recovery guarantees of \PropRef{prop:APBSolConv_b_}:
\begin{proposition}
Consider $\mathbf{m} \in \Mwm$ and $\mathbf{c} \in \Cd$ with $|c_j| = \sum_{k=1}^{n/\nu}(\tilde{c}_{\nu \cdot k} \cdot e^{\imath 2 \pi \nu (k-1) (j-1) / n})$, where $\tilde{c}_{\nu \cdot k} \in \mathbb{C}$ and $n / \nu \in \mathbb{N}$. If $\varpi \geq \omega$ and $\nu \geq \varpi + \omega - 1$, then  a sequence $\mathbf{m}^{(0)}, \mathbf{m}^{(1)}, \ldots, \mathbf{m}^{(i)},\ldots$ formed by the AP-A algorithm for $\epsilon_{tol}=0$ and $N_{iter} \to +\infty$ converges to $\mathbf{m}$.
\label{prop:APASolConv_b_}
\end{proposition}
\noindent This result substantiates the semiquantitative argumentation of the AP-A convergence properties provided above and establishes the respective $\mathbf{\hat{m}}$ as a numerical solution of \eqref{eq:Recovery4_1_}.

\begin{figure*}
\centering
\includegraphics[width=1\textwidth]{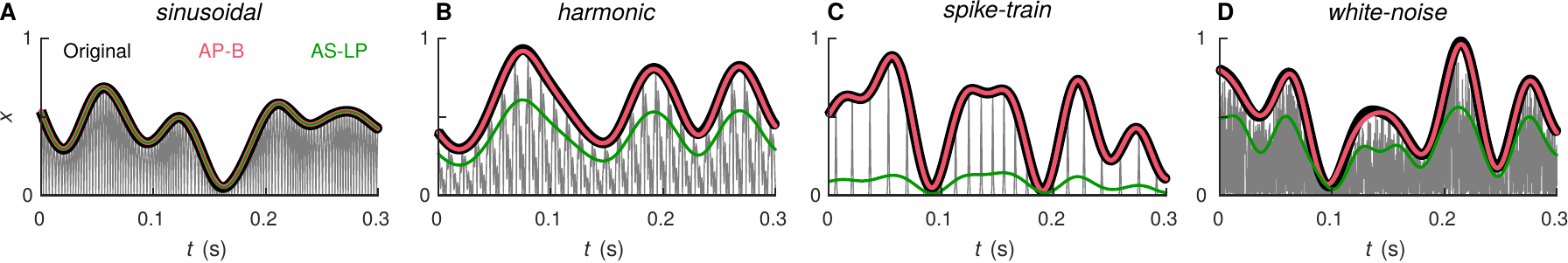}
\caption{Typical examples of the test signals (gray), featuring nonstationary sinusoidal (\textbf{A}), harmonic (\textbf{B}), spike-train (\textbf{C}), and stationary white-noise (\textbf{D}) carriers, and their modulators obtained by using the AP-B (red) and AS-LP (green) algorithms. The signals are represented by their absolute values here. The predefined modulators are shown in black.}
\label{fig:2}
\end{figure*}

\subsection{AP-Projected \label{subsec:Algorithms3}}

As argued above, the AP-A and AP-B algorithms produce modulator estimates that are expected to tightly follow the original $\mathbf{m}$ if the conditions analogous to those discussed in Section~\ref{subsec:Problem-Formulation-C} are met. These estimates, however, do not always hold the minimum-norm property \eqref{eq:Recovery2_1_}. A classical AP scheme that guarantees minimum-norm solutions is known under the name of Dykstra \cite{Dykstra1983, Boyle1986}. In particular, Dykstra's algorithm calculates the projection of a point in $\mathbb{R}^n$ onto the feasible set. Thus, by choosing an appropriate initial condition, the solution with a minimized norm can be obtained (see \PropRef{prop:APPSolConv} next and its proof in Suppl.\,Mat.\,F). We consider a version of this algorithm adapted to solve the demodulation problem and call it ``AP-Projected'' (AP-P).


\vspace{7pt}
\hrule
\vspace{2pt}
\noindent\textbf{Algorithm:} AP-Projected (AP-P)
\vspace{1pt}
\hrule
\vspace{3pt}
\begin{algorithmic}[1]
\STATE \textbf{Set:} $N_{iter}$, $\epsilon_{tol}$
\STATE \textbf{Initialize:} $i=0$, $\epsilon^{(0)}=\|\mathbf{s}\|_2/\sqrt{n}$, $\mathbf{m}^{(0)} = \mathbf{c}^{(0)} = |\mathbf{s}|$
\WHILE{$\epsilon^{(i)} > \epsilon_{tol}$ \AND $i<N_{iter}$}
\STATE$i=i+1$
\vspace{1pt}
\STATE$\mathbf{a}^{(i)} = \mathbf{P}_{\Sw}[\mathbf{m}^{(i-1)}]$
\vspace{1pt}
\STATE$\mathbf{m}^{(i)} = \mathbf{P}_{\Sgeqss}[\mathbf{a}^{(i)}-\mathbf{c}^{(i-1)}]$
\vspace{1pt}
\STATE$\mathbf{c}^{(i)} = \mathbf{m}^{(i)} - (\mathbf{a}^{(i)}-\mathbf{c}^{(i-1)})$
\vspace{1pt}
\STATE$\epsilon^{(i)} = \sqrt{(\| \mathbf{m}^{(i-1)}-\mathbf{a}^{(i)}\|_2^2 + \| \mathbf{m}^{(i)} - \mathbf{a}^{(i)}\|_2^2)/(2 \cdot n)}$
\ENDWHILE
\STATE \textbf{Finalize:} $\mathbf{\hat{m}}=\mathbf{m}^{(i)}$
\vspace{3pt}
\hrule
\end{algorithmic}
\vspace{7pt}

\begin{proposition}
A sequence $\mathbf{m}^{(0)},\mathbf{m}^{(1)}, \ldots, \mathbf{m}^{(i)},\ldots$ formed by the AP-P algorithm for $\epsilon_{tol}=0$ and $N_{iter} \to +\infty$ converges to a unique $\mathbf{m}^\dagger \in \Sgeqs \cap \Sw$ such that $\|\mathbf{m}^\dagger\|_2 \leq \|\mathbf{x}\|_2$ for every $\mathbf{x} \in \Sgeqs \cap \Sw$. The convergence is monotonic, i.e., $\| \mathbf{m}^{(i+1)} - \mathbf{m}^\dagger \|_2 \leq \| \mathbf{m}^{(i)} - \mathbf{m}^\dagger \|_2$ for $i \geq 0$.
\label{prop:APPSolConv}
\end{proposition}

The AP-P differs from the AP-B in that, before projecting a point onto $\Sgeqs$, an increment produced by the projection onto this set in the previous iteration is subtracted. This correction may cause the infeasibility error $\| \mathbf{m}^{(i)} - \mathbf{a}^{(i)}\|_2 / \sqrt{n}$ estimated after projecting onto $\Sgeqs$ to drop to zero intermittently before the final solution is reached, making it an inappropriate option as the stopping criterion. Hence, in contrast to the AP-B and AP-A algorithms, we defined the $\epsilon$ for the AP-P as a combination of the infeasibility errors evaluated after projecting onto both sets $\Sw$ and $\Sgeqs$ at each iteration (see line 8 above). This error measure is strictly positive and converges to zero when $N_{iter} \to +\infty$ \cite{Birgin2005}.

The understanding of the convergence rate of Dykstra's scheme is limited. It was shown that the convergence is geometric for an intersection of half-spaces \cite{Deutsch1994, Deutsch1995}. Nevertheless, no equivalent result exists for other convex sets. Moreover, it was demonstrated that the convergence rate of this algorithm may depend on the initial conditions and may be considerably slower than that of AP algorithms based on pure projections \cite{Bauschke1994}.

\subsection{Computational complexity \label{subsec:Algorithms4}}

Except for the projection operator $\mathbf{P}_{\Sw}$, each iteration of the three formulated AP algorithms relies on vector addition, scalar product, and value update. These are linear in the number of sample points. The operator $\mathbf{P}_{\Sw}$ can be easily implemented by using the direct and inverse fast Fourier transforms (FFTs) and setting the relevant elements of the signal to zero in the Fourier domain. The current state-of-the-art FFT algorithms have an $\mathcal{O}(n \log n)$ time complexity \cite{Duhamel1990}, which, thus, sets the overall time complexity of the AP algorithms introduced in this work. Our numerical experiments suggest that the convergence speed in terms of iteration number is independent of the signal length (see Suppl.\,Mat.\,M).

\section{Performance Tests\label{sec:Performance}}

To evaluate the AP algorithms introduced above, we compared their performance with the AS and LDC demodulation approaches when applied to infer the modulator of predefined synthetic test signals. The LDC approach was implemented by using two state-of-the-art quadratic programming solvers: Gurobi (v8.1.1) \cite{Gurobi2019} and OSQP (v0.6.0) \cite{Stellato2020}. The AS demodulation was achieved by using the FFT-based approach \cite{Marple1999}. In that case, we additionally low-pass filtered the obtained modulator estimate with $\mathbf{P}_{\Sw}$ to regularize it. We refer to this modified demodulation scheme as AS-LP.

\subsection{Test signals \label{subsec:Performance1}}

The test signals were composed as products of a modulator and a carrier: $\mathbf{s} = \mathbf{m} \circ \mathbf{c}$. Four types of $\mathbf{c}$, approximating basic building blocks of real-world signals, were used: nonstationary \textit{sinusoidal}, \textit{harmonic}, and \textit{spike-train}, as well as stationary \textit{white-noise} (see, respectively, (176), (180), (184), and (188) in Suppl.\,Mat.\,I). The former two were combined with modulators of nonstationary \textit{Gaussian} origin, while the latter two types of carriers were paired with the so-called \textit{maximally-uniformly distributed} modulators (see, respectively, (160)\,--\,(162) and (163)\,--\,(168) in Suppl.\,Mat.\,I).

In all cases, modulator and carrier pairs were selected to meet the core recoverability condition $\lceil n/d \rceil \geq 2\varpi-1$, at least approximately. The remaining parameters of $\mathbf{m}$ and $\mathbf{c}$ (see Suppl.\,Mat.\,I) were chosen to imitate realistic conditions as much as possible. For example, in all cases, signals were taken as segments of longer time series, and thus, were not $n$-periodic. The center frequencies of the sinusoidal and harmonic carriers were set so that only sample points with $|c_i| \approx 1$ rather than $|c_i| = 1$ were available.

\subsection{Performance evaluation \label{subsec:Performance12}}

\begin{figure*}
\centering
\includegraphics[width=1\textwidth]{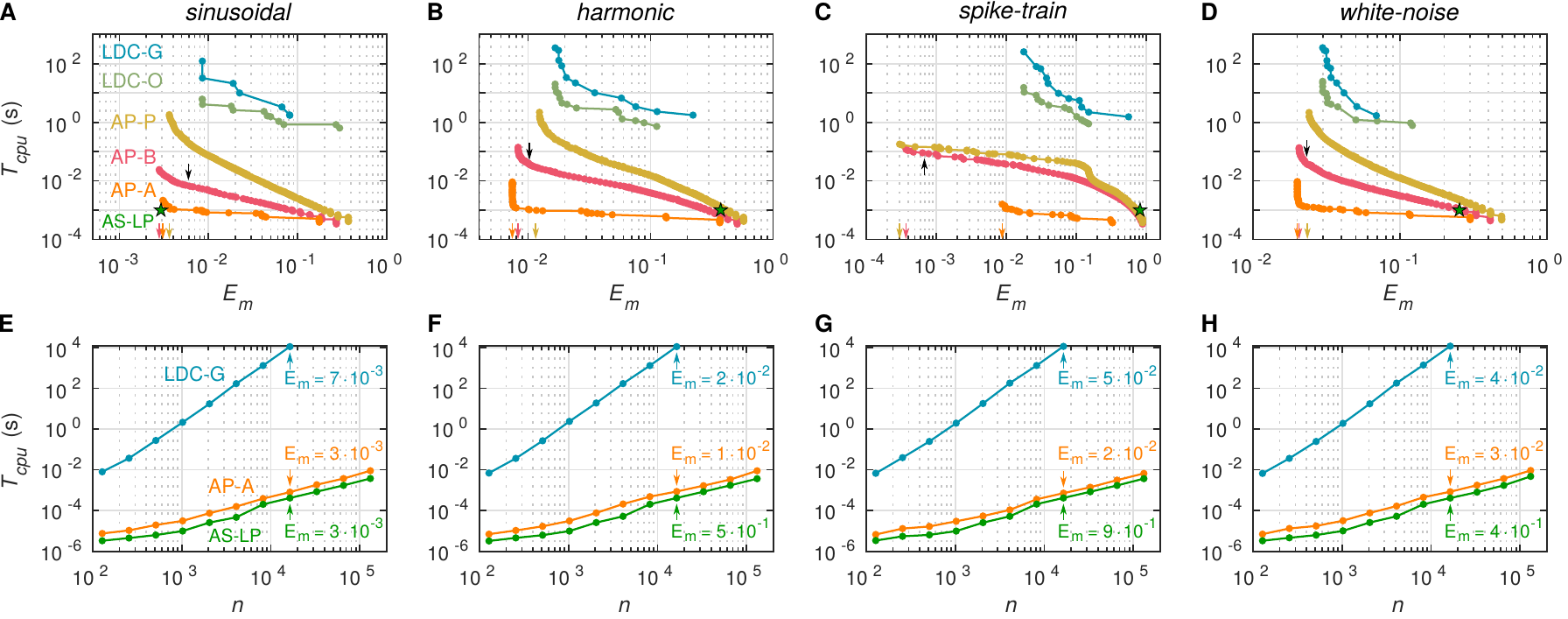}
\caption{Performance evaluation. \textbf{A}--\textbf{D}: Pareto fronts in the $(E_m,T_\mathrm{cpu})$ plane for different demodulation algorithms applied to the four different types of test signals when window splitting is used. Green stars mark the results of the AS-LP method. Color arrowheads point to the lower bounds on $E_m$ for the respective AP algorithms. Black arrowheads indicate demodulation error $E_m$ values of the AP-B algorithm calculated locally for signal windows shown in Fig.\,\ref{fig:2}. \textbf{E}--\textbf{H}: Dependence of the demodulation time $T_\mathrm{cpu}$ on the signal length $n$ at $\epsilon_{tol} = 10^{-3}$ when window splitting is not exploited. $E_m$ values in the legends correspond to demodulation results at $n=2^{14} \approx 1.6 \cdot 10^4$.}
\label{fig:3}
\end{figure*}

Demodulation performance was evaluated by using two complementary measures: 1) error of the modulator estimate, $E_m = \|\mathbf{m} - \mathbf{\hat{m}} \|_2 / \| \mathbf{m}\|_2$; and 2) execution time of the algorithm on the computer, $T_{\mathrm{cpu}}$. We evaluated the AP and LDC algorithms in the mode when $T_\mathrm{cpu}$ depends on the total number of sample points but not on the effective degrees of freedom. This choice made the results general, independent of a selected cutoff frequency $\varpi$. To insure against outliers, we averaged $E_m$ and $T_{\mathrm{cpu}}$ over ten independent signal realizations.

Execution of the AP and LDC algorithms is controlled by a~set of metaparameters whose choice influences the output. Therefore, we aimed for the Pareto fronts, not separate points, in the $(E_m, T_{\mathrm{cpu}})$ plane. Due to the computing speed limitations inherent to the LDC approach, we had to exploit signal decomposition into separate fragments for this analysis. In particular, signals were split into segments that were demodulated separately and then put together \cite{Sell2010}. This allowed achieving a linear growth in the computation time with the total length of the signal, and hence, speeding up the calculations. After identifying the optimal control-parameter combinations, we compared all methods by demodulating whole signals.

Sets of the demodulation control parameters that we considered for the Pareto optimality analysis, including those defining the signal splitting, are provided in Suppl.\,Mat.\,J. Details on the execution of the performance tests on a computer can be found in Suppl.\,Mat.\,K.

\subsection{Results \label{subsec:Performance4}}

Fig.\,\ref{fig:2} shows representative fragments of the test signals from all four classes (gray) and their modulator estimates obtained by using the AS-LP (green) and AP-B (red) algorithms. Whereas the \mbox{AP-B} allows obtaining high-quality estimates $\mathbf{\hat{m}}$ in all four cases, the AS-LP does so only for sinusoidal signals.

Results of the performance evaluation in the form of Pareto fronts in the $(E_m,T_\mathrm{cpu})$ plane for $n=2^{15}$ are displayed in Fig.\,\ref{fig:3}\,A--D. Panels E--H of the same figure show $T_\mathrm{cpu}$~vs.~$n$ relations derived by using no window splitting. A closer analysis of these data reveals the following:
\begin{enumerate}
\item
The AP algorithms feature lower bounds on the demodulation error $E_m$ than the LDC method (Fig.\,\ref{fig:3}\,A--D).
\item The AP algorithms are up to five orders of magnitude faster than their LDC counterparts for achieving the same $E_m$ when optimal signal window splitting is used (Fig.\,\ref{fig:3}\,A--D). The difference is even more pronounced when no window splitting is assumed (Fig.\,\ref{fig:3}\,E--H). For example, to process a 1\,s length signal sampled at 16\,kHz, the LDC needs $10^4$\,s of CPU time, in contrast to $10^{-3}$\,s taken by the AP-A. 
\item $T_\mathrm{cpu}$ varies substantially (up to three orders of magnitude) even between different AP algorithms (Fig.\,\ref{fig:3}\,A--D). The AP-A ranks as the fastest, and the AP-P as the slowest one for all tested signals.
\item Despite the differences in $T_\mathrm{cpu}$, all AP algorithms feature similar lower bounds on $E_m$, except the spike-train signals, when the AP-B and AP-P can noticeably surpass the AP-A on the relative scale (Fig.\,\ref{fig:3}\,A--D). Nevertheless, on the absolute scale, the AP-A still performs reasonably well.
\item For all tested signals, the AP-A algorithm outperforms the AS-LP-based demodulation in the sense that it can achieve the same or smaller errors with the same $T_\mathrm{cpu}$ (Fig.\,\ref{fig:3}\,A--D). Moreover, compared with the AS-LP, AP algorithms exhibit much lower bounds on $E_m$.
\item Even without the window splitting (when the highest demodulation accuracy is attained), the AP-A algorithm takes only 2--3 times longer than the AS-LP method (Fig.\,\ref{fig:3}\,E--H).
\end{enumerate}

We found that the decrease in $E_m$ along the Pareto fronts is mainly determined by the increase in the demodulation window size. In particular, the lower bounds on $E_m$ are achieved by the particular algorithms when the signal is demodulated using no window splitting. The relatively lower precision of the AP-A algorithm compared with AP-B and AP-P in the case of nonstationary spike-trains can be reduced to its acceleration mechanism. Indeed, in the AP-A, upscaling of iterates $\mathbf{a}^{(i)}$ is effectively based on the averaging of $(\mathbf{m}^{(i)}-\mathbf{a}^{(i)})$ over a window of length $\approx n/(2\pi\varpi)$ at each sample point. The precision of these estimates is more vulnerable to deviations from the exact recovery conditions for sparse carriers.

As can be expected, the high accuracy of modulator estimates achieved by the AP algorithms implies the high quality of carrier predictions $\mathbf{\hat{c}} = \mathbf{s} / \mathbf{\hat{m}}$ (see Suppl.\,Mat.\,L and Fig.\,13 therein). The AP approach leaves the AS-LP behind in terms of carrier estimation for all four signal types considered (see Suppl.\,Mat.\,L). When applicable, the inferred $\mathbf{\hat{c}}$ can be further frequency-demodulated by using dedicated techniques (see \cite{Vakman1998,Wiley1977}, and references given there).

The impressive performance of the AP-A algorithm in terms of $E_m$, $E_c$, and $T_\mathrm{cpu}$ makes it an ideal candidate for amplitude demodulation of a wide range of signals. Its AP-B and AP-P counterparts can be used instead if higher precision is needed in specific cases, as illustrated by the spike-train signals above.

\section{Convergence Tests \label{sec:Convergence}}

\begin{figure*}
\centering
\includegraphics[width=1\textwidth]{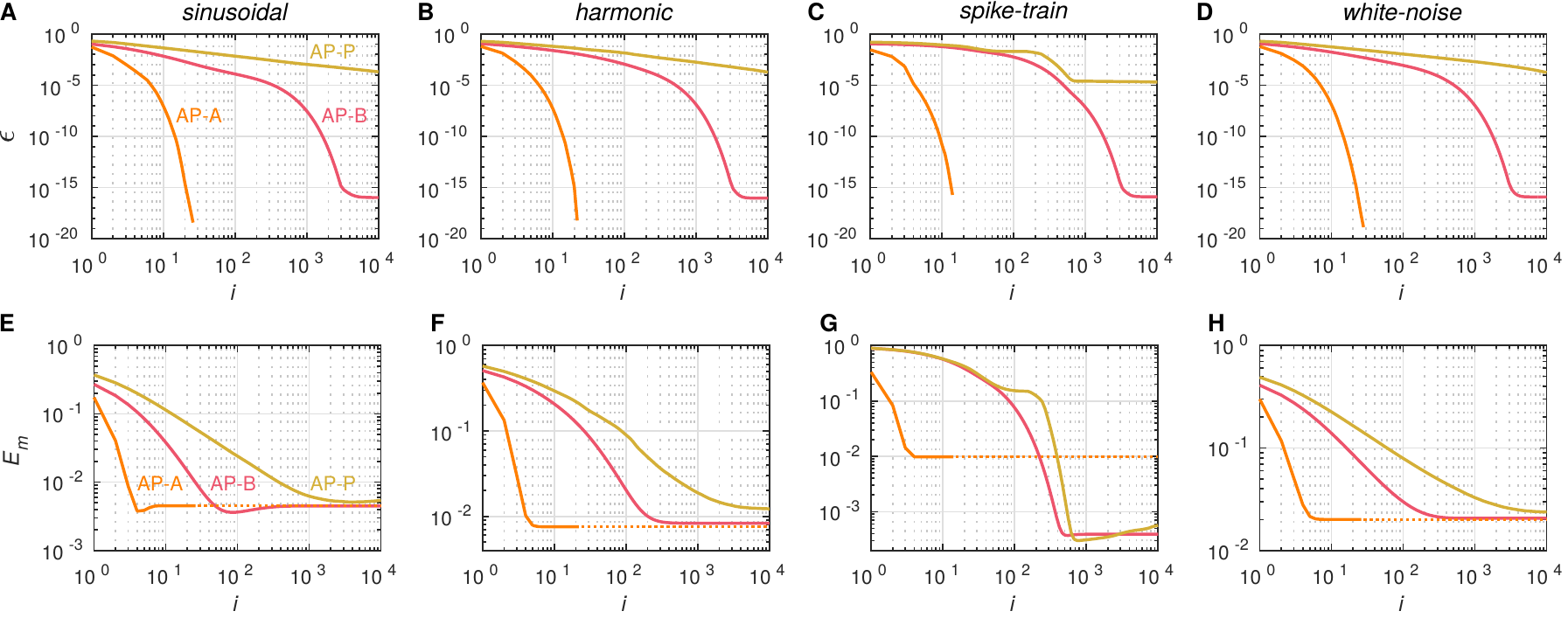}
\caption{Convergence analysis of the AP algorithms. \textbf{A}--\textbf{D}: Dependence of the infeasibility error $\epsilon$ on the iteration number $i$ for the AP-B, AP-A, and AP-P algorithms applied to the four different types of test signals with $n=2^{15}$ and no window splitting. \textbf{E}--\textbf{H}: Analogous plots to A--D made for the demodulation error $E_m$ instead of $\epsilon$. Dotted lines show hypothetical $E_m$ values that would be obtained if we continued the AP-A iterations after reaching the final solution.
}
\label{fig:4}
\end{figure*}

To clarify the differences between the $T_\mathrm{cpu}$ estimates of the three AP algorithms and understand the relationship between the demodulation and infeasibility errors, we performed a convergence analysis with the test signals from the previous section. The simulation results for fixed $n=2^{15}$ using no window splitting are summarized in Fig.\,\ref{fig:4}. A closer inspection uncovers the following:
\begin{enumerate}
\item The convergence rates in terms of both $\epsilon$ and $E_m$ parallel the differences in the computing speed of different AP algorithms. Among them, the fastest is the AP-A, which reaches any given $\epsilon$ or $E_m$ level with the smallest num-ber of iterations. The AP-P algorithm is the slowest one.
\item The AP-A algorithm converges in a finite number of iterations ($<30$) for all types of test signals studied. In particular, it requires only $\leq 5$ iterations to reach the plateau level of the demodulation error $E_m$. This fact explains the extraordinary computational efficiency of the AP-A documented in Section~\ref{sec:Performance}.
\item Differently from the convergence error $\|\mathbf{m}^{(i)}-\mathbf{m}^\dagger\|_2 / \sqrt{n}$, the dependence of $E_m^{\scalebox{0.7}{(}i\scalebox{0.7}{)}}$ on $i$ can be nonmonotonic if $\mathbf{m}^\dagger$ is not strictly equal to $\mathbf{m}$ (Fig.\,\ref{fig:4}\,E,\,G). Then, $E_m^{(i)}$ starts growing with increased $i$ after reaching the minimum point. However, this growth is mild and of no practical importance as long as $\mathbf{m}^\dagger \approx \mathbf{m}$, i.e., $\mathbf{\hat{m}} \approx \mathbf{m}$.
\end{enumerate}

The results shown in Fig.\,\ref{fig:4} represent only signals of fixed length ($n=2^{15}$ sample points). Additional simulations suggested no dependence on
$n$ (see Suppl.\,Mat.\,M).

\section{Robustness Tests \label{sec:Robustness}}

\begin{figure*}
\centering
\includegraphics[width=1\textwidth]{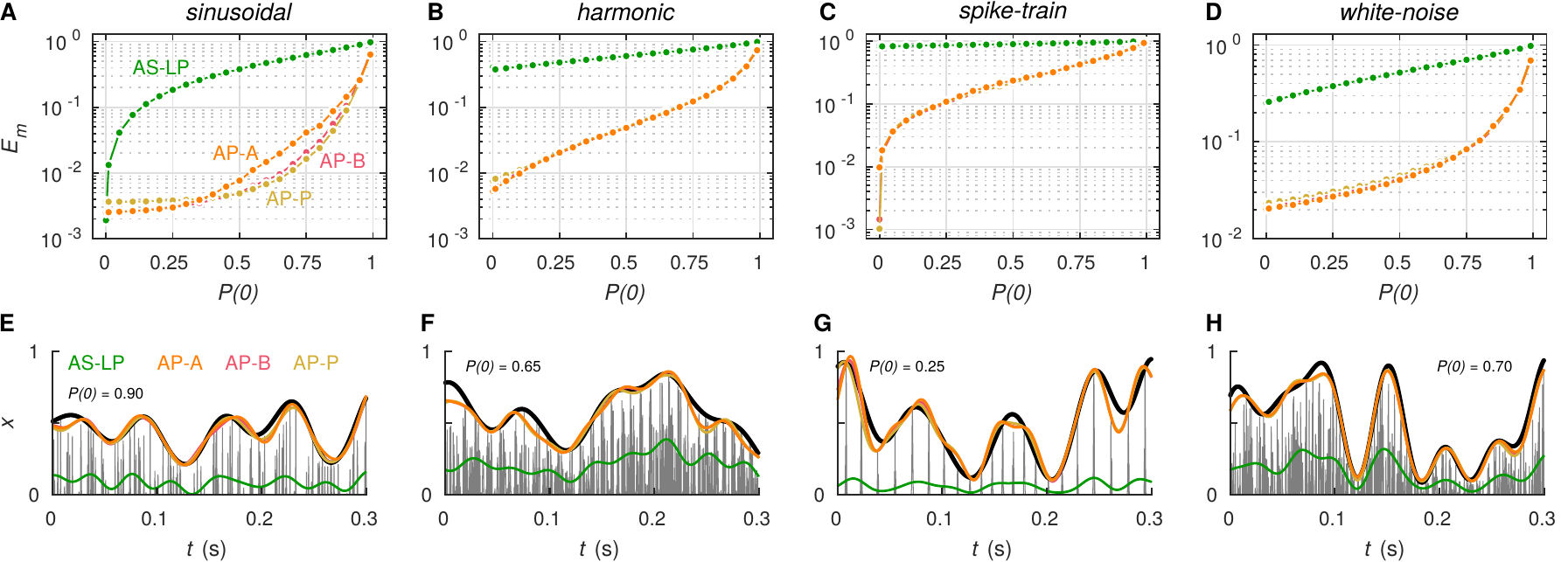}
\caption{Robustness evaluation. \textbf{A}--\textbf{D}: Dependence of the demodulation error $E_m$ on $P(0)$ (the probability of missing points) for the four types of test signals and different AP algorithms at $\epsilon_{tol}=10^{-4}$ (color coding). \textbf{E}--\textbf{H}: Representative examples of demodulation at various $P(0)$ levels for the test signals from A--D. Color code: gray -- the absolute-value signal, black -- the original modulator, color -- modulators inferred by different algorithms.}
\label{fig:5}
\end{figure*}

The pivotal condition for successfully separating the modulator-carrier information of a given signal by our approach is $\lceil n/d \rceil \geq 2\omega-1$. In practice, this requirement is not necessarily met. Hence, the choice of a particular demodulation method must be guided not only by the algorithmic efficiency but also robustness to deviations from the ideal recovery conditions. To shed light on this aspect, we considered demodulation of the test signals from Section~\ref{subsec:Performance1} corrupted by a multiplicative Bernoulli-$\{0,1\}$ noise. In this setup, sample points, including the decisive $|s_i|=m_i$, are eliminated with the probability of ``0'' elements in the noise ($P(0)$), effectively decreasing the value of $\lceil n/d \rceil$.

We found that all three AP algorithms considered in this work show a similar degree of robustness to increased $P(0)$ (see Fig.\,\ref{fig:5}). Only in the case of sinusoidal signals, the AP-A is slightly inferior to the AP-B and AP-P. Interestingly, the advantage of the AP-B and AP-P over the AP-A in the case of spike-train signals discussed in Section~\ref{subsec:Performance4} disappears in the presence of even small distortions (see Fig.\,\ref{fig:5}\,C). The differences in the $E_m$~vs.~$P(0)$ relations seen in Fig.\,\ref{fig:5}\,A--D are predetermined by different densities of $|c_i| \simeq 1$ points inherent to each carrier type. Analogous results to those shown in Fig.\,\ref{fig:5}\,A--D are obtained when considering carrier recovery via $\mathbf{\hat{c}} = \mathbf{s} / \mathbf{\hat{m}}$ (see Fig.\,14 in Suppl.\,Mat.\,L).

In contrast to the AP approach, the AS-based demodulation is highly vulnerable to missing sample points, and hence, to decreased $\lceil n/d \rceil$ (Fig.\,\ref{fig:5}\,A--D). Even for sinusoidal signals, which the AS and AS-LP are specially designed for, the zeroing of data points leads to a rapid decline in demodulation quality (Fig.\,\ref{fig:5}\,A,\,E).

The robustness to missing sample points endows the AP demodulation method with a highly valuable practical advantage. In particular, it can be exploited in real-world situations when: 1) the sampling rate is low; 2) some segments of the signal values are lost; 3) some sample points are corrupted by noise such that the level of these points can be reduced below the real modulator by low-pass filtering or explicitly identifying them. In this context, the PAD and LDC demodulations compare to the AP approach by construction \cite{Turner2010}.

\section{High-Level Properties \label{sec:HighLevel}}

As emphasized in Section~\ref{sec:Problem}, different demodulation methods can be derived by requesting adherence of the inferred modulators and carriers to a set of particular properties. Typically, various combinations that consist of a few out of many reasonable requirements are sufficient for unique demodulation formulations. However, some of these requirements are inconsistent with each other, making virtually all classical demodulation approaches fail to satisfy one or another essential condition \cite{Vakman1996, Loughlin1996, Turner2010}. For example, the AS demodulation method may return an unbounded modulator estimate for a bounded signal \cite{Loughlin1996}.

The AP approach formulated in this work is compatible with the following high-level requirements, which have crystallized as inseparable from the notion of proper amplitude demodulation with time \cite{Sell2010}, \cite[Section~3.5.2]{Turner2010}:

\begin{itemize}
\item \textbf{Boundedness:} The modulator and carrier of a bounded signal are bounded. In particular, it is required that $-\infty < \hat{m}_i < +\infty$ and $-1 \leq \hat{c}_i \leq 1$ for every $ i \in \In$. In the case of the AP approach, the boundedness of the modulator is guaranteed by the convergence of the AP algorithms. The boundedness of the carrier then follows from the constraint $\hat{m}_i \geq |s_i|$ and the fact that $\mathbf{\hat{c}} = \mathbf{s} \circ \mathbf{\hat{m}}^{-1}$.
\item \textbf{Scale covariance:} The modulator and carrier of a scaled signal are equal to the modulator and carrier obtained from the original signal and then scaled by the same amount. The adherence of the AP approach to this condition follows from two facts. First, projection operators $\mathbf{P}_{\Sgeqss}$ and $\mathbf{P}_{\Sw}$ are homogeneous with degree 1, i.e., $\mathbf{P}_{\mathcal{S}}[\alpha \cdot \mathbf{s}] = \alpha \cdot \mathbf{P}_{\mathcal{S}}[\mathbf{s}]$.  Second, each iteration of the AP algorithms can be expressed as a weighted sum of these projections with the weights independent of the scale.
\item \textbf{Smoothness:} The modulator of a bounded signal in its continuous-time representation is smooth. Because we use a discrete-time representation, this requirement has to be adjusted. In particular, let us denote by $\hat{m}'_t$ and $\hat{m}'_{t+\Delta t}$ the finite-difference approximations of the modulator's time-derivatives of any order at two subsequent time points: $t$ and $t + \Delta t$. Then, we require that, for any $\epsilon>0$, there exists a $\delta>0$ such that $|\hat{m}'_{t+\Delta t}-\hat{m}'_t| < \epsilon$ when $|\Delta t| < \delta$. The AP approach satisfies this requirement through the boundedness of the modulator and the bandwidth constraint set by $\Sw$ on it.
\item \textbf{Idempotence:} Information associated with the qualities of modulators and carriers is fully separated. Specifically, demodulation reapplied to an estimated modulator (carrier) must return the same modulator (carrier). The AP approach satisfies the idempotence requirement for the modulator exactly. Indeed, when any AP algorithm is applied to its final solution $\mathbf{\hat{m}} = \mathbf{m}^\dagger$, the latter is recognized as the final solution again after the first new iteration by construction. Regarding the carrier, the idempotence holds whenever the recovery conditions discussed in Section~\ref{subsec:Problem-Formulation-C} are met. That is because, in those cases, $\mathbf{\hat{c}}$ resulting from the first demodulation contains a sufficient number of $|\hat{c}_i|=1$ points to uniquely define the $\mathbf{\hat{m}}=\mathbf{1}$ as the norm-minimizing element of $\Sgeqc \cap \Sw$. If the recovery conditions are met only approximately, we expect no marked deviations from the idempotence condition (see Fig.\,16 in Suppl.\,Mat.\,N).
\end{itemize}
By fulfilling the above requirements, the AP approach parallels the methods of PAD and LDC demodulation \cite{Turner2010, Sell2010}. In this sense, all of them outperform the classical techniques.

\section{Demodulation of Speech Signals \label{sec:Speech}}

\begin{figure*}
\centering
\includegraphics[width=1\textwidth]{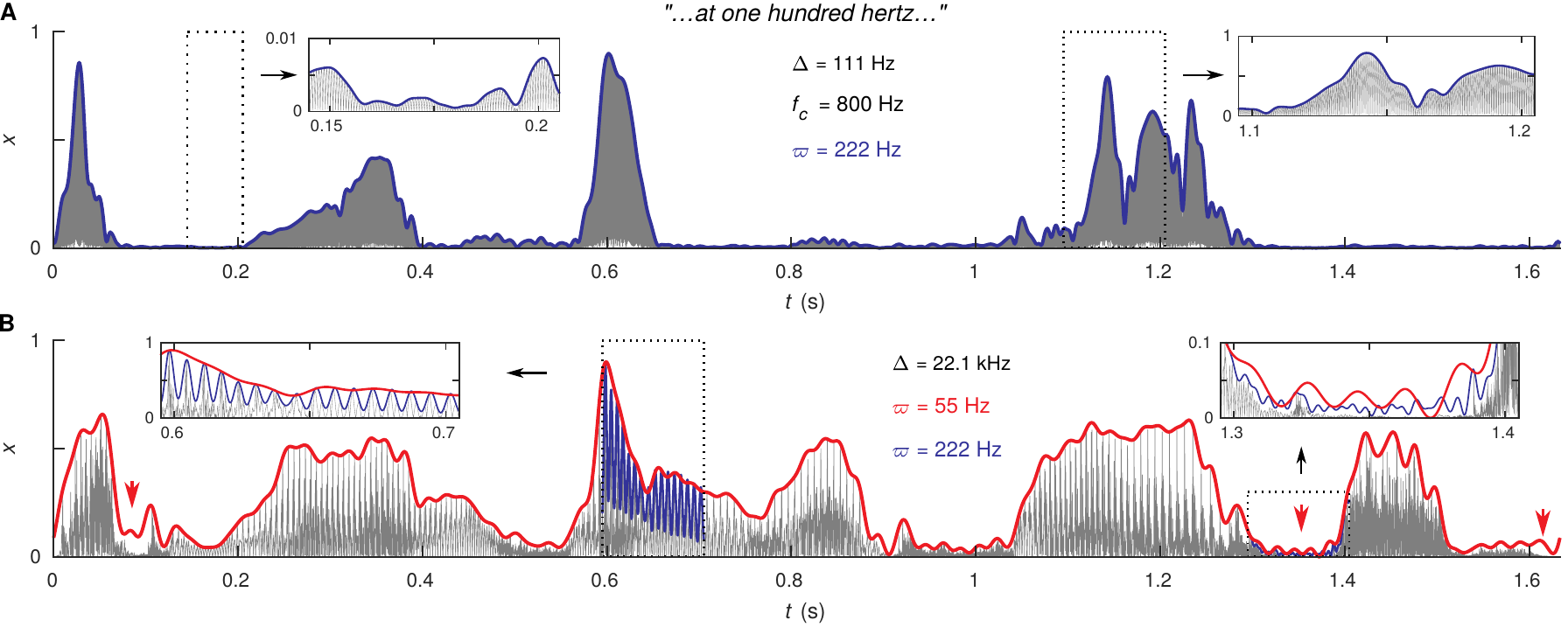}
\caption{Direct demodulation of speech signals. \textbf{A}: A band-pass-filtered signal of an utterance \textit{``\ldots at one hundred hertz \ldots''} by a female speaker; the signal was obtained with an equivalent rectangular bandwidth filter of the cochlea centered at $f_c=800~\mathrm{Hz}$ and $\Delta = 111~\mathrm{Hz}$. \textbf{B}: The original signal of the utterance used in panel A (full bandwidth of $22.1~\mathrm{kHz}$). In both panels, gray color marks the absolute-value version of the signals considered for demodulation. Blue and red lines show their modulators obtained by using the AP-A algorithm with the cutoff frequency $\omega$ set to, respectively, $55~\mathrm{Hz}$ and $222~\mathrm{Hz}$. Insets display time-expanded segments of the original window. Red arrowheads indicate the ringing artifacts of the modulator at some prolonged intervals of low signal levels. Source of the original signal: audio edition of \textit{The Economist} magazine, issue March 19th 2016, article ``Restoring lost memories.''}
\label{fig:6}
\end{figure*}

Amplitude demodulation is of central importance in various tasks of processing and analysis of speech signals. Application-wise, this procedure is used in hearing restoration \cite{Wilson1991, Wilson2008}, speech recognition \cite{Kingsbury1998, Wu2011, Lee2016}, and source separation \cite{Hu2004, Atlas2005}. On the theory side, amplitude demodulation is exploited in neurophysiological and psychophysical studies of auditory information processing in the brain \cite{Smith2002, Joris2004, Zeng2005, Goswami2019}. Depending on the problem, demodulation of either narrow subband \cite{Hu2004, Wu2011}, intermediate subband \cite{Smith2002, Wilson1991}, or whole wideband signal \cite{Shannon1995, Goswami2019} is needed. In all these cases, modulators and carriers convey the information about specific aspects of speech, e.g., semantic meaning, associated emotion, or speaker identity, that need to be extracted.

In this section, we apply the newly-introduced AP approach to speech demodulation to further demonstrate its potential. To represent the range of possible real-world situations, we consider two limiting signal types: 1) a narrow subband component of a signal obtained by a standard auditory ERB filter \cite{Glasberg1990}; and 2) the original wideband signal.

\subsection{Direct demodulation \label{subsec:Speech1}}

By construction, the output of auditory ERB filters occupies a frequency subband whose width $\Delta$ is much smaller than its center frequency $f_c$ \cite{Glasberg1990}. The resulting signal is an amplitude- and phase-modulated sinusoidal $\mathbf{s} = \mathbf{m} \circ \sin(2\pi f_c \mathbf{t} + \bm{\varphi})$, with most of the energies of $\mathbf{m}$ and $\bm{\varphi}$ residing in the frequency interval $[0, \Delta]$ \cite{Flanagan1980}. Hence, by the recovery conditions of the AP approach (see Section~\ref{subsec:Problem-Formulation-C}), setting the cutoff frequency $\varpi$ between $\Delta$ and $f_c$ necessarily results in accurate estimates $\mathbf{\hat{m}}$ and $\mathbf{\hat{c}}$. In particular, note that the local maximums of $|\mathbf{s}|$ correspond to the true sample points $|s_i|=m_i$. Thus, the high quality of demodulation is visually conveyed by a tight match of $\mathbf{\hat{m}}$ and $|\mathbf{s}|$ at these sample points. We illustrate our claims in Fig.\,\ref{fig:6}\,A, where a band-pass component of a female utterance \textit{``\ldots at one hundred hertz \ldots''} with $f_c=800\,\mathrm{Hz}$, $\Delta = 111\,\mathrm{Hz}$, and $\varpi=222\,\mathrm{Hz}$ is considered. Taking into account that the local maximum points $|s_i|$ are locally regular and that they correspond to $m_i$, we could exploit \emph{Proposition~A.2} in Suppl.\,Mat.\,A to find that $E_m \leq 8 \cdot 10^{-3}$.

Wideband speech signals are more challenging than their narrow subbands. They are built of temporarily structured segments of quasi-random and quasi-harmonic carriers, possibly featuring frequency glides \cite{Shoup1976}. These carriers are amplitude-modulated at different timescales, ranging between a hundred milliseconds and several seconds \cite{Turner2010, Keitel2018}. The power spectral density of the corresponding modulators is vanishingly small above 20\,Hz (see Fig.\,1 in \cite{Bosker2018}). Moreover, as we demonstrate in Suppl.\,Mat.\,O, the carrier components of natural speech signals align to the recoverability conditions of the AP approach for $\mathbf{m} \in \Mwm$ with $\omega$ up to at least $\sim 50\,\mathrm{Hz}$. Therefore, we expect appropriate performance from the AP algorithms in the setting of wideband speech.

Fig.\,\ref{fig:6}\,B displays demodulation results of the full-band version of the speech segment considered in Fig.\,\ref{fig:6}\,A by the AP-A algorithm with $\varpi=55~\mathrm{Hz}$. The obtained $\mathbf{\hat{m}}$ (red) envelops separate phonemes of the sound waveform tightly, indicating appropriate recovery of the true $\mathbf{m}$ (see Section~\ref{subsec:Speech2} next). However, intervals corresponding to prolonged transitions between phonemes or words are corrupted by ringing artifacts (marked by red arrowheads in Fig.\,\ref{fig:6}\,B), implying the necessity of higher frequency components to represent these transitions. Hence, although the power spectral density of the true $\mathbf{m}$ is very low above $20\,\mathrm{Hz}$, it sums to a noticeable contribution. Unfortunately, any attempt to cancel the artifacts by just increasing $\varpi$ fails by breaking the recovery conditions, as illustrated by the blue line in Fig.\,\ref{fig:6}\,B ($\varpi=222~\mathrm{Hz}$ there). No improvement is achieved by utilizing the AP-B, AP-P, or LDC algorithms either (data not shown).

\subsection{Demodulation using dynamic range compression \label{subsec:Speech2}}

\begin{figure*}
\centering
\includegraphics[width=1\textwidth]{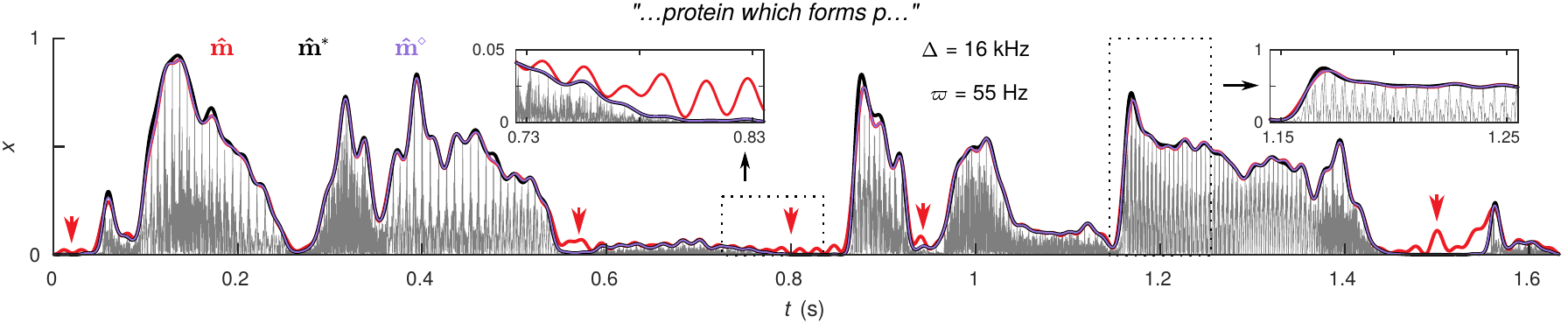}
\caption{Demodulation of speech signals using dynamic range compression. An audio signal of \textit{``\ldots protein which forms p \ldots''} uttered by a female speaker (full bandwidth of $16~\mathrm{kHz}$). Gray -- the absolute-value version of the signal considered for demodulation,  red -- its modulator obtained by using the AP-A algorithm with $\omega=55~\mathrm{Hz}$ and no compression (as in Fig.\,\ref{fig:6}), black -- modulator of the signal obtained when employing the dynamic range compression [see \eqref{eq:SpeechDemod2}], violet -- interpolation of the latter two [see \eqref{eq:SpeechDemod3}]. Red arrowheads indicate ringing artifacts of the modulator estimate. Source of the original signal: the same as Fig.\,\ref{fig:6}.}
\label{fig:7}
\end{figure*}

The aforementioned problem with modulator estimates of signals with sharp transitions to/from prolonged intervals of low-signal amplitude can be resolved by using a dynamic range compression. In particular, instead of demodulating the original signal $\mathbf{s}$ directly, we first apply a chosen AP algorithm to its compressed version:
\begin{equation}
\mathbf{\underline{s}} = \mathrm{sgn}(\mathbf{s}) \circ \mathbf{|s|}^{1/p}.\label{eq:SpeechDemod1}
\end{equation}
Here, $p \in (1, +\infty)$ controls the level of compression. The modulator estimate $\mathbf{\hat{m}^*}$ of $\mathbf{s}$ is then evaluated by inverse-transforming the modulator $\mathbf{\underline{\hat{m}}}$ of $\mathbf{\underline{s}}$:
\begin{equation}
\mathbf{\hat{m}^*} = \mathbf{\underline{\hat{m}}}^{p}.\label{eq:SpeechDemod2}
\end{equation}
The idea behind \eqref{eq:SpeechDemod1} is that the compression makes signals more uniform and, effectively, smooths their sharp changes responsible for ringing artifacts in the modulator estimates. These sharp changes are restored in the modulators without artifacts by the inverse transform \eqref{eq:SpeechDemod2}.

The expected effect of the compression procedure is illustrated in Fig.\,\ref{fig:7}, where signal demodulation of an utterance \textit{``\ldots protein which forms p\ldots''} is considered. Differently from the direct demodulation result $\mathbf{\hat{m}}$ (red line), the estimate $\mathbf{\hat{m}^*}$ obtained by using the compression with $p=3$ (black line) shows good alignment with $|\mathbf{s}|$ in the segments of both low and high intensity. To justify that this alignment really reflects the recovery of the true modulator, we performed additional tests where chimeric signals built of $\mathbf{\hat{m}^*}$ from Fig.\,\ref{fig:7} and natural speech carriers were demodulated (see Suppl.\,Mat.\,O). We found low demodulation errors, with $E_m$ ranging between $9 \cdot 10^{-3}$ and $5 \cdot 10^{-2}$ for different carrier components of speech signals (see Fig.\,18).

The compression level $p=3$ used above was adjusted by a trial and error for speech signals. In general, the gains in accuracy at low levels with increased $p$ comes at the expense of reduced precision of modulator estimated at high signal levels. Thus, a compromise between those two effects must be reached to find an optimal $p$. Moreover, the precision of the modulator estimates can be further increased by interpolating between $\mathbf{\hat{m}^*}$ (more accurate for low signal levels) and $\mathbf{\hat{m}}$ (more accurate for high signal levels). For example, the violet line in Fig.\,\ref{fig:7} shows a weighted average of the form
\begin{equation}
\mathbf{\hat{m}^\diamond} = \mathbf{\hat{m}} \circ \mathbf{w} + \mathbf{\hat{m}^*} \circ (1-\mathbf{w}),\label{eq:SpeechDemod3}
\end{equation}
where
\begin{equation}
\mathrm{w_i} = \bigg(\frac{1-e^{a \cdot (\mathrm{\hat{m}^*_i}/\max[\mathbf{\hat{m}^*}])}}{1+e^{a \cdot (\mathrm{\hat{m}^*_i}/\max[\mathbf{\hat{m}^*}])-b}}\bigg) \cdot \bigg( \frac{1-e^{a}}{1+e^{a-b}} \bigg)^{-1}
\label{eq:SpeechDemod4}
\end{equation}
for $i \in \In$, with $b=3$ and $a=10$. In general, an optimal interpolation between $\mathbf{\hat{m}}$ and $\mathbf{\hat{m}^*}$ can be learned by minimizing $\| \mathbf{\hat{m}^\diamond} \|_2^2$ over a chosen class of functions. Other compression models than \eqref{eq:SpeechDemod1}, e.g., $\mathbf{\underline{s}} = \mathrm{sgn}(\mathbf{s}) \circ \log (1+p \cdot |\mathbf{s}|)$, can be used to evaluate $\mathbf{\hat{m}^*}$ as well.

\subsection{Demodulation in real-time \label{subsec:RealTime}}

\begin{figure*}
\centering
\includegraphics[width=1\textwidth]{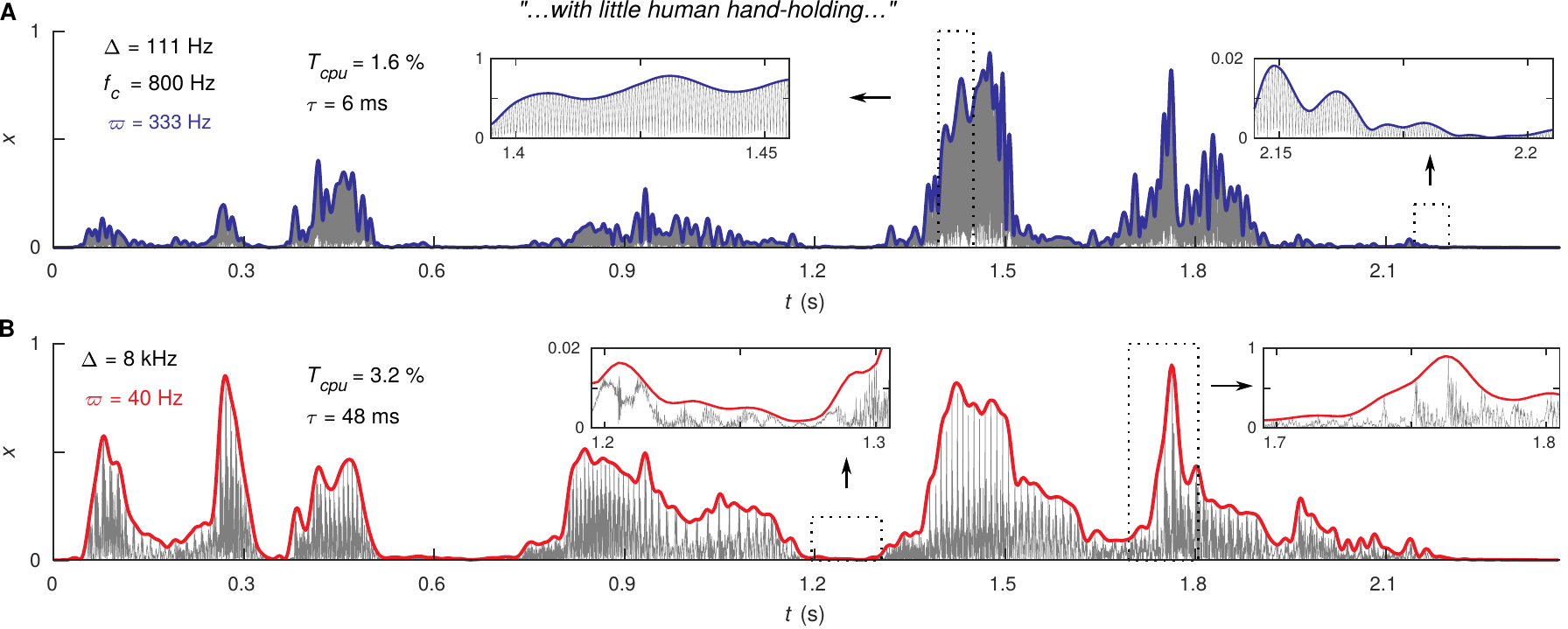}
\caption{Demodulation in real-time. \textbf{A}: A band-pass-filtered signal of an utterance \textit{``\ldots with little human hand-holding \ldots''} by a male speaker; the signal was obtained with an equivalent rectangular bandwidth filter of the cochlea centered at $800\,\mathrm{Hz}$ (bandwidth of $111~\mathrm{Hz}$). \textbf{B}: The original signal of the utterance used in panel A (full bandwidth of $8~\mathrm{kHz}$). In both panels, gray color marks the absolute-value version of the signals considered for demodulation. Blue and red lines show their modulators obtained by using the real-time version of the AP-A algorithm with the cutoff frequency $\omega$ set to, respectively, $333~\mathrm{Hz}$ and $40~\mathrm{Hz}$. Source of the original signal: audio edition of \textit{The Economist} magazine, issue March 19th 2016, article ``Artificial intelligence and Go.''}
\label{fig:8}
\end{figure*}

A number of amplitude demodulation applications, e.g., speech recognition \cite{Kingsbury1998}, ultrasound imaging \cite{Hoskins2019}, and cochlear prosthesis \cite{Wilson2008}, necessitate real-time processing. As we demonstrate below, the exceptional computational efficiency of the AP approach allows it to fulfill that requirement.

The nature of the task implies that online modulator estimates have to be generated by sequentially demodulating windowed segments $\mathbf{s}^{(j)}$ of a signal $\mathbf{s}$ at each updated sample point $j$ across time:
\begin{equation}
\mathbf{s}^{(j)}: s_i^{(j)} = w_i \cdot s_{j-k_l-1+i}, \qquad i \in \{ 1, 2, \ldots, k\}.\label{eq:RTDemod1}
\end{equation}
Here, $k$ is the number of sample points corresponding to the segment, and $k_l$ denotes the number of sample points of it that are to the left of the current point $j$. $w_i, w_2, \ldots, w_k$ are vector elements of the window function. The real-time modulator estimate $\hat{m}_j^\star$ at sample point $j$ is calculated as
\begin{equation}
\hat{m}_j^{\star} = \hat{ m}_{k_l+1}^{(j)},\label{eq:RTDemod2}
\end{equation}
where $\mathbf{\hat{m}}^{(j)}$ is a modulator estimate of $\mathbf{s}^{(j)}$.

It follows from the time-frequency uncertainty principle \cite{Gabor1946} that accurate evaluation of $\hat{m}_j^\star$ requires $\mathbf{s}^{(j)}$ with a duration of the order of the inverse of the effective bandwidth of the modulator, or longer. This condition sets the lower bounds on the segment length $k$ and sampling delay $k_\tau=k-k_l-1$ of $\mathbf{\hat{m}}^\star$. We found empirically that $k_\tau \approx 2 \cdot (f_s / \varpi)$ and $k \approx 4 \cdot (f_s / \varpi)$ are typically sufficient for accurate demodulation of wideband speech. These numbers are around two times smaller for narrow frequency band components of these signals. We know that $\varpi \geq 40~\mathrm{Hz}$ for the wideband speech and its subbands. Thus, delays $k_\tau \leq 50~\mathrm{ms}$ for estimating $\mathbf{\hat{m}}^\star$ are sufficient without a sacrifice in precision then. The main requirement for the window function in \eqref{eq:RTDemod1} is that it smoothly scales the signal to 0 at the boundaries, with no effect at the midst. We used a modified version of the Hann window for this purpose:
\begin{align}
w_i =\begin{cases} \sin^2\Big(\frac{\pi \cdot (i-1)}{2 \cdot k_l}\Big), & \mbox{$\quad 1 \leq i \leq k_l$}\\ 1, & \mbox{$\quad i = k_l+1$}\\
\cos^2\Big(\frac{\pi \cdot (i-k+k_\tau)}{2 \cdot k_\tau}\Big), & \mbox{$\quad k-k_\tau+1 \leq i \leq k$} \end{cases}.
\label{eq:RTDemod3}
\end{align}

Fig.\,\ref{fig:8} shows simulation results of real-time demodulation of a male utterance \textit{``\ldots with little human hand-holding \ldots''} (sampling rate $f_s = 16~\mathrm{kHz}$) based on the AP-A algorithm. There, demodulation was performed with $k=1536$ and $k_\tau=768$ ($\tau=48~\mathrm{ms}$) for the original signal (Fig.\,\ref{fig:8}\,B). Its subband component centered at 800~Hz (Fig.\,\ref{fig:8}\,A) was processed with $k=128$ and $k_\tau=65$ ($\tau=4~\mathrm{ms}$). In each case, $\hat{m}^\star_j$ was updated with the frequency of $10 \cdot \varpi$. The obtained estimates $\mathbf{\hat{m}^\star}$ are in very good agreement with $\mathbf{\hat{m}^*}$ derived by using offline demodulation of the whole signal, with $\|\mathbf{\hat{m}^\star}-\mathbf{\hat{m}^\diamond}\|_2 /  \| \mathbf{\hat{m}^\diamond} \|_2 < 0.02$. Importantly, they were achieved with modest CPU usage: $T_{\mathrm{cpu}}$ amounted to only 1.6\,\% (subband signal) and 3.2\,\% (wideband signal) of the time length of the demodulated signal on an Intel Core i7-7700 CPU run in single-thread mode. For comparison, these numbers were, respectively, $\sim 5 \cdot 10^3$ and $\sim 6 \cdot 10^4$ times higher for the LDC method.

An advantageous side effect of splitting the signal into small windows for demodulation is that it prevents the ringing artifacts (compare Fig.\,\ref{fig:8}\,B and Fig.\,\ref{fig:6}\,B). This is so because signal levels do not typically spread over different scales in a short time window. The window splitting also allows generalizing demodulation to situations when the cutoff frequency $\omega$ of the modulator varies strongly in time.

\section{Extensions and Generalizations \label{sec:Extensions}}

\subsection{Demodulation in higher dimensions \label{subsec:HigherDim}}

\begin{figure*}
\centering
\includegraphics[width=1\textwidth]{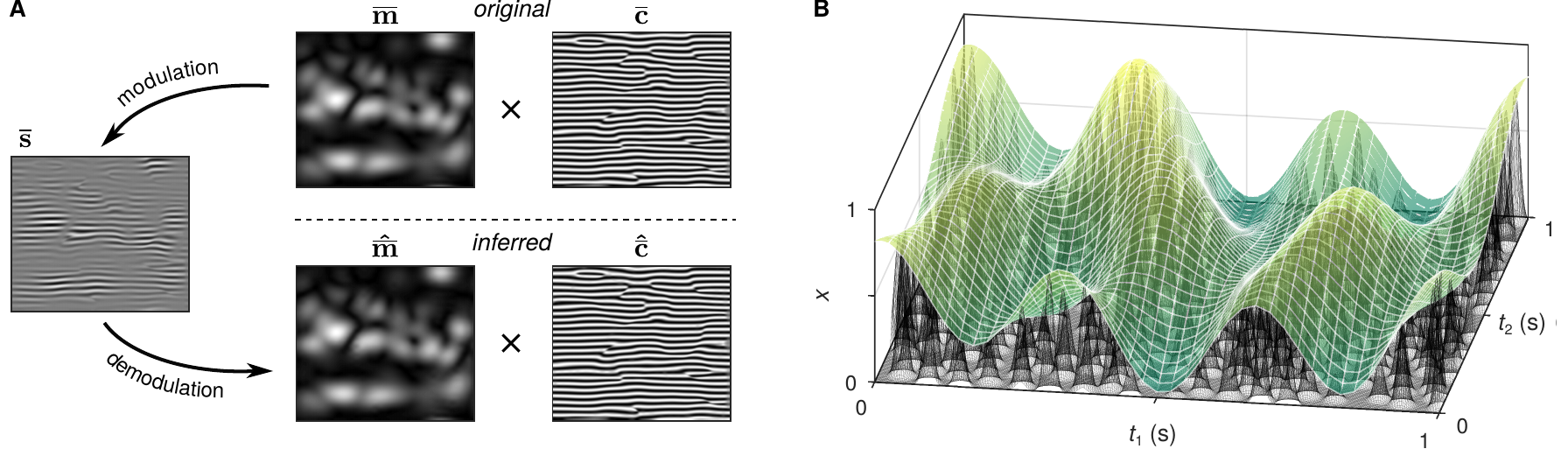}
\caption{Demodulation in 2D by using the AP-A algorithm. \textbf{A}: Synthetic fringe pattern of moderate bandwidth. Left -- the pattern to demodulate, upper right -- the original modulator and carrier, lower right -- the inferred modulator and carrier. \textbf{B}: Wideband signal consisting of randomly-placed spikes of finite width. Black grid -- the signal $\mathbf{\bar{s}}$ to demodulate, white grid -- the original modulator $\mathbf{\overline{m}}$ of the signal, color surface -- the  estimated modulator.}
\label{fig:9}
\end{figure*}

Amplitude demodulation has found successful applications beyond the setting of 1D signals. Several 2D extensions of the classical AS approach have been introduced and used for solving tasks in computer vision \cite{Felsberg2001, Larkin2001}, analysis of speech spectrograms \cite{Aragonda2015}, and biomedical imaging \cite{Seelamantula2012, Wachinger2012, Nadeau2014}. The AS framework has also been extended to calculate modulators and carriers for signals over graphs \cite{Venkitaraman2019}. These methods are limited to locally narrowband signals, which manifest visually as fringe patterns (see Fig.\,\ref{fig:9}\,A). This bandwidth restriction is evaded by a generalization of the AP approach to higher dimensions that we present next. The extension is immediate and follows from intuitive abstractions of the constraint sets introduced in Section~\ref{sec:Problem}.

Consider a $D$-dimensional signal $s(t_1,t_2,\ldots,t_D)$. Its uniformly sampled version $\mathbf{\overline{s}}$ is an element of an $n$-dimensional Euclidean space $\mathbb{T}^{n_{}}_D$ of real-valued order $D$ tensors with $n = \prod_{i=1}^D n_i$ and the inner product $\langle \mathbf{\overline{s}}^{(1)}, \mathbf{\overline{s}}^{(2)}\rangle = \sum_{i_1=1}^{n_1} \cdots \sum_{i_D=1}^{n_D} (\bar{s}^{(1)}_{\iD} \cdot \bar{s}^{(2)}_{\iD})$. The respective $D$-dimensional DFT is given by
\begin{equation}
\mathbf{\overline{F}} = \mathbf{F}^{(1)} \otimes \mathbf{F}^{(2)} \otimes \cdots \otimes \mathbf{F}^{(D)}, \label{eq:HighDim1}
\end{equation}
where $\mathbf{F}^{(i)}$ is a unitary DFT defined over $\mathbb{R}^{n_i}$. Then, the analogs of the constraint sets $\Sgeqz$, $\Swm$, $\Sgeqs$, $\Sleqo$, and $\Sd$ from Section~\ref{sec:Problem} read as
\begin{align}
\Sgeqzbar &= \{\mathbf{\overline{x}} \in \mathbb{T}^{n_{}}_D: \overline{x}_{\iD} \geq 0, \, i_j \in \mathcal{I}_{n_j} \}, \nonumber\\
\Swmbar &= \{\mathbf{\overline{x}} \in \mathbb{T}^{n_{}}_D: (\mathbf{\overline{F}}\hspace{1pt}\mathbf{\overline{x}})_{\iD}=0, \, i_j \in (\mathcal{I}_{n_j} \setminus \mathcal{I}_{n_j}^{\omega_j}) \}, \label{eq:HighDim2}\\
\Sgeqsbar &= \{\mathbf{\overline{x}} \in \mathbb{T}^{n_{}}_D: \overline{x}_{\iD} \geq |\bar{s}_{\iD}|, \, i_j \in \mathcal{I}_{n_j} \}, \nonumber
\end{align}
and
\begin{equation}
\begin{aligned}
\Sleqobar &= \{\mathbf{\overline{x}} \in \mathbb{T}^{n_{}}_D: |x_{\iD}| \leq 1, \, i_j \in \mathcal{I}_{n_j} \}, \\
\Sdbar &= \big\{\mathbf{\overline{x}} \in \mathbb{T}^{n_{}}_D: (\forall\iDc)R(\iDc,\mathbf{\overline{x}},\mathbf{d}) \geq 1, \\ 
& \hspace{60pt} (\exists \iDc) R(\iDc,\mathbf{\overline{x}},\mathbf{d})=1 \},
\end{aligned}
\label{eq:HighDim3}
\end{equation}
where
\begin{equation}
\begin{aligned}
\textstyle
\hspace{-3pt} R(\iDc,\mathbf{\overline{x}},\mathbf{d}) = \textstyle \sum_{j_1 \geq i_1} \hspace{-2pt}\cdot\hspace{-2pt}\cdot\hspace{-2pt}\cdot \sum_{j_D \geq i_D} \big[ I_{\{1\}}(|\overline{x}_{j_1 \cdots j_D}|) \\
\textstyle \cdot \theta \big(1-\sum_{k=1}^D(i_k-j_k)^2/d_k^2 \big) \big] - I_{\{1\}}(|\overline{x}_{i_1 \cdots i_D}|).
\end{aligned}
\label{eq:HighDim4}
\end{equation}
Simply substituting \eqref{eq:HighDim2}\,--\,\eqref{eq:HighDim3} for their $D=1$ versions in \eqref{eq:ProbForm2_}, \eqref{eq:ProbForm4_}, \eqref{eq:Recovery2_1_}, \eqref{eq:Recovery4_1_}, \eqref{eq:MathPrel1x}, and \eqref{eq:MathPrel3x} generalizes the modulator $\Mwm$ and carrier $\Cd$ sets as well as the modulator estimator $\mathbf{\hat{m}}$ and the respective AP algorithms. In particular, an $\mathbf{\overline{m}} \in \Mwmbar$ is a nonnegative signal with a low-pass rectangular spectrum set by $\bm{\omega}=(\omega_1, \ldots, \omega_D)$ along each of the $D$ dimensions in the DFT domain. A $\mathbf{\overline{c}} \in \Cdbar$ is a signal bounded between $-1$ and $1$ with the $|\bar{c}_{\iD}|=1$ sample points packed sufficiently densely, as implied by $\mathbf{d} = (d_1, \dots, d_D)$.

Without providing formal proofs, we state that all propositions and assertions of Sections~\ref{sec:Problem} and \ref{sec:Algorithms} about the modulator recoverability and convergence of the AP algorithms generalize to $D$-dimensional signals defined above. All quantitative conditions involving the parameters $\varpi$, $\omega$, $d$, $n$, and $n_s$ in the $D=1$ case are then replaced by elementwise conditions for $\varpi_i$, $\omega_i$, $d_i$, $n_i$, and $n_{s,i}$ at $i \in \mathcal{I}_D$.

Fig.\,\ref{fig:9} illustrates the potential of the AP-A algorithm with the help of two $D=2$ cases. Fig.\,\ref{fig:9}\,A shows successful demodulation results for a synthetic narrowband fringe pattern ($E_m = 1 \cdot 10^{-2}$, $E_c = 3 \cdot 10^{-2}$). Fig.\,\ref{fig:9}\,B displays high-accuracy demodulation of a wideband signal built of randomly-placed spikes of finite width as $\mathbf{\overline{c}}$ and a Gaussian random field with a rectangular amplitude spectrum as $\mathbf{\overline{m}}$. There, the white grid corresponds to the original $\mathbf{\overline{m}}$, while the color surface represents its estimate $\mathbf{\hat{\overline{m}}}$ ($E_m = 4 \cdot 10^{-3}$, $E_c = 1 \cdot 10^{-2}$).

The ability of the AP approach to deal with wideband signals allows it to cover a wider range of practically relevant situations. Among examples are nonlinear ultrasound imaging \cite{Wachinger2012, Duck2002}, speech processing \cite{Sell2010b,Aragonda2015}, and complicated cases of optical interference/diffraction setups \cite{Sanchez2010}. Moreover, it can also be of great use in time-critical imaging settings by providing high modulator estimation accuracy at low sampling rates of the signal (see, e.g., \cite{Nadeau2014,Zhou2015}).

The minimum number of sample points necessary to cover simultaneously for appropriate demodulation increases exponentially with $D$. Therefore, the computational advantage of the AP over the PAD and LDC demodulation approaches is even more pronounced in higher dimensions. In fact, if evaluated by using the FFT method, $\overline{\mathbf{F}}$ features an $\mathcal{O}\big(n\log n\big)$ computational time complexity. Hence, the time complexity of the AP algorithms is defined by the total number of sample points of the signal irrespective of its dimensionality.

\subsection{Generalized modulators and nonuniform sampling \label{subsec:DifferentMod}}

The demodulation approach formulated in the present work builds on the assumption that modulators are nonnegative elements of a low-pass DFT subspace of $\mathbb{T}^n_D$. However, as follows from the convergence proofs in Suppl.\,Mat.\,F, all of the introduced AP algorithms are bound to converge to an $\mathbf{\hat{m}} \in \Mwmbar$ and a $\mathbf{\hat{c}} \in \Cdbar$ independent of the origin of the linear subspace behind $\Mwmbar$. This naturally raises the question of whether the AP algorithms could recover true $\mathbf{m}$ and $\mathbf{c}$ under the generalized subspace assumption. Our preliminary experiments suggest a positive answer but subject to extra recovery conditions specific to a subspace of choice.

For example, consider a subset of $2\omega-1$ randomly chosen basis vectors of the DFT over $\mathbb{R}^n$. Denote the corresponding space as $\mathcal{F}_{\omega}$. It can be shown by example that a system resulting from random subsampling of the aforementioned vectors at $2\omega-1$ time points may be linearly dependent. If so, it then follows from the proof of \PropRef{prop:Recovery1_} that, in contrast to an $\mathbf{m} \in \Swm$, full recovery of an $\mathbf{m} \in \mathcal{F}_\omega$ necessitates more than $2\omega-1$ true sample points.

The problem of formulating modulator recovery conditions for different linear subspaces sets directions for future studies. If successful, these extensions would allow to:
\begin{enumerate}
\item broaden the concept of the amplitude modulator beyond the low-pass DFT signals,
\item loosen the constraints on the positioning of the $|c_i|=1$ sample points for recoverable carriers whenever a more compact representation of modulators is available,
\item encompass nonuniform sampling.
\end{enumerate}

While the above points are yet to be developed, the results of the present work already provide a strategy for an arbitrarily-accurate nonuniform sampling. Indeed, for any time grid $\mathbf{\tilde{t}} \in \mathbb{R}^{\tilde{n}}$, we can find a uniform grid $\mathbf{t} \in \mathbb{R}^n$ such that, for every $j \in \mathcal{I}_{\tilde{n}}$, there exists an $i \in \In$ with $|\tilde{t}_j - t_i|$ being arbitrarily small. We can then interpolate the original data $\mathbf{\tilde{s}} \in \mathbb{R}^{\tilde{n}}$ on the uniform grid $\mathbf{t}$ by
\begin{align}
s_i =\begin{cases} \tilde{s}_j, & \mbox{if $|\tilde{t}_j - t_i|=\min[|\tilde{t}_j \cdot \mathbf{1}-\mathbf{t}|]$}\\ 0, & \mbox{otherwise} \end{cases}, \quad~ i \in \In,
\label{eq:NonUni1}
\end{align}
to obtain an $\mathbf{s} \in \mathbb{R}^n$. The bandwidth constraint on $\mathbf{m}$ implies that all components of $\mathbf{s}$ corresponding to the true sample points of $\mathbf{\tilde{s}}$ are desirably close to the true sample points of the original signal if $n$ is large enough. Then, \PropRef{prop:Recovery4_} assures that modulator-carrier recovery is possible via \eqref{eq:Recovery2_1_} under the conditions discussed in Section~\ref{subsec:Problem-Formulation-C} for uniformly sampled signals. The described strategy requires increasing the effective dimensionality of the signal. However, this may still be more efficient than evaluating metric projections onto subspaces spanned by arbitrary nonuniform sampling basis vectors, which are not orthogonal in general.

\section{Conclusion \label{sec:Conclusion}}

In this paper, we have introduced a new approach to amplitude demodulation of arbitrary-bandwidth signals. We framed demodulation as a problem of modulator recovery from an unlabeled mix of its true and corrupted sample points. Taking this view, we showed that high-accuracy demodulation can be achieved via exact or approximate norm minimization of the modulator for a wide range of relevant signals. We formulated tailor-made alternating projection algorithms to achieve that in practice and tested them in a series of numerical experiments.

The generality and numerical efficiency of the new approach make it a preferred choice in many situations. In the context of narrowband signals, the new method outperforms the classical algorithms in terms of robustness to data distortions and compatibility with nonuniform sampling. When considering the demodulation of wideband signals, it surpasses the current state-of-the-art techniques in terms of computational efficiency by up to many orders of magnitude. Such performance enables practical applications of amplitude demodulation in previously inaccessible settings. Specifically, online and large-scale offline demodulation of wideband signals, signals in higher dimensions, and poorly-sampled signals become practically feasible. The algorithms underlying the new approach are simple and easy to implement on a computer.\footnote{The computer code for AP demodulation will be available at \href{https://github.com/mgabriel-lt/ap-demodulation}{\textit{https://github.com/mgabriel-lt/ap-demodulation}}.}


\section*{Acknowledgment}
\addcontentsline{toc}{section}{Acknowledgment}

The author thanks his colleagues K.~Husz\'{a}r and G.~Tka\v{c}ik for valuable discussions and comments on the manuscript.



\bibliographystyle{IEEEtran}
\bibliography{ms}

\end{document}


%
\title{{\Huge Supplementary Material for}\\\vspace{20pt}\textit{``Fast and Accurate Amplitude Demodulation\\of Wideband Signals''}\vspace{25pt}}
%
%
%

\author{Mantas~Gabrielaitis \\\texttt{\normalsize mantas.gabrielaitis@\{\href{mailto:mantas.gabrielaitis@ist.ac.at}{ist.ac.at},\,\href{mailto:mantas.gabrielaitis@gmail.com}{gmail.com}\}}}

%
%

\markboth{ACCEPTED FOR PUBLICATION IN IEEE TRANSACTIONS ON SIGNAL PROCESSING}%
{Supplementary Material for \textit{``Fast and Accurate Amplitude Demodulation\\of Wideband Signals''}}
%




\renewcommand\contentsname{{\large Contents}}

\maketitle

\pdfbookmark[section]{\contentsname}{toc}

\vspace{12pt}

\tableofcontents

\newpage

\markboth{Overview}%
{Overview}

\section*{\textbf{Overview}}
\addcontentsline{toc}{section}{Overview}
This document serves as a source of additional information to support the ideas and results introduced in the accompanying paper \textit{``Fast and Accurate Amplitude Demodulation of Wideband Signals.''}

Virtually, the material provided in this supplement can be divided into five blocks comprising, respectively, \textbf{Sections} \textbf{\ref*{sec:SMRecovery1}\,--\,\ref*{sec:SMRecovery3}}, \textbf{\ref*{sec:SMTheory0}\,--\,\ref*{sec:SMTheory3}}, \textbf{\ref*{sec:SMCarriers}\,--\,\ref*{sec:SMSynthetic}}, \textbf{\ref*{sec:SMPerformance}\,--\,\ref*{sec:SMProgrammatic}}, and \textbf{\ref*{sec:SMCarrierEstimates}\,--\,\ref*{sec:SMSpeech}}:
%
\begin{itemize}
%
\item The \textbf{\ref*{sec:SMRecovery1}\,--\,\ref*{sec:SMRecovery3}} block provides proofs of modulator recovery conditions.
%
\item The \textbf{\ref*{sec:SMTheory0}\,--\,\ref*{sec:SMTheory3}} block is concerned with mathematical aspects of the alternating projection (AP) algorithms of amplitude demodulation.
%
\item The \textbf{\ref*{sec:SMCarriers}\,--\,\ref*{sec:SMSynthetic}} block reviews the main types of amplitude-modulated wideband signals found in practice and defines synthetic modulators and carriers used for testing purposes in the present work.
%
\item The \textbf{\ref*{sec:SMPerformance}\,--\,\ref*{sec:SMProgrammatic}} block presents details on the numerical implementation of the AP and other relevant demodulation methods on a computer, as well as their benchmarking configurations.
%
\item The \textbf{\ref*{sec:SMCarrierEstimates}\,--\,\ref*{sec:SMSpeech}} block contains auxiliary simulation results and their discussion.
%
\end{itemize}
%

\vspace*{7pt}

\noindent A summary of each of the sections follows next to ease navigation through this document.

\vspace*{7pt}
\SecRef{sec:SMRecovery1}~establishes several auxiliary results that are exploited in the modulator recovery proofs next.

\SecRef{sec:SMRecovery2}~provides proofs for the modulator recovery conditions introduced in Section~II-C of the main text.

\SecRef{sec:SMRecovery3}~discusses the results and implementation of the numerical experiments performed to extend the \hspace*{58pt} modulator recovery conditions.

\vspace*{15pt}
\SecRef{sec:SMTheory0}~introduces the basic concepts of mathematical analysis necessary for the formulation and study of \hspace*{61pt}the properties of AP algorithms.

\SecRef{sec:SMTheory1}~formulates and proves relevant properties of the constraint sets of modulator estimates and defines \hspace*{60pt}operators that implement metric projections onto the modulator constraint sets used in this work.

\SecRef{sec:SMTheory2}~provides the convergence proofs of the AP algorithms introduced in the main text.

\SecRef{sec:SMTheory3}~derives a lower bound on the convergence error.

\vspace*{15pt}
\SecRef{sec:SMCarriers}~reviews the main types of amplitude-modulated wideband signals found in practice.

\SecRef{sec:SMSynthetic}~defines the modulators and carriers of synthetic signals used to test amplitude demodulation \hspace*{62pt}algorithms in the present work.

\vspace*{15pt}
\SecRef{sec:SMPerformance}~lists configurations of the execution control parameters employed for the performance analysis of \hspace*{59pt}the AP and LDC algorithms of amplitude demodulation.

\SecRef{sec:SMProgrammatic}~provides information on the implementation and execution of the demodulation algorithms on a \hspace*{62pt}computer that we used.

\pagebreak
\SecRef{sec:SMCarrierEstimates}~overviews the results of demodulation algorithm performance tests in terms of carrier estimates.

\SecRef{sec:SMConvergence}~presents additional simulation results on the dependence of convergence rates of the AP algorithms \hspace*{62pt}on the signal length.

\SecRef{sec:SMRedemodulation}~introduces simulation results of repetitive demodulation of carrier and modulator estimates obtained \hspace*{62pt}using the AP-A algorithm.

\SecRef{sec:SMSpeech}~presents the results of additional simulations that demonstrate the suitability  and consistency of \hspace*{63pt}the AP approach to demodulate wideband speech signals.

\newpage
\phantomsection

\markboth{RECOVERY CONDITIONS}%
{RECOVERY CONDITIONS}

\hspace{-11pt}\textbf{RECOVERY CONDITIONS}
\addcontentsline{toc}{section}{\textbf{RECOVERY CONDITIONS}}

\section{\textbf{Auxiliary Proofs} \label{sec:SMRecovery1}} 

Here, we establish two important properties of the unitary DFT basis vectors that are repetitively used in the proofs of propositions about the modulator recovery conditions in the next section.


\begin{proposition}
%
Consider a subset of DFT basis vectors $\{\mathbf{f}^{(k)}\}_{k \in \mathcal{I}_n^{\omega^{\sast}}}$. Assume a set of arbitrarily chosen $n_s$ sample points encoded by components of a vector $\mathbf{r} \in \mathbb{N}_+^{n_s}$, and introduce a linear transform $\Lr$ that maps every $\mathbf{x} \in \mathbb{R}^n$ to $[x_{r_1}, x_{r_2}, \ldots, x_{r_{n_s}}]^\mathrm{T} \in \mathbb{R}^{n_s}$. Then, a set $\{\Lr \mathbf{f}^{(k)}\}_{k \in \mathcal{I}_n^{\omega^{\sast}}}$ is linearly independent if and only if $n_s \geq 2\omega^{*}-1$.
\label{prop:Recovery01}
%
\end{proposition}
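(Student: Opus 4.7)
The plan is to realize $\{\Lr \mathbf{f}^{(k)}\}_{k \in \mathcal{I}_n^{\omega^{\sast}}}$ as the columns of an explicit matrix and reduce the linear independence question to a Vandermonde-type determinant computation.

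First I would unpack the definitions: $\mathcal{I}_n^{\omega^{\sast}}$ indexes $2\omega^{*}-1$ unitary DFT basis vectors $\mathbf{f}^{(k)}$ whose components are proportional to $e^{2\pi i j k / n}$. Setting $z_j := e^{2\pi i r_j / n}$, the $n_s \times (2\omega^{*}-1)$ matrix $M$ with columns $\Lr \mathbf{f}^{(k)}$ has entries proportional to $z_j^{k}$. Because the sample points $r_j$ are distinct elements of $\{1,\ldots,n\}$, the $z_j$ are distinct $n$-th roots of unity. Since $z_j^n = 1$, each index $k \in \mathcal{I}_n^{\omega^{\sast}}$ can be replaced by its residue in the symmetric block $\{-\omega^{*}+1, \ldots, \omega^{*}-1\}$ without altering any column of $M$; this exhibits $M$, up to a multiplicative constant, as a rectangular Vandermonde-type matrix with consecutive column exponents.

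The ``only if'' direction is then immediate by pigeonhole: if $n_s < 2\omega^{*}-1$, then there are more vectors than the ambient dimension $n_s$, forcing linear dependence. For the ``if'' direction, I would pick any $2\omega^{*}-1$ rows of $M$ (say rows $j_1,\ldots,j_{2\omega^{*}-1}$) and factor $z_{j_a}^{-\omega^{*}+1}$ out of row $a$, which is a nonzero scalar since $|z_{j_a}|=1$. The remaining matrix is a classical Vandermonde matrix in $z_{j_1}, \ldots, z_{j_{2\omega^{*}-1}}$, whose determinant $\prod_{a<b}(z_{j_b} - z_{j_a})$ is nonzero because the $z_{j_a}$ are pairwise distinct. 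Consequently $M$ has full column rank $2\omega^{*}-1$, which is precisely the required linear independence.

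The only real subtlety I expect is the bookkeeping behind the index reduction $k \mapsto k \bmod n$: one must verify that, under the convention used for $\mathcal{I}_n^{\omega^{\sast}}$, the $2\omega^{*}-1$ indices collapse under the identification $z_j^k = z_j^{k \bmod n}$ to a genuinely consecutive window of exponents, so that the factoring trick yields a standard (rather than generalized) Vandermonde structure. Once this is confirmed, the proof amounts to the pigeonhole bound in one direction and a single determinant evaluation in the other.
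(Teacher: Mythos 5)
Your proposal is correct and follows essentially the same route as the paper's proof: both reduce the question to a Vandermonde determinant by using $z_j^k = z_j^{k \bmod n}$ to shift the exponents of $\mathcal{I}_n^{\omega^{\sast}}$ into a consecutive block, factor a nonzero power of each $z_j$ out of each row, and invoke distinctness of the $n$-th roots of unity $z_j$ (plus the pigeonhole bound for necessity). The only cosmetic difference is that the paper treats the square case $n_s = 2\omega^{\sast}-1$ first and extends to larger $n_s$ by noting that appending components preserves independence, whereas you extract a square row-submatrix directly; the ``consecutive window'' bookkeeping you flag is exactly what the paper carries out explicitly in its ordering of the columns in \eqref*{eq:Recovery01_1}.
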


\begin{proof}
%
$ $

\underline{Necessity.} If $n_s < 2\omega^*-1$, the set of vectors $\{\Lr \mathbf{f}^{(k)}\}_{k\in\mathcal{I}_n^{\omega^{\sast}}}$ is linearly dependent because the number of linearly independent vectors cannot be higher than the number of components they consists of.

\underline{Sufficiency.} First, consider the case of $n_s = 2\omega^*-1$. Then, $\{\Lr\mathbf{f}^{(k)}\}_{k\in\mathcal{I}_n^{\omega^{\sast}}}$ is linearly independent if and only if the determinant of a matrix formed by concatenating all vectors from this set in an arbitrary order is not equal to zero (see, e.g., \cite[p.~13]{Zhang2011}). To show that the latter condition is satisfied in our case, define a matrix
%
\begin{equation}
\mathbf{M} = \Lr [\mathbf{f}^{(n-\omega^{\sast}+2)}, \mathbf{f}^{(n-\omega^{\sast}+3)}, \ldots, \mathbf{f}^{(n)}, \mathbf{f}^{(1)}, \mathbf{f}^{(2)}, \ldots, \mathbf{f}^{(\omega^{\sast}-1)}, \mathbf{f}^{(\omega^{\sast})}].
\label{eq:Recovery01_1}
\end{equation}
%
Taking into account that $\Lr \mathbf{f}^{(k)}$ can be written as $[(z^{r_1})^{k-1},(z^{r_2})^{k-1}, \ldots, (z^{r_{n_s}})^{k-1}]^\mathrm{T} / \sqrt{n}$, with $z=e^{\imath2\pi/n}$, and that $(z^{r_1})^{k}=(z^{r_1})^{k\,\mathrm{mod}\,n}$, $\mathbf{M}$ can be expressed as a product of a diagonal matrix and a Vandermonde matrix:
%
\begin{equation}
\mathbf{M} = \mathrm{diag}[\sqrt{n} \cdot \Lr \mathbf{f}^{(n-\omega^{\sast}+2)}] \, [\Lr \mathbf{f}^{(1)}, \Lr \mathbf{f}^{(2)}, \ldots, \Lr \mathbf{f}^{(2\omega^{\sast}-1)}].
\label{eq:Recovery01_2}
\end{equation}
%
Thus,
%
\begin{equation}
\begin{aligned}
\det\mathbf{M} &= \sqrt{n} \cdot \det\mathrm{diag}[\Lr \mathbf{f}^{(n-\omega^{\sast}+2)}] \cdot \det[\Lr \mathbf{f}^{(1)}, \Lr \mathbf{f}^{(2)}, \ldots, \Lr \mathbf{f}^{(2\omega^{\sast}-1)}]\\
%
&= \sqrt{n} \cdot \prod_{i=1}^{2\omega^{\sast}-1} \big( \Lr \mathbf{f}^{(n-\omega^{\sast}+2)} \big)_i \cdot \prod_{i=1}^{2\omega^{\sast}-1} \prod_{j=1}^{i-1} \big(\big(\Lr \mathbf{f}^{(2)})_i - (\Lr \mathbf{f}^{(2)} \big)_j \big)\\
%
&= \frac{1}{\sqrt{n}} \cdot \prod_{i=1}^{2\omega^{\sast}-1} \underbrace{e^{\imath2\pi(n-\omega^{\sast}+1)r_i/n}}_{\neq 0} \cdot \prod_{i=2}^{2\omega^{\sast}-1} \prod_{j=1}^{i-1} \underbrace{\big( e^{\imath2\pi r_i/n} - e^{\imath2\pi r_j/n} \big)}_{\neq 0}\\
%
&\neq 0,
\end{aligned}
\label{eq:Recovery01_3}
\end{equation}
%
which implies that the set $\{\Lr \mathbf{f}^{(k)}\}_{k\in\mathcal{I}_n^{\omega^{\sast}}}$ is linearly independent for $n_s=2\omega^*-1$. When writing the second equality above, we used the expression of the determinant of a Vandermonde matrix (see, e.g., \cite[p.~143]{Zhang2011}).

It follows from the definition of linear independence that extending each vector in the set by additional components cannot change the set from linearly independent to linearly dependent. Hence, the linear independence of $\{\Lr \mathbf{f}^{(k)}\}_{k\in\mathcal{I}_n^{\omega^{\sast}}}$ for $n_s=2\omega^*-1$ implies the linear independence of $\{\Lr \mathbf{f}^{(k)}\}_{k\in\mathcal{I}_n^{\omega^{\sast}}}$ for $n_s>2\omega^*-1$.
%
\end{proof}

\begin{remark}
The fact that $\{\Lr \mathbf{f}^{(k)}\}_{k \in \mathcal{I}_n^{\omega^{\sast}}}$ is linearly independent for $n_s \geq 2\omega^*-1$ means that the system of linear equations $\mathbf{L}_\mathbf{r}\mathbf{x} = \sum_{k\in\mathcal{I}_n^{\omega^{\sast}}} \big( a_k \cdot \Lr \mathbf{f}^{(k)} \big)$ has a unique solution for all $\mathbf{x} \in \Swmast$ if $n_s \geq 2\omega^*-1$. In other words, every $\mathbf{x} \in \Swmast$ can be recovered from its known sample points then.
\end{remark}


\begin{proposition}
%
Consider some $\mathbf{x}^{(1)} \in \Swmast$ and $\mathbf{x}^{(2)} \in \Swmast$. Assume a set of regularly spaced $n_s \geq 2\omega^*-1$ sample points encoded by components of a vector $\mathbf{r} \in \mathbb{N}_+^{n_s}$ such that $r_{i+1}-r_{i} = n/n_s$ for every $i \in \mathcal{I}_{n_s}$. Then,
%
\begin{equation}
\| \mathbf{x}^{(1)} \|_2 / \| \mathbf{x}^{(2)} \|_2 = \| \Lr \mathbf{x}^{(1)} \|_2 / \| \Lr \mathbf{x}^{(2)} \|_2.
\label{eq:Recovery02_1}
\end{equation}
%
\label{prop:Recovery02}
%
\end{proposition}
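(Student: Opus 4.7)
The plan is to prove the stronger Parseval-type identity: the regular-sampling operator $\Lr$ acts as a scalar multiple of an isometry on $\Swmast$, i.e., there exists a constant $c>0$ depending only on $n$ and $n_s$ such that $\|\Lr \mathbf{x}\|_2 = c \|\mathbf{x}\|_2$ for every $\mathbf{x} \in \Swmast$. Once this is in hand, applying it separately to $\mathbf{x}^{(1)}$ and $\mathbf{x}^{(2)}$ and forming the ratio cancels $c$, yielding \eqref{eq:Recovery02_1}.

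To establish the isometric identity, I would expand $\mathbf{x}$ in the DFT basis supported on $\mathcal{I}_n^{\omega^{\sast}}$ as $\mathbf{x} = \sum_{k \in \mathcal{I}_n^{\omega^{\sast}}} a_k \mathbf{f}^{(k)}$, so that $\|\mathbf{x}\|_2^2 = \sum_k |a_k|^2$ by orthonormality of $\{\mathbf{f}^{(k)}\}$. After applying $\Lr$, the squared norm $\|\Lr \mathbf{x}\|_2^2$ expands as a double sum of cross-inner products $\langle \Lr \mathbf{f}^{(k)}, \Lr \mathbf{f}^{(k')} \rangle$. The key computation is to substitute the explicit form $(\Lr \mathbf{f}^{(k)})_i = e^{\imath 2\pi (k-1) r_i/n}/\sqrt{n}$ together with the regular-spacing relation $r_i = r_1 + (i-1)n/n_s$; this collapses the inner product into a geometric sum over $n_s$-th roots of unity, which equals $n_s/n$ when $k - k' \equiv 0 \pmod{n_s}$ and vanishes otherwise.

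The main obstacle is the casework confirming that, for any distinct $k, k' \in \mathcal{I}_n^{\omega^{\sast}}$, the integer $k-k'$ is not a nonzero multiple of $n_s$. Since $\mathcal{I}_n^{\omega^{\sast}} = \{1, \ldots, \omega^{*}\} \cup \{n - \omega^{*} + 2, \ldots, n\}$ (as implicit in \PropRef{prop:Recovery01}) and $n_s$ divides $n$, I would split into three cases (both indices in the lower block, both in the upper block, or one in each) and check that the nonzero values of $|k-k'|$ all lie in $\{1, \ldots, \omega^{*}-1\} \cup \{n - 2\omega^{*} + 2, \ldots, n - 1\}$; neither interval can contain a nonzero multiple of $n_s$ under the hypothesis $n_s \geq 2\omega^{*} - 1$. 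This is precisely where the sampling-rate assumption enters, mirroring the Nyquist bound. With orthogonality established, the sum collapses to $\|\Lr \mathbf{x}\|_2^2 = (n_s/n)\sum_k |a_k|^2 = (n_s/n) \|\mathbf{x}\|_2^2$, so $c = \sqrt{n_s/n}$, and the claimed ratio equality follows at once.
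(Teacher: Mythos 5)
Your proposal is correct and follows essentially the same route as the paper: both reduce the claim to the Parseval-type identity $\|\Lr\mathbf{x}\|_2^2=(n_s/n)\|\mathbf{x}\|_2^2$ on $\Swmast$, obtained from the orthogonality of the regularly subsampled DFT vectors. The paper phrases the key step as recognizing $\Lr\mathbf{f}^{(k)}$ as a scaled column of the unitary $n_s\times n_s$ DFT matrix (with the hypothesis $n_s\geq 2\omega^*-1$ ruling out aliasing collisions between distinct $k$), which is exactly the content of your geometric-sum computation and index casework.
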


\begin{proof}
By the definition of $\Swmast$ (see Section~II-A in the main text),
%
\begin{equation}
\mathbf{x} = \sum_{k\in\mathcal{I}_n^{\omega^{\sast}}} \big( a_k \cdot \mathbf{f}^{(k)} \big), \quad \mathbf{x} \in \Swmast.
\label{eq:Recovery02_2}
\end{equation}
%
Moreover, it follows from the unitary property of the DFT matrix, $\langle \mathbf{f}^{(j)},\mathbf{f}^{(k)}\rangle=\delta_{j,k}$,\footnote{Here, and in the sequel, $\delta_{i,j}$ denotes the Kronecker delta.} that
%
\begin{equation}
\|\mathbf{x}\|_2^2 = \sum_{k\in\mathcal{I}_n^{\omega^{\sast}}} |a_k|^2.
\label{eq:Recovery02_3}
\end{equation}
%
Applying $\Lr$ to both sides of \eqref{eq:Recovery02_2} yields
%
\begin{equation}
\mathbf{L}_\mathbf{r}\mathbf{x} = \sum_{k\in\mathcal{I}_n^{\omega^{\sast}}} \big( a_k \cdot \Lr \mathbf{f}^{(k)} \big), \quad \mathbf{x} \in \Swmast,
\label{eq:Recovery02_4}
\end{equation}
%
where, taking into account that $r_{i+1}-r_{i} = n/n_s$,
%
\begin{equation}
\begin{aligned}
\Lr \mathbf{f}^{(k)} &= [e^{\imath 2\pi \cdot r_1 \cdot(k-1)/n}, e^{\imath 2\pi \cdot r_2 \cdot(k-1)/n}, \ldots, e^{\imath 2\pi \cdot r_{n_s} \cdot(k-1)/n}]^\mathrm{T} / \sqrt{n}\\
%
&= [1, e^{\imath 2\pi \cdot 1 \cdot(k-1)/n_s}, \ldots, e^{\imath 2\pi \cdot (n_s-1) \cdot(k-1)/n_s}]^\mathrm{T} / \sqrt{n_s} \cdot \big(e^{\imath 2\pi \cdot r_1 \cdot(k-1)/n} \cdot \sqrt{n_s/n}\big).
\end{aligned}
\label{eq:Recovery02_5}
\end{equation}
%
Note that $\Lr \mathbf{f}^{(k)}$ is the $k$-th column of the unitary $n_s \times n_s$ DFT matrix multiplied by a coefficient whose absolute value is equal to $\sqrt{n_s/n}$. Therefore, analogously to \eqref{eq:Recovery02_3},
%
\begin{equation}
\|\mathbf{L}_\mathbf{r}\mathbf{x}\|_2^2 = (n_s/n) \cdot \sum_{k\in\mathcal{I}_n^{\omega^{\sast}}} |a_k|^2,
\label{eq:Recovery02_6}
\end{equation}
%
as long as $n_s \geq 2\omega^*-1$.\footnote{If $n_s<2\omega^{\scalebox{0.8}{$*$}}-1$, some of the vectors $\Lr \mathbf{f}^{(k)}$ and $\Lr \mathbf{f}^{(j)}$ are identical for $k \neq j$, and hence, \eqref{eq:Recovery02_6} does not apply.} Consequently, $\|\mathbf{L}_\mathbf{r}\mathbf{x}\|_2 = \sqrt{n_s/n} \cdot \|\mathbf{x}\|_2$ for $\mathbf{x} \in \Swmast$, which implies \eqref{eq:Recovery02_1}.
\end{proof}

\section{\textbf{Recovery Proofs} \label{sec:SMRecovery2}}

In this section, we prove the modulator recovery conditions stated in the main text in the form of \textit{Propositions~II.1\,--\,II.4}. We repeat the original assertions from the main text for the sake of convenience. 


%
\begin{proposition_II_1}
%
For almost every $\mathbf{m} \in \Mwm$, $\mathbf{\hat{m}} = \mathbf{m}$ only if $\varpi \geq \omega$, and \mbox{$\mathbf{c} \in \Cd$} with $n_s \equiv \sum_{i=1}^nI_{\{1\}}(|c_i|) \geq \varpi+\omega-1 \implies d \leq n-(\varpi+\omega-2)$.\footnote{Here $\mathbf{\hat{m}}$ is as defined by (6) in the main text.}$^,\,$\footnote{As $d \leq n-(\varpi+\omega-2)$ is implied by $\sum_{i=1}^nI_{\{1\}}(|c_i|) \geq \varpi+\omega-1$, we do not refer to it explicitly in this proof.}
%
\end{proposition_II_1}

\begin{proof}
%
We prove the proposition by showing that, almost everywhere in $\Mwm$, $\mathbf{m} \neq \mathbf{\hat{m}}$ if $\mathbf{c} \notin \Cd$, or $\varpi < \omega$, or $n_s < \varpi + \omega - 1$. For the sake of convenience, we restate the definition of $\mathbf{\hat{m}}$ here:
%
\begin{equation}
\mathbf{\hat{m}} = \underset{\mathbf{x} \in \Sgeqss \cap \Sw}{\arg\min} \| \mathbf{x} \|_2.
\label{eq:Recovery2_1}
\end{equation}
%

If $\mathbf{c} \notin \Cd$, it means that either $|c_i|<1$, for every $i \in \mathcal{I}_n$, or there exists at least one $i \in \mathcal{I}_n$ such that $|c_i|>1$. In the former case, $|s_i|/m_i<1$ for every $i \in \mathcal{I}_n$. Hence, for $\alpha = \max \{|s_i|/m_i\}_{i \in \mathcal{I}_n}$, $\alpha \cdot \mathbf{m}$ belongs to $\Sgeqs \cap \Sw$ but has a smaller norm than $\mathbf{m}$, i.e., $\mathbf{m} \neq \mathbf{\hat{m}}$. In the latter case,
%
\begin{equation}
\bigg(\underset{\mathbf{x} \in \Sgeqss \cap \Sw}{\arg\min} \| \mathbf{x} \|_2\bigg)_i > m_i
\label{eq:Recovery2_10}
\end{equation}
%
for all $i$ corresponding to $|c_i|>1$ because $|s_i|>m_i$ then. Thus, $\mathbf{m} = \mathbf{\hat{m}}$ does not apply either. Therefore, $\mathbf{m} = \mathbf{\hat{m}}$ holds only if $\mathbf{c} \in \mathcal{C}_d$.

If $\varpi < \omega$, then the subset of modulators for which $\mathbf{m} = \mathbf{\hat{m}}$ is valid has the cardinality of $\mathbb{R}^{2\varpi-1}$, and hence, has zero volume in $\Mwm$, whose cardinality is that of $\mathbb{R}^{2\omega-1}$.

Next, assume that $\mathbf{c} \in \mathcal{C}_d$, and $\varpi \geq \omega$, but $n_s < \varpi+\omega-1$. Let us represent indexes of all sample points corresponding to $|c_i|=1$ by a vector $\mathbf{r} \in \mathbb{N}_+^{n_s}$. Then, analogously to \eqref{eq:Recovery02_4}, we have 
%
\begin{equation}
\Lr \mathbf{m} = \sum_{k\in\Inw} \big( a_k \cdot \Lr \mathbf{f}^{(k)} \big).
\label{eq:Recovery2_11}
\end{equation}
%
For the sake of convenience, let us redefine 
%
\begin{align}
\mathbf{f}^{(k)} = \begin{cases} \bm{\varphi}^{(1)}, & \mbox{$k=1$}\\ (\bm{\varphi}^{(2k-2)}+\imath \cdot \bm{\varphi}^{(2k-1)}) / \sqrt{2}, & \mbox{$2 \leq k \leq \varpi$}\\ (\bm{\varphi}^{(2(n-k)+2)}+\imath \cdot \bm{\varphi}^{(2(n-k)+3)}) / \sqrt{2}, & \mbox{$n-\varpi+2 \leq k=n$} \end{cases},
\label{eq:Recovery2_12}
\end{align}
%
and
%
\begin{align}
a_k = \begin{cases} \alpha_1, & \mbox{$k=1$}\\ (\alpha_{2k-2}+\imath \cdot \alpha_{2k-1}) / \sqrt{2}, & \mbox{$2 \leq k \leq \varpi$}\\ (\alpha_{2(n-k)+2}+\imath \cdot \alpha_{2(n-k)+3}) / \sqrt{2}, & \mbox{$n-\varpi+2 \leq k=n$} \end{cases}.
\label{eq:Recovery2_13}
\end{align}
%
Then, \eqref{eq:Recovery2_11} turns into
%
\begin{equation}
\Lr \mathbf{m} = \sum_{i=1}^{2\varpi-1} \big( \alpha_i \cdot \Lr \bm{\varphi}^{(i)} \big),
\label{eq:Recovery2_14}
\end{equation}
%

According to \PropRef{prop:Recovery01}, a set of $n_s$ vectors $\{ \Lr \mathbf{f}^{(1)}, \ldots, \Lr \mathbf{f}^{(\lceil (n_s+1)/2 \rceil}, \Lr \mathbf{f}^{(n-\lfloor (n_s-3)/2 \rfloor}, \ldots, \Lr \mathbf{f}_n\}$ is linearly independent. Hence, by \eqref{eq:Recovery2_12}, the same applies to $\{\Lr \bm{\varphi}^{(i)}\}_{i=1}^{n_s}$. Consequently, \eqref{eq:Recovery2_14} is an underdetermined system of linear equations defined by a full-rank matrix. We know from linear algebra that a general solution of such system is expressed as a sum of its separate solution $\bm{\alpha}^{(0)}$ and a solution of 
%
\begin{equation}
\mathbf{0} = \sum_{i=1}^{2\varpi-1} \big( \alpha_i \cdot \Lr \bm{\varphi}_i \big),
\label{eq:Recovery2_15}
\end{equation}
%
Solutions of \eqref{eq:Recovery2_15} form a $(2\varpi-1-n_s)$-dimensional subspace of $\mathbb{R}^{2\varpi-1}$. Thus, we can express the general solution of \eqref{eq:Recovery2_14} as
%
\begin{equation}
\bm{\alpha} = \bm{\alpha}^{(0)} + \sum_{i=1}^{2\varpi-1-n_s} z_i \bm{\rho}^{(i)}, \quad \mathbf{z} \in \mathbb{R}^{2\varpi-1-n_s},
\label{eq:Recovery2_16}
\end{equation}
%
where $\{\bm{\rho}^{(i)}\}_{i=1}^{2\varpi-1-n_s}$ is an orthonormal basis of the space of solutions of \eqref{eq:Recovery2_15}. Taking into account \eqref{eq:Recovery2_12}\,--\,\eqref{eq:Recovery2_13} as well as the linear independence of $\{ \mathbf{f}^{(k)}\}_{k \in \Inw}$ and $\{\bm{\rho}^{(i)}\}_{i=1}^{2\varpi-1-n_s}$, \eqref{eq:Recovery2_16} together with \eqref{eq:Recovery02_4} define a linear injective function that maps from $\mathbb{R}^{2\varpi-1-n_s}$ to $\Sw$:
%
\begin{equation}
f(\mathbf{z}) = \sum_{j=1}^{2\varpi-1} \bigg( \bm{\alpha}^{(0)} + \sum_{i=1}^{2\varpi-1-n_s} z_i \bm{\rho}^{(i)} \bigg)_j \, \bm{\varphi}^{(j)}, \quad \mathbf{z} \in \mathbb{R}^{2\varpi-1-n_s}.
\label{eq:Recovery2_17}
\end{equation}
%
The image of $f(\mathbf{z})$ is a subset of those elements of $\Sw$ that coincide with the true modulator $\mathbf{m}$ at entries $\mathbf{r} \in \mathbb{N}_+^{n_s}$. The injective nature of this function guarantees the existence of a unique $\mathbf{z}_\mathbf{m} \in \mathbb{R}^{2\varpi-1-n_s}$ such that $\mathbf{m} = f(\mathbf{z}_\mathbf{m})$.

Now, assume an $\mathbf{m} \in \Mwm^+ = \{\mathbf{x} \in \Mwm: m_i>0, i \in \mathcal{I}_n \}$, i.e., any feasible modulator whose all entries are strictly positive. Define an $\epsilon = \min \{m_i-|s_i|: (|c_i| < 1) \land (i \in \mathcal{I}_n) \}$. $\epsilon$ exists and is positive because $|c_i| = 1$ only for $n_s<n$ out of $n$ components of $\mathbf{c}$ by the assumption of the proposition, and $m_i-|s_i| = m_i(1-c_i)$. The linearity of $f(\mathbf{z})$ implies its continuity at every point of its domain. Hence, there exists an $\eta$ such that $\|f(\mathbf{z})-f(\mathbf{z}_\mathbf{m}) \|_2 < \epsilon$ whenever $\mathbf{z} \in \mathcal{H}_{\mathbf{z}_\mathbf{m},\eta} = \{\mathbf{z} \in \mathbb{R}^{2\varpi-1-n_s}: \|\mathbf{z}-\mathbf{z}_\mathbf{m}\|_2 < \eta \} $. On the other hand, $\|f(\mathbf{z})-f(\mathbf{z}_\mathbf{m})\|_2 = \sqrt{\sum_{i=1}^n[(f(\mathbf{z}))_i - m_i]^2} < \epsilon$ implies $|(f(\mathbf{z}))_i - m_i| < \epsilon$, and hence $(f(\mathbf{z}))_i > |s_i|$, for every $i \in \mathcal{I}_n$. Consequently, 
%
\begin{equation}
f(\mathbf{z}) \in (\Sgeqs \cap \Sw ), \quad \mathbf{z} \in \mathcal{H}_{\mathbf{z}_\mathbf{m},\eta}.
\label{eq:Recovery2_18}
\end{equation}
%
Furthermore, if $\mathbf{m} = \mathbf{\hat{m}}$, then $\|\mathbf{m}\|_2^2 < \| f(\mathbf{z}) \|_2^2$ for every $\mathbf{z} \in (\mathcal{H}_{\mathbf{z}_\mathbf{m},\eta} \setminus \mathbf{z}_\mathbf{m})$, i.e., $\mathbf{z}_\mathbf{m}$ is a strict local minimum point of
%
\begin{equation}
\begin{aligned}
\| f(\mathbf{z}) \|_2^2 &= \Bigg\| \sum_{j=1}^{2\varpi-1} \bigg( \bm{\alpha}^{(0)} + \sum_{i=1}^{2\varpi-1-n_s} z_i \bm{\rho}^{(i)} \bigg)_j \, \bm{\varphi}^{(j)} \Bigg\|_2^2 \\
%
&= \Bigg\| \sum_{j=1}^{2\varpi-1} \bigg( \bm{\alpha}^{(0)} + \sum_{i=1}^{2\varpi-1-n_s} z_i \bm{\rho}^{(i)} \bigg) \Bigg\|_2^2 \\
%
&= \| \bm{\alpha}^{(0)} \|_2^2 + \sum_{i=1}^{2\varpi-1-n_s} \big(z_i^2 + 2 \cdot z_i \cdot  \langle \bm{\alpha}^{(0)}, \bm{\rho}^{(i)} \rangle \big).
\end{aligned}
\label{eq:Recovery2_19}
\end{equation}
%
$\| f(\mathbf{z}) \|_2^2$ is a continuous, differentiable function with a positive-definite Hessian: $\partial^2 \| f(\mathbf{z}) \|_2^2 /\partial z_i \partial z_j = \delta_{i,j}$. Thus, it has a unique strict local minimum point defined by $\partial \| f(\mathbf{z}) \|_2^2 / \partial z_i=0$: \footnote{This strict local minimum point is the only local minimum point of this function.}
%
\begin{equation}
z_i^\dagger = - \langle \bm{\alpha}^{(0)}, \bm{\rho}^{(i)}\rangle, \quad i \in \mathcal{I}_{2\varpi-1-n_s}.
\label{eq:Recovery2_20}
\end{equation}
%

Remember that, without loss of generality, $\bm{\alpha}^{(0)}$ is a particular solution of \eqref{eq:Recovery2_14}. $\bm{\alpha}^{(0)}$ corresponding to $\mathbf{m}$, i.e., $\bm{\alpha}^{(0)} \equiv (\alpha_1^{(0)}, \alpha_2^{(0)}, \ldots, \alpha_{2\omega-1},0,\ldots,0)^\mathrm{T}$ with $\mathbf{m} = \sum_{i=1}^{2\omega-1}\alpha_i^{(0)}\bm{\varphi}^{(i)}$, is exactly such a solution. In this case, as follows from \eqref{eq:Recovery2_17}, $\mathbf{m} = f(\mathbf{z}^\dagger)$ if and only if $\mathbf{z}^\dagger = \mathbf{0}$.\footnote{Note that $\{ \bm{\rho}_i \}_{i=1}^{2\varpi-1-n_s}$ is linearly independent.} According to \eqref{eq:Recovery2_20}, that is equivalent to requiring $\bm{\alpha}^{(0)}$ to be a solution of the following homogeneous system of linear equations:
%
\begin{equation}
\langle \bm{\alpha}^{(0)}, \bm{\rho}^{(i)}\rangle = 0, \quad i \in \mathcal{I}_{2\varpi-1-n_s}.
\label{eq:Recovery2_21}
\end{equation}
%
Hence, the subset of $\mathcal{M}_{\omega}^+$ to which $\mathbf{m} = \mathbf{\hat{m}}$ applies has the same cardinality as $\mathbb{R}^{D}$, where $D$ is the dimension of the solution space of \eqref{eq:Recovery2_21}. Taking into account the linear independence of $\{ \bm{\rho}^{(i)} \}_{i=1}^{2\varpi-1-n_s}$, $D$ is equal to the difference between the number of elements of $\bm{\alpha}^{(0)}$ that are not identically equal to zero and the number of equations that are not trivially satisfied by any feasible $\bm{\alpha}^{(0)}$. The latter depends on $\mathbf{c}$, specifically, on the positions of sample points with $|c_i|=1$. To see this, consider two cases:
%
\begin{itemize}
%
\item A carrier with equidistantly-spaced true sample points: $|c_{i+(j-1) \cdot d}|=1$ for some $i \in \mathcal{I}_{d}$ and every $j \in \mathcal{I}_{n_s}$, where $d=n/n_s$. Then, it follows from \eqref{eq:Recovery02_5} that some of the elements of the system $\{\Lr \mathbf{f}^{(k)}\}_{k \in \Inw}$ are identical as long as $n_s < 2\varpi-1$.\footnote{Note that $n_s < \varpi + \omega-1$ and $\varpi \geq \omega$ imply $n_s <2\varpi-1$.} Specifically, we have
%
\begin{align}
\Lr \mathbf{f}^{(k)} = \begin{cases} \Lr \mathbf{f}^{(k+n-n_s)}, & \mbox{$\lfloor (n_s+4)/2 \rfloor \leq k \leq \varpi$}\\ \Lr \mathbf{f}^{(k+n_s-n)}, & \mbox{$n-\varpi+2 \leq k \leq n-\lfloor(n_s-1)/2\rfloor$} \end{cases}.
\label{eq:Recovery2_23}
\end{align}
%
Equivalently,
%
\begin{align}
\Lr \bm{\varphi}_{n_s+\chi_{n_s}-i} = (-1)^{\chi_i} \cdot \Lr \bm{\varphi}_{n_s-\chi_{n_s}+2 \chi_i + i}, \qquad 1 \leq i \leq 2\varpi-2+\chi_{n_s}-n_s
\label{eq:Recovery2_24}
\end{align}
%
and $\Lr \bm{\varphi}_{(n_s+1)} = \mathbf{0}$ when $\chi_{n_s}=0$, where, $\chi_z = z~\textrm{mod}~2$. Consequently,\footnote{Note that, as discussed before, $\{\bm{\varphi}_i\}_{i=1}^{n_s}$ is linearly independent.}
%
\begin{align}
(\bm{\rho}^{(i)})_j = \begin{cases} 1/\sqrt{2}, & \mbox{$j=n_s+\chi_{n_s}-i$}\\ (-1)^{\chi_i+1}/\sqrt{2}, & \mbox{$j=n_s-\chi_{n_s}+2 \chi_i+i$}\\
0, & \mbox{otherwise} \end{cases}, \qquad 1 \leq i \leq 2\varpi-2+\chi_{n_s}-n_s,
\label{eq:Recovery2_25}
\end{align}
%
and $\bm{\rho}^{(2\varpi-1-n_s)} = \mathbf{0}$ when $\chi_{n_s}=0$. Moreover, by our choice of $\bm{\alpha}^{(0)}$,
%
\begin{align}
\alpha_j^{(0)} = 0, \qquad 2\omega \leq j \leq 2\varpi-1.
\label{eq:Recovery2_26}
\end{align}
%
\eqref{eq:Recovery2_25} and \eqref{eq:Recovery2_26} together imply that \eqref{eq:Recovery2_21} holds for $i \in \mathcal{I}_{n_s-(2\omega-1)}$ independent of $\bm{\alpha}^{(0)}$ corresponding to the chosen $\mathbf{m} \in \Mwm^+$, and that \eqref{eq:Recovery2_21} holds for the remaining $i \in (\mathcal{I}_{2\varpi-1-n_s} \setminus \mathcal{I}_{n_s-(2\omega-1)})$ if and only if
%
\begin{align}
\alpha_j^{(0)} = 0, \qquad 2(n_s+1-\varpi) \leq j \leq 2\omega-1.
\label{eq:Recovery2_27}
\end{align}
%
Hence, \eqref{eq:Recovery2_21} applies only to the subset of $\Mwm^+$ with
%
\begin{equation}
\begin{aligned}
D &= \min \{(2\omega-1), ~(2\omega-1) - 2(\varpi + \omega - 1 - n_s) \} \\
  &= \min \{(2\omega-1), ~2(n_s+1-\varpi)-1 \}.
\label{eq:Recovery2_28}
\end{aligned}
\end{equation}
%
\eqref{eq:Recovery2_28} implies that $D < 2\omega-1$ as long as $n_s < \varpi + \omega-1$. Then, the subset of all $\mathbf{m} \in \Mwm^+$ that satisfy $\mathbf{m} = \mathbf{\hat{m}}$ has zero volume in $\Mwm^+$, which has the cardinality of $\mathbb{R}^{2\omega-1}$. Moreover, if $n_s < \varpi$, it follows from \eqref{eq:Recovery2_27} that \eqref{eq:Recovery2_21} has only the trivial solution $\bm{\alpha}^{(0)}=0$, which implies $\mathbf{m}=0$. The latter is infeasible in our case, and hence, $\mathbf{m} = \mathbf{\hat{m}}$ does not apply to any $\mathbf{m} \in \Mwm^+$. On the other hand, if $n_s \geq \varpi + \omega - 1$, then all equations of the \eqref{eq:Recovery2_21} system are satisfied independent of the actual feasible $\bm{\alpha}^{(0)}$. Consequently, $D = 2\omega - 1$.

\item An arbitrary $\mathbf{c} \in \Cd$ with unspecified structure. Then, none of the $2\varpi-1-n_s$ equations of the system \eqref{eq:Recovery2_21} are satisfied independent of the actual $\bm{\alpha}^{(0)}$.\footnote{One possible example of this kind is $\mathbf{c}$ with $|c_i|=1$ for $i \in \mathcal{I}_{n_s}$, and $|c_i|<1$ for $i \in (\In \setminus \mathcal{I}_{n_s})$.} Therefore,
%
\begin{equation}
\begin{aligned}
D &= (2\omega-1)-(2\varpi-1-n_s)\\
 &= n_s - 2(\varpi - \omega).
\end{aligned}
\label{eq:Recovery2_29}
\end{equation}
%
\eqref{eq:Recovery2_29} is valid only if $(2\omega-1) > (2\varpi-1-n_s)$. Otherwise, \eqref{eq:Recovery2_21} has only the trivial solution $\bm{\alpha}^{(0)} = \mathbf{0}$, which implies $\mathbf{m}=0$. The latter is infeasible in our case, and hence, $\mathbf{m} = \mathbf{\hat{m}}$ does not apply to any $\mathbf{m} \in \Mwm^+$. In any case, $D < (2\omega-1)$ as long as $n_s < (2\varpi-1)$, which follows from the assumption of the proposition that $n_s < (\varpi + \omega - 1)$ and $\varpi \geq \omega$. In fact, $(2\varpi-1) - (\varpi + \omega - 1) = (\varpi - \omega)$.
\end{itemize}
%

Compared with the general case, a smaller number of necessary sample points with $|c_i|=1$ is achieved in the first example due to the fact that the subset of vectors $\Lr \bm{\varphi}^{(k)}$ with $2\omega \leq k \leq 2\varpi-1$ is linearly dependent and that $\alpha^{(0)}_k$ are identically equal to zero in that range. Specifically, every $\bm{\varphi}^{(k)}$ in the range $2\omega \leq n_s-1+\chi_{n_s}$ is proportional to one of the $\bm{\varphi}^{(k)}$ in the range $n_s+2-\chi_{n_s} \leq 2\varpi-1$. Every such dependence reduces the necessary number of points with $|c_i|=1$ for modulator recovery by one. Further, it follows from \eqref{eq:Recovery02_5}, \eqref{eq:Recovery2_12}, and \PropRef{prop:Recovery01} that the sets $\{\Lr \bm{\varphi}^{(k)}\}_{k=2\omega}^{n_s-1+\chi_{n_s}}$ and $\{\Lr\bm{\varphi}^{(k)}\}_{k=n_s+2-\chi_{n_s}}^{2\varpi-1}$ are linearly independent. Hence, no further linear dependencies among $\{\Lr \bm{\varphi}^{(k)}\}_{k=2\omega}^{2\varpi-1}$ are possible in general. This means that $n_s = \varpi + \omega - 1$ is the absolute minimum of sample points with $|c_i|=1$ necessary for $\mathbf{m} = \mathbf{\hat{m}}$ to hold. The second example above illustrates that this number is surely higher for some $\mathbf{c} \in \Cd$.

The last three paragraphs demonstrate that the subset of $\Mwm^+$ to which $\mathbf{m} = \mathbf{\hat{m}}$ applies has zero volume in $\Mwm^+$ if $n_s < \varpi+\omega-1$. Taking into account that $(\Mwm \setminus \Mwm^+)$ has lower cardinality than $\Mwm^+$ ($\mathbb{R}^{2\omega-2}$ vs. $\mathbb{R}^{2\omega-1}$), we conclude that the set of all $\mathbf{m} \in \Mwm$ that satisfy $\mathbf{m} = \mathbf{\hat{m}}$ has zero volume in $\Mwm$ if $n_s < \varpi+\omega-1$.

\end{proof}


\begin{proposition_II_2}
%
Consider $\mathbf{m} \in \Mwm$ and $\mathbf{\tilde{c}} \in \mathcal{C}_{\tilde{d}}$ with $|\tilde{c}_i|=1$ for $i \in \mathcal{J}_n \subseteq \mathcal{I}_n$, and $\tilde{c}_i=0$ otherwise. If $\mathbf{\hat{m}} = \mathbf{m}$ holds for the $\mathbf{m}$ and $\mathbf{\tilde{c}}$, then it also holds for every pair made of the same $\mathbf{m}$ and any $\mathbf{c} \in \Cd$ with $d \leq \tilde{d}$ and $|c_i|=1$ for $i \in \mathcal{J}_n$.\footnote{\label{footnote10}Here $\mathbf{\hat{m}}$ is as defined by (6) in the main text.}
\label{prop:Recovery4}
%
\end{proposition_II_2}

\begin{proof}
%
Denote $\mathbf{\tilde{s}} = \mathbf{m} \circ \mathbf{\tilde{c}}$ and $\mathbf{s} = \mathbf{m} \circ \mathbf{c}$. Now, note that $|c_i| \geq |\tilde{c}_i|$ by the condition of the proposition, and hence, $|s_i| \geq |\tilde{s}_i|$. Therefore, $\Sgeqs \subseteq \Sgeqst$, which implies $(\Sgeqs \cap \Sw) \subseteq (\Sgeqst \cap \Sw)$, and, consequently, $\underset{\mathbf{x} \in \Sgeqss \cap \Sw}{\arg\min} \| \mathbf{x} \|_2 \geq \underset{\mathbf{x} \in \Sgeqsst \cap \Sw}{\arg\min} \| \mathbf{x} \|_2$. According to the proposition, $\underset{\mathbf{x} \in \Sgeqsst \cap \Sw}{\arg\min} \| \mathbf{x} \|_2 = \mathbf{m}$. On the other hand, by construction, $\mathbf{m} \in (\Sgeqs \cap \Sw)$. Thus, $\underset{\mathbf{x} \in \Sgeqss \cap \Sw}{\arg\min} \| \mathbf{x} \|_2 = \mathbf{m}$.
%
\end{proof}

\begin{remark}
It can be shown by example that the validity of  $\mathbf{\hat{m}} = \mathbf{m}^{(1)}$ for some $\mathbf{m}^{(1)} \in \Mwm$ and $\mathbf{\tilde{c}} \in \Cd$ with $|\tilde{c}_i|=1$, $i \in \mathcal{J}_n$, does not necessarily imply the validity of $\mathbf{\hat{m}} = \mathbf{m}^{(2)}$ for another $\mathbf{m}^{(2)} \in \Mwm$ and the same $\mathbf{\tilde{c}}$.
\end{remark}


\begin{proposition_II_3}
Assume $\mathbf{m} \in \Mwm$ and $\mathbf{c} \in \Cd$ with $\varpi \geq \omega$. If, additionally, there exist $d \in \mathcal{I}_n$ and $i \in \mathcal{I}_{d}$ such that $n_s \equiv (n/d) \in \mathbb{N}_+$, $n_s \geq \varpi+\omega-1$, and $|c_{i+(j-1)\cdot d}|=1$ for every $j\in \mathcal{I}_{n_s}$, then $\mathbf{\hat{m}} = \mathbf{m}$.$^{\ref{footnote10}}$
\end{proposition_II_3}

\begin{proof}

Here, we distinguish between two cases: $(\varpi+\omega-1) \leq n_s < (2\varpi-1)$ and $n_s \geq (2\varpi-1)$.

If $(\varpi+\omega-1) \leq n_s < (2\varpi-1)$, then it follows from the proof of \textit{Proposition~II.1} that $\mathbf{m}$ has the smallest norm among all elements of the image of $f(\mathbf{z})$ defined by \eqref{eq:Recovery2_17}, i.e, all elements $\mathbf{x} \in \Sw$ that satisfy $\Lr \mathbf{x} = \Lr \mathbf{m}$. Next, consider a $\mathbf{y} \in \Swm$ such that $(\Lr \mathbf{y})_i \geq (\Lr \mathbf{m})_i$ for every $i \in \mathcal{I}_{n_s}$ and $(\Lr \mathbf{y})_i > (\Lr \mathbf{m})_i$ for at least one $i \in \mathcal{I}_{n_s}$. Then, by \eqref{eq:Recovery02_1}, $\|\mathbf{y}\|_2 > \|\mathbf{m}\|_2$. Moreover, using the same argumentation as for $\mathbf{m}$, we see that $\mathbf{y}$ has the smallest norm among all elements $\mathbf{x} \in \Sw$ that satisfy $\Lr \mathbf{x} = \Lr \mathbf{y}$. Hence, $\mathbf{m}$ has smaller norm than any other element $\mathbf{x} \in \Sw$ that satisfies $\Lr \mathbf{x} \geq \Lr \mathbf{m}$. Moreover, $(\Sgeqs \cap \Sw) \subset \Sw$. Thus, we conclude that $\mathbf{\hat{m}} = \mathbf{m}$ holds.

If $n_s \geq (2\varpi-1)$, then it follows from \eqref{eq:Recovery02_1} of \PropRef{prop:Recovery02} that
%
\begin{equation}
\underset{\mathbf{x} \in \Sgeqss \cap \Sw}{\arg\min} \| \mathbf{x} \|_2 = \underset{\mathbf{x} \in \Sgeqss \cap \Sw}{\arg\min} \| \mathbf{L}_\mathbf{r}\mathbf{x} \|_2.
\label{eq:Recovery2_30}
\end{equation}
%
Further, the constraint set $\Sgeqs$ implies that
%
\begin{equation}
\|\mathbf{L}_\mathbf{r}\mathbf{x}\|_2 \geq \|\mathbf{L}_\mathbf{r}|\mathbf{s}|\|_2, \quad \mathbf{x} \in \Sgeqs \cap \Sw.
\label{eq:Recovery2_31}
\end{equation}
%
On the other hand, $|s_{r_i}| = |m_{r_i} \cdot c_{r_i}| = m_{r_i} \cdot |c_{r_i}| = m_{r_i}$ for $i \in \mathcal{I}_{n_s}$, i.e., $\mathbf{L}_\mathbf{r}|\mathbf{s}|= [m_{r_1}, m_{r_2}, \ldots, m_{r_{n_s}}]^\mathrm{T}$. According to \PropRef{prop:Recovery01}, $\{\Lr \mathbf{f}^{(k)}\}_{k \in \Inw}$ is linearly independent if $n_s \geq 2\varpi-1$. Therefore, \eqref{eq:Recovery02_4}\footnote{Note that $\omega^{\scalebox{0.8}{$*$}}$ in \eqref{eq:Recovery02_2}\,--\,\eqref{eq:Recovery02_4} stands for $\varpi$ here.} has a unique solution, which, by \eqref{eq:Recovery02_2}, means that, among all $\Sw$, there is a unique $\mathbf{x}=\mathbf{m}$ that satisfies $\mathbf{L}_\mathbf{r}\mathbf{x}= [m_{r_1}, m_{r_2}, \ldots, m_{r_{n_s}}]^\mathrm{T} = \mathbf{L}_\mathbf{r}|\mathbf{s}|$. Hence, by \eqref{eq:Recovery2_31}, $\underset{\mathbf{x} \in \Sgeqss \cap \Sw}{\arg\min} \| \mathbf{L}_\mathbf{r}\mathbf{x} \|_2 = \mathbf{m}$, and, by \eqref{eq:Recovery2_30}, $\underset{\mathbf{x} \in \Sgeqss \cap \Sw}{\arg\min} \| \mathbf{x} \|_2 = \mathbf{m}$.
\vspace{9pt}
%
\end{proof}


\begin{proposition_II_4}
%
Consider $\mathbf{m} \in \Mwm$ and $\mathbf{c} \in \Sleqo$. Take $n_s \geq 2\varpi-1$ sample points of $\mathbf{s} = \mathbf{m} \circ\mathbf{c}$ whose indexes are defined as entries of any chosen $\mathbf{r} \in \mathbb{N}_+^{n_s}$ with $r_{i+1}-r_{i} = n/n_s$ for every $i \in \mathcal{I}_{n_s}$. Then,\footnote{Here $\mathbf{\hat{m}}$ is as defined by (6) in the main text.}
\label{prop:Recovery3}
%
\begin{equation}
\textstyle
\|\mathbf{m}-\mathbf{\hat{m}}\|_2 / \|\mathbf{m}\|_2 \leq \sqrt{1-\sum_{i=1}^{n_s} s_{r_i}^2 / \sum_{i=1}^{n_s}m_{r_i}^2}.
\label{eq:Recovery3_1}
\end{equation}
%
\end{proposition_II_4}

\begin{proof}
For the sake of convenience, we will exploit the linear transformation $\mathbf{L}_\mathbf{r}$, already introduced in \PropRef{prop:Recovery01}, that maps every $\mathbf{x} \in \mathbb{R}^n$ to $[x_{r_1},x_{r_2}, \ldots, x_{r_{n_s}}]^\mathrm{T}$. Then, \eqref{eq:Recovery3_1} can be rewritten as
%
\begin{equation}
\|\mathbf{m}-\mathbf{\hat{m}}\|_2 / \|\mathbf{m}\|_2 \leq \sqrt{1-\|\mathbf{L}_\mathbf{r}\mathbf{s} \|_2^2 / \|\mathbf{L}_\mathbf{r}\mathbf{m} \|_2^2 }.
\label{eq:Recovery3_2}
\end{equation}
%
Note that 
%
\begin{equation}
\|\mathbf{m}\|_2^2-\|\mathbf{\hat{m}}\|_2^2 - \|\mathbf{m}-\mathbf{\hat{m}} \|_2^2 = 2 \cdot \|\mathbf{\hat{m}}\|_2 \cdot (\|\mathbf{m}\|_2- \|\mathbf{\hat{m}}\|_2).
\label{eq:Recovery3_3}
\end{equation}
%
Next, we have by construction that $\mathbf{m} \in (\Sgeqs \cap \Sw)$. Hence,
%
\begin{equation}
\|\mathbf{m}\|_2^2 \geq \|\mathbf{\hat{m}}\|_2^2,
\label{eq:Recovery3_4}
\end{equation}
%
which, together with \eqref{eq:Recovery3_3} implies $\|\mathbf{m}\|_2^2-\|\mathbf{\hat{m}}\|_2^2 - \|\mathbf{m}-\mathbf{\hat{m}} \|_2^2 \geq 0$, i.e., 
%
\begin{equation}
\|\mathbf{m}-\mathbf{\hat{m}}\|_2 / \|\mathbf{m}\|_2 \leq \sqrt{1-\|\mathbf{\hat{m}} \|_2^2 / \| \mathbf{m} \|_2^2 }.
\label{eq:Recovery3_5}
\end{equation}
%
On the other hand, by \PropRef{prop:Recovery02} , $\|\mathbf{\hat{m}} \|_2^2 / \| \mathbf{m} \|_2^2 = \|\mathbf{L}_\mathbf{r}\mathbf{\hat{m}} \|_2^2 / \| \mathbf{L}_\mathbf{r}\mathbf{m} \|_2^2$ if $n_s \geq 2\varpi-1$ and $r_{i+1} - r_{i} = n/n_s$ for every $i \in \mathcal{I}_{n_s}$. Thus,
%
\begin{equation}
\|\mathbf{m}-\mathbf{\hat{m}}\|_2 / \|\mathbf{m}\|_2 \leq \sqrt{1-\|\mathbf{L}_\mathbf{r}\mathbf{\hat{m}} \|_2^2 / \| \mathbf{L}_\mathbf{r}\mathbf{m} \|_2^2 }.
\label{eq:Recovery3_6}
\end{equation}
%
Finally, $\hat{m}_i \geq |s_i|$ for every $i \in \mathcal{I}_n$ because $\mathbf{m} \in \Sgeqs$, which means that
%
\begin{equation}
\|\mathbf{L}_\mathbf{r}\mathbf{\hat{m}} \|_2^2 \geq \| \mathbf{L}_\mathbf{r}\mathbf{s} \|_2^2.
\label{eq:Recovery3_7}
\end{equation}
%
Combining \eqref{eq:Recovery3_6} with \eqref{eq:Recovery3_7} leads to \eqref{eq:Recovery3_2}.
%
\end{proof}

\begin{remark}
Note that $\|\mathbf{m}\|_2^2 = \|\mathbf{\hat{m}}\|_2^2$ in \eqref{eq:Recovery3_4} if and only if $\mathbf{m}=\mathbf{\hat{m}}$ because $\Sgeqs \cap \Sw$ is convex and $\| \ldots\|_2^2$ is strictly convex. Therefore, the equality in \eqref{eq:Recovery3_1} holds if and only if $\mathbf{m}=\mathbf{\hat{m}}$, i.e., the modulator recovery is exact.
\end{remark}

\section{\textbf{Further Analysis: Numerical Experiments} \label{sec:SMRecovery3}}

Here, we present the results of numerical experiments used to extend the modulator recovery conditions to carriers with nonuniformly placed sample points $|c_i|=1$ in terms of the parameters $n$, $\omega$, $\varpi$, and $d$.

\subsection*{Setup}
The numerical experiments under consideration consist of the following steps.
%
\begin{enumerate}
%
\item[1.] $10^3$ pairs of $\mathbf{m}$ and $\mathbf{c}$ are generated by randomly sampling from $\Mwm$ and $\Cd$ for every feasible combination of $\omega$ and $d$ consistent with a chosen signal length $n$.
%
\item[2.] For every pair of $\mathbf{m}$ and $\mathbf{c}$ generated, $\mathbf{m}$ is inferred from the $\mathbf{s} = \mathbf{m} \circ \mathbf{c}$ via $\mathbf{\hat{m}}$ defined by (6). The latter is evaluated by using the AP-P algorithm, introduced in Section~III-C of the main text, with $\epsilon_{tol} = 10^{-14}$ and unlimited $N_{iter}$.
%
\item[3.] For every combination of the parameters $\omega$ and $d$, two estimates related to the recovery error are evaluated: 1) the average empirical error $\langle E_m \rangle$ and 2) the fraction of cases with vanishing error $P(E_m < \varepsilon)$, where $\varepsilon$ is a positive number arbitrarily close to zero. $P(E_m < \varepsilon)$ can be seen as the demodulation success rate for a given error threshold $\varepsilon$.
%
\end{enumerate}
%

A crucial aspect of the above experiments to producing informative data for our purposes is the way $\Mwm$ and $\Cd$ are sampled. For both of these sets, we exploited uniform sampling but with some additional constraints, as explained next.
%
\begin{itemize}
%
\item The cutoff frequency $\omega$, defining the modulator set, and the cutoff frequency $\varpi$, defining the estimator $\mathbf{\hat{m}}$, were fixed to be equal. This choice allowed us to considerably reduce the extent of relevant parameter combinations to be checked without loss of generality. Indeed, $\varpi \geq \omega$ is a necessary condition for a full recovery independent of $\mathbf{c} \in \Cd$ and $\Mwm \subset \Mw$ if $\varpi > \omega$. Hence, all recovery conditions applicable in the case of $\varpi = \omega$ hold for $\varpi > \omega$ as well.
%
\item Only the subset of pure spike-train carriers consisting of $c_i \in \{0,1\}$ sample points was considered among all possible $\mathbf{c} \in \Cd$. According to \textit{Proposition~II.2}, that is sufficient for identifying full recovery conditions without loss of generality.
%
\item Different elements of the pure spike-train subset of $\Cd$ may substantially differ in the number $n_s$ of sample points with $|c_i|=1$. In particular, $\lceil n/d \rceil \leq n_s \leq n-d+1$. We considered modulator reconstruction by uniformly sampling either from the sparsest ($n_s = \lceil n/d \rceil$) or the densest ($n_s = n - d +1$) subset of spike-train carriers. In view of \textit{Proposition~II.2}, pure spike-train carriers with $n_s = \lceil n/d \rceil$ have the tightest, and hence the most general, constraints for exact modulator recovery in terms of the parameters $\varpi$ and $d$.
%
\end{itemize}
%
The algorithms for sampling from the modulator and carrier sets specified above are presented in Section~\ref{sec:SMSynthetic}.

\subsection*{Results}

%
\begin{figure*}[ht]
\centering
\includegraphics[width=1\textwidth]{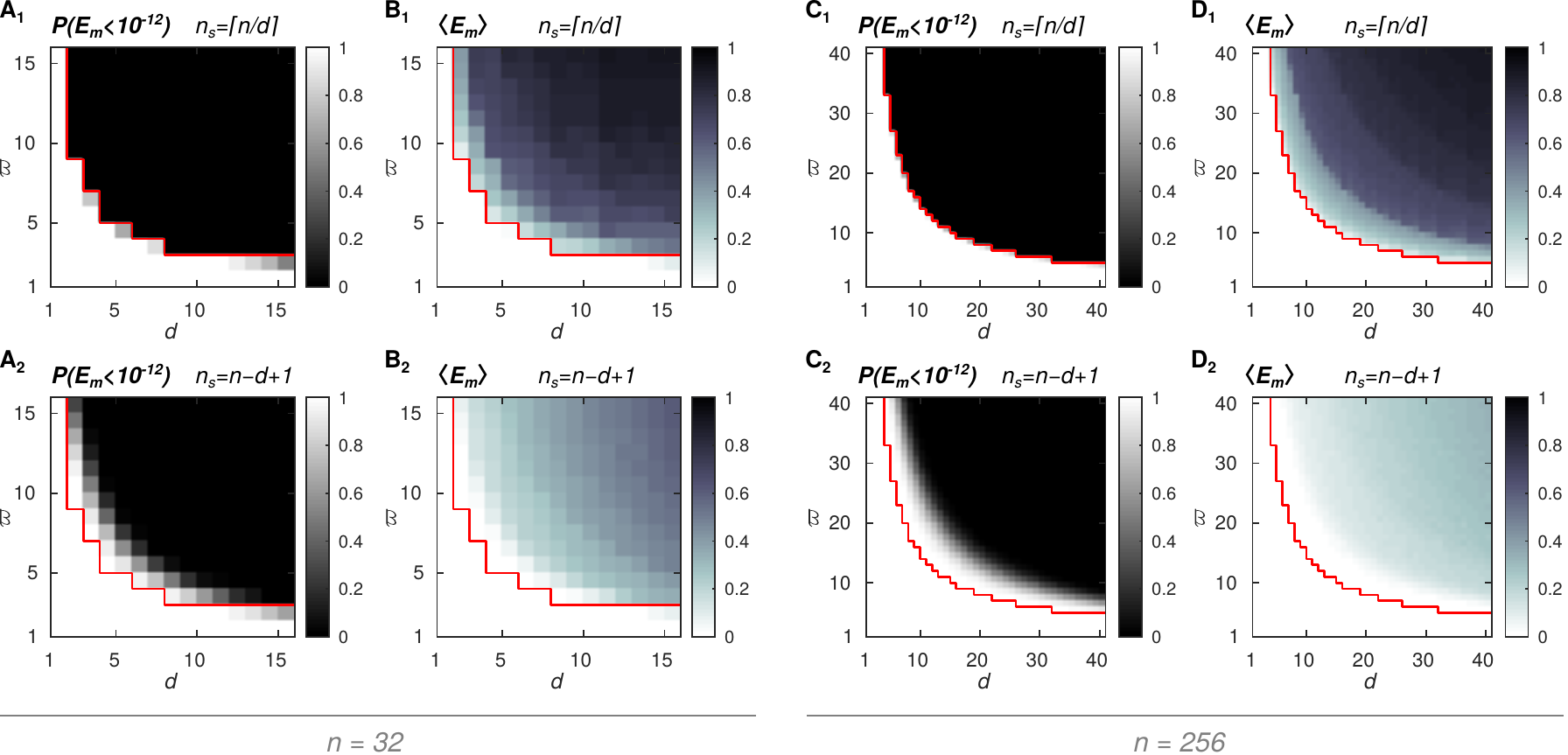}
\caption{\footnotesize Success rates and errors of modulator recovery for signals based on pure $c_i \in \{0,1\}$ spike-train carriers. $\mathbf{A_1}$--$\mathbf{A_2}$: color plots of the fraction ($P$) of modulator recovery cases with the recovery error $E_m$ lower than $10^{-12}$ for different combinations of $d$ and $\varpi$, and $n=32$; red lines plot the relation $\lceil n / d \rceil = 2\varpi-1$
. $\mathrm{A_1}$ displays the results for carriers with $n_s=\lceil n/d \rceil$ spikes, while $\mathrm{A_2}$ corresponds to carriers with $n_s=n-d+1$. $\mathbf{B_1}$--$\mathbf{B_2}$: the same as $\mathrm{A_1}$--$\mathrm{A_2}$, but with the average recovery error instead of the success rate over all modulator and carrier pairs shown for each combination of $d$ and $\varpi$. $\mathbf{C_1}$--$\mathbf{C_2}$: the same as $\mathrm{A_1}$--$\mathrm{A_2}$ except that $n=256$. $\mathbf{D_1}$--$\mathbf{D_2}$: the same as $\mathrm{B_1}$--$\mathrm{B_2}$ except that $n=256$.}
\label{fig:App1}
\end{figure*}
%

Fig.\,\ref{fig:App1}\,$\mathrm{A_1}$ displays color-plots of the fraction $P$ of the modulator recovery cases with $E_m<10^{-12}$ over the $(d,\varpi)$ plane for $n=32$ and $n_s = \lceil n/d \rceil$. 
In agreement with the necessary recovery condition discussed in Section~II-C, $P(E_m<10^{-12})$ is equal to 0 for all points with $\lceil n / d \rceil < 2\varpi-1$. More remarkably, for $\lceil n / d \rceil \geq 2\varpi-1$, $P(E_m<10^{-12})$ equals 1 except some boundary points, where $P$ varies between 0.65 and 1. However, even in the latter cases, the error is small, as follows from Fig.\,\ref{fig:App1}\,$\mathrm{B_1}$, which plots the average $\langle E_m \rangle$ over the $(d,\omega)$ plane. The maximum likelihood (ML) estimate of $P(E_m<10^{-12})$ for all tested modulator-carrier pairs that adhere to the necessary recovery condition $\lceil n/d \rceil \geq 2\varpi-1$ is 0.971; its $99\%$ confidence interval is $(0.969,0.972)$.

Increasing the number of carrier points with $|c_i|=1$ does not change the landscape of the recovery success rate considerably. Indeed, pushing $n_s$ to the maximum $n-d+1$ increases the $P$ values only at points in the immediate vicinity of the $\lceil n / d \rceil = 2\varpi-1$ boundary (see Fig.\,\ref{fig:App1}\,$\mathrm{A_2}$). Nevertheless, it has to be noted that the recovery errors are decreased by increasing $n_s$ on average (see Fig.\,\ref{fig:App1}\,$\mathrm{B_2}$). The ML estimate of $P(E_m<10^{-12})$ for all tested modulator-carrier pairs that adhere to the necessary recovery condition $\lceil n/d \rceil \geq 2\varpi-1$ is 0.987 in this case; its $99\%$ confidence interval is $(0.986,0.988)$.

We found an analogous picture while considering signals with different lengths $n$. One such example ($n=256$) is considered in Fig.\,\ref{fig:App1}\,$\mathrm{C_1}$\,--\,$\mathrm{D_2}$. The chances of the modulator recovery with vanishing error, i.e., $E_m < 10^{-12}$, are higher in this case. Specifically, the ML estimate of $P(E_m<10^{-12})$ is 0.9933, with the $99\%$ confidence interval $(0.9930,0.9936)$ when $n_s = \lceil n/d \rceil$. That can be explained by the smaller contribution of the boundary points of the relation $\lceil n / d \rceil = 2\varpi-1$ in the $(d,\varpi)$ plane to the total count. It is important to note that, in all cases discussed here, essentially the same results are obtained even if the error threshold $\varepsilon$ is increased to $10^{-3}$. This rejects any possibility of numerical inaccuracies affecting our conclusion.

\clearpage
\phantomsection

\markboth{AP ALGORITHMS}%
{AP ALGORITHMS}

\hspace{-11pt}\textbf{AP ALGORITHMS}
\addcontentsline{toc}{section}{AP ALGORITHMS}

\section{\textbf{Mathematical Preliminaries} \label{sec:SMTheory0}}

In this section, we introduce some basic concepts of mathematical analysis necessary for the formulation and assessment of the AP algorithms that we used to calculate modulator estimators defined by (6) and (8) in the main text.

\subsection*{\underline{Convex, interior, closed, and bounded sets}}

We start with definitions of a few basic attributes of sets in Euclidean spaces.


\begin{definition}
A set $\mathcal{S} \subseteq \mathbb{R}^n$ is said to be \textit{convex} if $\theta \cdot \mathbf{x} + (1-\theta) \cdot \mathbf{y} \in \mathcal{S}$ for all $\mathbf{x},\mathbf{y} \in \mathcal{S}$ and $\theta \in [0,1]$.
\label{def:ConvexSet}
\end{definition}


\begin{definition}
An element $\mathbf{x} \in \mathcal{S} \subseteq \mathbb{R}^n$ is said to be \textit{an interior point} of $\mathcal{S}$ if there exists an $\epsilon > 0$ such that $\{\mathbf{y} \in \mathbb{R}^n: \|\mathbf{x}-\mathbf{y} \|_2 < \epsilon \}\subset \mathcal{S}$.
\label{def:InteriorPoint}
\end{definition}

\begin{definition}
A set that consists of all interior points of $\mathcal{S}\subseteq \mathbb{R}^n$ is called \textit{the interior} of $\mathcal{S}$. We denote it by $\mathcal{S}^\circ$.
\end{definition}


\begin{definition}
An element $\mathbf{y} \in \mathbb{R}^n$ is said to be \textit{a contact point} of $\mathcal{S} \subseteq \mathbb{R}^n$ if, for any $\epsilon > 0$, there exists an $\mathbf{x} \in \mathcal{S}$ such that $\|\mathbf{x} - \mathbf{y} \|_2 < \epsilon$.
\label{def:ContactPoint}
\end{definition}

\begin{definition}
A set $\mathcal{S} \subseteq \mathbb{R}^n$ is said to be \textit{closed} if it is equal to the set of all its contact points.
\label{def:ClosedSet}
\end{definition}

Convexity and closedness of sets are preserved under intersection.

\begin{proposition}
The intersection $\mathcal{S}_1 \cap \mathcal{S}_2$ of two closed and convex sets $\mathcal{S}_1$ and $\mathcal{S}_2$ is closed and convex.
\label{prop:SetC1C2}
\end{proposition}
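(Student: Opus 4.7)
The plan is to verify the two properties separately, each directly from the definitions just stated (Definitions \ref{def:ConvexSet} and \ref{def:ClosedSet}), since no deeper machinery is needed.

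First, I would establish convexity. Pick any $\mathbf{x}, \mathbf{y} \in \mathcal{S}_1 \cap \mathcal{S}_2$ and any $\theta \in [0,1]$. Since $\mathbf{x}, \mathbf{y} \in \mathcal{S}_1$, the convexity of $\mathcal{S}_1$ gives $\theta \mathbf{x} + (1-\theta)\mathbf{y} \in \mathcal{S}_1$; the analogous argument applied to $\mathcal{S}_2$ yields $\theta \mathbf{x} + (1-\theta)\mathbf{y} \in \mathcal{S}_2$. Thus the convex combination lies in the intersection, and $\mathcal{S}_1 \cap \mathcal{S}_2$ is convex by \emph{Definition~\ref{def:ConvexSet}}.

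Next, I would show closedness by verifying that every contact point of the intersection lies in the intersection. Let $\mathbf{y} \in \mathbb{R}^n$ be a contact point of $\mathcal{S}_1 \cap \mathcal{S}_2$ in the sense of \emph{Definition~\ref{def:ContactPoint}}. Then, for every $\epsilon > 0$, there exists $\mathbf{x} \in \mathcal{S}_1 \cap \mathcal{S}_2$ with $\|\mathbf{x}-\mathbf{y}\|_2 < \epsilon$. Since such an $\mathbf{x}$ also belongs to $\mathcal{S}_1$, $\mathbf{y}$ is a contact point of $\mathcal{S}_1$, and closedness of $\mathcal{S}_1$ forces $\mathbf{y} \in \mathcal{S}_1$; repeating the argument with $\mathcal{S}_2$ gives $\mathbf{y} \in \mathcal{S}_2$. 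Hence $\mathbf{y} \in \mathcal{S}_1 \cap \mathcal{S}_2$, and the intersection is closed by \emph{Definition~\ref{def:ClosedSet}}.

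There is essentially no hard step here: the statement is a direct, almost tautological consequence of the definitions, and the main care needed is only to keep the two conclusions (``in $\mathcal{S}_1$'' and ``in $\mathcal{S}_2$'') logically separate before combining them at the end. If anything, the only subtlety worth flagging explicitly is that a single contact point $\mathbf{y}$ of $\mathcal{S}_1 \cap \mathcal{S}_2$ is automatically a contact point of each $\mathcal{S}_i$ individually because the approximating sequence used for the intersection serves simultaneously for both sets; this observation is what makes the closedness half of the argument go through without having to pick two different sequences.
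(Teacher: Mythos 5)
Your proof is correct. The paper itself states \emph{Proposition~\ref{prop:SetC1C2}} without proof (it is offered as a standard fact immediately after the definitions), so there is no authorial argument to compare against; your two-part verification from \emph{Definitions~\ref{def:ConvexSet}}, \emph{\ref{def:ContactPoint}}, and \emph{\ref{def:ClosedSet}} is exactly the routine argument one would supply, and the observation that a single approximating point for the intersection serves as an approximating point for each $\mathcal{S}_i$ simultaneously is the only step that needs stating.
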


Another important characteristic of sets is their boundedness.


\begin{definition}
A set $\mathcal{S} \subset \mathbb{R}^n$ is said to be \textit{bounded} if there exists a $b \in \mathbb{R}$ such that $\| \mathbf{x}-\mathbf{y} \|_2 \leq b$ for all $\mathbf{x}, \mathbf{y} \in \mathcal{S}$.
\label{def:BoundedSet}
\end{definition}

\begin{definition}
A set $\mathcal{S} \subset \mathbb{R}^1=\mathbb{R}$ is said to be \textit{bounded from above} if there exists a $u \in \mathbb{R}$ such that  $\mathbf{x} \leq u$ for all $\mathbf{x}\in \mathcal{S}$. $u$ is called an \textit{upper bound} of $\mathcal{S}$.
\label{def:BoundedAboveSet}
\end{definition}

\begin{remark}
$\underline{u} \in \mathbb{R}$ is said to be the \textit{least upper bound} of $\mathcal{S} \subset \mathbb{R}$ if $\mathbf{x} \leq \underline{u}$ for all $\mathbf{x}\in \mathcal{S}$, and there exists a $\mathbf{y} \in \mathcal{S}$ for every $\epsilon > 0$ such that $\mathbf{y} > \underline{u} - \epsilon$. The least upper bound exists for any $\mathcal{S} \subset \mathbb{R}$ bounded from above due to the continuity of the real numbers.
\end{remark}

\begin{definition}
A set $\mathcal{S} \subset \mathbb{R}^1=\mathbb{R}$ is said to be \textit{bounded from below} if there exists an $l \in \mathbb{R}$ such that  $l \leq \mathbf{x}$ for all $\mathbf{x}\in \mathcal{S}$. $l$ is called a \textit{lower bound} of $\mathcal{S}$.
\label{def:BoundedBelowSet}
\end{definition}

\begin{remark}
$\bar{l} \in \mathbb{R}$ is said to be the \textit{greatest lower bound} of $\mathcal{S} \subset \mathbb{R}$ if $\bar{l} \leq \mathbf{x}$ for all $\mathbf{x}\in \mathcal{S}$, and there exists a $\mathbf{y} \in \mathcal{S}$ for every $\epsilon > 0$ such that $\mathbf{y} < \bar{l} + \epsilon$. Analogously to the least upper bound, any $\mathcal{S} \subset \mathbb{R}$ bounded from below has the greatest lower bound.
\end{remark}

\subsection*{\underline{Convergence of sequences}}


The following fundamental properties of infinite sequences of points in bounded subsets of Euclidean spaces play a critical role in the proofs of the convergence of the AP algorithms.

\begin{proposition}[Monotone Convergence Theorem]
Any monotonically decreasing sequence of real numbers
%
\begin{equation}
x^{(0)} \geq x^{(1)} \geq \ldots \geq x^{(i)} \geq \ldots
\end{equation}
%
that is bounded from below converges to its greatest lower bound $\bar{l}$, i.e., for every $\epsilon > 0$, there exists an $N(\epsilon)$ such that $|x^{(i)}-\bar{l}| < \epsilon$ whenever $i >N(\epsilon)$.
\label{prop:MonConvTh}
\end{proposition}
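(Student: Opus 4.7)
The plan is to combine the existence of the greatest lower bound (already guaranteed by the remark following \DefRef{def:BoundedBelowSet}) with the monotonicity of the sequence to sandwich every sufficiently late term between $\bar{l}$ and $\bar{l}+\epsilon$.

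First, I would let $\bar{l}$ denote the greatest lower bound of the set $\{x^{(i)}\}_{i=0}^{\infty}$, whose existence is ensured by the hypothesis that the sequence is bounded from below, together with the completeness of $\mathbb{R}$ invoked in the remark following \DefRef{def:BoundedBelowSet}. This $\bar{l}$ is the candidate limit claimed by the proposition.

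Next, I would fix an arbitrary $\epsilon > 0$ and apply the defining property of $\bar{l}$: there exists some index $N(\epsilon)$ such that $x^{(N(\epsilon))} < \bar{l} + \epsilon$. The decisive step is then to invoke the monotonicity assumption $x^{(i)} \leq x^{(i-1)} \leq \ldots \leq x^{(N(\epsilon))}$ for every $i > N(\epsilon)$, which immediately gives $x^{(i)} < \bar{l} + \epsilon$. Combined with the lower-bound inequality $\bar{l} \leq x^{(i)}$ (a direct consequence of $\bar{l}$ being a lower bound), this yields $0 \leq x^{(i)} - \bar{l} < \epsilon$, i.e., $|x^{(i)} - \bar{l}| < \epsilon$ for every $i > N(\epsilon)$, which is precisely the convergence statement.

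I do not foresee any significant obstacle in this proof; the argument is entirely elementary once the greatest lower bound has been produced. The only conceptually nontrivial ingredient is the completeness of the reals, and this has already been packaged into the remark that establishes existence of $\bar{l}$, so the proof reduces to a short $\epsilon$-argument combining that existence with the monotonicity hypothesis.
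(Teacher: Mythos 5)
Your argument is correct and is the standard textbook proof: existence of $\bar{l}$ from completeness, an element within $\epsilon$ of $\bar{l}$ from the defining property of the greatest lower bound, and monotonicity to push all later terms into $[\bar{l},\bar{l}+\epsilon)$. Note that the paper states this proposition as a known preliminary and gives no proof of its own, so there is nothing to compare against; your reasoning fills that gap correctly and uses exactly the characterization of the greatest lower bound given in the remark following \DefRef{def:BoundedBelowSet}.
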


\begin{remark}
Analogously, any monotonically increasing sequence that is bounded from above converges to its least upper bound.
\end{remark}

\begin{definition}
Consider a sequence $\mathbf{x}^{(0)}, \mathbf{x}^{(1)}, \ldots, \mathbf{x}^{(i)}, \ldots$ in $\mathbb{R}^n$, i.e., $\mathbf{x}^{(i)} \in \mathbb{R}^n$ for every $i\geq 0$. Another sequence $\mathbf{x}^{(k_0)}, \mathbf{x}^{(k_1)}, \ldots, \mathbf{x}^{(k_i)}, \ldots$ in $\mathbb{R}^n$ generated by removing some of the elements of the original sequence is called a \textit{subsequence} of the latter. Note that $k_i > k_j$ for all $ i > j \geq 0$, and $k_i \geq i$ for every $ i \geq 0$ here.
\label{def:SubSeq}
\end{definition}

\begin{proposition}[Bolzano-Weierstrass Theorem]
Any bounded infinite sequence $\mathbf{x}^{(0)}, \mathbf{x}^{(1)}, \ldots, \mathbf{x}^{(i)}, \ldots$ in $\mathbb{R}^n$ has an infinite subsequence $\mathbf{x}^{(k_0)}, \mathbf{x}^{(k_1)}, \ldots, \mathbf{x}^{(k_i)}, \ldots$ that converges to a particular $\mathbf{x}^\dagger \in \mathbb{R}^n$, i.e., for any $\epsilon>0$, there exists an $N(\epsilon)$ such that $\|\mathbf{x}^{(k_i)}-\mathbf{x}^\dagger\|_2 < \epsilon$ whenever $ i > N(\epsilon)$.
\label{prop:BzWeTh}
\end{proposition}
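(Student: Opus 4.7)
The plan is to reduce to the one-dimensional case and then climb up to $\mathbb{R}^n$ by iteratively thinning the subsequence one coordinate at a time. The only nontrivial input I would need is the Monotone Convergence Theorem (\PropRef{prop:MonConvTh}), which is already established above, plus the elementary fact that convergence in $\mathbb{R}^n$ is equivalent to coordinate-wise convergence (an immediate consequence of $|x_j - y_j| \leq \|\mathbf{x}-\mathbf{y}\|_2 \leq \sqrt{n} \cdot \max_j |x_j - y_j|$).

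First I would handle $n=1$. Given a bounded real sequence $\{x^{(i)}\}_{i \geq 0}$, I would call an index $i$ a \emph{peak} if $x^{(i)} \geq x^{(j)}$ for every $j > i$. There are two mutually exclusive cases. If infinitely many peaks exist, say at indices $k_0 < k_1 < \cdots$, the subsequence $\{x^{(k_i)}\}$ is monotonically decreasing by the definition of a peak. If only finitely many peaks exist, let $k_0$ be any index strictly greater than the largest peak; since $k_0$ is not a peak, there is some $k_1 > k_0$ with $x^{(k_1)} > x^{(k_0)}$; since $k_1$ is not a peak either, some $k_2 > k_1$ gives $x^{(k_2)} > x^{(k_1)}$, and iterating produces a strictly increasing subsequence. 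In either case the extracted subsequence is monotone and inherits boundedness from the original, so \PropRef{prop:MonConvTh} (and its Remark, for the increasing case) delivers convergence to a real number $x^\dagger$.

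Next I would extend to $\mathbb{R}^n$ by induction on the dimension. Writing $\mathbf{x}^{(i)} = (x_1^{(i)}, \ldots, x_n^{(i)})^\mathrm{T}$, the boundedness of $\{\mathbf{x}^{(i)}\}$ in $\mathbb{R}^n$ forces each coordinate sequence $\{x_j^{(i)}\}$ to be bounded in $\mathbb{R}$. Applying the one-dimensional result to the first coordinate yields an infinite subsequence along which $x_1^{(k_i)} \to x_1^\dagger$. Applying it again to the second coordinate \emph{of this subsequence} gives a further subsequence along which both the first and second coordinates converge (the first coordinate still converges because every subsequence of a convergent sequence converges to the same limit). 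Repeating this extraction $n$ times produces a subsequence on which every coordinate converges; by the coordinate-wise equivalence noted above, this subsequence converges in $\mathbb{R}^n$ to $\mathbf{x}^\dagger = (x_1^\dagger, \ldots, x_n^\dagger)^\mathrm{T}$.

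The main obstacle, and essentially the only step requiring care, is the one-dimensional case: the monotone-subsequence construction is the only place where the continuity of the reals enters in a nontrivial way, and the peak/non-peak dichotomy is the cleanest route given that the Monotone Convergence Theorem is already in hand. The inductive coordinate extraction is then routine, provided one is careful that subsequences of convergent sequences remain convergent with the same limit, so that the previously secured coordinates are not disturbed by subsequent thinning.
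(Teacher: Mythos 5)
Your proof is correct. The paper states the Bolzano--Weierstrass theorem as a classical preliminary and supplies no proof of its own, so there is no argument to compare against; your route --- the peak-index dichotomy yielding a monotone subsequence in one dimension, the Monotone Convergence Theorem (and its increasing-sequence remark) for convergence, and iterated coordinate-wise refinement to lift the result to $\mathbb{R}^n$ --- is the standard self-contained derivation and uses only facts already established in the paper.
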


\subsection*{\underline{Metric projections}}


The central operation around which AP algorithms are built is that of a metric projection.


\begin{definition}
An element $\mathbf{x_z}$ of a closed subset $\mathcal{S}$ of $\mathbb{R}^n$ is said to be \textit{a metric projection} of $\mathbf{z} \in \mathbb{R}^n$ onto $\mathcal{S}$ if $\| \mathbf{x_z} - \mathbf{z} \|_2 \leq \|\mathbf{x} - \mathbf{z} \|_2$ for all $\mathbf{x} \in \mathcal{S}$. We denote a transformation that assigns an $\mathbf{x_z} \in \mathcal{S}$ to every $\mathbf{z} \in \mathbb{R}^n$ by $\mathbf{P_\mathcal{S}}: \mathbb{R}^n \to \mathcal{S}$.
\label{def:ProjOp}
\end{definition}

\begin{remark}
$\mathbf{P_\mathcal{S}}$ generalizes the linear projection operator that assigns an element of a linear space to one of its subspaces (see, e.g., \cite[p.\,223]{Kolmogorov1975}). For the sake of brevity, we skip the qualifier ``metric'' and refer to $\mathbf{P}_\mathcal{S}$ as ``a projection'' in the sequel.
\end{remark}


\begin{proposition}[see Theoreom 5.11 in \cite{Cinlar2013}]
A projection of any element of $\mathbb{R}^n$ onto its closed convex subset $\mathcal{S}$ exists and is unique.
\label{prop:ProjOpUniq}
\end{proposition}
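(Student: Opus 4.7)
The plan is to handle existence and uniqueness separately, since the two halves rely on rather different ingredients: existence uses compactness and closedness of $\mathcal{S}$, while uniqueness uses its convexity together with the parallelogram law in $\mathbb{R}^n$.

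For existence, I would first introduce $d = \inf_{\mathbf{x} \in \mathcal{S}} \| \mathbf{x} - \mathbf{z} \|_2$, which is well defined because the set $\{\|\mathbf{x}-\mathbf{z}\|_2 : \mathbf{x} \in \mathcal{S}\} \subset \mathbb{R}$ is bounded from below by $0$ (so its greatest lower bound exists by the remark following \emph{Definition~\ref{def:BoundedBelowSet}}). By definition of the infimum, I can extract a minimizing sequence $\{\mathbf{x}^{(i)}\}_{i \geq 0} \subset \mathcal{S}$ with $\|\mathbf{x}^{(i)}-\mathbf{z}\|_2 \to d$. The triangle inequality then gives $\|\mathbf{x}^{(i)}\|_2 \leq \|\mathbf{x}^{(i)}-\mathbf{z}\|_2 + \|\mathbf{z}\|_2$, so the sequence is bounded in $\mathbb{R}^n$. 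By \emph{Proposition~\ref{prop:BzWeTh}} (Bolzano--Weierstrass), a subsequence $\mathbf{x}^{(k_i)}$ converges to some $\mathbf{x}_\mathbf{z} \in \mathbb{R}^n$. Closedness of $\mathcal{S}$ (\emph{Definition~\ref{def:ClosedSet}}) forces $\mathbf{x}_\mathbf{z} \in \mathcal{S}$ because $\mathbf{x}_\mathbf{z}$ is a contact point of $\mathcal{S}$. Continuity of the Euclidean norm (via the reverse triangle inequality) yields $\|\mathbf{x}_\mathbf{z}-\mathbf{z}\|_2 = d$, so $\mathbf{x}_\mathbf{z}$ is a projection of $\mathbf{z}$ onto $\mathcal{S}$ in the sense of \emph{Definition~\ref{def:ProjOp}}.

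For uniqueness, suppose two projections $\mathbf{x}_1, \mathbf{x}_2 \in \mathcal{S}$ both realize the distance $d$. Convexity of $\mathcal{S}$ (\emph{Definition~\ref{def:ConvexSet}}) places the midpoint $(\mathbf{x}_1+\mathbf{x}_2)/2$ in $\mathcal{S}$, so $\|(\mathbf{x}_1+\mathbf{x}_2)/2 - \mathbf{z}\|_2 \geq d$. On the other hand, the parallelogram law applied to $\mathbf{x}_1-\mathbf{z}$ and $\mathbf{x}_2-\mathbf{z}$ gives
\begin{equation}
\left\| \tfrac{\mathbf{x}_1+\mathbf{x}_2}{2} - \mathbf{z} \right\|_2^2 = \tfrac{1}{2}\|\mathbf{x}_1-\mathbf{z}\|_2^2 + \tfrac{1}{2}\|\mathbf{x}_2-\mathbf{z}\|_2^2 - \tfrac{1}{4}\|\mathbf{x}_1 - \mathbf{x}_2\|_2^2 = d^2 - \tfrac{1}{4}\|\mathbf{x}_1-\mathbf{x}_2\|_2^2.
\end{equation}
Combining the two inequalities forces $\|\mathbf{x}_1-\mathbf{x}_2\|_2 = 0$, hence $\mathbf{x}_1 = \mathbf{x}_2$.

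I do not expect a serious obstacle in either step; the only subtlety worth being careful about is making sure closedness enters at exactly the right point in the existence argument (to keep the candidate limit inside $\mathcal{S}$ rather than merely on its boundary outside), and making sure convexity enters at exactly the right point in the uniqueness argument (to keep the midpoint inside $\mathcal{S}$ so that the distance-from-$\mathbf{z}$ lower bound $d$ applies to it). If desired, the entire argument can be shortened by citing \cite{Cinlar2013} directly, but since the intervening propositions about monotone convergence and Bolzano--Weierstrass have been set up specifically, a self-contained derivation along the lines above is preferable.
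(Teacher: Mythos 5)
Your proposal is correct. Note, however, that the paper does not prove this proposition at all: it is stated as an imported result, with a pointer to Theorem~5.11 of the cited reference, and is then used as a black box in the later convergence arguments. So there is no in-paper proof to compare against; what you have written is the standard self-contained argument (and essentially the one in the cited text): existence via a minimizing sequence for $d=\inf_{\mathbf{x}\in\mathcal{S}}\|\mathbf{x}-\mathbf{z}\|_2$, boundedness of that sequence, the Bolzano--Weierstrass theorem (Proposition~\ref{prop:BzWeTh}), closedness of $\mathcal{S}$ to keep the subsequential limit inside $\mathcal{S}$, and continuity of the norm; uniqueness via convexity of $\mathcal{S}$ and the parallelogram identity. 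Both halves check out: the identity $\bigl\|\tfrac{\mathbf{x}_1+\mathbf{x}_2}{2}-\mathbf{z}\bigr\|_2^2=\tfrac12\|\mathbf{x}_1-\mathbf{z}\|_2^2+\tfrac12\|\mathbf{x}_2-\mathbf{z}\|_2^2-\tfrac14\|\mathbf{x}_1-\mathbf{x}_2\|_2^2$ is exactly the parallelogram law, and combining it with $\bigl\|\tfrac{\mathbf{x}_1+\mathbf{x}_2}{2}-\mathbf{z}\bigr\|_2\geq d$ does force $\mathbf{x}_1=\mathbf{x}_2$. The only pedantic caveat is that $\mathcal{S}$ must be nonempty for the infimum to make sense, which the statement leaves implicit; your argument otherwise fits cleanly into the paper's framework, since it uses only notions the supplement has already set up (contact points, closedness, Bolzano--Weierstrass, the triangle inequality).
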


\subsection*{\underline{Inequalities}}

Two important inequalities that we use extensively in the convergence proofs of AP algorithms apply to Euclidean spaces.

\begin{proposition}[Triangle Inequality]
%
\begin{equation}
\|\mathbf{x}+\mathbf{y}\|_2 \leq \|\mathbf{x}\|_2 + \|\mathbf{y}\|_2 \qquad \forall\mathbf{x} \in \mathbb{R}^n, \forall\mathbf{y} \in \mathbb{R}^n.
\label{eq:TriIneq}
\end{equation}
%
\label{prop:TriIneq}
\end{proposition}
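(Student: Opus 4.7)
The plan is to reduce the triangle inequality to the Cauchy--Schwarz inequality by squaring both sides and expanding the inner product. Since both sides of \eqref{eq:TriIneq} are nonnegative, it is equivalent to establishing
\begin{equation}
\|\mathbf{x}+\mathbf{y}\|_2^2 \leq \bigl(\|\mathbf{x}\|_2 + \|\mathbf{y}\|_2\bigr)^2.
\end{equation}
Expanding the left-hand side as $\langle \mathbf{x}+\mathbf{y}, \mathbf{x}+\mathbf{y} \rangle = \|\mathbf{x}\|_2^2 + 2\langle \mathbf{x}, \mathbf{y}\rangle + \|\mathbf{y}\|_2^2$ and the right-hand side as $\|\mathbf{x}\|_2^2 + 2\|\mathbf{x}\|_2\|\mathbf{y}\|_2 + \|\mathbf{y}\|_2^2$ reduces the claim to $\langle \mathbf{x}, \mathbf{y}\rangle \leq \|\mathbf{x}\|_2 \|\mathbf{y}\|_2$, which is Cauchy--Schwarz.

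To prove Cauchy--Schwarz self-containedly, I would use the standard discriminant argument. For $\mathbf{y}=\mathbf{0}$ the inequality is trivial, so assume $\mathbf{y}\neq\mathbf{0}$ and consider the nonnegative real-valued function $g(t) = \|\mathbf{x} - t\mathbf{y}\|_2^2 = \|\mathbf{x}\|_2^2 - 2t\langle \mathbf{x},\mathbf{y}\rangle + t^2 \|\mathbf{y}\|_2^2 \geq 0$ for all $t \in \mathbb{R}$. Evaluating at the minimizer $t^\star = \langle \mathbf{x},\mathbf{y}\rangle / \|\mathbf{y}\|_2^2$ yields $\|\mathbf{x}\|_2^2 - \langle \mathbf{x},\mathbf{y}\rangle^2/\|\mathbf{y}\|_2^2 \geq 0$, i.e., $\langle \mathbf{x},\mathbf{y}\rangle^2 \leq \|\mathbf{x}\|_2^2 \|\mathbf{y}\|_2^2$. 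Taking square roots gives $|\langle \mathbf{x},\mathbf{y}\rangle| \leq \|\mathbf{x}\|_2 \|\mathbf{y}\|_2$, which in particular implies $\langle \mathbf{x},\mathbf{y}\rangle \leq \|\mathbf{x}\|_2 \|\mathbf{y}\|_2$.

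The only ``obstacle'' is really just presenting Cauchy--Schwarz cleanly; the rest is elementary algebra. Since this is such a classical result, the proof could alternatively simply cite a standard reference (for example, Kolmogorov and Fomin, already cited in the excerpt). Given the expository nature of the surrounding section, I would lean toward the short self-contained proof sketched above so that the reader does not need to consult external sources.
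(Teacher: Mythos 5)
Your proof is correct: the reduction to Cauchy--Schwarz by squaring, and the discriminant (or minimizer) argument for Cauchy--Schwarz itself, are both complete and standard. Note, however, that the paper offers no proof of this proposition at all---it appears in the ``Mathematical Preliminaries'' section alongside other classical facts (Bolzano--Weierstrass, the monotone convergence theorem, the containing-half-space inequality) that are simply stated, sometimes with a textbook citation, so there is no authorial argument to compare yours against. Your self-contained version is a perfectly reasonable way to fill that gap and is consistent with the expository register of the section.
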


\begin{remark}
Another inequality relevant to us follows from \eqref{eq:TriIneq}. In particular, let us consider some $\mathbf{a},\mathbf{b},\mathbf{c} \in \mathbb{R}^{n}$. If we define $\mathbf{x}=\mathbf{a}-\mathbf{b}$ and $\mathbf{y}=\mathbf{b}-\mathbf{c}$, then \eqref{eq:TriIneq} implies $\|\mathbf{a}-\mathbf{c}\|_2 \leq \|\mathbf{a}-\mathbf{b}\|_2 + \|\mathbf{b}-\mathbf{c}\|_2$, i.e., $\|\mathbf{a}-\mathbf{b}\|_2 \geq \|\mathbf{c}-\mathbf{a}\|_2 - \|\mathbf{c}-\mathbf{b}\|_2$. On the other hand, setting $\mathbf{x}=\mathbf{c}-\mathbf{a}$ and $\mathbf{y}=\mathbf{a}-\mathbf{b}$, we obtain $\|\mathbf{c}-\mathbf{b}\|_2 \leq \|\mathbf{c}-\mathbf{a}\|_2 + \|\mathbf{a}-\mathbf{b}\|_2$, i.e., $\|\mathbf{a}-\mathbf{b}\|_2 \geq -(\|\mathbf{c}-\mathbf{a}\|_2 - \|\mathbf{c}-\mathbf{b}\|_2)$. Thus,
%
\begin{equation}
\|\mathbf{a}-\mathbf{b}\|_2 \geq \big| \|\mathbf{c}-\mathbf{a}\|_2 - \|\mathbf{c}-\mathbf{b}\|_2 \big| \qquad \forall\mathbf{a} \in \mathbb{R}^n, \forall\mathbf{b} \in \mathbb{R}^n, \forall\mathbf{c} \in \mathbb{R}^n.
\label{eq:TriIneq2}
\end{equation}
%
\end{remark}


\begin{proposition}[Containing-Half-Space Inequality, see Theoreom 5.13 in \cite{Cinlar2013}]
If $\mathcal{S} \subset \mathbb{R}^n$ is closed and convex, then
%
\begin{equation}
\langle \mathbf{x}-\mathbf{P}_\mathcal{S}[\mathbf{x}], \mathbf{P}_\mathcal{S}[\mathbf{x}]-\mathbf{y} \rangle \geq 0 \qquad \forall \mathbf{x} \in \mathbb{R}^n, \forall\mathbf{y} \in \mathcal{S}.
\label{eq:Prop3_3}
\end{equation}
%
\end{proposition}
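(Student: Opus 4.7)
The plan is to exploit the convexity of $\mathcal{S}$ to construct a one-parameter family of feasible points interpolating between $\mathbf{P}_\mathcal{S}[\mathbf{x}]$ and an arbitrary $\mathbf{y} \in \mathcal{S}$, and then squeeze out the inequality from the minimizing property of the projection by taking a one-sided limit in the interpolation parameter.

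First I would fix any $\mathbf{y} \in \mathcal{S}$ and set $\mathbf{w}_\theta = (1-\theta)\,\mathbf{P}_\mathcal{S}[\mathbf{x}] + \theta\,\mathbf{y}$ for $\theta \in [0,1]$. By \DefRef{def:ConvexSet}, $\mathbf{w}_\theta \in \mathcal{S}$ for every such $\theta$. By \PropRef{prop:ProjOpUniq} the projection is well-defined, and by \DefRef{def:ProjOp} it satisfies $\|\mathbf{x}-\mathbf{P}_\mathcal{S}[\mathbf{x}]\|_2 \leq \|\mathbf{x}-\mathbf{w}_\theta\|_2$ for all $\theta \in [0,1]$.

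Next I would square both sides and expand the right-hand side via the decomposition $\mathbf{x}-\mathbf{w}_\theta = (\mathbf{x}-\mathbf{P}_\mathcal{S}[\mathbf{x}]) - \theta\,(\mathbf{y}-\mathbf{P}_\mathcal{S}[\mathbf{x}])$ together with $\|\mathbf{a}-\mathbf{b}\|_2^2 = \|\mathbf{a}\|_2^2 - 2\langle\mathbf{a},\mathbf{b}\rangle + \|\mathbf{b}\|_2^2$. The terms $\|\mathbf{x}-\mathbf{P}_\mathcal{S}[\mathbf{x}]\|_2^2$ cancel, leaving, after division by $\theta > 0$,
\begin{equation*}
2\,\langle \mathbf{x}-\mathbf{P}_\mathcal{S}[\mathbf{x}],\,\mathbf{y}-\mathbf{P}_\mathcal{S}[\mathbf{x}] \rangle \leq \theta\,\|\mathbf{y}-\mathbf{P}_\mathcal{S}[\mathbf{x}]\|_2^2.
\end{equation*}

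Letting $\theta \to 0^+$ forces $\langle \mathbf{x}-\mathbf{P}_\mathcal{S}[\mathbf{x}],\,\mathbf{y}-\mathbf{P}_\mathcal{S}[\mathbf{x}] \rangle \leq 0$, which is exactly the claim after flipping the sign of the second factor to $\mathbf{P}_\mathcal{S}[\mathbf{x}]-\mathbf{y}$. There is no real obstacle here — the only conceptual point is to view $\mathbf{P}_\mathcal{S}[\mathbf{x}]$ as the minimizer of the scalar function $\theta \mapsto \|\mathbf{x}-\mathbf{w}_\theta\|_2^2$ over $[0,1]$, so that non-negativity of its right-derivative at $\theta=0$ yields precisely the first-order optimality condition asserted by the proposition.
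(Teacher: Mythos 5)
Your proof is correct. Note, however, that the paper does not actually prove this proposition: it is stated with a pointer to Theorem~5.13 of \c{C}inlar's \emph{Probability and Stochastics} and used as an imported fact, so there is no in-paper argument to compare against. Your derivation is the standard variational one — feasibility of the convex combination $(1-\theta)\mathbf{P}_\mathcal{S}[\mathbf{x}]+\theta\mathbf{y}$, the minimizing property of the projection, cancellation of the zeroth-order terms, division by $\theta$, and the one-sided limit $\theta\to 0^+$ — and every step checks out: the expansion
\begin{equation*}
\|\mathbf{x}-\mathbf{w}_\theta\|_2^2=\|\mathbf{x}-\mathbf{P}_\mathcal{S}[\mathbf{x}]\|_2^2-2\theta\,\langle \mathbf{x}-\mathbf{P}_\mathcal{S}[\mathbf{x}],\,\mathbf{y}-\mathbf{P}_\mathcal{S}[\mathbf{x}]\rangle+\theta^2\|\mathbf{y}-\mathbf{P}_\mathcal{S}[\mathbf{x}]\|_2^2
\end{equation*}
combined with $\|\mathbf{x}-\mathbf{P}_\mathcal{S}[\mathbf{x}]\|_2^2\leq\|\mathbf{x}-\mathbf{w}_\theta\|_2^2$ gives exactly the displayed inequality, and the limit yields the claim. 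Two minor observations: the existence (not uniqueness) of the projection is all you need for the argument, and closedness of $\mathcal{S}$ enters only through the existence of $\mathbf{P}_\mathcal{S}[\mathbf{x}]$; convexity is what makes $\mathbf{w}_\theta$ feasible. Your proof is self-contained and would serve as a legitimate replacement for the external citation.
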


\begin{remark}
$\mathcal{S}$ belongs to a half-space $\mathcal{H}_\mathbf{x}= \{\mathbf{z} \in \mathbb{R}^n: \langle \mathbf{x}-\mathbf{P}_\mathcal{S}[\mathbf{x}], \mathbf{P}_\mathcal{S}[\mathbf{x}]-\mathbf{z} \rangle \geq 0 \}$, which is defined for every $\mathbf{x} \in (\mathbb{R}^n \setminus \mathcal{S})$. $\mathbf{P}_\mathcal{S}[\mathbf{x}]$ is a boundary point of the $\mathcal{H}_\mathbf{x}$.
\end{remark}

\section{\textbf{Properties of the Constraint Sets and Associated Metric Projections} \label{sec:SMTheory1}}

In this section, we establish the convexity, closedness, and other relevant properties of the constraint sets $\Sgeqs$ and $\Sw$ that lay the basis for the formulation of the AP demodulation algorithms and determine their convergence properties. We then define the concrete metric projection operators of points in $\mathbb{R}^n$ onto $\Sgeqs$ and $\Sw$, which are the main building blocks of the AP demodulation algorithms defined in Section~III of the main text.

\subsection*{\underline{Properties of the constraint sets}}


The constraint sets $\Sgeqs$ and $\Sw$ that define the AP approach to demodulation introduced in Section~II of the main text have the following properties.

\begin{proposition}
The set $\Sgeqs$ is convex and closed. Its interior is $\Sgeqs^\circ = \{\mathbf{x} \in \mathbb{R}^n: x_i > |s_i|, ~ i \in \In \}$.
\label{prop:SetCl}
\end{proposition}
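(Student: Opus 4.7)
The plan is to handle the three assertions (convexity, closedness, interior) separately, each by a direct element-wise argument exploiting the product structure of $\Sgeqs$: the defining condition $x_i \geq |s_i|$ is imposed coordinate-by-coordinate, so all three properties reduce to familiar statements about half-lines $[|s_i|,\infty) \subset \mathbb{R}$ lifted coordinate-wise.

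For convexity, I would pick arbitrary $\mathbf{x},\mathbf{y} \in \Sgeqs$ and $\theta \in [0,1]$, and check coordinate by coordinate that $(\theta \mathbf{x} + (1-\theta)\mathbf{y})_i = \theta x_i + (1-\theta) y_i \geq \theta |s_i| + (1-\theta) |s_i| = |s_i|$, invoking \DefRef{def:ConvexSet}. For closedness, I would take an arbitrary contact point $\mathbf{y}$ of $\Sgeqs$ in the sense of \DefRef{def:ContactPoint}. For any $\epsilon>0$ pick $\mathbf{x}^{(\epsilon)} \in \Sgeqs$ with $\|\mathbf{x}^{(\epsilon)}-\mathbf{y}\|_2 < \epsilon$; then $|x_i^{(\epsilon)}-y_i|<\epsilon$ gives $y_i > x_i^{(\epsilon)}-\epsilon \geq |s_i|-\epsilon$ for every $i \in \In$, and letting $\epsilon \to 0$ yields $y_i \geq |s_i|$, i.e., $\mathbf{y} \in \Sgeqs$, which invokes \DefRef{def:ClosedSet}.

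For the interior characterisation, I would prove both inclusions. For $\{\mathbf{x}: x_i>|s_i|, i\in\In\} \subseteq \Sgeqs^\circ$, given such $\mathbf{x}$ I would set $\epsilon = \min_{i \in \In}(x_i-|s_i|) > 0$; for any $\mathbf{y}$ with $\|\mathbf{x}-\mathbf{y}\|_2<\epsilon$, each coordinate satisfies $y_i > x_i - \epsilon \geq |s_i|$, so $\mathbf{y}\in \Sgeqs$, and $\mathbf{x}$ qualifies as an interior point per \DefRef{def:InteriorPoint}. For the reverse inclusion, I would argue by contraposition: if $x_i = |s_i|$ for some $i \in \In$, then for any candidate radius $\epsilon>0$ the perturbation $\mathbf{y} = \mathbf{x} - (\epsilon/2)\,\mathbf{e}_i$ (with $\mathbf{e}_i$ the $i$-th canonical basis vector) lies within distance $\epsilon/2 < \epsilon$ of $\mathbf{x}$ yet has $y_i = |s_i|-\epsilon/2 < |s_i|$, placing it outside $\Sgeqs$; hence $\mathbf{x}$ cannot be interior.

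None of the three steps presents a genuine obstacle; the argument is essentially a routine unpacking of the definitions from \SecRef{sec:SMTheory0} applied to the Cartesian product of half-lines. The only minor care needed is in the interior proof, where one must observe that $\In$ is finite so that $\min_{i\in\In}(x_i-|s_i|)$ is attained and strictly positive, and in the closedness proof, where one must explicitly let $\epsilon\to 0$ at the end rather than conclude $y_i \geq |s_i|$ directly from a single $\epsilon$.
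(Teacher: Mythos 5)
Your proposal is correct and follows essentially the same route as the paper: coordinate-wise verification of convexity, a contact-point argument for closedness, and the two-inclusion characterisation of the interior via the minimum gap $\min_{i}(x_i-|s_i|)$ and a downward perturbation of a boundary coordinate. The only cosmetic difference is that you show every contact point lies in $\Sgeqs$ directly, while the paper equivalently shows that no point of the complement is a contact point; both are valid and of the same difficulty.
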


\begin{proof}
The range of values of each component $x_i$ of $\mathbf{x} \in \Sgeqs$ and $y_i$ of $\mathbf{y} \in \Sgeqs$ is $[|s_i|,+\infty) \subset \mathbb{R}$. Obviously,
%
\begin{equation*}
\theta \cdot x_i + (1-\theta) \cdot y_i \geq \min[x_i,y_i] \geq |s_i|
\end{equation*}
%
and
%
\begin{equation*}
\theta \cdot x_i + (1-\theta) \cdot y_i \leq \max[x_i,y_i] < +\infty,
\end{equation*}
%
where $\theta \in [0,1]$. Therefore,
%
\begin{equation*}
\theta \cdot x_i + (1-\theta) \cdot y_i \in [|s_i|,+\infty) \qquad \forall \theta \in [0,1], \forall i \in \In.
\end{equation*}
%
This, in turn, implies $\theta \cdot \mathbf{x} + (1-\theta) \cdot \mathbf{y} \in \Sgeqs$ because the range of values of each component of $\mathbf{x}$ and $\mathbf{y}$ is independent of the actual values of the remaining components. Hence, by \DefRef{def:ConvexSet}, $\Sgeqs$ is convex.

To show that $\Sgeqs$ is closed, we first note that its complement $\Sgeqsc = \mathbb{R}^n \setminus \Sgeqs$ is equal to $\{\mathbf{x} \in \mathbb{R}^n: x_i < |s_i|,~ i \in \In\}$. For any $\mathbf{x} \in \Sgeqsc$ and $\epsilon \leq \min[|\mathbf{s}|-\mathbf{x}]$,\footnote{Note that $\min[|\mathbf{s}|-\mathbf{x}]>0$.} there exists no $\mathbf{y} \in \Sgeqs$ such that $\| \mathbf{x} - \mathbf{y}\|_2 < \epsilon$. Thus, none of the elements of $\Sgeqsc$ are contact points of $\Sgeqs$. On the other hand, trivially, all points of $\Sgeqs$ are its contact points. Hence, $\Sgeqs$ coincides with the set of its contact points, i.e., it is a closed set.

To determine the interior of $\Sgeqs$, let us denote
%
\begin{align*}
\mathcal{A}_0 &= \{\mathbf{x} \in \mathbb{R}^n: x_i > |s_i|, ~ i \in \In \}, \\
%
\mathcal{A}_i &= \{\mathbf{x} \in \mathbb{R}^n: x_i=|s_i|, x_j \geq |s_j|, ~ j \in (\In \setminus \{i\}) \} \qquad \forall i \in \In.
\end{align*}
%
By construction, $\Sgeqs = \mathcal{A}_0 \cup (\cup_{i=1}^n \mathcal{A}_i)$. Next, we note that
%
\begin{equation*}
\{\mathbf{y} \in \mathbb{R}^n: \|\mathbf{x}-\mathbf{y} \|_2 < \epsilon \}\subset \Sgeqs \qquad \forall\,\epsilon \leq \min[\mathbf{x}-|\mathbf{s}|], ~\forall \mathbf{x} \in \mathcal{A}_0.
\end{equation*}
%
Thus, by \DefRef{def:InteriorPoint}, all points of $\mathcal{A}_0$ are interior points of $\Sgeqs$. On the other hand, for all $\mathbf{x} \in \mathcal{A}_i$ and $\epsilon > 0$, there exists a $\mathbf{y} \in \mathbb{R}^n$ such that $\|\mathbf{x}-\mathbf{y}\|_2 < \epsilon$ and $\mathbf{y} \notin \Sgeqs$. In particular, this is satisfied by $\mathbf{y}$ such that $y_i = x_i - \bar{\epsilon}$ with $0 < \bar{\epsilon} < \epsilon$ and $y_j = x_j$ for all $j \in (\In \setminus \{i\})$. Therefore, none of $\mathbf{x} \in \mathcal{A}_i$ with $i \in \In$ are interior points of $\Sgeqs$. Altogether, this allows us to conclude that the interior of $\Sgeqs$ is equal to $\mathcal{A}_0$. That $\mathcal{A}_0$ is nonempty follows straightforwardly from the fact that there exists an $x_i > s_i$ for any $s_i \in \mathbb{R}$.
\end{proof}


\begin{proposition}
The set $\Sw$ is convex, closed, and void of interior points. In particular, $\Sw$ is a linear subspace of $\mathbb{R}^n$.
\label{prop:SetCw}
\end{proposition}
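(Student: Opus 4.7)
The plan is to exploit the fact that $\Sw$ is, by construction, the set of real vectors whose DFT is supported on the index set $\Inw$, which endows $\Sw$ with the structure of a finite-dimensional linear subspace of $\mathbb{R}^n$. Once this structural fact is in hand, convexity and closedness follow almost immediately, and the emptiness of the interior reduces to $\Sw$ being a \emph{proper} subspace of $\mathbb{R}^n$.

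First, I would verify the linear-subspace property. Given $\mathbf{x},\mathbf{y}\in\Sw$ with representations $\mathbf{x}=\sum_{k\in\Inw}a_k\mathbf{f}^{(k)}$ and $\mathbf{y}=\sum_{k\in\Inw}b_k\mathbf{f}^{(k)}$, where the coefficients satisfy the conjugate-symmetry relation $a_k=\overline{a_{n+2-k}}$ (and similarly for $b_k$) that guarantees real-valuedness, and any $\alpha,\beta\in\mathbb{R}$, I would form $\alpha\mathbf{x}+\beta\mathbf{y}=\sum_{k\in\Inw}(\alpha a_k+\beta b_k)\mathbf{f}^{(k)}$ and check that the combined coefficients inherit the conjugate-symmetry, so that the combination again lies in $\Sw$. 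Together with $\mathbf{0}\in\Sw$, this establishes $\Sw$ as a linear subspace. Convexity (\DefRef{def:ConvexSet}) is then the special case $\alpha=\theta$, $\beta=1-\theta$, $\theta\in[0,1]$.

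For closedness, I would argue via continuity of the unitary DFT. Consider any sequence $\{\mathbf{x}^{(i)}\}\subset\Sw$ converging to some $\mathbf{x}^\dagger\in\mathbb{R}^n$. By the Cauchy--Schwarz inequality, each Fourier coefficient $a_k^{(i)}=\langle\mathbf{f}^{(k)},\mathbf{x}^{(i)}\rangle$ converges to $a_k^\dagger=\langle\mathbf{f}^{(k)},\mathbf{x}^\dagger\rangle$ as $i\to\infty$. Since $a_k^{(i)}=0$ for every $k\notin\Inw$ and every $i$, the limit satisfies $a_k^\dagger=0$ for $k\notin\Inw$ as well, so $\mathbf{x}^\dagger\in\Sw$. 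Hence every contact point of $\Sw$ lies in $\Sw$, and by \DefRef{def:ClosedSet} $\Sw$ is closed.

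Finally, for the empty interior, I would rely on a simple dimension argument. The real dimension of $\Sw$ is $2\varpi-1$, which is strictly less than $n$ under the standing bandlimit assumption, so $\Sw$ is a proper subspace of $\mathbb{R}^n$ and therefore admits a nontrivial orthogonal complement. For any $\mathbf{x}\in\Sw$ and any $\epsilon>0$, pick a unit vector $\mathbf{v}\in\mathbb{R}^n$ with $\mathbf{v}\perp\Sw$; then $\mathbf{y}=\mathbf{x}+(\epsilon/2)\mathbf{v}$ satisfies $\|\mathbf{x}-\mathbf{y}\|_2=\epsilon/2<\epsilon$, while $\mathbf{y}\notin\Sw$ --- otherwise $\mathbf{v}=(2/\epsilon)(\mathbf{y}-\mathbf{x})$ would belong to the subspace $\Sw$, contradicting $\mathbf{v}\perp\Sw$ together with $\mathbf{v}\neq\mathbf{0}$. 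Thus no open ball about $\mathbf{x}$ fits inside $\Sw$, and $\Sw^\circ=\emptyset$ by \DefRef{def:InteriorPoint}. The main bookkeeping subtlety is Step~1, where one must make sure the conjugate-symmetry constraint on the Fourier coefficients is indeed preserved under real linear combinations; once that observation is made, the rest of the argument is routine consequences of $\Sw$ being a proper finite-dimensional linear subspace.
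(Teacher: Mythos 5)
Your proof is correct, and it reorganizes the argument around a single structural observation rather than the paper's three separate concrete computations. You establish the linear-subspace property first and then let convexity, closedness, and the empty interior all fall out of general facts about proper finite-dimensional subspaces; the paper instead proves convexity directly from the $\mathbb{R}$-linearity of $\mathbf{F}$ (and only remarks at the end that substituting arbitrary $\alpha,\beta$ for $\theta,1-\theta$ gives the subspace property), proves closedness by exhibiting an explicit positive lower bound $\|\mathbf{x}-\mathbf{y}\|_2^2 \geq \sum_{i \in (\In\setminus\Inw)}(\mathbf{F}\mathbf{y})_i^2 > 0$ valid for every $\mathbf{y}\in\Swc$ and every $\mathbf{x}\in\Sw$, and proves the empty interior by perturbing with the concrete vector $\mathbf{e}^{(1)}$, whose DFT has no zero entries. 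Your sequential-continuity argument for closedness and your orthogonal-complement argument for the empty interior are both sound and arguably more transferable (they work verbatim for any proper closed subspace), whereas the paper's versions are more self-contained and quantitative --- the explicit distance bound is the kind of estimate that gets reused elsewhere in the convergence analysis. Two minor points: your appeal to conjugate symmetry of the coefficients is harmless but unnecessary bookkeeping, since $\Sw$ is already defined as a subset of $\mathbb{R}^n$ cut out by the linear conditions $(\mathbf{F}\mathbf{x})_i=0$ for $i\notin\Inw$, so real linear combinations stay in $\Sw$ automatically; and your dimension count $2\varpi-1<n$ makes explicit a properness assumption that the paper leaves implicit in asserting $\mathbf{e}^{(1)}\notin\Sw$, which is a small gain in rigor on your side.
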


\begin{proof}
It follows directly from the definition of $\mathbb{R}^n$ that
%
\begin{equation}
\theta \cdot \mathbf{x} + (1-\theta) \cdot \mathbf{y} \in \mathbb{R}^n \qquad \forall \mathbf{x} \in \mathbb{R}^n, \forall \mathbf{y} \in \mathbb{R}^n, \forall \theta \in [0,1],
\label{eq:Prop2_1}
\end{equation}
%
and thus,\footnote{Note that $\Sw \subset \mathbb{R}^n$.}
%
\begin{equation}
\theta \cdot \mathbf{x}_\varpi + (1-\theta) \cdot \mathbf{y}_\varpi \in \mathbb{R}^n \qquad \forall \mathbf{x}_\omega \in \Sw, \forall \mathbf{y}_\varpi \in \Sw, \forall \theta \in [0,1].
\label{eq:Prop2_2}
\end{equation}
%
The definition of $\Sw$ implies that $(\mathbf{F}\mathbf{x}_\varpi)_i=0$ and $(\mathbf{F}\mathbf{y}_\varpi)_i=0$ for all $\mathbf{x}_\varpi \in \Sw$, $\mathbf{y}_\varpi \in \Sw$, and $i \in (\In \setminus \Inw)$, so that
%
\begin{equation}
\theta \cdot (\mathbf{F}\mathbf{x}_\varpi)_i + (1-\theta) \cdot (\mathbf{F}\mathbf{y}_\varpi)_i = (\mathbf{F} (\theta \cdot \mathbf{x}_\varpi+(1-\theta) \cdot \mathbf{y}_\varpi))_i = 0 \qquad  \forall i \in (\In \setminus \Inw)
\label{eq:Prop2_3}
\end{equation}
%
because of the linearity of the Fourier transform. Combining \eqref{eq:Prop2_2} and \eqref{eq:Prop2_3} with the definition of $\Sw$, we conclude that $\theta \cdot \mathbf{x}_\varpi+(1-\theta) \cdot \mathbf{y}_\varpi \in \Sw$ for all $\mathbf{x}_\varpi \in \Sw$, $\mathbf{y}_\varpi \in \Sw$, i.e., $\Sw$ is convex.

To prove that $\Sw$ is closed, we show that no point of $\Swc$ is a contact point of $\Sw$. Indeed, the complement of $\Sw$ in $\mathbb{R}^n$ is given by
%
\begin{equation}
\textstyle \Swc = \big\{\mathbf{y} \in \mathbb{R}^n: \sum_{i \in (\In \setminus \Inw)} (\mathbf{F}\mathbf{y})_i^2 > 0 \big\}.
\label{eq:Prop2_4}
\end{equation}
%
Let us consider some $\mathbf{y} \in \Swc$. Taking into account the definitions of $\Sw$ and $\Swc$ and the fact that $\mathbf{F}$ is unitary, 
we have that, for any $\mathbf{x}\in\Sw$,
%
\begin{equation}
\textstyle
\begin{aligned}
\textstyle \|\mathbf{x} - \mathbf{y}\|_2^2 &= \|\mathbf{F}(\mathbf{x} - \mathbf{y})\|_2^2 = \|\mathbf{F}\mathbf{x} - \mathbf{F}\mathbf{y}\|_2^2 \\
&= \textstyle \sum_{i \in \Inw} ((\mathbf{F}\mathbf{x})_i-(\mathbf{F}\mathbf{y})_i)^2 + \sum_{i \in (\In \setminus \Inw)} (\mathbf{F}\mathbf{y})_i^2 \\
&\geq \textstyle \sum_{i \in (\In \setminus \Inw)} (\mathbf{F}\mathbf{y})_i^2 > 0.
\end{aligned}
\label{eq:Prop2_5}
\end{equation}
%
Thus, for every $\mathbf{y} \in \Swc$, there exist no $\mathbf{x} \in \Sw$ such that $\| \mathbf{x} - \mathbf{y} \|_2 < \epsilon$ with $\epsilon = \sqrt{\sum_{i \in (\In \setminus \Inw)} (\mathbf{F}\mathbf{y})_i^2}$, which means that none of the elements of $\Swc$ are contact points of $\Sw$. Therefore, $\Sw$ coincides with the set of its contact points, i.e., it is closed.

It follows from the definition of $\Sw$ that $\mathbf{y}=(\mathbf{x} + \epsilon \cdot \mathbf{e}^{(1)}) \notin \Sw$ for all $ \mathbf{x} \in \Sw$ and $\epsilon > 0$, where $\mathbf{e}^{(1)}$ is the unit vector with all but its first components equal to zero. Indeed, $(\mathbf{F}\mathbf{e}^{(1)})_i = 1/\sqrt{n} \neq 0$ for all $i \in \In$. Moreover, in that case, $\| \mathbf{x} - \mathbf{y} \|_2 = \epsilon$. Thus, for all $\mathbf{x} \in \Sw$ and $\epsilon > 0$, there exists a $\mathbf{y} \in \mathbb{R}^n$ such that $\|\mathbf{x}-\mathbf{y}\|_2 < \epsilon$ and $\mathbf{y} \notin \Sw$, which means that $\Sgeqs$ has no interior points.

A necessary and sufficient condition for a subset $\mathcal{S}$ of a linear space $\mathbb{R}^n$ to be a subspace is that $(\alpha \cdot \mathbf{x} + \beta \cdot \mathbf{y}) \in \mathcal{S}$ for all $\alpha \in \mathbb{R}$, $\beta \in \mathbb{R}$, $\mathbf{x} \in \mathcal{S}$, and $\mathbf{y} \in \mathcal{S}$ (see, e.g., \cite[p.\,121]{Kolmogorov1975}). That this applies to $\Sw$ follows from the proof of its convexity above if we replace $\theta$ and $(1-\theta)$ by, respectively, $\alpha$ and $\beta$.
\end{proof}


\begin{proposition}
The intersection of sets $\Sgeqs^\circ$ and $\Sw$ is nonempty, i.e., there exists an $\mathbf{x} \in \Sgeqs^\circ \cap \Sw$. Consequently, $\Sgeqs \cap \Sw$ is also nonempty.
\label{prop:SetClCw}
\end{proposition}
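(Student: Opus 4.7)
The plan is to exhibit an explicit element that lies in both $\Sgeqs^\circ$ and $\Sw$. The natural candidate is a sufficiently large multiple of the all-ones vector $\mathbf{1} = [1,1,\ldots,1]^\mathrm{T} \in \mathbb{R}^n$.

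First I would verify that $c \cdot \mathbf{1} \in \Sw$ for every $c \in \mathbb{R}$. This is because the DC DFT basis vector $\mathbf{f}^{(1)}$ equals $(1/\sqrt{n}) \cdot \mathbf{1}$, so $c \cdot \mathbf{1} = (c\sqrt{n}) \cdot \mathbf{f}^{(1)}$, which is a linear combination of elements of $\{\mathbf{f}^{(k)}\}_{k \in \Inw}$ since $1 \in \Inw$ (as $\varpi \geq 1$ by construction). Hence $c \cdot \mathbf{1}$ belongs to the linear subspace $\Sw$ described in \PropRef{prop:SetCw}.

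Next, I would set $c > \max_{i \in \In} |s_i|$. Such a $c$ exists because $\{|s_i|\}_{i \in \In}$ is a finite subset of $\mathbb{R}$ and thus admits a finite maximum. With this choice, $(c \cdot \mathbf{1})_i = c > |s_i|$ for every $i \in \In$, so by the characterization of the interior given in \PropRef{prop:SetCl}, we have $c \cdot \mathbf{1} \in \Sgeqs^\circ$. Combining the two observations yields $c \cdot \mathbf{1} \in \Sgeqs^\circ \cap \Sw$, proving that this intersection is nonempty.

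The second assertion follows immediately from the inclusion $\Sgeqs^\circ \subseteq \Sgeqs$, which gives $\Sgeqs^\circ \cap \Sw \subseteq \Sgeqs \cap \Sw$, so $\Sgeqs \cap \Sw$ is nonempty as well. There is no real obstacle here; the entire argument reduces to the observation that constant vectors are simultaneously bandlimited (they lie on the DC frequency mode) and can be made to dominate $|\mathbf{s}|$ componentwise by a free scaling parameter.
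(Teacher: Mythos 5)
Your proposal is correct and follows essentially the same route as the paper's proof: both exhibit a scaled all-ones vector $\lambda\cdot\mathbf{1}$ with $\lambda$ exceeding $\max_i|s_i|$, note that it is a multiple of the DC DFT basis vector and hence lies in $\Sw$, and that it strictly dominates $|\mathbf{s}|$ componentwise and hence lies in $\Sgeqs^\circ$. Your version is, if anything, slightly more careful in writing the threshold as $\max_i|s_i|$ rather than $\max[\mathbf{s}]$.
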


\begin{proof}
Let us consider $\mathbf{x} = \lambda \cdot \mathbf{1}$, where $\mathbf{1}$ is an element of $\mathbb{R}^n$ with all its components equal to 1, and $\lambda > \max[\mathbf{s}]$. It follows directly from the definition of $\Sgeqs$ that  $\mathbf{x} \in \Sgeqs^\circ \subset \Sgeqs$. It also follows from the definitions of $\Sw$ and unitary discrete Fourier transform that $(\mathbf{F}\mathbf{x})_i= \delta_{i,0} \cdot \sqrt{n} \cdot \lambda$, i.e., $\mathbf{x} \in \Sw$. Therefore, $\mathbf{x} \in (\Sgeqs^\circ \cap \Sw) \subset (\Sgeqs \cap \Sw)$.
\end{proof}

\begin{remark}
It is a direct consequence of \textit{Propositions~\ref*{prop:SetC1C2}}, \textit{\ref*{prop:SetCl}}, and \textit{\ref*{prop:SetCw}} that $\Sgeqs \cap \Sw$ is also closed and convex.
\end{remark}


\subsection*{\underline{Metric projections onto $\Sgeqs$ and $\Sw$}}


The metric projections of any point in $\mathbb{R}^n$ onto its convex subsets relevant to us, i.e., $\Sgeqs$ and $\Sw$, are achieved by the following operators.

\begin{proposition}
The metric projection operator $\mathbf{P}_{\Sgeqss}$ is defined by the elementwise maximum of the target signal $\mathbf{s}$ and the input argument $\mathbf{z}$\rm{:}
%
\begin{equation}
\mathbf{P}_{\Sgeqss}[\mathbf{z}] = |\mathbf{s}| + (\mathbf{z}-|\mathbf{s}|) \circ \theta (\mathbf{z}-|\mathbf{s}|).
\label{eq:MathPrel0}
\end{equation}
%
\end{proposition}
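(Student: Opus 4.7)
The plan is to exploit the product structure of both the objective $\|\mathbf{x}-\mathbf{z}\|_2^2$ and of the constraint set $\Sgeqs$ to decouple the projection problem into $n$ independent scalar problems, then solve each of them and verify that the given closed-form expression reproduces the solution.

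First I would note that $\Sgeqs$ is closed and convex by \PropRef{prop:SetCl}, so \PropRef{prop:ProjOpUniq} guarantees that $\mathbf{P}_{\Sgeqss}[\mathbf{z}]$ exists and is unique for every $\mathbf{z} \in \mathbb{R}^n$. By \DefRef{def:ProjOp}, $\mathbf{P}_{\Sgeqss}[\mathbf{z}] = \arg\min_{\mathbf{x}\in\Sgeqss} \|\mathbf{x}-\mathbf{z}\|_2^2$, and expanding the squared norm gives $\|\mathbf{x}-\mathbf{z}\|_2^2 = \sum_{i=1}^{n}(x_i-z_i)^2$. Since the defining constraint of $\Sgeqs$ imposes the componentwise condition $x_i\geq |s_i|$ on each coordinate independently, the minimization separates into $n$ scalar problems of the form $\min_{x_i \geq |s_i|}(x_i-z_i)^2$.

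Next I would solve each of these scalar problems. The parabola $(x_i-z_i)^2$ is minimized over $\mathbb{R}$ at $x_i=z_i$. If $z_i \geq |s_i|$, this unconstrained minimizer is feasible, and the constrained minimizer is $x_i^\star = z_i$. If $z_i < |s_i|$, then $(x_i-z_i)^2$ is strictly increasing on $[|s_i|,+\infty)$, so the constrained minimum is attained at the boundary $x_i^\star = |s_i|$. Combining both cases, the unique minimizer is $x_i^\star = \max(z_i,|s_i|)$.

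Finally I would verify that this coincides with the formula in \eqref{eq:MathPrel0}. Interpreting $\theta$ as the componentwise Heaviside step function (with $\theta(t)=1$ for $t\geq 0$ and $\theta(t)=0$ for $t<0$), the $i$-th component of the right-hand side is
\begin{equation*}
|s_i| + (z_i-|s_i|)\,\theta(z_i-|s_i|) = \begin{cases} |s_i| + (z_i-|s_i|) = z_i, & z_i \geq |s_i|,\\ |s_i|, & z_i < |s_i|,\end{cases}
\end{equation*}
which equals $\max(z_i,|s_i|)$. Uniqueness of the projection then yields \eqref{eq:MathPrel0}.

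I do not foresee any real obstacle in carrying this out: the only mild subtlety is making sure that $\theta$ is fixed to satisfy $\theta(0)=1$ (or equivalently rewriting $\max(a,b)$ via the step function consistently at the boundary $z_i=|s_i|$), which is needed for the displayed identity to hold on the entire domain.
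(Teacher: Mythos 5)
Your proof is correct and follows essentially the same route as the paper: both exploit the componentwise separability of $\|\mathbf{x}-\mathbf{z}\|_2^2$ and of the constraint $x_i\geq|s_i|$ to reduce the projection to $n$ independent scalar problems solved by $x_i^\star=\max(z_i,|s_i|)$. Your closing worry about the convention $\theta(0)$ is moot, since at $z_i=|s_i|$ the term $(z_i-|s_i|)\,\theta(z_i-|s_i|)$ vanishes for either convention.
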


\begin{proof}
Note that
%
\begin{align}
\big(\mathbf{P}_{\Sgeqss}[\mathbf{z}] \big)_i =\begin{cases} z_i, & \mbox{if $z_i \geq |s_i|$}\\ |s_i|, & \mbox{if $z_i < |s_i|$} \end{cases} \qquad \forall i\in \In.
\label{eq:MathPrel1}
\end{align}
%
Hence, a necessary and sufficient condition for transforming any $\mathbf{z} \in \mathbb{R}^n$ to $\mathbf{x} \in \Sgeqs$ is to increase every component $z_i$ of $\mathbf{z}$ that does not satisfy $z_i \geq |s_i|$ by at least $|s_i|-z_i$, independent of values of the remaining components. Now, we have from the definition of the Euclidean norm that
%
\begin{equation}
\textstyle \|\mathbf{z} - \mathbf{x} \|_2 = \sqrt{\sum_{i=1}^n(z_i-x_i)^2} \qquad \forall \mathbf{z} \in \mathbb{R}^n, \forall\mathbf{x} \in \Sgeqs.
\end{equation}
%
Thus, for every $\mathbf{z} \in \mathbb{R}^n$, $\|\mathbf{z} - \mathbf{x} \|_2$ is minimized by an $\mathbf{x}\in\Sgeqs$ that is obtained by incrementing all components of $\mathbf{z}$ that satisfy $z_i < |s_i|$ by no more than necessary, i.e., by $|s_i|-z_i$, and leaving the remaining components intact. However, this is precisely how the operator $\mathbf{P}_{\Sgeqss}$ is defined via \eqref{eq:MathPrel1}. Therefore, by using the \DefRef{def:ProjOp} of the projection, we conclude that $\mathbf{P}_{\Sgeqss}$ projects $\mathbf{z} \in \mathbb{R}^n$ onto $\Sgeqs$.
\end{proof}

\begin{proposition}
The metric projection operator $\mathbf{P}_{\Sw}$ is defined by a rectangular low-pass-filter transformation
%
\begin{equation}
\mathbf{P}_{\Sw}[\mathbf{z}] = (\mathbf{F^{-1}} \, \mathbf{W}_\varpi \, \mathbf{F}) \, \mathbf{z} = \sum_{i\in\Inw} \langle \mathbf{f}^{(i)}, \mathbf{z}\rangle \cdot \mathbf{f}^{(i)}. \label{eq:MathPrel3}
\end{equation}
%
Here, $\mathbf{f}^{(i)}$ is the i-th column of the \rm{DFT matrix} $\mathbf{F}$. $\mathbf{W}_\varpi$ is a diagonal matrix such that
%
\begin{align}
(W_{\varpi})_{ii} =\begin{cases} 1, & \mbox{if $i \in \Inw$}\\ 0, & \mbox{otherwise} \end{cases}.
\label{eq:MathPrel4}
\end{align}
%
\end{proposition}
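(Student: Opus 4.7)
The plan is to exploit the fact, established in \PropRef{prop:SetCw}, that $\Sw$ is a closed linear subspace of $\mathbb{R}^n$, together with the unitarity of the DFT matrix $\mathbf{F}$. By \PropRef{prop:ProjOpUniq}, the projection $\mathbf{P}_{\Sw}[\mathbf{z}]$ exists and is unique, so it suffices to verify that the right-hand side of \eqref{eq:MathPrel3} lies in $\Sw$ and minimizes $\|\mathbf{z} - \mathbf{x}\|_2$ over all $\mathbf{x} \in \Sw$.

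First I would pass to the frequency domain. Since $\mathbf{F}$ is unitary, $\|\mathbf{z} - \mathbf{x}\|_2 = \|\mathbf{F}\mathbf{z} - \mathbf{F}\mathbf{x}\|_2$ for every $\mathbf{x},\mathbf{z}\in\mathbb{R}^n$. By the defining property of $\Sw$, $(\mathbf{F}\mathbf{x})_i = 0$ for $i \in (\In \setminus \Inw)$, so the squared distance splits as
\begin{equation*}
\|\mathbf{z} - \mathbf{x}\|_2^2 = \sum_{i \in \Inw} \big((\mathbf{F}\mathbf{z})_i - (\mathbf{F}\mathbf{x})_i\big)^2 + \sum_{i \in (\In \setminus \Inw)} (\mathbf{F}\mathbf{z})_i^2.
\end{equation*}
The second sum does not depend on $\mathbf{x}$, while the first is minimized (with value $0$) by the unique choice $(\mathbf{F}\mathbf{x})_i = (\mathbf{F}\mathbf{z})_i$ for $i \in \Inw$, i.e.\ by $\mathbf{F}\mathbf{x} = \mathbf{W}_\varpi \mathbf{F}\mathbf{z}$. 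Inverting gives $\mathbf{x} = \mathbf{F}^{-1}\mathbf{W}_\varpi \mathbf{F}\mathbf{z}$, which is precisely the first expression in \eqref{eq:MathPrel3}. Writing $\mathbf{W}_\varpi \mathbf{F} \mathbf{z} = \sum_{i \in \Inw} \langle \mathbf{f}^{(i)}, \mathbf{z}\rangle \,\mathbf{e}^{(i)}$ and applying $\mathbf{F}^{-1}$ columnwise yields the second expression in \eqref{eq:MathPrel3}.

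Two loose ends need to be addressed. One is that $\mathbf{P}_{\Sw}[\mathbf{z}]$ as constructed must actually be real, so that it belongs to $\Sw \subset \mathbb{R}^n$. This follows because the index set $\Inw$ is conjugate-symmetric about $n/2$ (it represents a real low-pass filter in the DFT convention used throughout the paper), and the DFT of a real vector satisfies $\overline{(\mathbf{F}\mathbf{z})_i} = (\mathbf{F}\mathbf{z})_{n+2-i}$; hence $\mathbf{W}_\varpi \mathbf{F}\mathbf{z}$ preserves this Hermitian symmetry, and its inverse DFT is real-valued. The second is membership in $\Sw$, which is immediate since by construction the DFT of the output vanishes outside $\Inw$.

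The main obstacle I anticipate is purely bookkeeping: reconciling the two equivalent expressions on the right-hand side of \eqref{eq:MathPrel3} and making the conjugate-symmetry argument explicit enough that the projected vector is certified as an element of $\Sw \subset \mathbb{R}^n$ rather than merely of its complexification. The minimization argument itself is a one-line consequence of the Pythagorean decomposition enabled by the unitarity of $\mathbf{F}$.
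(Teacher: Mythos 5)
Your proposal is correct and follows essentially the same route as the paper: both arguments reduce to the Pythagorean decomposition of $\|\mathbf{z}-\mathbf{x}\|_2^2$ with respect to the orthonormal DFT basis (the paper completes the square in the coefficients $a_i$ of $\mathbf{x}=\sum_{i\in\Inw}a_i\mathbf{f}^{(i)}$, which is the same computation you perform after applying the unitary $\mathbf{F}$), and then reads off the minimizer $a_i=\langle\mathbf{f}^{(i)},\mathbf{z}\rangle$. Your explicit treatment of conjugate symmetry to certify that the output is real is a detail the paper leaves implicit, but it does not change the substance of the argument.
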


\begin{proof}
%
Let us consider an element of the set $\Sw$ expressed by $\mathbf{x}=\sum_{i\in\Inw}a_i \cdot \mathbf{f}^{(i)}$. It follows from the definition of the Euclidean norm (see Section~II in the main text) that
%
\begin{equation}
\begin{aligned}
\| \mathbf{x} - \mathbf{z}\|_2^2 &= \bigg\langle \bigg(\mathbf{z} - \sum_{i\in\Inw}a_i \cdot \mathbf{f}^{(i)}\bigg), \bigg(\mathbf{z} - \sum_{i\in\Inw}a_i \cdot \mathbf{f}^{(i)}\bigg) \bigg\rangle \\
&= \langle\mathbf{z}, \mathbf{z} \rangle -2 \cdot \sum_{i\in\Inw}a_i \cdot \langle \mathbf{f}^{(i)}, \mathbf{z}\rangle + \sum_{i\in\Inw}a_i^2 \\
&= \langle\mathbf{z}, \mathbf{z} \rangle - \sum_{i\in\Inw}\langle \mathbf{f}^{(i)},\mathbf{z}\rangle^2 + 	\sum_{i\in\Inw}(a_i-\langle \mathbf{f}^{(i)}, \mathbf{z}\rangle)^2\,.
\end{aligned}\label{eq:MathPrel5}
\end{equation}
%
When writing the second equality above, we used the fact that $\{ \mathbf{f}^{(1)}, \mathbf{f}^{(2)}, \ldots, \mathbf{f}^{(n)} \}$ are orthonormal. It follows from the last equality of \eqref{eq:MathPrel5} that $\| \mathbf{x} - \mathbf{z}\|_2$, as a function of $a_i$, is minimized by $a_i = \langle \mathbf{f}^{(i)}, \mathbf{z}\rangle$ for every $i \in \Inw$. Thus, by \DefRef{def:ProjOp}, $\sum_{i\in\Inw} \langle \mathbf{f}^{(i)}, \mathbf{z}\rangle \cdot \mathbf{f}^{(i)} = \mathbf{P}_{\Sw}[\mathbf{z}]$ is the projection of $\mathbf{z}$ onto $\Sw$.
%
\end{proof}

\begin{remark}
$\mathbf{P}_{\Sw}$ is a linear operator, whereas $\mathbf{P}_{\Sgeqss}$ 
is not.
\end{remark}

\section{\textbf{Convergence Proofs} \label{sec:SMTheory2}}

Here, we provide proofs of the propositions concerning the convergence of the AP algorithms that are formulated in Section~III of the main text. The proofs are adapted for finite-dimensional Euclidean spaces and exploit the particular structure of the modulator constraint sets.\footnote{For the foundations of AP algorithms in a more general context of arbitrary closed convex subsets of Hilbert spaces, we refer an interested reader to the seminal works by Bregman \cite{Bregman1967}, Gurin et al. \cite{Gubin1967}, and Boyle \& Dykstra \cite{Boyle1986}.} For the sake of convenience, we repeat the original assertions as well.

\subsection*{\textbf{AP-B algorithm}}


\begin{proposition_III_1}
A sequence $\mathbf{m}^{(0)},\mathbf{m}^{(1)}, \ldots, \mathbf{m}^{(i)},\ldots$ formed by the AP-B algorithm for $\epsilon_{tol} = 0$ and $N_{iter} \to +\infty$ converges to some $\mathbf{m}^\dagger \in \Sgeqs \cap \Sw$. The convergence is geometric and monotonic, i.e., there exist $\gamma>0$ and $0<r<1$ such that $\| \mathbf{m}^{(i)} - \mathbf{m}^\dagger \|_2 \leq \gamma \cdot r^i$ and $\| \mathbf{m}^{(i+1)} - \mathbf{m}^\dagger \|_2 \leq \| \mathbf{m}^{(i)} - \mathbf{m}^\dagger \|_2$ for $i \geq 0$.
\label{prop:APBSolConv_}
\end{proposition_III_1}

\begin{proof}
If the sequence $\mathbf{m}^{(0)},\mathbf{m}^{(1)}, \ldots, \mathbf{m}^{(i)}, \ldots $ terminates with some $\mathbf{m}^{(N)}$, i.e., $\mathbf{m}^{(N+j)}=\mathbf{m}^{(N)}$ for every $j>0$, it follows from the formulation of the AP-B algorithm that $\mathbf{m}^{(N)}=\mathbf{a}^{(N)}$, i.e., $\mathbf{m}^{(N)} \in \Sgeqs \cap \Sw$.\footnote{We remind the reader that $\mathbf{a}^{(i)} = \mathbf{P}_{\Sw}[\mathbf{m}^{(i-1)}]$ for any $i>0$ in the case of the AP-B algorithm.} Thus, $\mathbf{m}^{(N)} = \mathbf{m}^\dagger$, which means that the solution is achieved in a finite number of iterations. We next consider the case when the sequence $\mathbf{m}^{(0)},\mathbf{m}^{(1)}, \ldots, \mathbf{m}^{(i)}, \ldots$ is infinite. The rest of the proof is divided into three parts for clarity.


\underline{Convergence.} The outline of the convergence proof is as follows. We first demonstrate that the distance between any $\mathbf{x} \in \Sgeqs \cap \Sw$ and $\mathbf{m}^{(i)}$ or $\mathbf{a}^{(i)}$ decreases with every iteration. Using this, we then show that the sequence $\mathbf{m}^{(0)},\mathbf{m}^{(1)}, \ldots, \mathbf{m}^{(i)}, \ldots$ is bounded, and thus, by the Bolzano-Weierstrass theorem and closedness of $\Sgeqs$ and $\Sw$, has a subsequence that converges to some $\mathbf{m}^\dagger \in \Sgeqs \cap \Sw$. Referring to the first result again (that the distance between any $\mathbf{x} \in \Sgeqs \cap \Sw$ and $\mathbf{m}^{(i)}$ decreases with every iteration), we finally deduce that the original sequence $\mathbf{m}^{(0)},\mathbf{m}^{(1)}, \ldots, \mathbf{m}^{(i)}, \ldots$ converges to the same $\mathbf{m}^\dagger \in \Sgeqs \cap \Sw$ as any of its infinite subsequences.

$\Sgeqs \cap \Sw$ is nonempty by \PropRef{prop:SetClCw}. Let us consider some $\mathbf{x} \in \Sgeqs \cap \Sw$ together with $\mathbf{m}^{(i)}$ and $\mathbf{a}^{(i)}$ taken from the sequences $\mathbf{m}^{(0)},\mathbf{m}^{(1)}, \ldots, \mathbf{m}^{(i)}, \ldots$ and $\mathbf{a}^{(0)},\mathbf{a}^{(1)}, \ldots, \mathbf{a}^{(i)}, \ldots$ for some $i \geq 0$. Then, we have
%
\begin{equation}
\begin{aligned}
\| \mathbf{x} - \mathbf{m}^{(i)} \|_2^2 &= \| \mathbf{x} - \mathbf{a}^{(i+1)} + \mathbf{a}^{(i+1)} - \mathbf{m}^{(i)} \|_2^2 \\
&= \underbrace{\| \mathbf{x} - \mathbf{a}^{(i+1)} \|_2^2}_{\geq 0} + \underbrace{\| \mathbf{a}^{(i+1)} - \mathbf{m}^{(i)} \|_2^2}_{\geq 0} + 2 \cdot \underbrace{\langle \mathbf{m}^{(i)} - \mathbf{a}^{(i+1)} , \mathbf{a}^{(i+1)} - \mathbf{x} \rangle}_{\geq 0}.
\label{eq:PropIII.1_1}
\end{aligned}
\end{equation}
%
The nonnegativity of the last term in the second line of \eqref{eq:PropIII.1_1} follows from the containing-half-space inequality \eqref{eq:Prop3_3} and $\mathbf{a}^{(i+1)} = \mathbf{P}_{\Sw}[\mathbf{m}^{(i)}]$. Therefore, \eqref{eq:PropIII.1_1} implies that
%
\begin{equation}
\| \mathbf{x} - \mathbf{m}^{(i)} \|_2^2 \geq \| \mathbf{x} - \mathbf{a}^{(i+1)} \|_2^2 + \| \mathbf{a}^{(i+1)} - \mathbf{m}^{(i)} \|_2^2 \qquad \forall i \geq 0
\label{eq:PropIII.1_2}
\end{equation}
%
and
%
\begin{equation}
\| \mathbf{x} - \mathbf{m}^{(i)} \|_2 \geq \| \mathbf{x} - \mathbf{a}^{(i+1)} \|_2 \qquad \forall i \geq 0.
\label{eq:PropIII.1_3}
\end{equation}
%
Replacing $\mathbf{m}^{(i)}$ by $\mathbf{a}^{(i+1)}$ and $\mathbf{a}^{(i+1)}$ by $\mathbf{m}^{(i+1)}$ in \eqref{eq:PropIII.1_1}, and using the same argumentation as above, including $\mathbf{m}^{(i+1)} = \mathbf{P}_{\Sgeqss}[\mathbf{a}^{(i+1)}]$, we deduce that
%
\begin{equation}
\| \mathbf{x} - \mathbf{a}^{(i+1)} \|_2^2 \geq \| \mathbf{x} - \mathbf{m}^{(i+1)} \|_2^2 + \| \mathbf{m}^{(i+1)} - \mathbf{a}^{(i+1)} \|_2^2 \qquad \forall i \geq -1
\label{eq:PropIII.1_4}
\end{equation}
%
and
%
\begin{equation}
\| \mathbf{x} - \mathbf{a}^{(i+1)} \|_2 \geq \| \mathbf{x} - \mathbf{m}^{(i+1)} \|_2 \qquad \forall i \geq -1.
\label{eq:PropIII.1_5}
\end{equation}
%
The validity of \eqref{eq:PropIII.1_4} and \eqref{eq:PropIII.1_5} for not only $i \geq 0$ but also $i = -1$ follows from the particular initial conditions of the AP-B algorithm. Combining \eqref{eq:PropIII.1_3} and \eqref{eq:PropIII.1_5} yields

%
\begin{equation}
\| \mathbf{x} - \mathbf{m}^{(0)} \|_2 \geq \| \mathbf{x} - \mathbf{m}^{(1)} \|_2 \geq \ldots \geq \| \mathbf{x} - \mathbf{m}^{(i)} \|_2 \geq \ldots
\label{eq:PropIII.1_6}
\end{equation}
%
and, equivalently,
%
\begin{equation}
\| \mathbf{x} - \mathbf{m}^{(0)} \|_2^2 \geq \| \mathbf{x} - \mathbf{m}^{(1)} \|_2^2 \geq \ldots \geq \| \mathbf{x} - \mathbf{m}^{(i)} \|_2^2 \geq \ldots.
\label{eq:PropIII.1_7}
\end{equation}
%

\eqref{eq:PropIII.1_7} states that the sequence $\| \mathbf{x} - \mathbf{m}^{(0)} \|_2^2, \| \mathbf{x} - \mathbf{m}^{(1)} \|_2^2, \ldots \| \mathbf{x} - \mathbf{m}^{(i)} \|_2^2, \ldots$ is monotonically decreasing. This sequence is bounded from below by 0 because of the nonnegativity of the norm, and therefore, it converges to its greatest lower bound $L \geq 0$ by the monotone convergence theorem (see \PropRef{prop:MonConvTh}). Thus, for every $\epsilon>0$, there exists an $N(\epsilon)$ such that $0 \leq \| \mathbf{x} - \mathbf{m}^{(i)} \|_2^2 - L \leq \epsilon$ whenever $i > N(\epsilon)$. It follows then from \eqref{eq:PropIII.1_3} and \eqref{eq:PropIII.1_5} that $L \leq \|\mathbf{x} -\mathbf{a}^{(i+1)} \|_2^2 \leq \|\mathbf{x} -\mathbf{m}^{(i)} \|_2^2 \leq L+\epsilon$, so that $0 \leq \|\mathbf{x} - \mathbf{m}^{(i)} \|_2^2 - \|\mathbf{x} - \mathbf{a}^{(i+1)} \|_2^2 < \epsilon$, and because of \eqref{eq:PropIII.1_2}, also $0 \leq \|\mathbf{a}^{(i+1)} - \mathbf{m}^{(i)} \|_2 < \sqrt{\epsilon} $ whenever $i > N(\epsilon)$, i.e., the sequence $\|\mathbf{a}^{(1)} - \mathbf{m}^{(0)} \|_2, \|\mathbf{a}^{(2)} - \mathbf{m}^{(1)} \|_2, \ldots, \|\mathbf{a}^{(i)} - \mathbf{m}^{(i-1)} \|_2, \ldots$ converges to 0. If the sequence converges, then any of its infinite subsequences (see \DefRef{def:SubSeq}) converges as well, because the removal of elements from the sequence does not change the validity of the convergence condition:
%
\begin{equation}
\forall \epsilon >0 ~~\exists N'(\epsilon): ~ i > N'(\epsilon) \implies \|\mathbf{a}^{(k_i+1)} - \mathbf{m}^{(k_i)} \|_2 < \epsilon,
\label{eq:PropIII.1_8}
\end{equation}
%
where $k_i > k_j$ for $i > j \geq 0$ and $k_i \geq i$ for $i \geq 0$.

Next, we have from the triangle inequality \eqref{eq:TriIneq} that
%
\begin{equation}
\| \mathbf{m}^{(i)} - \mathbf{m}^{(j)} \|_2 \leq \| \mathbf{x} - \mathbf{m}^{(i)} \|_2 + \| \mathbf{x} - \mathbf{m}^{(j)} \|_2 \qquad \forall i, j \geq 0.
\label{eq:PropIII.1_9}
\end{equation}
%
Moreover, according to \eqref{eq:PropIII.1_6}, $\| \mathbf{x} - \mathbf{m}^{(i)} \|_2 \leq \| \mathbf{x} - \mathbf{m}^{(0)} \|_2$ for $ i \geq 0$. Thus, for all $i,j \geq 0$,
%
\begin{equation}
\| \mathbf{m}^{(i)} - \mathbf{m}^{(j)} \|_2 \leq 2 \cdot \| \mathbf{x} - \mathbf{m}^{(0)} \|_2,
\label{eq:PropIII.1_10}
\end{equation}
%
i.e., the sequence $\mathbf{m}^{(0)}, \mathbf{m}^{(1)}, \ldots, \mathbf{m}^{(i)}, \ldots$ is bounded (see \DefRef{def:BoundedSet}). Consequently, according to the Bolzano-Weierstrass theorem (see \PropRef{prop:BzWeTh}), this sequence has a subsequence $\mathbf{m}^{(k_0)}, \mathbf{m}^{(k_1)}, \ldots, \mathbf{m}^{(k_i)}, \ldots$ that converges to some $\mathbf{m}^\dagger \in \mathbb{R}^n$:
%
\begin{equation}
\forall \epsilon >0 ~~\exists N''(\epsilon): ~ i > N''(\epsilon) \implies \|\mathbf{m}^\dagger - \mathbf{m}^{(k_i)} \|_2 < \epsilon.
\label{eq:PropIII.1_11}
\end{equation}
%

We show now that $\mathbf{m}^\dagger \in \Sgeqs \cap \Sw$. By construction, $\mathbf{m}^{(i)} \in \Sgeqs $ for every $i \geq 0$. According to \eqref{eq:PropIII.1_11}, there exists an $\mathbf{m}^{(i)}$ for any $\epsilon > 0$ such that $\|\mathbf{m}^{(i)} - \mathbf{m}^\dagger \|_2 < \epsilon$. Hence, $\mathbf{m}^\dagger$ is a contact point of $\Sgeqs$ (see \DefRef{def:ContactPoint}). Moreover, because the latter set is closed (see \DefRef{def:ClosedSet} and \PropRef{prop:SetCl}), $\mathbf{m}^\dagger \in \Sgeqs$. Next, by exploiting the triangle inequality \eqref{eq:TriIneq}, we can write
%
\begin{equation}
\| \mathbf{m}^\dagger - \mathbf{a}^{(k_i+1)} \|_2 \leq \| \mathbf{a}^{(k_i+1)} - \mathbf{m}^{(k_i)} \|_2 + \| \mathbf{m}^\dagger - \mathbf{m}^{(k_i)} \|_2 \qquad \forall i \geq 0.
\label{eq:PropIII.1_12}
\end{equation}
%
Combining \eqref{eq:PropIII.1_12} with \eqref{eq:PropIII.1_8} and \eqref{eq:PropIII.1_11} and introducing $N'''(\epsilon) = \max[N'(\epsilon/2), N''(\epsilon/2)]$, we get
%
\begin{equation}
\forall \epsilon >0 ~~\exists N'''(\epsilon): ~ i > N'''(\epsilon) \implies \|\mathbf{m}^\dagger - \mathbf{a}^{(k_i+1)} \|_2 < \epsilon,
\label{eq:PropIII.1_13}
\end{equation}
%
i.e., the subsequence $\mathbf{a}^{(k_0+1)}, \mathbf{a}^{(k_1+1)}, \ldots, \mathbf{a}^{(k_i+1)}, \ldots$ converges to $\mathbf{m}^\dagger$. The set $\Sw$ is closed (see \PropRef{prop:SetCw}), and $\mathbf{a}^{(i)} \in \Sw$ for every $i \geq 0$ by construction. Therefore, using the same argumentation as for the subsequence $\mathbf{m}^{(k_0)},$ $\mathbf{m}^{(k_1)}, \ldots, \mathbf{m}^{(k_i)}, \ldots$, we conclude that $\mathbf{m}^\dagger \in \Sw$.

Finally, because $\mathbf{m}^\dagger \in \Sgeqs \cap \Sw$, \eqref{eq:PropIII.1_6} gives
%
\begin{equation}
\| \mathbf{m}^\dagger - \mathbf{m}^{(0)} \|_2 \geq \| \mathbf{m}^\dagger - \mathbf{m}^{(1)} \|_2 \geq \ldots \geq \| \mathbf{m}^\dagger - \mathbf{m}^{(i)} \|_2 \geq \ldots
\label{eq:PropIII.1_14}.
\end{equation}
%
In the light of \eqref{eq:PropIII.1_14}, the statement of \eqref{eq:PropIII.1_11} generalizes to
%
\begin{equation}
\forall \epsilon >0 ~~\exists N''''(\epsilon): ~ i > N''''(\epsilon) \implies \|\mathbf{m}^\dagger - \mathbf{m}^{(i)} \|_2 < \epsilon,
\label{eq:PropIII.1_15}
\end{equation}
%
where $N''''(\epsilon)=k_{N''(\epsilon)+1}$. Thus, the sequence $\mathbf{m}^{(0)},$ $\mathbf{m}^{(1)}, \ldots, \mathbf{m}^{(i)}, \ldots$ converges to $\mathbf{m}^\dagger \in \Sgeqs \cap \Sw$.


\underline{Monotonicity.} The monotonicity of the convergence of the sequence $\mathbf{m}^{(0)},$ $\mathbf{m}^{(1)}, \ldots, \mathbf{m}^{(i)}, \ldots$ to $\mathbf{m}^\dagger$ is declared by \eqref{eq:PropIII.1_14}.


\underline{Rate.}
The key point in establishing the geometric convergence of the AP-B algorithm is the fact that the intersection of $\Sw$ and the interior of $\Sgeqs$ is nonempty. Using this fact, we first show that the distances between $\mathbf{m}^{(i)}$ and $\Sgeqs^\circ \cap \Sw$ or $\mathbf{a}^{(i+1)}$ and $\Sgeqs^\circ \cap \Sw$ can be bounded by, respectively, $\|\mathbf{m}^{(i)} - \mathbf{a}^{(i+1)}\|_2$ or $\|\mathbf{m}^{(i+1)} - \mathbf{a}^{(i+1)}\|_2$ multiplied by a universal factor that is greater than one and independent of the iteration number $i$. In the next step, we exploit properties of metric projections to obtain two additional inequalities, which, in combination with the first result, allow deriving a decreasing geometric sequence that bounds $\|\mathbf{m}^{(0)}-\mathbf{m}^\dagger\|_2, \|\mathbf{m}^{(1)}-\mathbf{m}^\dagger\|_2, \ldots, \|\mathbf{m}^{(i)}-\mathbf{m}^\dagger\|_2, \ldots$ from above.

To start, let us consider some $\mathbf{x} \in \Sgeqs^\circ \cap \Sw$. Such an element exists according to \PropRef{prop:SetClCw}. Also, there exists an $\epsilon > 0$ such that $\mathbf{y} \in \Sgeqs$ if $\|\mathbf{x}-\mathbf{y}\|_2<\epsilon$ because $\mathbf{x}$ belongs to the interior of $\Sgeqs$ (see \DefRef{def:InteriorPoint}). If so, then it is also possible to choose a positive $\beta < \epsilon$ such that $\mathbf{y} \in \Sgeqs$ if $\|\mathbf{x}-\mathbf{y}\|_2\leq\beta$. We now introduce
%
\begin{equation}
\mathbf{z}^{(i)} = \frac{\alpha_i}{\alpha_i+\beta} \cdot \mathbf{x} + \frac{\beta}{\alpha_i+\beta} \cdot \mathbf{a}^{(i)}, \quad i \geq 0,
\label{eq:PropIII.1_16}
\end{equation}
%
where $\alpha_i = \|\mathbf{a}^{(i)}-\mathbf{P}_{\Sgeqss}[\mathbf{a}^{(i)}]\|_2 = \|\mathbf{a}^{(i)}-\mathbf{m}^{(i)}\|_2$. Note that $\alpha_i/(\alpha_i+\beta) \in (0,1)$, and $\beta/(\alpha_i+\beta) = 1 - \alpha_i/(\alpha_i+\beta)$. Moreover, $\mathbf{x} \in \Sw$, and $\mathbf{a}^{(i)} \in \Sw$ by construction. Therefore, by the \DefRef{def:ConvexSet} of a convex set and the fact that $\Sw$ is convex (see \PropRef{prop:SetCw}), we have $\mathbf{z}^{(i)} \in \Sw$. On the other hand, \eqref{eq:PropIII.1_16} can be rewritten as
%
\begin{equation}
\mathbf{z}^{(i)} = \frac{\alpha_i}{\alpha_i+\beta} \cdot \underbrace{\Big(\mathbf{x}+\frac{\beta}{\alpha_i} \cdot (\mathbf{a}^{(i)}-\mathbf{m}^{(i)})\Big)}_{\mathbf{y}'} + \frac{\beta}{\alpha_i+\beta} \cdot \mathbf{m}^{(i)}.
\label{eq:PropIII.1_17}
\end{equation}
%
In the above expression, $\|\mathbf{x}-\mathbf{y}'\|_2=\beta$. Therefore, $\mathbf{y}' \in \Sgeqs$ by the definition of $\beta$. Moreover, $\mathbf{m}^{(i)} \in \Sgeqs$ by construction, which implies $\mathbf{z}^{(i)} \in \Sgeqs$ because $\Sgeqs$ is convex (see \PropRef{prop:SetCl}). Hence, altogether, we conclude that $\mathbf{z}^{(i)} \in \Sgeqs \cap \Sw$ for $i \geq 0$.

Based on the above consideration, we show now that
%
\begin{equation}
\|\mathbf{m}^{(i)} - \mathbf{P}_{\Sgeqss\cap\Sw}[\mathbf{m}^{(i)}]\|_2 \leq \|\mathbf{m}^{(i)} - \mathbf{a}^{(i+1)}\|_2 \cdot (1+\|\mathbf{x}-\mathbf{m}^{(0)}\|_2 / \beta)
\label{eq:PropIII.1_18}
\end{equation}
%
and
%
\begin{equation}
\|\mathbf{a}^{(i+1)} - \mathbf{P}_{\Sgeqss\cap\Sw}[\mathbf{a}^{(i+1)}]\|_2 \leq \|\mathbf{m}^{(i+1)} - \mathbf{a}^{(i+1)}\|_2 \cdot (1+\|\mathbf{x}-\mathbf{m}^{(0)}\|_2 / \beta)
\label{eq:PropIII.1_19}
\end{equation}
%
for $i \geq 0$. To demonstrate \eqref{eq:PropIII.1_18}, note that
%
\begin{equation}
\begin{aligned}
\|\mathbf{m}^{(i)} - \mathbf{P}_{\Sgeqss\cap\Sw}[\mathbf{m}^{(i)}]\|_2 &\leq \|\mathbf{m}^{(i)} - \mathbf{z}^{(i)}\|_2 \\
&\leq \|\mathbf{m}^{(i)} - \mathbf{a}^{(i+1)}\|_2 + \|\mathbf{a}^{(i+1)} - \mathbf{z}^{(i+1)}\|_2 \\
&= \|\mathbf{m}^{(i)} - \mathbf{a}^{(i+1)}\|_2 + \frac{\alpha_{i+1}}{\alpha_{i+1}+\beta} \cdot \|\mathbf{x} - \mathbf{a}^{(i+1)}\|_2 \\
&\leq \|\mathbf{m}^{(i)} - \mathbf{a}^{(i+1)}\|_2 + \frac{\alpha_{i+1}}{\beta} \cdot \|\mathbf{x} - \mathbf{m}^{(0)}\|_2\\
&= \|\mathbf{m}^{(i)} - \mathbf{a}^{(i+1)}\|_2 + \frac{\|\mathbf{m}^{(i+1)} - \mathbf{a}^{(i+1)}\|_2}{\beta} \cdot \|\mathbf{x} - \mathbf{m}^{(0)}\|_2\\
&\leq \|\mathbf{m}^{(i)} - \mathbf{a}^{(i+1)}\|_2 + \frac{\|\mathbf{m}^{(i)} - \mathbf{a}^{(i+1)}\|_2}{\beta} \cdot \|\mathbf{x} - \mathbf{m}^{(0)}\|_2.
\end{aligned}
\label{eq:PropIII.1_20}
\end{equation}
%
In \eqref{eq:PropIII.1_20}, we used the fact that $\mathbf{z} \in \Sgeqs \cap \Sw$ and the \DefRef{def:ProjOp} of the projection operator (the first line), the triangle inequality \eqref{eq:TriIneq} (the second line), \eqref{eq:PropIII.1_16} (the third line), combined inequalities \eqref{eq:PropIII.1_3} and \eqref{eq:PropIII.1_5} (the fourth line), and the \DefRef{def:ProjOp} of the projection operator again (the last line). Similarly to \eqref{eq:PropIII.1_20}, we can write
%
\begin{equation}
\begin{aligned}
\|\mathbf{a}^{(i+1)} - \mathbf{P}_{\Sgeqss\cap\Sw}[\mathbf{a}^{(i+1)}]\|_2 &\leq \|\mathbf{a}^{(i+1)} - \mathbf{z}^{(i+1)}\|_2 \\
&= \frac{\alpha_{i+1}}{\alpha_{i+1}+\beta} \cdot \|\mathbf{x}-\mathbf{a}^{(i+1)}\|_2 \\
&\leq \frac{\alpha_{i+1}}{\beta} \cdot \|\mathbf{x} - \mathbf{m}^{(0)}\|_2 \\
&\leq \alpha_{i+1} + \frac{\alpha_{i+1}}{\beta} \cdot \|\mathbf{x} - \mathbf{m}^{(0)}\|_2\\
&=\|\mathbf{m}^{(i+1)} - \mathbf{a}^{(i+1)}\|_2 + \frac{\|\mathbf{m}^{(i+1)} - \mathbf{a}^{(i+1)}\|_2}{\beta} \cdot \|\mathbf{x} - \mathbf{m}^{(0)}\|_2,
\end{aligned}
\label{eq:PropIII.1_21}
\end{equation}
%
which proves \eqref{eq:PropIII.1_19}.

Next, we derive two additional inequalities. In particular, we have
%
\begin{equation}
\begin{aligned}
\|\mathbf{m}^{(i)} - \mathbf{P}_{\Sgeqss\cap\Sw}[&\mathbf{m}^{(i)}]\|_2^2 - \|\mathbf{a}^{(i+1)} - \mathbf{P}_{\Sgeqss\cap\Sw}[\mathbf{a}^{(i+1)}]\|_2^2 \\[6pt]
&\geq \|\mathbf{m}^{(i)} - \mathbf{P}_{\Sgeqss\cap\Sw}[\mathbf{m}^{(i)}]\|_2^2 - \|\mathbf{a}^{(i+1)} - \mathbf{P}_{\Sgeqss\cap\Sw}[\mathbf{m}^{(i)}]\|_2^2 \\
&= \|\mathbf{m}^{(i)} - \mathbf{P}_{\Sgeqss\cap\Sw}[\mathbf{m}^{(i)}]\|_2^2 - \|(\mathbf{a}^{(i+1)} - \mathbf{m}^{(i)}) + (\mathbf{m}^{(i)} - \mathbf{P}_{\Sgeqss\cap\Sw}[\mathbf{m}^{(i)}])\|_2^2 \\
&= -\|\mathbf{a}^{(i+1)}-\mathbf{m}^{(i)}\|_2^2 + 2 \cdot \langle \mathbf{a}^{(i+1)}-\mathbf{m}^{(i)}, \mathbf{P}_{\Sgeqss\cap\Sw}[\mathbf{m}^{(i)}] -\mathbf{m}^{(i)}\rangle\\
&= -\|\mathbf{a}^{(i+1)}-\mathbf{m}^{(i)}\|_2^2 + 2 \cdot \langle \mathbf{a}^{(i+1)}-\mathbf{m}^{(i)}, (\mathbf{P}_{\Sgeqss\cap\Sw}[\mathbf{m}^{(i)}] -\mathbf{a}^{(i+1)})+(\mathbf{a}^{(i+1)}-\mathbf{m}^{(i)}) \rangle\\
&= \|\mathbf{a}^{(i+1)}-\mathbf{m}^{(i)}\|_2^2 + 2 \cdot \underbrace{\langle \mathbf{m}^{(i)}-\mathbf{a}^{(i+1)}, \mathbf{a}^{(i+1)} - \mathbf{P}_{\Sgeqss\cap\Sw}[\mathbf{m}^{(i)}] \rangle}_{\geq 0}\\
&\geq \|\mathbf{a}^{(i+1)}-\mathbf{m}^{(i)}\|_2^2
\end{aligned}
\label{eq:PropIII.1_22}
\end{equation}
%
for $i \geq 0$. Thus,
%
\begin{equation}
\begin{aligned}
\|\mathbf{m}^{(i)} - \mathbf{P}_{\Sgeqss\cap\Sw}[\mathbf{m}^{(i)}]\|_2^2 - \|\mathbf{a}^{(i+1)} - \mathbf{P}_{\Sgeqss\cap\Sw}[\mathbf{a}^{(i+1)}]\|_2^2 \geq \|\mathbf{m}^{(i)}-\mathbf{a}^{(i+1)}\|_2^2, \quad i \geq 0.
\end{aligned}
\label{eq:PropIII.1_23}
\end{equation}
%
In \eqref{eq:PropIII.1_22}, we used the \DefRef{def:ProjOp} of the projection operator (the second line) and the containing-half-space inequality \eqref{eq:Prop3_3} along with $\mathbf{m}^{(i)}=\mathbf{P}_{\Sgeqss}[\mathbf{a}^{(i)}]$ (the sixth line). Replacing $\mathbf{a}^{(i+1)}$ by $\mathbf{m}^{(i+1)}$ and $\mathbf{m}^{(i)}$ by $\mathbf{a}^{(i+1)}$ and repeating the same steps as in \eqref{eq:PropIII.1_22}, we obtain
%
\begin{equation}
\begin{aligned}
\|\mathbf{a}^{(i+1)} - \mathbf{P}_{\Sgeqss\cap\Sw}[\mathbf{a}^{(i+1)}]\|_2^2 - \|\mathbf{m}^{(i+1)} - \mathbf{P}_{\Sgeqss\cap\Sw}[\mathbf{m}^{(i+1)}]\|_2^2 \geq \|\mathbf{m}^{(i+1)}-\mathbf{a}^{(i+1)}\|_2^2, \quad i \geq 0.
\end{aligned}
\label{eq:PropIII.1_24}
\end{equation}
%

Combining \eqref{eq:PropIII.1_18} with \eqref{eq:PropIII.1_23} and \eqref{eq:PropIII.1_19} with \eqref{eq:PropIII.1_24}, we obtain, respectively,
%
\begin{equation}
\begin{aligned}
\bigg(1-\frac{1}{(1+\|\mathbf{x}-\mathbf{m}^{(0)}\|_2 / \beta)^2} \bigg) \cdot \|\mathbf{m}^{(i)} - \mathbf{P}_{\Sgeqss\cap\Sw}[\mathbf{m}^{(i)}]\|_2^2 \geq \|\mathbf{a}^{(i+1)} - \mathbf{P}_{\Sgeqss\cap\Sw}[\mathbf{a}^{(i+1)}]\|_2^2
\end{aligned}
\label{eq:PropIII.1_25}
\end{equation}
%
and
%
\begin{equation}
\begin{aligned}
\bigg(1-\frac{1}{(1+\|\mathbf{x}-\mathbf{m}^{(0)}\|_2 / \beta)^2} \bigg) \cdot \|\mathbf{a}^{(i+1)} - \mathbf{P}_{\Sgeqss\cap\Sw}[\mathbf{a}^{(i+1)}]\|_2^2 \geq \|\mathbf{m}^{(i+1)} - \mathbf{P}_{\Sgeqss\cap\Sw}[\mathbf{m}^{(i+1)}]\|_2^2,
\end{aligned}
\label{eq:PropIII.1_26}
\end{equation}
%
which then lead to
%
\begin{equation}
\begin{aligned}
\underbrace{\bigg(1-\frac{1}{(1+\|\mathbf{x}-\mathbf{m}^{(0)}\|_2 / \beta)^2} \bigg)}_{r < 1} \cdot \|\mathbf{m}^{(i)} - \mathbf{P}_{\Sgeqss\cap\Sw}[\mathbf{m}^{(i)}]\|_2 \geq \|\mathbf{m}^{(i+1)} - \mathbf{P}_{\Sgeqss\cap\Sw}[\mathbf{m}^{(i+1)}]\|_2
\end{aligned}
\label{eq:PropIII.1_27}
\end{equation}
%
for $i \geq 0$. Starting with $i=0$ and applying \eqref{eq:PropIII.1_27} iteratively, we get
%
\begin{equation}
\begin{aligned}
r^i \cdot \|\mathbf{m}^{(0)} - \mathbf{P}_{\Sgeqss\cap\Sw}[\mathbf{m}^{(0)}]\|_2 \geq \|\mathbf{m}^{(i)} - \mathbf{P}_{\Sgeqss\cap\Sw}[\mathbf{m}^{(i)}]\|_2, \quad i \geq 0.
\end{aligned}
\label{eq:PropIII.1_28}
\end{equation}
%
According to the triangle inequality \eqref{eq:TriIneq},
%
\begin{equation}
\begin{aligned}
\|\mathbf{m}^{(i)} - \mathbf{m}^\dagger\|_2 \leq \|\mathbf{m}^{(i)} - \mathbf{P}_{\Sgeqss\cap\Sw}[\mathbf{m}^{(i)}]\|_2 + \|\mathbf{m}^\dagger - \mathbf{P}_{\Sgeqss\cap\Sw}[\mathbf{m}^{(i)}]\|_2
\end{aligned}
\label{eq:PropIII.1_29}
\end{equation}
%
and [see \eqref{eq:TriIneq2}]
%
\begin{equation}
\begin{aligned}
\|\mathbf{m}^{(j)} - \mathbf{m}^\dagger\|_2 \geq \big|\|\mathbf{m}^{(j)} - \mathbf{P}_{\Sgeqss\cap\Sw}[\mathbf{m}^{(i)}]\|_2 - \|\mathbf{m}^\dagger - \mathbf{P}_{\Sgeqss\cap\Sw}[\mathbf{m}^{(i	)}]\|_2\big|.
\end{aligned}
\label{eq:PropIII.1_30}
\end{equation}
%
\eqref{eq:PropIII.1_30} and \eqref{eq:PropIII.1_15} together imply that a sequence
%
\begin{equation}
\|\mathbf{m}^{(0)} - \mathbf{P}_{\Sgeqss\cap\Sw}[\mathbf{m}^{(i)}]\|_2, \|\mathbf{m}^{(1)} - \mathbf{P}_{\Sgeqss\cap\Sw}[\mathbf{m}^{(i)}]\|_2, \ldots, \|\mathbf{m}^{(j)} - \mathbf{P}_{\Sgeqss\cap\Sw}[\mathbf{m}^{(i)}]\|_2, \ldots
\label{eq:PropIII.1_31}
\end{equation}
%
converges to $\|\mathbf{m}^\dagger - \mathbf{P}_{\Sgeqss\cap\Sw}[\mathbf{m}^{(i)}]\|_2$ for every $i \geq 0$. On the other hand, this sequence is monotonically decreasing [see \eqref{eq:PropIII.1_6}], and thus, by the monotone convergence theorem (\PropRef{prop:MonConvTh}), it converges to its greatest lower bound, i.e., $\|\mathbf{m}^{(j)} - \mathbf{P}_{\Sgeqss\cap\Sw}[\mathbf{m}^{(i)}]\|_2 \geq \|\mathbf{m}^\dagger - \mathbf{P}_{\Sgeqss\cap\Sw}[\mathbf{m}^{(i)}]\|_2$ for all $i,j \geq 0$. Consequently, \eqref{eq:PropIII.1_29} reduces to
%
\begin{equation}
\begin{aligned}
\|\mathbf{m}^{(i)} - \mathbf{m}^\dagger\|_2 \leq 2 \cdot \|\mathbf{m}^{(i)} - \mathbf{P}_{\Sgeqss\cap\Sw}[\mathbf{m}^{(i)}]\|_2.
\end{aligned}
\label{eq:PropIII.1_32}
\end{equation}
%
Applying \eqref{eq:PropIII.1_32} to \eqref{eq:PropIII.1_28}, we finally obtain
%
\begin{equation}
\begin{aligned}
\gamma \cdot r^i \geq \|\mathbf{m}^{(i)} - \mathbf{m}^\dagger\|_2, \quad i \geq 0,
\end{aligned}
\label{eq:PropIII.1_33}
\end{equation}
%
where $\gamma = 2 \cdot \|\mathbf{m}^{(0)} - \mathbf{P}_{\Sgeqss\cap\Sw}[\mathbf{m}^{(0)}]\|_2 > 0$.
\end{proof}

\begin{remark}
1) The convergence proof of the iterative scheme AP-B relies entirely on the convexity and closedness of the constraint sets. Therefore, this algorithm extends to more general sets than $\Sgeqs$ and $\Sw$. 2) The geometric nature of the convergence requires additionally that the interior of at least one of the constraint sets is nonempty and shares elements with the other set.
\end{remark}


\begin{proposition_III_2}
Consider $\mathbf{m} \in \Mwm$ and $\mathbf{c} \in \Cd$ with $|c_j| = \sum_{k=1}^{n/\nu}(\tilde{c}_{\nu \cdot k} \cdot e^{\imath 2 \pi \nu (k-1) (j-1) / n})$, where $\tilde{c}_{\nu \cdot k} \in \mathbb{C}$ and $n / \nu \in \mathbb{N}$. If $\varpi \geq \omega$ and $\nu \geq \varpi + \omega - 1$, then  a sequence $\mathbf{m}^{(0)}, \mathbf{m}^{(1)}, \ldots, \mathbf{m}^{(i)},\ldots$ formed by the AP-B algorithm for $\epsilon_{tol}=0$ and $N_{iter} \to +\infty$ converges to $\mathbf{m}$.
\label{prop:APBSolConv_b}
\end{proposition_III_2}

\begin{proof}
The proof relies on two auxiliary results that apply to the $\mathbf{m}$ and $\mathbf{c}$ specified in the proposition:
%
\begin{itemize}
%
\item For every $q \in \mathbb{R}$,
%
\begin{equation}
\mathbf{z} = q + (|\mathbf{c}|-q) \circ \theta (|\mathbf{c}|-q) \implies z_j = \sum_{k=1}^{n/\nu}\big(\tilde{z}_{\nu \cdot k} \cdot e^{\imath 2 \pi \nu (k-1)(j-1) / n}\big), \quad j \in \mathcal{I}_n.
\label{eq:PropIII.1b_1}
\end{equation}
%
\item For $\mathbf{z}$ defined by \eqref{eq:PropIII.1b_1},
%
\begin{equation}
\mathbf{P}_{\Sw}[\mathbf{m} \circ \mathbf{z}] = \langle \mathbf{z} \rangle \cdot \mathbf{m},
\label{eq:PropIII.1b_2}
\end{equation}
%
where, $\langle |\mathbf{z}| \rangle = \frac{1}{n}\sum_{i=1}^n|z_i|$.
%
\end{itemize}

To show \eqref{eq:PropIII.1b_1}, consider a $\mathbf{g} \in \mathbb{R}^n$ whose elements form a periodic sequence with the fundamental frequency $\nu \in \mathbb{N}$ such that $n / \nu \in \mathbb{N}$. Like any element of $\mathbb{R}^n$, $\mathbf{g}$ can be expressed through its DFT:
%
\begin{equation}
g_j = \frac{1}{\sqrt{n}}\sum_{k=1}^{n}\big(\tilde{\tilde{g}}_k \cdot e^{\imath 2 \pi (k-1)(j-1) / n}\big), \quad j \in \mathcal{I}_n.
\label{eq:PropIII.1b_3}
\end{equation}
%
On the other hand, the periodicity of $g_1, g_2, \ldots, g_n$ implies that, for every $k \in \mathcal{I}_n$,
%
\begin{equation}
\begin{aligned}
\tilde{\tilde{g}}_k = (\mathbf{F} \mathbf{g})_k &= \frac{1}{\sqrt{n}}\sum_{j=1}^{n}\big(g_j \cdot e^{-\imath 2 \pi (k-1)(j-1) / n}\big)\\
%
&= \frac{1}{\sqrt{n}}\sum_{j=1}^{n/\nu}g_j \sum_{l=1}^{\nu} e^{-\imath 2 \pi ((l-1) \cdot (n/\nu)+ (j-1))(k-1) / n}\\
%
&= \frac{1}{\sqrt{n}}\sum_{j=1}^{n/\nu} \bigg( g_j \cdot e^{-\imath 2 \pi (k-1)(j-1) / n} \sum_{l=1}^{\nu} \big(e^{-\imath 2 \pi (k-1) / \nu}\big)^{l-1} \bigg)\\
%
&= \underbrace{\bigg(\frac{1-e^{-\imath 2 \pi (k-1)}}{1-e^{-\imath 2 \pi (k-1)/\nu}}\bigg)}_{=0,~\mathrm{if}~((k-1)/\nu) \notin \mathbb{N}} \cdot \frac{1}{\sqrt{n}}\sum_{j=1}^{n/\nu} \big( g_j \cdot e^{-\imath 2 \pi (k-1)(j-1) / n} \big).
\end{aligned}
\label{eq:PropIII.1b_4}
\end{equation}
%
Combining \eqref{eq:PropIII.1b_3} and \eqref{eq:PropIII.1b_4} gives
%
\begin{equation}
g_j = \frac{1}{\sqrt{n}}\sum_{k=1}^{n/\nu}\big(\tilde{g}_{\nu \cdot k} \cdot e^{\imath 2 \pi \nu (k-1)(j-1) / n}\big), \quad j \in \mathcal{I}_n,
\label{eq:PropIII.1b_5}
\end{equation}
%
where $\tilde{g}_{\nu \cdot k} = \tilde{\tilde{g}}_{\nu \cdot (k-1)+1}$. Now, note that $|\mathbf{c}|$ defined in the proposition is also periodic with the fundamental frequency $\nu$ such that $n/\nu \in \mathbb{N}$. If so, then the same holds for $\mathbf{z}$ defined by \eqref{eq:PropIII.1b_1} because adding a constant or rectifying a function does not change its periodicity properties. Combining this result with \eqref{eq:PropIII.1b_5} validates the claim of \eqref{eq:PropIII.1b_1}.

To show \eqref{eq:PropIII.1b_2}, consider $\mathbf{W}_\varpi \mathbf{F} (\mathbf{m} \circ \mathbf{z})$. For every $r \in \mathcal{I}_n$, we have 
%
\begin{equation}
\begin{aligned}
(\mathbf{W}_\varpi \mathbf{F} (\mathbf{m} \circ \mathbf{z}))_r &= \frac{(\mathbf{W}_\varpi)_{rr}}{\sqrt{n}} \sum_{j=1}^{n} \big(m_j \cdot z_j \cdot e^{-\imath 2 \pi (r-1)(j-1)/n} \big) \\
%
&= \frac{(\mathbf{W}_\varpi)_{rr}}{\sqrt{n}} \sum_{j=1}^{n} \bigg(m_j \cdot \bigg[\sum_{l=1}^n \bigg( z_l \cdot \underbrace{\frac{1}{n} \sum_{k=1}^n e^{\imath 2 \pi (k-1)(j-l)/n}}_{\delta_{j,l}} \bigg)\bigg] \cdot e^{-\imath 2 \pi (r-1)(j-1)/n} \bigg)\\
%
&=  \frac{(\mathbf{W}_\varpi)_{rr}}{\sqrt{n}} \sum_{k=1}^{n} \bigg(\frac{1}{\sqrt{n}} \sum_{l=1}^n \big( z_l \cdot e^{-\imath 2 \pi (k-1)(l-1)/n} \big) \cdot \frac{1}{\sqrt{n}} \sum_{j=1}^n \big(m_j \cdot e^{-\imath 2 \pi (j-1)(r-k)/n}\big) \bigg)\\
%
&= \frac{(\mathbf{W}_\varpi)_{rr}}{\sqrt{n}} \sum_{k=1}^{n} \big(\tilde{\tilde{z}}_k \cdot \tilde{\tilde{m}}_{r-k} \big)\\
%
&= \frac{\tilde{\tilde{z}}_1 \cdot (\mathbf{W}_\varpi)_{rr}}{\sqrt{n}} \cdot \tilde{\tilde{m}}_r + \underbrace{\frac{(\mathbf{W}_\varpi)_{rr}}{\sqrt{n}} \sum_{k=\nu+1}^{n-\nu+1} \big(\tilde{\tilde{z}}_k \cdot \tilde{\tilde{m}}_{r-k}}_{=0 \impliedby \omega \leq \varpi \leq \nu-\omega+1} \big) \\
%
&= \langle \mathbf{z} \rangle \cdot \tilde{\tilde{m}}_r.
\end{aligned}
\label{eq:PropIII.1b_6}
\end{equation}
%
When writing the second equality above, we used the orthonormality of $\mathbf{F}^{-1}$. Combining \eqref{eq:PropIII.1b_6} with the definition of $\mathbf{P}_{\Sw}[\ldots]$ (see (10) in the main text), we obtain \eqref{eq:PropIII.1b_2}.

After establishing \eqref{eq:PropIII.1b_1} and \eqref{eq:PropIII.1b_2}, consider the sequences $\mathbf{m}^{(0)}, \mathbf{m}^{(1)}, \ldots, \mathbf{m}^{(i)}, \ldots$ and $\mathbf{a}^{(0)}, \mathbf{a}^{(1)}, \ldots, \mathbf{a}^{(i)}, \ldots$. Note that
%
\begin{equation}
\begin{aligned}
\mathbf{m}^{(0)} &= \mathbf{m} \circ |\mathbf{c}| \\
%
&= \mathbf{m} \circ \big( q^{(0)} + (|\mathbf{c}|-q^{(0)}) \circ \theta (|\mathbf{c}|-q^{(0)}) \big),
\end{aligned}
\label{eq:PropIII.1b_7}
\end{equation}
%
where $q^{(0)}=0$. Hence, $\mathbf{m}^{(0)}$ can be expressed as an elementwise product of the true modulator $\mathbf{m}$ and a vector that satisfies \eqref{eq:PropIII.1b_1}. Let us now assume that, for some $i \geq 1$, $\mathbf{m}^{(i)}$ can be expressed as
%
\begin{equation}
\mathbf{m}^{(i)} = \mathbf{m} \circ \big( q^{(i)} + (|\mathbf{c}|-q^{(i)}) \circ \theta (|\mathbf{c}|-q^{(i)}) \big).
\label{eq:PropIII.1b_8}
\end{equation}
%
Then, by \eqref{eq:PropIII.1b_2},
%
\begin{equation}
\mathbf{a}^{(i+1)} = \mathbf{P}_{\Sw}[\mathbf{m^{(i)}}] = q^{(i+1)} \cdot \mathbf{m},
\label{eq:PropIII.1b_9}
\end{equation}
%
where,
%
\begin{equation}
q^{(i+1)} = \langle q^{(i)} + (|\mathbf{c}|-q^{(i)}) \circ \theta (|\mathbf{c}|-q^{(i)}) \rangle.
\label{eq:PropIII.1b_10}
\end{equation}
%
Further, by the definition of $\mathbf{P}_{\Sgeqs}[\ldots]$ (see (9) in the main text),
%
\begin{equation}
\begin{aligned}
\mathbf{m}^{(i+1)} &= \mathbf{P}_{\Sgeqs}[\mathbf{a}^{(i+1)}] \\
%
&= \mathbf{m} \circ \big( q^{(i+1)} + (|\mathbf{c}|-q^{(i+1)}) \circ \theta (|\mathbf{c}|-q^{(i+1)}) \big),
\end{aligned}
\label{eq:PropIII.1b_11}
\end{equation}
%
i.e., $\mathbf{m}^{(i+1)}$ can also be expressed as the product of $\mathbf{m}$ and a vector that satisfies \eqref{eq:PropIII.1b_1}. Hence, we conclude by using mathematical induction that, for every $i \geq 1$,
%
\begin{equation}
\mathbf{a}^{(i)} = q^{(i)} \cdot \mathbf{m},
\label{eq:PropIII.1b_12}
\end{equation}
%
and
%
\begin{equation}
q^{(i)} = \langle q^{(i-1)} + (|\mathbf{c}|-q^{(i-1)}) \circ \theta (|\mathbf{c}|-q^{(i-1)}) \rangle,
\label{eq:PropIII.1b_13}
\end{equation}
%
with $q_0 = 0$.

Next, observe that, by \eqref{eq:PropIII.1b_13},
%
\begin{equation}
\begin{aligned}
q^{(i)} - q^{(i-1)} &= \langle (|\mathbf{c}|-q^{(i-1)}) \circ \theta (|\mathbf{c}|-q^{(i-1)}) \rangle \\
%
&\geq 0,
\end{aligned}
\label{eq:PropIII.1b_14}
\end{equation}
%
i.e., the sequence $q^{(0)}, q^{(1)}, \ldots, q^{(i)}, \ldots$ is monotonically increasing. Moreover, it follows from \eqref{eq:PropIII.1b_13} that, for every $i \geq 1$,
%
\begin{equation}
\begin{aligned}
q^{(i-1)} \leq 1 \implies q^{(i)} &= q^{(i-1)} + \langle (|\mathbf{c}|-q^{(i-1)}) \circ \theta (|\mathbf{c}|-q^{(i-1)}) \rangle \\
%
&\leq q^{(i-1)} + (1-q^{(i-1)}) \cdot \underbrace{\theta(1-q^{(i-1)})}_{=1} = 1.
\end{aligned}
\label{eq:PropIII.1b_15}
\end{equation}
%
Taken together with $q^{(0)}=0$, \eqref{eq:PropIII.1b_15} implies that the sequence $q^{(0)}, q^{(1)}, \ldots, q^{(i)}, \ldots$ is bounded from above by 1. Hence, by the monotone convergence theorem (see \PropRef{prop:MonConvTh}), $q^{(0)}, q^{(1)}, \ldots, q^{(i)}, \ldots$ converges to its least upper bound $\bar{q} \leq 1$. If we assume that $\bar{q} < 1$, then the convergence of $q^{(0)}, q^{(1)}, \ldots, q^{(i)}, \ldots$ to $\bar{q}$ implies that there exists an $N$ such that
%
\begin{equation}
\bar{q} - q^{(N)} \leq \langle (|\mathbf{c}|-\bar{q}) \circ \theta (|\mathbf{c}|-\bar{q}) \rangle / 2.
\label{eq:PropIII.1b_16}
\end{equation}
%
By \eqref{eq:PropIII.1b_13},
%
\begin{equation}
\begin{aligned}
q^{(N+1)} - q^{(N)} &= \langle (|\mathbf{c}|-q^{(N)}) \circ \theta (|\mathbf{c}|-q^{(N)}) \rangle \\
%
&\geq \langle (|\mathbf{c}|-\bar{q}) \circ \theta (|\mathbf{c}|-\bar{q}) \rangle \\
%
&> \langle (|\mathbf{c}|-\bar{q}) \circ \theta (|\mathbf{c}|-\bar{q}) \rangle / 2,
\label{eq:PropIII.1b_17}
\end{aligned}
\end{equation}
%
which means that $q^{(N+1)} > \bar{q}$, i.e., the initial assumption that $\bar{q}<1$ is incorrect. Therefore, $\bar{q}=1$, and $q^{(0)}, q^{(1)}, \ldots, q^{(i)}, \ldots$ converges to 1, i.e., 
%
\begin{equation}
\forall \epsilon > 0~\exists N(\epsilon): i>N(\epsilon) \implies |1-q^{(N)}| < \epsilon.
\label{eq:PropIII.1b_18}
\end{equation}
%

Finally, note that
%
\begin{equation}
\begin{aligned}
|1-q^{(N)}| < \epsilon &\implies \| \mathbf{m} \|_2 \cdot |1-q^{(N)}| < \underbrace{\| \mathbf{m} \|_2 \cdot \epsilon}_{\epsilon'} \\
%
&\implies \| \mathbf{m} - \mathbf{m} \cdot q^{(N)}\|_2 < \epsilon'\\
%
&\implies \underbrace{\| \mathbf{m} - \mathbf{a}^{(N)}\|_2 < \epsilon'}_{\textrm{by}~\eqref{eq:PropIII.1b_12}},
\end{aligned}
\label{eq:PropIII.1b_19}
\end{equation}
%
i.e., \eqref{eq:PropIII.1b_18} implies that the sequence $\mathbf{a}^{(0)}, \mathbf{a}^{(1)}, \ldots, \mathbf{a}^{(i)}, \ldots$ converges to $\mathbf{m}$. In the light of \eqref{eq:PropIII.1_5} and \eqref{eq:PropIII.1_6} in the proof of \hyperref[prop:APBSolConv_]{\textit{Proposition~III.1}}, this result allows concluding that $\mathbf{m}^{(0)}, \mathbf{m}^{(1)}, \ldots, \mathbf{m}^{(i)}, \ldots$ also converges to $\mathbf{m}$.
\end{proof}

\subsection*{\textbf{AP-A algorithm}}


\begin{proposition_III_3}
A sequence $\mathbf{m}^{(0)},\mathbf{m}^{(1)}, \ldots, \mathbf{m}^{(i)},\ldots$ formed by the AP-A algorithm for $\epsilon_{tol} = 0$ and $N_{iter} \to +\infty$ converges to some $\mathbf{m}^\dagger \in \Sgeqs \cap \Sw$. The convergence is monotonic, i.e., $\| \mathbf{m}^{(i+1)} - \mathbf{m}^\dagger \|_2 \leq \| \mathbf{m}^{(i)} - \mathbf{m}^\dagger \|_2$ for $i \geq 0$.
\end{proposition_III_3}

\begin{proof}
Analogously to the AP-B algorithm, if the sequence $\mathbf{m}^{(0)},\mathbf{m}^{(1)}, \ldots, \mathbf{m}^{(i)}, \ldots$ terminates with some finite $i=N$, then we have $\mathbf{m}^{(N)} = \mathbf{m}^\dagger \in \Sgeqs \cap \Sw$. Therefore, we next consider the case when the sequence is infinite. The main idea behind the proof is to show that the inequalities \eqref{eq:PropIII.1_2} and \eqref{eq:PropIII.1_4} apply not only to the AP-B but also to the AP-A algorithm. When that is established, we can proceed along the path of the convergence proof of the AP-B scheme.

To this end, we first derive some auxiliary (in)equalities. Specifically, it follows from the definition of the operator $\mathbf{P}_{\Sw}$ [see \eqref{eq:MathPrel3}] that, for all $\mathbf{z}, \mathbf{y} \in \mathbb{R}^n$,
%
\begin{equation}
\begin{aligned}
\langle \mathbf{z}, \mathbf{P}_{\Sw}[\mathbf{y}] \rangle &= \mathbf{z}^\text{T} \mathbf{F}^{-1} \mathbf{W}_\varpi \mathbf{F} \mathbf{y} = \mathbf{z}^\text{T} \mathbf{F}^{-1} \mathbf{W}_\varpi \mathbf{W}_\varpi \mathbf{F} \mathbf{y} = \mathbf{z}^\text{T} \mathbf{F}^{-1} \mathbf{W}_\varpi \mathbf{F} \mathbf{F}^{-1} \mathbf{W}_\varpi \mathbf{F} \mathbf{y}\\
&= (\mathbf{z}^\text{T} \mathbf{F}^{-1} \mathbf{W}_\varpi \mathbf{F})^* (\mathbf{F}^{-1} \mathbf{W}_\varpi \mathbf{F} \mathbf{y}) = (\mathbf{z}^\text{T} \mathbf{F} \mathbf{W}_\varpi \mathbf{F}^{-1}) (\mathbf{F}^{-1} \mathbf{W}_\varpi \mathbf{F} \mathbf{y}) \\
&= ((\mathbf{F} \mathbf{W}_\varpi \mathbf{F}^{-1})^\text{T}\mathbf{z})^\text{T} (\mathbf{F}^{-1} \mathbf{W}_\varpi \mathbf{F} \mathbf{y}) = (\mathbf{F}^{-1} \mathbf{W}_\varpi \mathbf{F} \mathbf{z})^\text{T} (\mathbf{F}^{-1} \mathbf{W}_\varpi \mathbf{F} \mathbf{y})\\
&= \langle \mathbf{P}_{\Sw}[\mathbf{z}], \mathbf{P}_{\Sw}[\mathbf{y}] \rangle.
\label{eq:PropIII.2_1}
\end{aligned}
\end{equation}
%
Here, T and * mark, respectively, the transposition and complex conjugation. We used the following properties of matrices $\mathbf{W}_\varpi$ and $\mathbf{F}$ in \eqref{eq:PropIII.2_1}: 1) $\mathbf{W}_\varpi \mathbf{W}_\varpi = \mathbf{W}_\varpi$; 2) $\mathbf{W}_\varpi^* = \mathbf{W}_\varpi^\text{T} = \mathbf{W}_\varpi$; 3) $\mathbf{F}^* = \mathbf{F}^{-1}$; and 4) $ \mathbf{F}^\text{T} = \mathbf{F}$. The result of \eqref{eq:PropIII.2_1} can be rewritten as
%
\begin{equation}
\langle \mathbf{z}-\mathbf{P}_{\Sw}[\mathbf{z}], \mathbf{P}_{\Sw}[\mathbf{y}] \rangle = 0.
\label{eq:PropIII.2_2}
\end{equation}
%
\eqref{eq:PropIII.2_2} is a particular instance of a more general result that the difference between any $\mathbf{z} \in \mathbb{R}^n$ and its projection onto a linear subspace of $\mathbb{R}^n$ is perpendicular to any element of that subspace. Using \eqref{eq:PropIII.2_2}, we obtain the following:
%
\begin{equation}
\begin{aligned}
\|\mathbf{z}\|_2^2 &= \langle \mathbf{z}, \mathbf{z} \rangle = \langle \mathbf{P}_{\Sw}[\mathbf{z}]+(\mathbf{z}-\mathbf{P}_{\Sw}[\mathbf{z}]), \mathbf{P}_{\Sw}[\mathbf{z}]+(\mathbf{z}-\mathbf{P}_{\Sw}[\mathbf{z}]) \rangle \\
&= \|\mathbf{P}_{\Sw}[\mathbf{z}]\|_2^2 + \|\mathbf{z}- \mathbf{P}_{\Sw}[\mathbf{z}]\|_2^2 \\
&\geq \|\mathbf{P}_{\Sw}[\mathbf{z}]\|_2^2.
\label{eq:PropIII.2_3}
\end{aligned}
\end{equation}
%
Applying \eqref{eq:PropIII.2_3} to the \textit{line~6} of the AP-A algorithm $\big(\lambda = \| \mathbf{m}^{(i-1)} - \mathbf{a}^{(i-1)} \|_2^2 / \|\mathbf{b}^{(i)} \|_2^2\big)$, we get
%
\begin{equation}
\lambda \geq 1,
\label{eq:PropIII.2_4}
\end{equation}
%
with the equality holding if and only if $(\mathbf{m}^{(i-1)}-\mathbf{a}^{(i-1)})$ is mapped by $\mathbf{P}_{\Sw}$ to itself, i.e., $\mathbf{m}^{(i-1)}\in\Sw$. However, this would mean that the convergence was reached at iteration $i-1$. Finally, we note that \textit{line~7} of the AP-A algorithm $\big(\mathbf{a}^{(i)} = \mathbf{a}^{(i-1)} + \lambda \cdot \mathbf{b}^{(i)}\big)$ implies
%
\begin{equation}
(\mathbf{P}_{\Sw}[\mathbf{m}^{(i)}]-\mathbf{a}^{(i+1)}) = (\lambda-1) \cdot (\mathbf{a}^{(i)}-\mathbf{P}_{\Sw}[\mathbf{m}^{(i)}])
\label{eq:PropIII.2_5}
\end{equation}
%
and
%
\begin{equation}
\begin{aligned}
\langle \mathbf{m}^{(i)}-\mathbf{a}^{(i+1)}, \mathbf{m}^{(i)}-\mathbf{a}^{(i)} \rangle &= \langle \mathbf{m}^{(i)}-\mathbf{a}^{(i)}, \mathbf{m}^{(i)}-\mathbf{a}^{(i)} \rangle - \lambda \cdot \langle \mathbf{P}_{\Sw}[\mathbf{m}^{(i)}]-\mathbf{a}^{(i+1)}, \mathbf{m}^{(i)}-\mathbf{a}^{(i)}) \rangle \\
&= \|\mathbf{m}^{(i)}-\mathbf{a}^{(i)}\|_2^2 - \frac{\|\mathbf{m}^{(i)}-\mathbf{a}^{(i)}\|_2^2}{\|\mathbf{P}_{\Sw}[\mathbf{m}^{(i)}]-\mathbf{a}^{(i)}\|_2^2} \cdot \|\mathbf{P}_{\Sw}[\mathbf{m}^{(i)}]-\mathbf{a}^{(i)}\|_2^2 \\
&= 0.
\end{aligned} 
\label{eq:PropIII.2_6}
\end{equation}
%
In \eqref{eq:PropIII.2_6} we applied \eqref{eq:PropIII.2_1} to the term $\langle \mathbf{P}_{\Sw}[\mathbf{m}^{(i)}]-\mathbf{a}^{(i+1)}, \mathbf{m}^{(i)}-\mathbf{a}^{(i)}) \rangle$.

We are now ready to prove that \eqref{eq:PropIII.1_2} and \eqref{eq:PropIII.1_4} hold for the AP-A algorithm. In particular, we have
%
\begin{equation}
\begin{aligned}
\| \mathbf{x} - \mathbf{m}^{(i)} \|_2^2 &= \| \mathbf{x} - \mathbf{a}^{(i+1)} + \mathbf{a}^{(i+1)} - \mathbf{m}^{(i)} \|_2^2 \\
&= \| \mathbf{x} - \mathbf{a}^{(i+1)} \|_2^2 + \| \mathbf{a}^{(i+1)} - \mathbf{m}^{(i)} \|_2^2 + 2 \cdot \langle \mathbf{m}^{(i)} - \mathbf{a}^{(i+1)} , \mathbf{a}^{(i+1)} - \mathbf{x} \rangle.
\label{eq:PropIII.2_7}
\end{aligned}
\end{equation}
%
The first two terms in the second line of \eqref{eq:PropIII.2_7} are nonnegative by the definition of the norm. The last term is nonnegative too (note that $\mathbf{x} \in \Sgeqs \cap \Sw$):
%
\begin{equation}
\begin{aligned}
\langle \mathbf{m}^{(i)} - \mathbf{a}^{(i+1)} , \mathbf{a}^{(i+1)} &- \mathbf{x} \rangle = \langle \mathbf{m}^{(i)} -\mathbf{P}_{\Sw}[\mathbf{m}^{(i)}] + \mathbf{P}_{\Sw}[\mathbf{m}^{(i)}] - \mathbf{a}^{(i+1)} , \mathbf{a}^{(i+1)} - \mathbf{x} \rangle\\[6pt]
&= \underbrace{\langle \mathbf{m}^{(i)} -\mathbf{P}_{\Sw}[\mathbf{m}^{(i)}] , \mathbf{a}^{(i+1)} - \mathbf{x} \rangle}_{=0 \text { by } \eqref{eq:PropIII.2_2}} + \langle \mathbf{P}_{\Sw}[\mathbf{m}^{(i)}] - \mathbf{a}^{(i+1)} , \mathbf{a}^{(i+1)} - \mathbf{x} \rangle\\
&= \underbrace{(\lambda-1) \cdot \langle \mathbf{a}^{(i)} - \mathbf{P}_{\Sw}[\mathbf{m}^{(i)}], \mathbf{a}^{(i+1)} - \mathbf{x} \rangle}_{\text{by } \eqref{eq:PropIII.2_5}}\\
&= (\lambda-1) \cdot \langle \mathbf{a}^{(i)} - \mathbf{m}^{(i)} + \mathbf{m}^{(i)} - \mathbf{P}_{\Sw}[\mathbf{m}^{(i)}], \mathbf{a}^{(i+1)} - \mathbf{x} \rangle\\
&= (\lambda-1) \cdot \underbrace{\langle \mathbf{m}^{(i)} -\mathbf{P}_{\Sw}[\mathbf{m}^{(i)}] , \mathbf{a}^{(i+1)} - \mathbf{x} \rangle}_{=0 \text { by } \eqref{eq:PropIII.2_2}} + (\lambda-1) \cdot \langle \mathbf{a}^{(i)} - \mathbf{m}^{(i)} , \mathbf{a}^{(i+1)} - \mathbf{x} \rangle\\
&= (\lambda-1) \cdot \langle \mathbf{a}^{(i)} - \mathbf{m}^{(i)} , \mathbf{a}^{(i+1)} - \mathbf{m}^{(i)} + \mathbf{m}^{(i)} - \mathbf{x} \rangle\\
&= (\lambda-1) \cdot \underbrace{\langle \mathbf{a}^{(i)} - \mathbf{m}^{(i)} , \mathbf{a}^{(i+1)} - \mathbf{m}^{(i)} \rangle}_{=0 \text{ by }\eqref{eq:PropIII.2_6}} + \underbrace{(\lambda-1)}_{\geq0 \text{ by } \eqref{eq:PropIII.2_4}} \cdot \underbrace{\langle \mathbf{a}^{(i)} - \mathbf{m}^{(i)} , \mathbf{m}^{(i)} - \mathbf{x} \rangle}_{\geq0 \text{ by } \eqref{eq:Prop3_3}}\\
&\geq 0.
\label{eq:PropIII.2_8}
\end{aligned}
\end{equation}
%
Therefore,
%
\begin{equation}
\begin{aligned}
\| \mathbf{x} - \mathbf{m}^{(i)} \|_2^2 \geq \| \mathbf{x} - \mathbf{a}^{(i+1)} \|_2^2 + \| \mathbf{a}^{(i+1)} - \mathbf{m}^{(i)} \|_2^2.
\label{eq:PropIII.2_9}
\end{aligned}
\end{equation}
%
The derivation of the inequality \eqref{eq:PropIII.1_4} for the AP-A algorithm is equivalent to that for the AP-B:
%
\begin{equation}
\begin{aligned}
\| \mathbf{x} - \mathbf{a}^{(i+1)} \|_2^2 &= \| \mathbf{x} - \mathbf{m}^{(i+1)} + \mathbf{m}^{(i+1)} - \mathbf{a}^{(i+1)} \|_2^2 \\
&= \| \mathbf{x} - \mathbf{m}^{(i+1)} \|_2^2 + \| \mathbf{m}^{(i+1)} - \mathbf{a}^{(i+1)} \|_2^2 + 2 \cdot \underbrace{\langle \mathbf{a}^{(i+1)} - \mathbf{m}^{(i+1)} , \mathbf{m}^{(i+1)} - \mathbf{x} \rangle}_{\geq0 \text{ by } \eqref{eq:Prop3_3}}.
\label{eq:PropIII.2_10}
\end{aligned}
\end{equation}
%
Thus,
%
\begin{equation}
\begin{aligned}
\| \mathbf{x} - \mathbf{a}^{(i+1)} \|_2^2 \geq \| \mathbf{x} - \mathbf{m}^{(i+1)} \|_2^2 + \| \mathbf{m}^{(i+1)} - \mathbf{a}^{(i+1)} \|_2^2.
\label{eq:PropIII.2_11}
\end{aligned}
\end{equation}
%

The rest of the proof follows step by step the proof of \emph{\hyperref[prop:APBSolConv_]{Proposition~III.1}}. Indeed, starting with the inequalities \eqref{eq:PropIII.1_2} and \eqref{eq:PropIII.1_4}, to which \eqref{eq:PropIII.2_9} and \eqref{eq:PropIII.2_11} are equivalent, the proof of \emph{\hyperref[prop:APBSolConv_]{Proposition~III.1}} proceeds based on them and the closedness and convexity of the sets $\Sgeqs$ and $\Sw$ entirely. The monotonicity of the convergence follows from an equivalent of \eqref{eq:PropIII.1_14}, which is derived as a part of the proof of the convergence itself.

\end{proof}

\begin{remark}
1) Concerning the set $\Sgeqs$, the proof above relies entirely on its closedness and convexity. Thus, the AP-A algorithm is still valid under these more general assumptions. 2) The expressions \eqref{eq:PropIII.2_1}\,--\,\eqref{eq:PropIII.2_6} apply to projection operators onto any linear space. Hence, the AP-A algorithm still works if $\Sw$ is replaced by an arbitrary linear space.

\end{remark}


\begin{proposition_III_4}
Consider $\mathbf{m} \in \Mwm$ and $\mathbf{c} \in \Cd$ with $|c_j| = \sum_{k=1}^{n/\nu}(\tilde{c}_{\nu \cdot k} \cdot e^{\imath 2 \pi \nu (k-1) (j-1) / n})$, where $\tilde{c}_{\nu \cdot k} \in \mathbb{C}$ and $n / \nu \in \mathbb{N}$. If $\varpi \geq \omega$ and $\nu \geq \varpi + \omega - 1$, then  a sequence $\mathbf{m}^{(0)}, \mathbf{m}^{(1)}, \ldots, \mathbf{m}^{(i)},\ldots$ formed by the AP-A algorithm for $\epsilon_{tol}=0$ and $N_{iter} \to +\infty$ converges to $\mathbf{m}$.
\label{prop:APASolConv_b}
\end{proposition_III_4}

\begin{proof}
The proof of this proposition goes along the lines of that of \emph{\hyperref[prop:APBSolConv_b]{Proposition~III.2}}. First, by using \eqref{eq:PropIII.1b_1} and \eqref{eq:PropIII.1b_2}, we derive the result analogous to \eqref{eq:PropIII.1b_12} and \eqref{eq:PropIII.1b_13}. Then, we show the monotonic convergence of $q^{(0)}, q^{(1)}, \ldots, q^{(i)}, \ldots$ to 1, which assures the convergence of $\mathbf{a}^{(0)}, \mathbf{a}^{(1)}, \ldots, \mathbf{a}^{(i)}, \ldots$ and $\mathbf{m}^{(0)}, \mathbf{m}^{(1)}, \ldots, \mathbf{m}^{(i)}, \ldots$ to $\mathbf{m}$.

To derive the result analogous to \eqref{eq:PropIII.1b_12} and \eqref{eq:PropIII.1b_13}, assume that, for some $i \geq 0$ and $q^{(i)} \in \mathbb{R}$,
%
\begin{equation}
\begin{aligned}
\mathbf{a}^{(i)} &= q^{(i)} \cdot \mathbf{m},\\
%
\mathbf{m}^{(i)} &= \mathbf{m} \circ \big( q^{(i)} + (|\mathbf{c}|-q^{(i)}) \circ \theta (|\mathbf{c}|-q^{(i)}) \big).
\end{aligned}
\label{eq:PropIII.2b_1}
\end{equation}
%
First, note that \eqref{eq:PropIII.2b_1} applies when $i=0$ with $q^{(0)}=0$. Next, by the definition of the AP-A algorithm, we have
%
\begin{equation}
\begin{aligned}
\mathbf{b}^{(i+1)} &= \mathbf{P}_{\Sw}[\mathbf{m}^{(i)}-\mathbf{a}^{(i)}]\\
%
&= \mathbf{P}_{\Sw}[\mathbf{m}^{(i)}]-\mathbf{a}^{(i)}\\
%
&= \mathbf{P}_{\Sw}[\mathbf{m} \circ \big( q^{(i)} + (|\mathbf{c}|-q^{(i)}) \circ \theta (|\mathbf{c}|-q^{(i)}) \big)] - q^{(i)} \cdot \mathbf{m}\\
&= \underbrace{\langle (|\mathbf{c}|-q^{(i)}) \circ \theta (|\mathbf{c}|-q^{(i)}) \rangle \cdot \mathbf{m}}_{\textrm{by \eqref{eq:PropIII.1b_1} and \eqref{eq:PropIII.1b_2}}},
\end{aligned}
\label{eq:PropIII.2b_2}
\end{equation}
%
%
\begin{equation}
\hspace*{-69pt}\begin{aligned}
\lambda &= \frac{\|\mathbf{m}^{(i)}-\mathbf{a}^{(i)}\|_2}{\|\mathbf{b}^{(i+1)}\|_2} \\
%
&= \frac{\|(|\mathbf{c}|-q^{(i)}) \circ \theta (|\mathbf{c}|-q^{(i)}) \circ \mathbf{m}\|_2}{\langle (|\mathbf{c}|-q^{(i)}) \circ \theta (|\mathbf{c}|-q^{(i)}) \rangle \cdot \|\mathbf{m}\|_2},
\end{aligned}
\label{eq:PropIII.2b_3}
\end{equation}
%
%
\begin{equation}
\hspace*{-33pt}\begin{aligned}
\mathbf{a}^{(i+1)} &= \mathbf{a}^{(i)} + \lambda \cdot \mathbf{b}^{(i+1)} \\
%
&= \underbrace{\bigg(q^{(i)} + \frac{\| (|\mathbf{c}|-q^{(i)}) \circ \theta (|\mathbf{c}|-q^{(i)}) \circ \mathbf{m}\|_2}{\| \mathbf{m} \|_2} \bigg)}_{q^{(i+1)}} \cdot \mathbf{m},
\end{aligned}
\label{eq:PropIII.2b_4}
\end{equation}
%
and
%
\begin{equation}
\hspace*{-42pt}\begin{aligned}
\mathbf{m}^{(i+1)} &= \mathbf{P}_{\Sgeqs}[\mathbf{a}^{(i+1)}],\\
%
&= \mathbf{m} \circ \big( q^{(i+1)} + (|\mathbf{c}|-q^{(i+1)}) \circ \theta (|\mathbf{c}|-q^{(i+1)}) \big).
\end{aligned}
\label{eq:PropIII.2b_5}
\end{equation}
%
Applying the principle of mathematical induction to the above results, we conclude that, for any $i \geq 0$,
%
\begin{align}
\mathbf{a}^{(i)} &= q^{(i)} \cdot \mathbf{m},\label{eq:PropIII.2b_6}\\
%
q^{(i)} &= \bigg(q^{(i-1)} + \frac{\| (|\mathbf{c}|-q^{(i-1)}) \circ \theta (|\mathbf{c}|-q^{(i-1)}) \circ \mathbf{m}\|_2}{\| \mathbf{m} \|_2} \bigg),
\label{eq:PropIII.2b_7}
\end{align}
%
with $q^{(0)}=0$.

By \eqref{eq:PropIII.2b_7},
%
\begin{equation}
\begin{aligned}
q^{(i)} - q^{(i-1)} &= \frac{\| (|\mathbf{c}|-q^{(i-1)}) \circ \theta (|\mathbf{c}|-q^{(i-1)}) \circ \mathbf{m}\|_2}{\| \mathbf{m} \|_2} \\
%
&\geq 0,
\end{aligned}
\label{eq:PropIII.2b_8}
\end{equation}
%
i.e., the sequence $q^{(0)}, q^{(1)}, \ldots$ is monotonically increasing. Moreover, it follows from \eqref{eq:PropIII.2b_7} that, for every $i\geq 1$,
%
\begin{equation}
\begin{aligned}
q^{(i-1)} \leq 1 \implies q^{(i)} &= \bigg(q^{(i-1)} + \frac{\| (|\mathbf{c}|-q^{(i-1)}) \circ \theta (|\mathbf{c}|-q^{(i-1)}) \circ \mathbf{m}\|_2}{\| \mathbf{m} \|_2} \bigg) \\
%
&\leq \bigg(q^{(i-1)} + \frac{\| (1-q^{(i-1)}) \cdot \mathbf{m}\|_2}{\| \mathbf{m} \|_2} \bigg) = 1,
\end{aligned}
\label{eq:PropIII.2b_9}
\end{equation}
%
Taken together with $q^{(0)}=0$, \eqref{eq:PropIII.2b_9} implies that the sequence $q^{(0)}, q^{(1)}, \ldots, q^{(i)}, \ldots$ is bounded from above by 1. Hence, by the monotone convergence theorem (see \PropRef{prop:MonConvTh}), $q^{(0)}, q^{(1)}, \ldots, q^{(i)}, \ldots$ converges to its least upper bound $\bar{q} \leq 1$. If we assume that $\bar{q} < 1$, then the convergence of $q^{(0)}, q^{(1)}, \ldots, q^{(i)}, \ldots$ to $\bar{q}$ implies that there exists an $N$ such that
%
\begin{equation}
\bar{q} - q^{(N)} \leq  \frac{1}{2} \cdot \frac{\|(|\mathbf{c}|-\bar{q}) \circ \theta (|\mathbf{c}|-\bar{q}) \circ \mathbf{m}\|_2}{\|\mathbf{m}\|_2}.
\label{eq:PropIII.2b_10}
\end{equation}
%
By \eqref{eq:PropIII.2b_7},
%
\begin{equation}
\begin{aligned}
q^{(N+1)} - q^{(N)} &= \frac{\|(|\mathbf{c}|-q^{(N)}) \circ \theta (|\mathbf{c}|-q^{(N)}) \circ \mathbf{m}\|_2}{\|\mathbf{m}\|_2} \\
%
&\geq \frac{\|(|\mathbf{c}|-\bar{q}) \circ \theta (|\mathbf{c}|-\bar{q}) \circ \mathbf{m}\|_2}{\|\mathbf{m}\|_2} \\
%
&> \frac{1}{2} \cdot \frac{\|(|\mathbf{c}|-\bar{q}) \circ \theta (|\mathbf{c}|-\bar{q}) \circ \mathbf{m}\|_2}{\|\mathbf{m}\|_2},
\label{eq:PropIII.2b_11}
\end{aligned}
\end{equation}
%
which means that $q^{(N+1)} > \bar{q}$, i.e., the initial assumption that $\bar{q}<1$ is incorrect. Therefore, $\bar{q}=1$, and $q^{(0)}, q^{(1)}, \ldots, q^{(i)}, \ldots$ converges to 1. This, as shown in the last paragraph of the proof of \emph{\hyperref[prop:APBSolConv_b]{Proposition~III.2}}, implies that $\mathbf{a}^{(0)}, \mathbf{a}^{(1)}, \ldots, \mathbf{a}^{(i)}, \ldots$ and $\mathbf{m}^{(0)}, \mathbf{m}^{(1)}, \ldots, \mathbf{m}^{(i)}, \ldots$ converge to $\mathbf{m}$.
\end{proof}

\subsection*{\textbf{AP-P algorithm}}

\vspace{0.15cm}
\begin{proposition_III_5}
A sequence $\mathbf{m}^{(0)},\mathbf{m}^{(1)}, \ldots, \mathbf{m}^{(i)},\ldots$ formed by the AP-P algorithm for $\epsilon_{tol}=0$ and $N_{iter} \to +\infty$ converges to a unique $\mathbf{m}^\dagger \in \Sgeqs \cap \Sw$ such that $\|\mathbf{m}^\dagger\|_2 \leq \|\mathbf{x}\|_2$ for every $\mathbf{x} \in \Sgeqs \cap \Sw$. The convergence is monotonic, i.e., $\| \mathbf{m}^{(i+1)} - \mathbf{m}^\dagger \|_2 \leq \| \mathbf{m}^{(i)} - \mathbf{m}^\dagger \|_2$ for $i \geq 0$.
\end{proposition_III_5}

\begin{proof}
The proof follows as a corollary of a more general theorem proved for a finite number of closed convex sets in a Hilbert space by Boyle and Dykstra \cite[\textit{Theorem 2}]{Boyle1986}. Specifically, for $\mathbf{m}^{(0)} = \mathbf{0}$ and particular constraint sets $\Sgeqs$ and $\Sw$, the sequence $\mathbf{m}^{(0)},\mathbf{m}^{(1)}, \ldots, \mathbf{m}^{(i)}, \ldots$ formed by the algorithm formulated there (the so-called Dykstra's algorithm) converges to a unique $\mathbf{m}^\dagger \in \Sgeqs \cap \Sw$ such that $\|\mathbf{m}^\dagger\|_2 \leq \|\mathbf{x}\|_2$ for any $\mathbf{x} \in \Sgeqs \cap \Sw$. For our purposes, it is thus enough to show that the sequence $\mathbf{m}^{(0)},\mathbf{m}^{(1)}, \ldots, \mathbf{m}^{(i)}, \ldots$ generated by the AP-P algorithm converges to the same $\mathbf{m}^\dagger$.

First, we observe that Dykstra's algorithm formally turns into the AP-P (except the difference in the initial conditions) if we set $\mathrm{r}=2$, denote $\mathrm{g_{i,1}} \equiv \mathbf{a}^{(i-1)}$, $\mathrm{g_{i,2}} \equiv \mathbf{m}^{(i-1)}$, $\mathrm{I_{i,2}} \equiv \mathbf{c}^{(i-1)}$, and assign $\mathrm{I_{i,1}} \equiv 0$ there (see \textit{Theorem 2} in \cite{Boyle1986}). Note that, originally, $\mathrm{I_{i,1}} = \mathrm{g_{i,1}}-(\mathrm{g_{i,2}}-\mathrm{I_{i-1,1}})$ and $\mathrm{g_{i,1}} = \mathbf{P}_{\Sw}[\mathrm{g_{i-1,2}}-\mathrm{I_{i-1,1}}]$. However, due to the linearity of $\mathbf{P}_{\Sw}$ [see \eqref{eq:MathPrel3}], we have
%
\begin{equation}
\begin{aligned}
\mathrm{g_{i,1}} &= \mathbf{P}_{\Sw}[\mathrm{g_{i-1,2}}] + \mathbf{P}_{\Sw}[\mathrm{I_{i-1,1}}] \\
&= \mathbf{P}_{\Sw}[\mathrm{g_{i-1,2}}] + \underbrace{\mathbf{P}_{\Sw}[\mathrm{g_{i-1,1}}]}_{=\mathrm{g_{i-1,1}}}-\underbrace{\mathbf{P}_{\Sw}[\mathrm{g_{i-1,2}}]}_{=\mathrm{g_{i-1,1}}}+\mathbf{P}_{\Sw}[\mathrm{I_{i-2,1}}] \\
&= \mathbf{P}_{\Sw}[\mathrm{g_{i-1,2}}] +\mathbf{P}_{\Sw}[\mathrm{I_{i-2,1}}].
\end{aligned}
\label{eq:PropIII.3_1}
\end{equation}
%
Applying \eqref{eq:PropIII.3_1} iteratively, we obtain $\mathrm{g_{i,1}}=\mathbf{P}_{\Sw}[\mathrm{g_{i-1,2}}] +\mathbf{P}_{\Sw}[\mathrm{I_{0,1}}] = \mathbf{P}_{\Sw}[\mathrm{g_{i-1,2}}]$ because $\mathrm{I_{0,1}}=0$ by construction. Therefore, given the constraint sets $\Sgeqs$ and $\Sw$, $\mathrm{I_{i,1}}$ can indeed be canceled from the Dykstra's algorithm by setting it to 0 without any consequences to its convergence properties.

Further, assuming $\mathbf{m}^{(0)}=\mathbf{c}^{(0)}=\mathbf{0}$, we can verify that Dykstra's algorithm after the first iteration coincides with the initialized AP-P algorithm, i.e., the latter is shifted forward by one iteration with respect to the former. Hence, the sequence $\mathbf{m}^{(0)},\mathbf{m}^{(1)}, \ldots, \mathbf{m}^{(i)}, \ldots$ generated by the AP-P converges to the same $\mathbf{m}^\dagger \in \Sgeqs \cap \Sw$ as the analogous sequence formed by the Dykstra's algorithm initialized with $\mathbf{m}^{(0)} = \mathbf{0}$.

The monotonicity of the convergence of Dykstra's, and thus the AP-P, algorithms follows from the fact that all terms on the right-hand side of \cite[(3.2)]{Boyle1986} are nonnegative and that each subsequent iteration only adds additional terms without discarding the old ones.
\end{proof}

\section{\textbf{Lower Bound on the Convergence Error} \label{sec:SMTheory3}}

The proof of the statement about the lower bound on the convergence error made in Sections~III-A and III-B of the main text are presented here.

\begin{proposition}
In the case of the AP-B and AP-A algorithms, the convergence error $\|\mathbf{m}^{(i-1)}-\mathbf{m}^\dagger\|_2 / \sqrt{n}$ is bounded from below by the infeasibility error $\epsilon^{(i)}$ for any $i \geq 1$.
\end{proposition}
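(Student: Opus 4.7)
The plan is to leverage the Pythagorean-type inequalities already established in the convergence proofs of \emph{\hyperref[prop:APBSolConv_]{Proposition~III.1}} and \emph{Proposition~III.3}, namely \eqref{eq:PropIII.1_2} for the AP-B algorithm and its counterpart \eqref{eq:PropIII.2_9} for the AP-A algorithm. Both of these hold for every $\mathbf{x} \in \Sgeqs \cap \Sw$, and, crucially, the limit point $\mathbf{m}^\dagger$ belongs to this intersection (as part of the statements of \hyperref[prop:APBSolConv_]{\textit{Propositions~III.1}} and \textit{III.3}). Thus, setting $\mathbf{x} = \mathbf{m}^\dagger$ is legitimate and is the only substantive choice the proof has to make.

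First I would write, for any $i \geq 1$ and for either algorithm,
\begin{equation*}
\|\mathbf{m}^\dagger - \mathbf{m}^{(i-1)}\|_2^2 \;\geq\; \|\mathbf{m}^\dagger - \mathbf{a}^{(i)}\|_2^2 \;+\; \|\mathbf{a}^{(i)} - \mathbf{m}^{(i-1)}\|_2^2,
\end{equation*}
which is just \eqref{eq:PropIII.1_2} (respectively \eqref{eq:PropIII.2_9}) with the index shift $i \mapsto i-1$ and the substitution $\mathbf{x} = \mathbf{m}^\dagger$. Next I would drop the nonnegative term $\|\mathbf{m}^\dagger - \mathbf{a}^{(i)}\|_2^2$, obtaining the simple lower bound
\begin{equation*}
\|\mathbf{m}^\dagger - \mathbf{m}^{(i-1)}\|_2 \;\geq\; \|\mathbf{a}^{(i)} - \mathbf{m}^{(i-1)}\|_2.
\end{equation*}
Dividing both sides by $\sqrt{n}$ and recognizing the right-hand side as the infeasibility error $\epsilon^{(i)} = \|\mathbf{a}^{(i)} - \mathbf{m}^{(i-1)}\|_2 / \sqrt{n}$ (the normalized gap between consecutive iterates, which is the quantity monitored as the termination criterion of both AP-B and AP-A in Section~III of the main text) yields the claim.

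There is essentially no hard step here, because the two algorithm-specific Pythagorean inequalities have already done all of the geometric work in the earlier proofs. The only thing one must be careful about is the bookkeeping of iteration indices: the inequality is used in the form that expresses $\|\mathbf{x} - \mathbf{m}^{(i-1)}\|_2^2$ as dominating $\|\mathbf{a}^{(i)} - \mathbf{m}^{(i-1)}\|_2^2$, which pairs the convergence error at iterate $i-1$ with the infeasibility measured at the subsequent projection $\mathbf{a}^{(i)}$, exactly matching the indexing in the proposition. The same argument covers both AP-B and AP-A verbatim, since \eqref{eq:PropIII.2_9} was explicitly derived so as to coincide with \eqref{eq:PropIII.1_2}; no separate case analysis is required.
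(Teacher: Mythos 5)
Your overall route is the same as the paper's: both arguments start from the monotone-distance inequality \eqref{eq:PropIII.1_2} (and its AP-A counterpart \eqref{eq:PropIII.2_9}) evaluated at $\mathbf{x}=\mathbf{m}^\dagger$, which is legitimate because $\mathbf{m}^\dagger \in \Sgeqs \cap \Sw$. The one genuine problem is in which term you keep and what you then call $\epsilon^{(i)}$. You drop $\|\mathbf{m}^\dagger-\mathbf{a}^{(i)}\|_2^2$, retain $\|\mathbf{a}^{(i)}-\mathbf{m}^{(i-1)}\|_2^2$, and declare $\epsilon^{(i)}=\|\mathbf{a}^{(i)}-\mathbf{m}^{(i-1)}\|_2/\sqrt{n}$. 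That is not the paper's infeasibility error: the paper uses $\epsilon^{(i)}=\|\mathbf{a}^{(i)}-\mathbf{P}_{\Sgeqss}[\mathbf{a}^{(i)}]\|_2/\sqrt{n}=\|\mathbf{a}^{(i)}-\mathbf{m}^{(i)}\|_2/\sqrt{n}$, i.e., the distance from $\mathbf{a}^{(i)}\in\Sw$ to the constraint set $\Sgeqs$, not the gap between the consecutive iterates $\mathbf{m}^{(i-1)}$ and $\mathbf{a}^{(i)}$. These are different quantities in general (and for AP-A, $\mathbf{a}^{(i)}$ is an over-relaxed step rather than $\mathbf{P}_{\Sw}[\mathbf{m}^{(i-1)}]$, so your quantity is not even a projection residual there). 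As written, your chain of inequalities lower-bounds the convergence error by something other than $\epsilon^{(i)}$.

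The gap is small and closable: since $\mathbf{m}^{(i-1)}\in\Sgeqs$ and $\mathbf{m}^{(i)}=\mathbf{P}_{\Sgeqss}[\mathbf{a}^{(i)}]$ minimizes the distance to $\mathbf{a}^{(i)}$ over $\Sgeqs$ (see \DefRef{def:ProjOp}), one has $\|\mathbf{a}^{(i)}-\mathbf{m}^{(i-1)}\|_2\geq\|\mathbf{a}^{(i)}-\mathbf{m}^{(i)}\|_2=\sqrt{n}\cdot\epsilon^{(i)}$, so appending this single line completes your argument. The paper instead keeps the complementary term, i.e., it uses $\|\mathbf{m}^{(i-1)}-\mathbf{m}^\dagger\|_2\geq\|\mathbf{a}^{(i)}-\mathbf{m}^\dagger\|_2$ [this is just \eqref{eq:PropIII.1_3} shifted in index], and then applies the same projection-minimality property with $\mathbf{m}^\dagger\in\Sgeqs$ playing the role of the competing point in $\Sgeqs$. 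Either way the substance is identical; you only need to connect your final quantity to the actual definition of $\epsilon^{(i)}$.
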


\begin{proof}
According to \eqref{eq:PropIII.1_3}, which applies to both the AP-B and AP-A algorithms,
%
\begin{equation}
\|\mathbf{m}^{(i-1)}-\mathbf{m}^\dagger\|_2 / \sqrt{n} \geq \|\mathbf{a}^{(i)}-\mathbf{m}^\dagger\|_2 / \sqrt{n} \qquad \forall i \geq 1.
\end{equation}
%
Moreover, by the \DefRef{def:ProjOp} of the projection operator and the fact that $\mathbf{m}^\dagger \in \Sgeqs$, we have
%
\begin{equation}
\|\mathbf{a}^{(i)}-\mathbf{m}^\dagger\|_2 / \sqrt{n} \geq \|\mathbf{a}^{(i)}-\mathbf{P}_{\Sgeqss}[\mathbf{a}^{(i)}]\|_2 / \sqrt{n} = \|\mathbf{a}^{(i)}-\mathbf{m}^{(i)}\|_2 / \sqrt{n} = \epsilon^{(i)} \qquad \forall i \geq 1.
\end{equation}
%
Thus, $\|\mathbf{m}^{(i-1)}-\mathbf{m}^\dagger\|_2 / \sqrt{n} \geq \epsilon^{(i)}$ for all $i \geq 1$.
\end{proof}

\begin{remark}
Using the \DefRef{def:ProjOp} of the projection operator and the fact that $\mathbf{m}^\dagger \in \Sw$, we similarly conclude that
%
\begin{equation}
\|\mathbf{m}^{(i)}-\mathbf{m}^\dagger\|_2 / \sqrt{n} \geq \|\mathbf{m}^{(i)}-\mathbf{P}_{\Sw}[\mathbf{m}^{(i)}]\|_2 / \sqrt{n} \qquad \forall i \geq 0.
\label{eq:PropLB.1}
\end{equation}
%
\eqref{eq:PropLB.1} also applies in the context of the AP-P algorithm.
\end{remark}

\newpage
\phantomsection

\markboth{SIGNAL CLASSES}%
{SIGNAL CLASSES}

\hspace{-11pt}\textbf{SIGNAL CLASSES}
\addcontentsline{toc}{section}{SIGNAL CLASSES}

\section{\textbf{Types of Wideband Carriers Found in Practice} \label{sec:SMCarriers}}

Wideband carriers found in practice fall into three main classes: \textit{1)} \textbf{(quasi-)harmonic}; \textit{2)} \textbf{\mbox{(quasi-)}random}; and \textit{3)} \textbf{spike-train}. Below, we provide examples of real-world signals featuring these carrier types and applications of their demodulation.\vspace{0.0em}

The need for demodulating signals formed of harmonic carriers is well recognized in acoustic imaging \cite{Humphrey2000, Duck2002}. There, sinusoidal wavepackets are used as probing signals. However, in many situations, the interaction between sound and matter is nonlinear. This makes the returning waves, whose time-dependent amplitude carries information about the imaged object, harmonic. The possibility of efficient and accurate demodulation of signals of this type would also allow generalizing the probing wave packets themselves from sinusoidal to harmonic. 

A representative example of quasi-random carriers manifests in surface electromyography. In particular, an electrical signal detected at the skin surface during the skeletal muscle activity is an amplitude-modulated colored noise resulting from low-pass filtering of electric pulse trains generated by a large set of conditionally independent muscle fibers \cite[Ch.\,5]{Sornmo2005}. The amplitude component of the recorded electrical signal carries information about the force pattern generated by the muscle being studied \cite{Gottlieb1970, Roberts2008} (see Fig.\,\ref{fig:X1} for an example).

%
\begin{figure*}[h]
\centering
\includegraphics[width=1\textwidth]{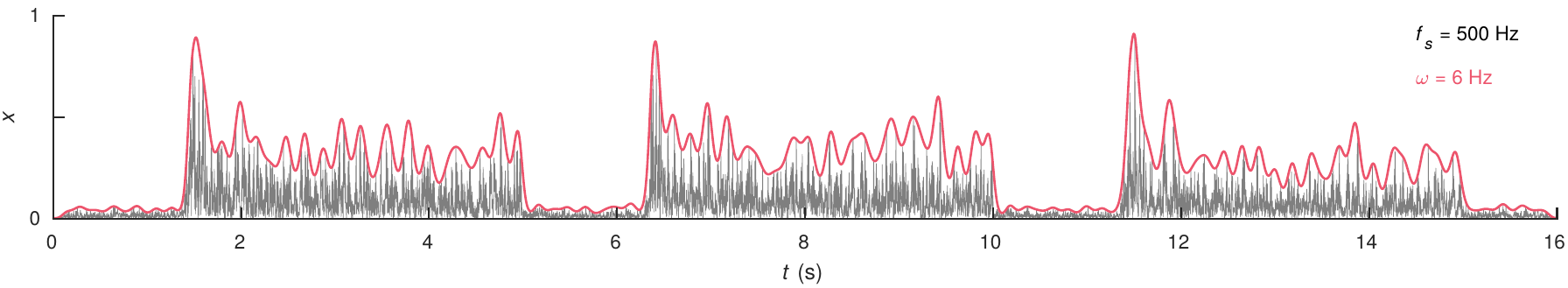}
\caption{\footnotesize Demodulation of an electromyogram. The figure shows a 16\,s long absolute-value electromyogram signal corresponding to the response of a flexor carpi ulnaris muscle (forearm) to three consecutive grasps of a spherical object (grey) and its modulator inferred by using the AP-A algorithm (red). The original recording was taken from \cite{UCIRepo,Sapsanis2013}.}
\label{fig:X1}
\end{figure*}
%

Probably the most elaborate applications of amplitude demodulation in the context of wideband signals are found in human speech processing. Speech signals are built of temporarily structured segments of quasi-harmonic and quasi-random carriers \cite{Shoup1976} that are amplitude-modulated at different timescales \cite{Turner2010, Keitel2018}. Amplitude demodulation of broad-band, as well as sub-band speech demodulation, is widely exploited in automatic speech recognition \cite{Kingsbury1998, Wu2011, Lee2016}, hearing restoration \cite{Wilson1991, Zeng2005} tasks, and fundamental studies of the neural mechanisms of auditory information processing in the brain \cite{Shannon1995, Smith2002, Joris2004, Goswami2019}. In all these cases, the modulators and carriers convey the information about specific aspects of speech, e.g., semantic meaning, associated emotion, or speaker identity, that need to be extracted. Demodulation of signals with mixed quasi-harmonic and quasi-random carrier types is also known in diagnostic phonocardiography \cite{Sarkady1976, Gill2005}. There, the extracted modulators of heart sounds are used for the detection of events of normal or abnormal functioning of this organ.

One of the most popular applications using demodulation of signals built of the spike-train type carriers is the Pulse-Code Modulation (PCM) protocol \cite{Oliver1948}. There, pulses are regularly spaced with specified locations and have a known constant amplitude and vanishing width.\footnote{These properties enable the pulse-coded signals to be demodulated by a simple low-pass filtering procedure \cite{Oliver1948}.} More complex quasi-regular or stochastic pulse sequences of finite width manifest in electric activities of the cardiac muscles and neurons \cite[Ch.\,6]{Sornmo2005}, \cite[Ch.\,1]{Rieke1999}. Physiological and diagnostic information contained in these responses is typically associated with the microstructure and timing of the pulses. However, these recordings often come contaminated by artifacts or other physiological signals that slowly modulate the baseline and amplitude of the pulses \cite{Felblinger1997}, \cite[Ch.\,7.1]{Sornmo2005}. Hence, to separate the fast cardiac or neural activity (the carrier) from other slowly changing physiological processes or artifacts (the modulator), the raw signal must be demodulated \cite{Felblinger1997} (see Fig.\,\ref{fig:X2} for an example). Finally, as we discuss in Sections~IX.A and IX.B of the main text, carriers of the spike-train type also effectively manifest in applications when modulator recovery from nonuniformly or sparsely sampled signals of any origin is needed.

%
\begin{figure*}[h]
\centering
\includegraphics[width=1\textwidth]{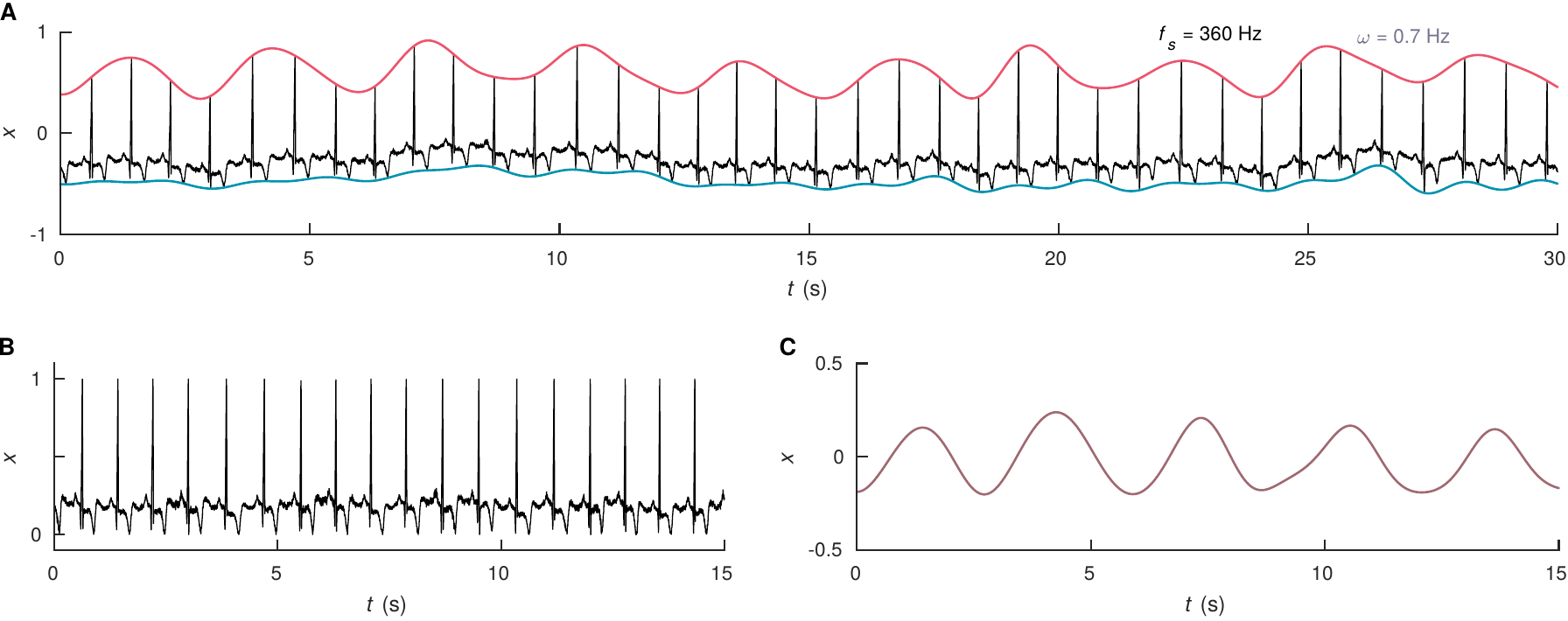}
\caption{\footnotesize Demodulation of an electrocardiogram. \textbf{A}: A 30\,s long fragment of an electrocardiogram recording (black) and its upper (red) and lower (blue) envelopes inferred by using the AP-A algorithm.\protect\footnotemark~\textbf{B}: The first 15\,s of the recovered carrier, i.e., the desired electrocardiogram with canceled artifacts. \textbf{C}: The first 15\,s of the respiratory curve estimated from the upper and lower envelopes of the original signal in panel A. The original recording was taken from \cite{Gargiulo2016,Goldberger2000}.}
\label{fig:X2}
\end{figure*}
%

\footnotetext{Here, we define the upper envelope of a signal $\mathbf{s}$ as $\mathbf{m}+\min[\mathbf{s}] \cdot \mathbf{1}$, where $\mathbf{m}$ is the modulator of $\mathbf{s}-\min[\mathbf{s}] \cdot \mathbf{1}$. Accordingly, the lower envelope of $\mathbf{s}$ is defined as the negative upper envelope of $-\mathbf{s}$.}

\section{\textbf{Synthetic Test Signals} \label{sec:SMSynthetic}}

The mathematical models and numerical sampling algorithms of synthetic modulators and carriers examined in this work are described next.

\subsection*{Modulators (recovery tests)}

For signal recovery tests discussed in Section~\ref{sec:SMRecovery3}, modulators were generated by uniformly sampling from $\Mwm$. Without loss of generality, we assumed the subset of $\Mwm$ whose elements' norms are fixed to $\sqrt{n}$. The sampling was achieved by using a specially adapted version of the Metropolis-Hastings algorithm (see \cite[p.\,411]{Wasserman2004} for an introduction to this method). In the concrete, starting with some $\mathbf{m}^{(0)} \in \Mwm$, a sequence $\mathbf{m}^{(0)}, \mathbf{m}^{(1)}, \ldots$ defined by
%
\begin{align}
\mathbf{m}^{(i)} =\begin{cases} \sqrt{n} \cdot \mathbf{r}^{(i)} / \|\mathbf{r}^{(i)}\|_2, & \mbox{if $r_j^{(i)} \geq 0~~\forall j \in \mathcal{I}_n$}\\ \mathbf{m}^{(i-1)}, & \mbox{otherwise} \end{cases}
\label{eq:TestSign1}
\end{align}
%
for $i \geq 1$ was generated. Here,
%
\begin{align}
\mathbf{r}^{(i)} = \mathbf{m}^{(i-1)} + \mathbf{F}^{-1} \mathbf{g}^{(i)},
\label{eq:TestSign2}
\end{align}
%
and $\mathbf{g}^{(i)} \in \mathbb{C}^n$ such that 
%
\begin{equation}
\begin{aligned}
&\myre\big(g_1^{(i)}\big) \sim \mathcal{N}(0,\sigma), \quad \myim\big(g_1^{(i)}\big)=0,\\
&\myre\big(g_j^{(i)}\big) \sim \mathcal{N}(0,\sigma), \quad \myim\big(g_j^{(i)}\big) \sim \mathcal{N}(0,\sigma), \quad 2 \leq j \leq \omega,\\
&g_j^{(i)} = \big(g_{n+2-j}^{(i)}\big)^*, \quad n+2-\omega \leq j \leq n,\\
&g_j^{(i)}=0, \quad \omega+1 \leq j \leq n+1-\omega,
\end{aligned}
\label{eq:TestSign3}
\end{equation}
%
and $g_j \indep g_{k}$ for $j \neq k$. We adjusted the parameter $\sigma$ in \eqref{eq:TestSign3} to achieve the acceptance rate of the sampling scheme \eqref{eq:TestSign1} between 0.4 and 0.6. The initial $\mathbf{m}^{(0)}$ was taken as
%
\begin{align}
\mathbf{m}^{(0)} &= \sqrt{n} \cdot (\mathbf{g}^{(0)}-\min[\mathbf{g}^{(0)}] \cdot \mathbf{1}) / \|\mathbf{g}^{(0)}-\min[\mathbf{g}^{(0)}] \cdot \mathbf{1}\|_2. \label{eq:TestSign4}
\end{align}
%
The iterative process generating the sequence $\mathbf{m}^{(0)}, \mathbf{m}^{(1)}, \ldots$ was terminated upon the first instance of adherence to the following equilibration criterion of the underlying Markov chain:
%
\begin{align}
\frac{1}{2\omega-2} \cdot \sum_{j=2}^{\omega} \Bigg|\frac{1}{i} \cdot \sum_{l=1}^i (\mathbf{F}\mathbf{m}^{(l)})_j \Bigg|^2 < 0.01 \cdot \frac{n}{2\omega-1}. \label{eq:TestSign5}
\end{align}
%
The corresponding $\mathbf{m}^{(i)}$ was then chosen as a sample point from $\Mwm$.

\subsection*{Modulators (performance tests)}

For the performance, convergence, and robustness tests of the demodulation algorithm discussed in Sections~IV\,--\,VI of the main text, two types of modulators were considered:

\begin{itemize}

\item \textit{Nonstationary Gaussian} modulators were produced by transforming a delta-correlated Gaussian process with a time-dependent low-pass filter. Specifically, $\mathbf{m}$ was calculated as
%
\begin{align}
\mathbf{m} &= \boldsymbol{w} \circ (\mathbf{m}'-\min[\mathbf{m}'] \cdot \mathbf{1}) / \max[\mathbf{m}'-\min[\mathbf{m}'] \cdot \mathbf{1}],
\label{eq:TestSign60}
\end{align}
%
with $\boldsymbol{w}$ being a window ``function'' (see \eqref{eq:TestSign13}), and $\mathbf{m}'$ defined by
%
\begin{align}
m_i' = \sum_{j=1}^n \Big(g_j^{(0)} \cdot \big(\mathbf{F}^{-1} \mathbf{h}^{(i)}\big)_{i-j}\Big), \quad i \in \mathcal{I}_n.  \label{eq:TestSign6}
\end{align}
%
In \eqref{eq:TestSign6}, for every $i \in \mathcal{I}_n$,
%
\begin{equation}
\begin{aligned}
&\myre\big(h_1^{(i)}\big) \sim \big(\mathbf{P}_{\Swm}[\mathbf{g}^{(1)}]\big)_i, \quad \myim\big(h_1^{(i)}\big)=0,\\
&\myre\big(h_j^{(i)}\big) = \big(\mathbf{P}_{\Swm}[\mathbf{g}^{(2j-2)}]\big)_i, \quad \myim\big(h_j^{(i)}\big)=\big(\mathbf{P}_{\Swm}[\mathbf{g}^{(2j-1)}]\big)_i, \quad 2 \leq j \leq \omega,\\
&h_j^{(i)} = \big(h_{n+2-j}^{(i)}\big)^*, \quad n+2-\omega \leq j \leq n,\\
&h_j^{(i)}=0, \quad \omega+1 \leq j \leq n+1-\omega,
\end{aligned}
\label{eq:TestSign7}
\end{equation}
%
with $\mathbf{g}^{(j)}, ~j \in \mathcal{I}_{2\omega-1}$, being independent samples from the standard $n$-dimensional Gaussian distribution, i.e., $g^{(j)}_l \sim \mathcal{N}(0,1)$ for $l \in \mathcal{I}_n$, and $g^{(j)}_l \indep g^{(j)}_k$ for $l \neq k$.

\item \textit{Maximally-uniformly distributed} modulators were created by using the NORTA algorithm \cite{Cario1997}. Specifically, the modulators were calculated as
%
\begin{align}
\mathbf{m} &= \boldsymbol{w} \circ (\mathbf{m}'-\min[\mathbf{m}'] \cdot \mathbf{1}) / \max[\mathbf{m}'-\min[\mathbf{m}'] \cdot \mathbf{1}],
\label{eq:TestSign80}
\end{align}
%
with $\boldsymbol{w}$ being the window ``function'' (see \eqref{eq:TestSign13}), and $\mathbf{m}'$ given by
%
\begin{align}
\mathbf{m}'&=\mathbf{P}_{\Swm}[\Phi(\mathbf{F}^{-1}\mathbf{r})],
\label{eq:TestSign8}
\end{align}
%
where $\Phi(\ldots)$ is the cumulative distribution function of the standard Gaussian random variable and $\mathbf{r}$  is given by
%
\begin{align}
r_j = \begin{cases} \sqrt{p^{\star}_1} \cdot g_1, & \mbox{$j=1$}\\ \sqrt{p^{\star}_j / 2} \cdot (g_{2j-2}+\imath \cdot g_{2j-1}), & \mbox{$2 \leq j \leq \lfloor(n+1)/2\rfloor$}\\ \sqrt{p^{\star}_{(n+2)/2} } \cdot g_n, & \mbox{$j=(n+2)/2$}\\ (r_{n+2-j})^*, & \mbox{$(n+2)/2 < j \leq n$} \end{cases}.
\label{eq:TestSign9}
\end{align}
%
In \eqref{eq:TestSign9}, $\mathbf{g}$ is the standard $n$-dimensional Gaussian random vector, and
%
\begin{align}
\mathbf{p}^\star = \sqrt{n} \cdot (\mathbf{F}\mathbf{c}^\star) \circ \theta(\mathbf{F}\mathbf{c}^\star),
\label{eq:TestSign10}
\end{align}
%
with $c^\star_1=1$, and the remaining components of $\mathbf{c}^\star$ implicitly defined by
%
\begin{align}
(\mathbf{F}^{-1} \mathbf{p} / \sqrt{n})_j = \int_{-\infty}^{+\infty}\int_{-\infty}^{+\infty} \big( \big(\Phi(x)-0.5\big) \cdot \big(\Phi(y)-0.5\big) \cdot \phi_2(x,y|c^\star_j)\big)dxdy, \quad 2 \leq j \leq n.
\label{eq:TestSign11}
\end{align}
%
In \eqref{eq:TestSign11}, $\phi_2(x,y|c^\star_j)$ is the probability density function of a 2-dimensional Gaussian random variable with zero mean, unit variance, and the correlation between its two elements equal to $c^\star_j$. Elements of $\mathbf{p}$ are given by 
%
\begin{align}
p_j = \begin{cases} n/(24\cdot(\omega-1)), & \mbox{$2 \leq j \leq \omega$}\\ n/(24\cdot(\omega-1)), & \mbox{$n+2-\omega \leq j \leq n$}\\ 0, & \mbox{otherwise} \end{cases},
\label{eq:TestSign12}
\end{align}
%
If all elements of $\mathbf{F}\mathbf{c}^\star$ were nonnegative, the $\mathbf{m}'$ defined by \eqref{eq:TestSign8} would have a rectangular power spectrum with cutoff frequency $\omega$, and $m_i' \sim \mathcal{U}(0,1)$ for every $i \in \mathcal{I}_n$. However, in reality, such random vector does not exist. Therefore, \eqref{eq:TestSign10} is used to replace $\mathbf{F}\mathbf{c}^\star$ by the closest point in $\mathbb{C}^n$ that is elementwise nonnegative. This modification expands the power spectrum of the resulting $\Phi(\mathbf{F}^{-1}\mathbf{r})$ in \eqref{eq:TestSign8} beyond the intended one, which is corrected by mapping $\Phi(\mathbf{F}^{-1}\mathbf{r})$ onto $\Mwm$ in \eqref{eq:TestSign8}.

\end{itemize}

The window ``function'' $\boldsymbol{w}$ in \eqref{eq:TestSign60} and \eqref{eq:TestSign80} is used to scale the modulator to zero smoothly at the boundaries with no effect on the remaining points:
%
\begin{align}
w_i =\begin{cases} \sin^2\Big(\frac{\pi \cdot (i-1)}{2 \cdot n_{trn}}\Big), & \mbox{$\quad 1 \leq i \leq n_{trn}$}\\ 1, & \mbox{$\quad n_{trn}<i\leq n-n_{trn}$}\\
\cos^2\Big(\frac{\pi \cdot (i-n+n_{trn})}{2 \cdot n_{trn}}\Big), & \mbox{$\quad n-n_{trn} < i \leq n$} \end{cases} \qquad \forall i \in \mathcal{I}.
\label{eq:TestSign13}
\end{align}
%
In \eqref{eq:TestSign13}, $n_{trn}$ is the length of the transition window.  We assumed $n_{trn} = 3 \cdot \lceil f_s / \omega \rceil$ in the present work.

For simulations discussed in Sections~IV\,--\,VI of the main text, we used $f_s = 4\,\mathrm{kHz}$. $\omega$ was set to $15\,\mathrm{Hz}$ for \textit{nonstationary Gaussian} modulators and $20\,\mathrm{Hz}$ for \textit{maximally-uniformly distributed} modulators. The cutoff frequency control parameter $\varpi$ of the AP algorithms was fixed to $30\,\mathrm{Hz}$.

\subsection*{Carriers (recovery tests)}

For signal recovery tests discussed in Section~\ref{sec:SMRecovery3}, carriers were generated by uniformly sampling from a subset of $\Cd$ formed by all spike-train carriers with a fixed number of spikes $n_s$:
%
\begin{align}
\textstyle
\Cd^{n_s} = \big\{\mathbf{x} \in \Cd: \big( \sum_{i=1}^n I_{\{1\}}(x_i)=n_s \big) \, \land \, \big( \sum_{i=1}^n I_{\{0\}}(x_i)=n-n_s \big) \big\}
\label{eq:TestCar1}
\end{align}
%
The sampling was achieved by exploiting a customized version of the Metropolis-Hastings algorithm. In particular, starting with some $\mathbf{q}^{(0)} \in \mathcal{Q}$ such that
%
\begin{align}
\textstyle
\mathcal{Q} = \big\{\mathbf{x} \in \mathbb{N}^{n_*}_0: \big( \sum_{i=1}^{n_*}x_i = n_- \big) \land \big( x_i < d ~\forall i \in \mathcal{I}_{n_*} \big) \big\},
\label{eq:TestCar2}
\end{align}
%
where $n_-= n_s \cdot d - n$, and $n_*=\min\{n_-,n_s-1\}$, a sequence $\mathbf{q}^{(0)}, \mathbf{q}^{(1)}, \ldots$ defined by
%
\begin{align}
\mathbf{q}^{(i)} =\begin{cases} \mathbf{q}^{(i-1)}+z_i\cdot(\mathbf{e}^{(l_i)}-\mathbf{e}^{(k_i)}), & \mbox{if $\big(\mathbf{q}^{(i-1)}+z_i\cdot(\mathbf{e}^{(l_i)}-\mathbf{e}^{(k_i)})\big) \in \mathcal{Q}$}\\ \mathbf{q}^{(i-1)}, & \mbox{otherwise} \end{cases}
\label{eq:TestCar3}
\end{align}
%
for $i \geq 1$ was generated. In \eqref{eq:TestCar3}, $\mathbf{e}^{(l)}$ is an element of $\mathbb{N}_0^{n_*}$ with its $l$-th component equal to one and the remaining components equal to zero; $l_i$ and $k_i$ are independent random numbers from a uniform distribution on $\mathcal{I}_{n_*}$; $z_i$ is a random number from a uniform distribution on the set of all integer numbers between $-\sigma$ and $+\sigma$, where $\sigma$ is a positive integer chosen to achieve the acceptance rate of the sampling scheme \eqref{eq:TestCar3} between 0.4 and 0.6.

The iterative process generating the sequence $\mathbf{q}^{(0)}, \mathbf{q}^{(1)}, \ldots$ was terminated upon the first instance of adherence to the following equilibration criterion of the underlying Markov chain:
%
\begin{align}
\frac{1}{n_*} \cdot \sum_{j=1}^{n_*} \Bigg|\frac{1}{i} \cdot \sum_{l=1}^i q_j^{(i)} -\frac{n_-}{n_*} \Bigg| < 0.01 \cdot \frac{n_-}{n_*}. \label{eq:TestCar4}
\end{align}
%
The corresponding $\mathbf{q}^{(i)}$ was then taken as a sample point from $\mathcal{Q}$. Next, an extended vector $\mathbf{\bar{q}} \in \mathbb{N}_0^{n_s}$ with
%
\begin{align}
\bar{q}_j =\begin{cases} d-q_j, & \mbox{if $1 \leq j \leq n_*$}\\ d, & \mbox{$n_*+1 \leq j \leq n_s$} \end{cases}
\label{eq:TestCar5}
\end{align}
%
was defined. Its components were then randomly permuted to produce another vector $\mathbf{\tilde{q}} \in \mathbb{N}_0^{n_s}$. The latter was used to create an $\mathbf{r} \in \mathbb{N}_+^{n_s}$ with
%
\begin{align}
r_j = \eta + \sum_{i=1}^j\tilde{q}_i, \quad j \in \mathcal{I}_{n_s},
\label{eq:TestCar6}
\end{align}
%
where $\eta$ is a random integer number (the same for all $j \in \mathcal{I}_{n_s}$) from a uniform distribution on $\mathcal{I}_n$, and the summation is assumed to be modulo $\mathcal{I}_n$. Finally, the carrier $\mathbf{c}$ was generated by taking the zero element of $\mathbb{R}^n$ and setting its components whose indexes are defined by the components of $\mathbf{r}$ to one.

\subsection*{Carriers (performance tests)}

For the performance, convergence, and robustness tests of the demodulation algorithm discussed in Sections~IV\,--\,VI of the main text, four types of carriers were considered:

\begin{itemize}

\item \textit{Nonstationary sinusoid},
%
\begin{equation}
\mathbf{c} = \cos\big(2\pi (f \mathbf{t} + \bm{\psi})\big),\label{eq:PerfAnl3}
\end{equation}
%
with $f=200\:\mathrm{Hz}$,
%
\begin{equation}
\mathbf{t} = f_s^{-1} \cdot (0, 1, \ldots, n-1)^{\mathrm{T}},\label{eq:PerfAnl3a}
\end{equation}
%

and
%
\begin{equation}
\bm{\psi} = (\mathbf{g}-\min[\mathbf{g}]) / \max[\mathbf{g}-\min[\mathbf{g}]],\label{eq:PerfAnl3b}
\end{equation}
%
where $\mathbf{g}=\mathbf{P}_{\Sw}[\mathbf{u}]$
%
\begin{align}
u_i \sim \mathcal{U}(0,1), \quad i \in \mathcal{I}_n,\label{eq:PerfAnl3c}
\end{align}
%
such that $u_i \indep u_j$ whenever $i \neq j$.

\item \textit{Nonstationary harmonic},
%
\begin{equation}
\mathbf{c} = \sum_{l=1}^{n_f} \Big( (2/3)^{l-1} \cdot \cos\big(2\pi l f (\mathbf{t} + \bm{\psi}_l) + \eta_l\big) \Big),\label{eq:PerfAnl4}
\end{equation}
%
with $f=85\:\mathrm{Hz}$, $n_f= \lfloor f_s/(2 \cdot f) \rfloor$,
%
\begin{align}
\eta_l &\sim \mathcal{U}(0,1), \quad l \in \mathcal{I}_{n_f},\label{eq:PerfAnl4a}
\end{align}
%

and
%
\begin{equation}
\bm{\psi}_l = \eta_l + 0.2 \cdot (\mathbf{g}-\min[\mathbf{g}]) / \max[\mathbf{g}-\min[\mathbf{g}]],\label{eq:PerfAnl4b}
\end{equation}
%
where $\mathbf{g}=\mathbf{P}_{\omega'}[\mathbf{u}]$ with $\omega' = 1\,\mathrm{Hz}$ and
%
\begin{align}
u_i &\sim \mathcal{U}(0,1), \quad i \in \mathcal{I}_n,\label{eq:PerfAnl4c}
\end{align}
%
such that $\eta_i \indep \eta_j$ and $u_i \indep u_j$ whenever $i \neq j$.

\item \textit{Nonstationary spikes},
%
\begin{align}
\textstyle
\mathbf{c} = \mathbf{1} - \theta \big( \mathbf{1}-\sum_{i=1}^{n_s} \mathbf{h} * \mathbf{e}^{(r_i)}\big) \circ \big( \mathbf{1}-\sum_{i=1}^{n_s} \mathbf{h} * \mathbf{e}^{(r_i)}\big) ,\label{eq:PerfAnl5}
\end{align}
%
where $\mathbf{e}^{(r_i)}$ is the unit vector with all but the $r_i$-th of its components equal to zero, $\mathbf{h}$ is defined by 
%
\begin{align}
h_i =\begin{cases} e^{-(i-1)^2/4}, & \mbox{if $1 \leq i \leq 11$}\\ e^{-(n-i+1)^2/4}, & \mbox{if $n-9 \leq i \leq n$}\\ 0, & \mbox{otherwise} \end{cases},\label{eq:PerfAnl5a}
\end{align}
%
and $r_1, r_2, \ldots, r_{n_s}$ is a sequence of different elements of $\mathcal{I}_n$ generated following
%
\begin{align}
r_i = r_{i-1} + d_i + \eta,\label{eq:PerfAnl5b}
\end{align}
%
until $r_{i-1} \leq n-d_i$ with $r_1=1$. In \eqref{eq:PerfAnl5b}, $\eta$ is a random number from a uniform distribution on $\mathcal{I}_n$ (the same for all $i$); $d_i$ is a random number taken independently from a uniform distribution on $\mathcal{I}_{z_i}$ at each iteration, where
%
\begin{align}
z_i = \lceil f_s / (2\varpi) \cdot (0.8+0.2 \cdot \sin(2\pi 5 t_i)) \rceil, \quad i \in \In. \label{eq:PerfAnl5c}
\end{align}
%
Note that, differently from the spike-train carriers generated for the purpose of recovery tests, \eqref{eq:PerfAnl5} defines sequences of finite-width spikes.

\item \textit{White-noise},
%
\begin{align}
c_i \sim \mathcal{U}(-1,1), \quad i \in \mathcal{I}_n,\label{eq:PerfAnl6}
\end{align}
%
with $c_i \indep c_j$ whenever $i \neq j$.

\end{itemize}

\newpage
\phantomsection

\markboth{PERFORMANCE TESTS}%
{PERFORMANCE TESTS}

\hspace{-11pt}\textbf{PERFORMANCE TESTS}
\addcontentsline{toc}{section}{PERFORMANCE TESTS}

\section{\textbf{Configurations of Demodulation Algorithms for Performance Analysis} \label{sec:SMPerformance}}

The following configurations of the AP and LDC demodulation algorithms were used for the performance analysis in Section~IV of the main text.

\subsection*{Window splitting}

Segment lengths $n_{seg} = \{2^6, 2^7, \ldots, 2^{15} \}$ were examined in the case of the AP demodulation approach. In the case of the LDC demodulation method, the range of segment lengths was more limited, $n_{seg} = \{2^6, 2^7, \ldots, 2^{11} \}$, to make the simulation times feasible. In order to reduce demodulation errors stemming from the window decomposition, signals were split into segments with a particular overlap. On top of that, each segment was windowed with the Hann function \cite{Sell2010}. For each segment length $n_{seg}$, different overlap spans were assumed: $n_{olp} = \{2^5, 2^6, \ldots, n_{seg} / 2 \}$. 
The AS-based demodulation was performed only with the original signal windows using no decomposition.  

\subsection*{Control parameters of the AP algorithms}

Overall, only two parameters are associated with the AP demodulation algorithms: 1) the cutoff frequency $\varpi$; and 2) the number of iterations $N_{iter}$. In all cases, we fixed $\varpi$ to $40~\mathrm{Hz}$. A set of different values of $N_{iter}$ was considered, dependent on the particular algorithm: the range from 1~to~600 for AP-B, 1~to~40 for AP-A, and 1~to~6000 for AP-P. It was made sure that the maximum iteration numbers in all these sets allow for reaching demodulation precision that is high enough not to affect the conclusions of the performance analysis.

\subsection*{Control parameters of the quadratic programming solvers for the LDC approach}

In the case of the LDC approach, all elements of the weighting vector $\mathbf{w}$ [see (11)] corresponding to frequencies below the threshold $\varpi$ were assumed to be zero. The remaining elements of $\mathbf{w}$ were set to either of $\{10^2, 10^3\}$. In the case of the OSQP solver, the following control parameters were tuned:
%
\begin{itemize}
\item Linear System Solver: \{\textit{Suite-Sparse-LDL}, \textit{MKL-Para\-di\-so}\};
\item Solution Polishing: $\{\mathit{false}, \mathit{true}\}$;
\item Warm Starting: $\{\mathit{false}, \mathit{true}\}$;
\item Absolute Tolerance: $\{10^{-2}, 10^{-3}, \ldots, 10^{-6} \}$;
\item Relative Tolerance~=~Absolute Tolerance;
\item Primal Infeasibility Tolerance~=~Absolute Tolerance;
\item Dual Infeasibility Tolerance~=~Absolute Tolerance;
\item Maximum Iteration Number: $2^{15}-1$;
\item Run Time Limit: 0.
\end{itemize}
%
The Gurobi solver was tried with these settings:
%
\begin{itemize}
\item Method: $\{ 0, 1, 2 \} \equiv$ \{\textit{primal-simplex}, \textit{dual-simplex}, \textit{barrier}\};
\item Optimality Tolerance: $\{10^{-2}, 10^{-3}, \ldots, 10^{-6} \}$;
\item Feasibility Tolerance~=~Optimality Tolerance;
\item Run Time Limit: 0.
\end{itemize}
%

\section{\textbf{Implementation on a Computer} \label{sec:SMProgrammatic}}

All demodulation algorithms considered  in the present work were implemented in C and then interfaced with MATLAB (R2018a) for a large-scale management of different instances and data analysis. The C code was compiled with GCC (v8.3) using no optimization. Evaluation of the discrete Fourier transform, used in the AS and AP approaches, relied on the FFT implementation of the Intel Math Kernel Library 2019 (update~5). The calculations were performed on a Lenovo ``ThinkCentre M910t Tower'' desktop computer with an Intel Core i7-7700 processor and Linux Ubuntu 16.04 operating system. In order to minimize the influence of other operating system processes on the benchmarking results, one of the four CPU cores was dedicated exclusively to the execution of the demodulation program. This was achieved by using the ``isolplus'' option of the kernel scheduler. All computations were done in single-thread mode.

\newpage
\phantomsection

\markboth{ADDITIONAL RESULTS}%
{ADDITIONAL RESULTS}

\hspace{-11pt}\textbf{ADDITIONAL RESULTS}
\addcontentsline{toc}{section}{ADDITIONAL RESULTS}

\section{\textbf{Errors of Carrier Estimates} \label{sec:SMCarrierEstimates}}

Here, we discuss findings from the performance and robustness analyses of demodulation algorithms (introduced in Sections~IV and VI of the main text) in terms of carrier estimates. Analogous to the modulator recovery error $E_m$, we use $E_c = \|\mathbf{c}-\mathbf{\hat{c}} \|_2/\|\mathbf{c}\|_2$ next.

\subsection*{Performance tests}

%
\begin{figure*}[h]
\centering
\includegraphics[width=1\textwidth]{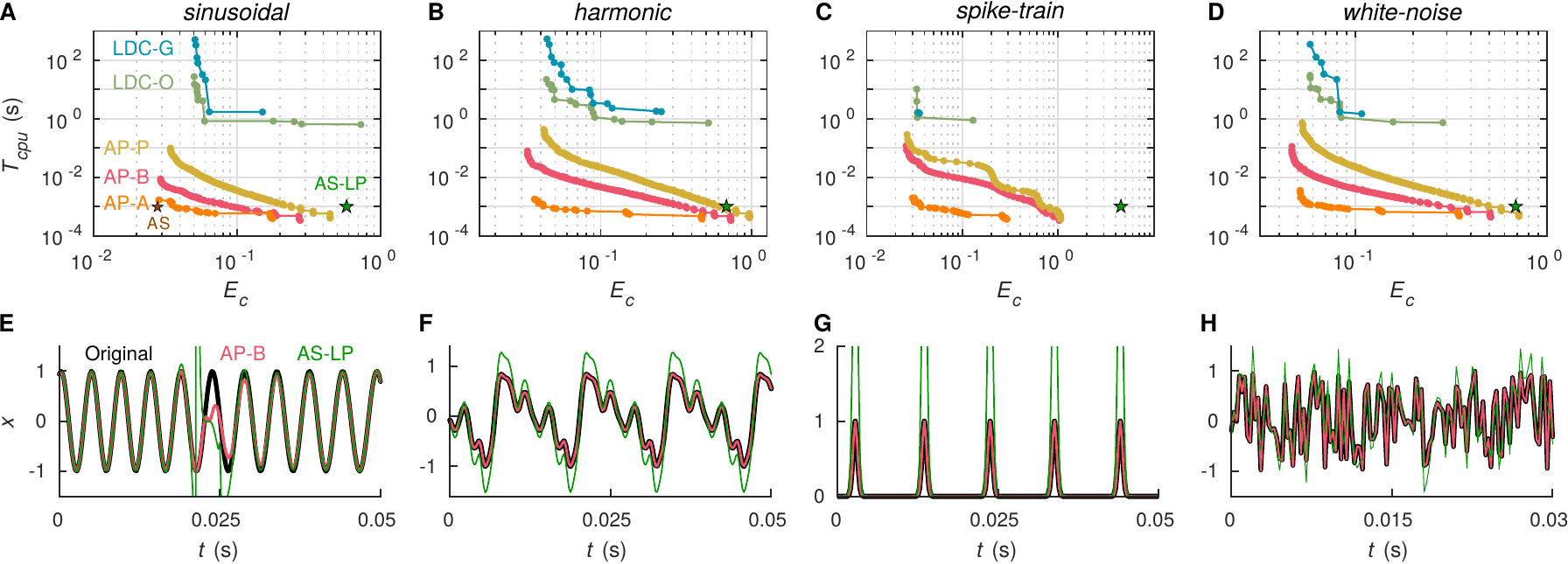}
\caption{\footnotesize Performance evaluation. \textbf{A}--\textbf{D}: Pareto fronts in the $(E_c,T_\mathrm{cpu})$ plane for different demodulation algorithms applied to the four different types of test signals when window splitting is used. The green and brown stars mark the results of, respectively, the AS-LP and AS methods. \textbf{E}--\textbf{H}: Examples of the original carriers considered in the present work (black) and their estimates obtained by using the AP-B (red) and AS-LP (green) algorithms.}
\label{fig:16}
\end{figure*}
%

The high precision of modulator estimates (see Section~IV-C in the main text) and the boundedness of $\hat{c}_i$ between $-1$ and $1$ (see Section~VII in the main text) predetermine good quality carrier estimates provided by the AP approach.\footnote{Remember that $c_i/\hat{c}_i = m_i/\hat{m}_i$ in our case.} This view is evidenced by the Pareto fronts in the $(E_c,T_\mathrm{cpu})$ plane shown in Fig.\,\ref{fig:16}\,A--D and comparisons of exemplary $\mathbf{c}$ and $\mathbf{\hat{c}}$ in Fig.\,\ref{fig:16}\,E--H. We observed noticeable discrepancies between $\mathbf{c}$ and $\mathbf{\hat{c}}$ only locally, around points with modulator levels very close to zero (see the signal segment at $t=0.025$ in Fig.\,\ref{fig:16}\,E). That finding is explained by the fact that, in our case, $\hat{c}_i - c_i = s_i \cdot (\hat{m}_i^{-1} - m_i^{-1})$. Hence, even small differences between $m_i$ and $\hat{m}_i$ on the absolute scale can give notable differences between $c_i$ and $\hat{c}_i$ if $m_i/\hat{m}_i \gg 1$. However, the condition $\hat{m}_i \geq s_i$ ($\implies |c_i| \leq 1$) inherent to all $\mathbf{\hat{c}}$ obtained by the AP algorithms assures tolerable distortions of the carrier estimates even around the points with vanishingly small $m_i$.

The situation is different in the case of the AS-LP approach. Then, $|s_i / \hat{m}_i| \gg 1$, i.e., $|\hat{c}_i| \gg 1$, are possible (see the signal segment at $t=0.025$ in Fig.\,\ref{fig:16}\,E). These divergences noticeably increase the recovery errors $E_c$ even for sinusoidal signals that AS-LP recovers sufficiently well outside the segments of vanishing $m_i$ (see Fig.\,\ref{fig:16}\,A,\,E). Globally-inaccurate recovery of other carrier types demonstrated by the AS-LP (see Fig.\,\ref{fig:16}\,B--\,D, \,F--\,H) is predetermined by inaccurate  estimates of the modulators (see Section~IV-C in the main text). We note that, for sinusoidal carriers, appropriate estimates $\mathbf{\hat{c}}$ can be obtained by using the original AS method instead of the AS-LP (Fig.\,\ref{fig:16}\,A). However, the AS gives very erroneous modulator estimates $\mathbf{\hat{m}}$, and hence, does not improve carrier predictions $\mathbf{\hat{c}}$ considerably, for other types of signals (data not shown).

\subsection*{Robustness tests}

%
\begin{figure*}[h]
\centering
\includegraphics[width=1\textwidth]{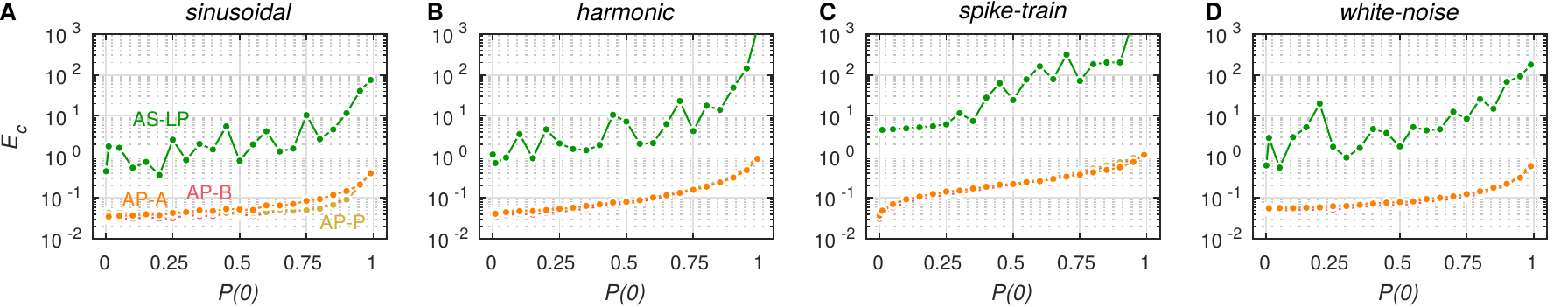}
\caption{\footnotesize Robustness evaluation. \textbf{A}--\textbf{D}: Dependence of the carrier recovery error $E_c$ on $P(0)$ (the probability of missing points) for the four types of test signals and different AP algorithms at $\epsilon_{tol}=10^{-4}$ (color coding).}
\label{fig:16b}
\end{figure*}
%

Fig.\,\ref{fig:16b} shows demodulation results of the AS-LP, AP-B, AP-A, and AP-P algorithms in terms of carrier recovery error for test signals corrupted by the multiplicative Bernoulli-$\{0,1\}$ noise (see Section~VI). The obtained $E_c$~vs.~$P(0)$ relations for the AP algorithms are analogous to their counterparts $E_m$~vs.~$P(0)$ shown in Fig.\,5. In contrast, the AS-LP is even more inferior to the AP approach in terms of carrier reconstruction (Fig.\,\ref{fig:16b}) than it is in terms of modulator recovery (Fig.\,5\,A--D). This fact is explained by the presence of the $|\hat{c}_i| \gg 1$ divergences discussed above and illustrated by Fig.\,\ref{fig:16}\,E.

\section{\textbf{Convergence Rates at Different Signal Lengths} \label{sec:SMConvergence}}

%
\begin{figure*}[ht]
\centering
\includegraphics[width=1\textwidth]{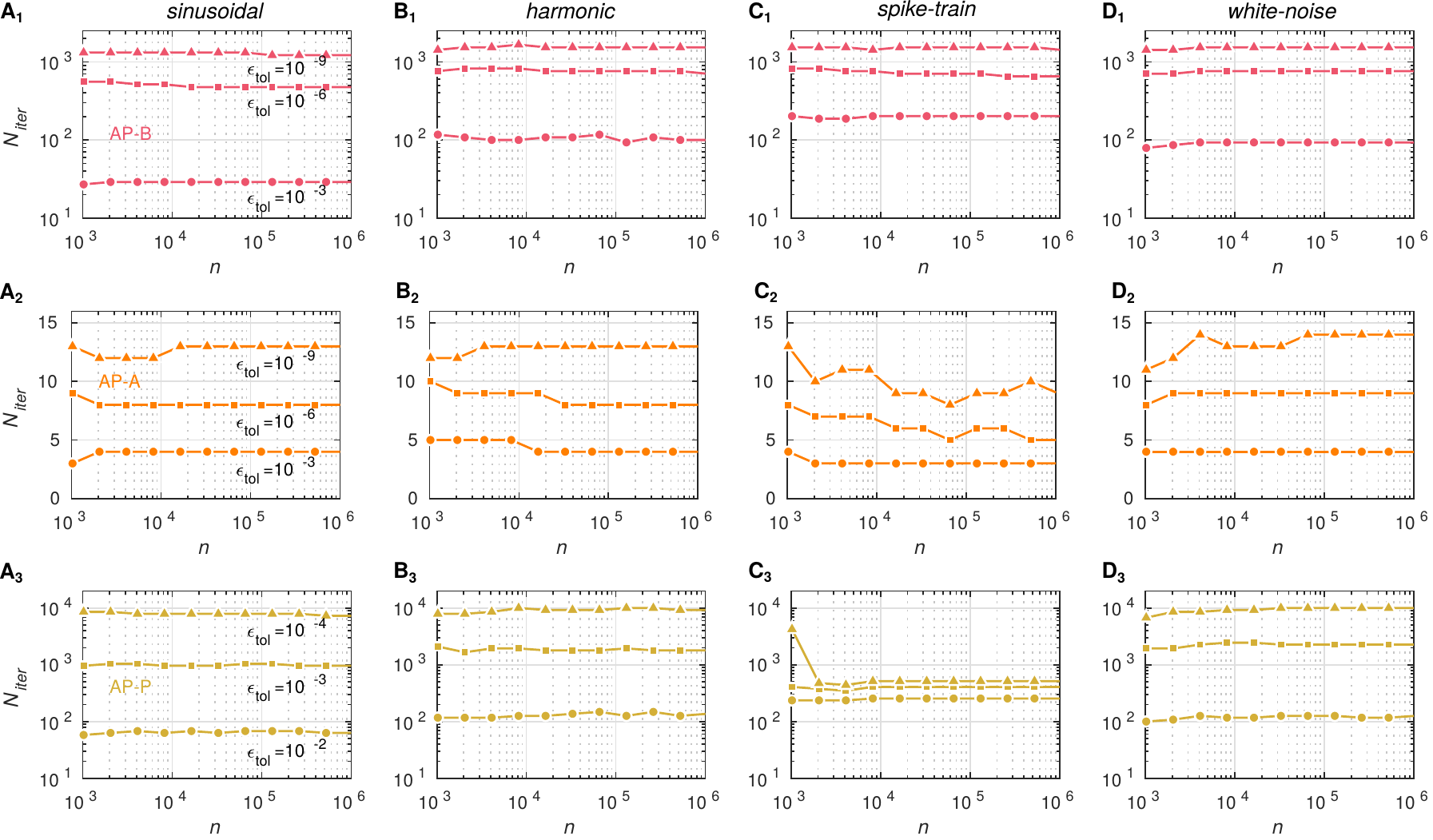}
\caption{\footnotesize Convergence analysis of the AP algorithms at different $n$ and fixed $f_s=4\,\mathrm{kHz}$. $\mathbf{A_1}$--$\mathbf{D_1}$: Dependence of the iteration number $N_{iter}$ necessary to reach a specific infeasibility error $\epsilon$ on the length of the signal for the AP-B algorithm applied to four different classes of test signals. Filled circles, rectangles, and triangles label curves for $\epsilon$ levels of, respectively, $10^{-3}$, $10^{-6}$, and $10^{-9}$. $\mathbf{A_2}$--$\mathbf{D_2}$: The same as $\mathrm{A_1}$--$\mathrm{D_1}$ but shown for the AP-A algorithm. $\mathbf{A_3}$--$\mathbf{D_3}$: The same as $\mathrm{A_1}$--$\mathrm{D_1}$ but shown for the AP-P algorithm and different levels of the infeasibility error $\epsilon$.}
\label{fig:10}
\end{figure*}
%

The convergence results shown in Fig.\,4 of the main text represent only signals with the length $n$ fixed to $2^{15}$ sample points. Therefore, we performed additional simulations with different $n$ values to test the impact of this parameter on the rate of the iterative process. We found no clear dependence of $N_{iter}$ required to achieve a particular infeasibility error value $\epsilon$ on $n$, except transitional changes due to diminishing contributions of the boundary effects in some cases (see Fig.\,\ref{fig:10}\,$\mathrm{C_2}$\,--\,$\mathrm{C_2}$). These findings suggest that the increased $T_{\mathrm{cpu}}$ of demodulation for longer $n$ is primarily determined by the increased computational demands of single projections (see Section~III-D in the main text).

\section{\textbf{Repeated Demodulation} \label{sec:SMRedemodulation}}

%
\begin{figure*}[ht]
\centering
\includegraphics[width=1\textwidth]{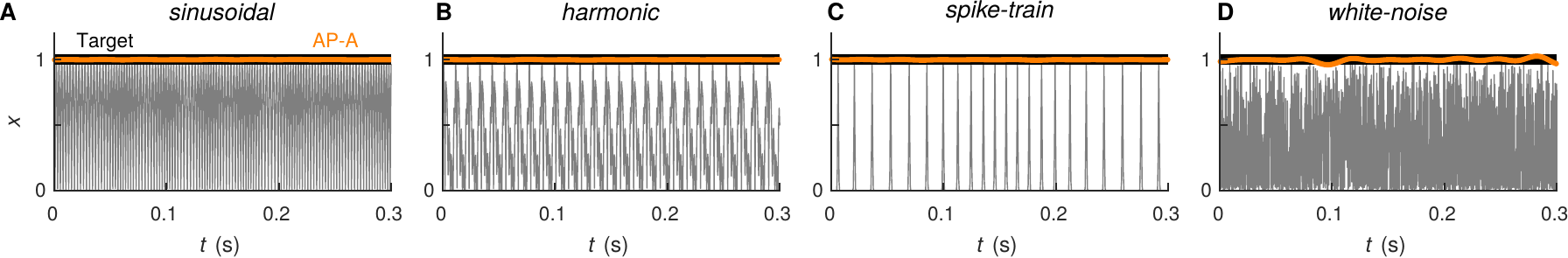}
\caption{\footnotesize Repeated demodulation of inferred carriers. Gray -- sinusoidal (\textbf{A}), harmonic (\textbf{B}), spike-train (\textbf{C}), and white-noise (\textbf{D}) carriers inferred by demodulating test signals shown in Fig.\,2\,A--D with the AP-A algorithm. Black -- target repeated modulators following from the assumption that the inferred carriers are fully demodulated. Orange -- the real repeated modulators obtained with AP-A.}
\label{fig:12}
\end{figure*}
%

Fig.\,\ref{fig:12} shows typical results of redemodulation of carriers inferred from the four types of synthetic test signals considered in the present work by using the AP-A algorithm. Redemodulation of the carriers returns the identity modulator to a very good approximation ($E_m \leq 10^{-2}$), implying a nearly complete separation of the modulator and carrier information in the first step, as discussed in Section~VII of the main text.

\section{\textbf{Demodulation of Speech Signals} \label{sec:SMSpeech}}

In this section, we present results of additional simulations used to support the statements about the suitability of the AP approach to demodulate wideband speech signals in Section~VIII of the main text.

\subsection*{Synthetic modulators \label{subsec:SMSpeech1}}

The first question that we considered was up to which values of the cutoff frequency $\omega$ the natural speech carriers meet the recovery condition $\lceil n / d \rceil \geq 2\omega-1$. As mentioned in the main text, these carriers are of quasi-random and quasi-harmonic origins, possibly featuring frequency glides. It is well known that typical fundamental frequencies ($f_0$) of male and female speaker voice are, respectively, 120 and 210\,Hz (see \cite{Traunmuller1994} and Table\,1 therein). Hence, at least for the harmonic components, the condition $\lceil n / d \rceil \geq 2\omega-1$ is expected to be satisfied with $\omega \leq 60\,\mathrm{Hz}$. That is more than sufficient for appropriate demodulation, assuming that strictly all spectral energy of the speech amplitude modulator is located below 20\,Hz (see Section~VIII-A). The situation with the quasi-random and mixed components of speech signals is less certain and must be tested numerically.

%
\begin{figure*}[ht]
\centering
\includegraphics[width=1\textwidth]{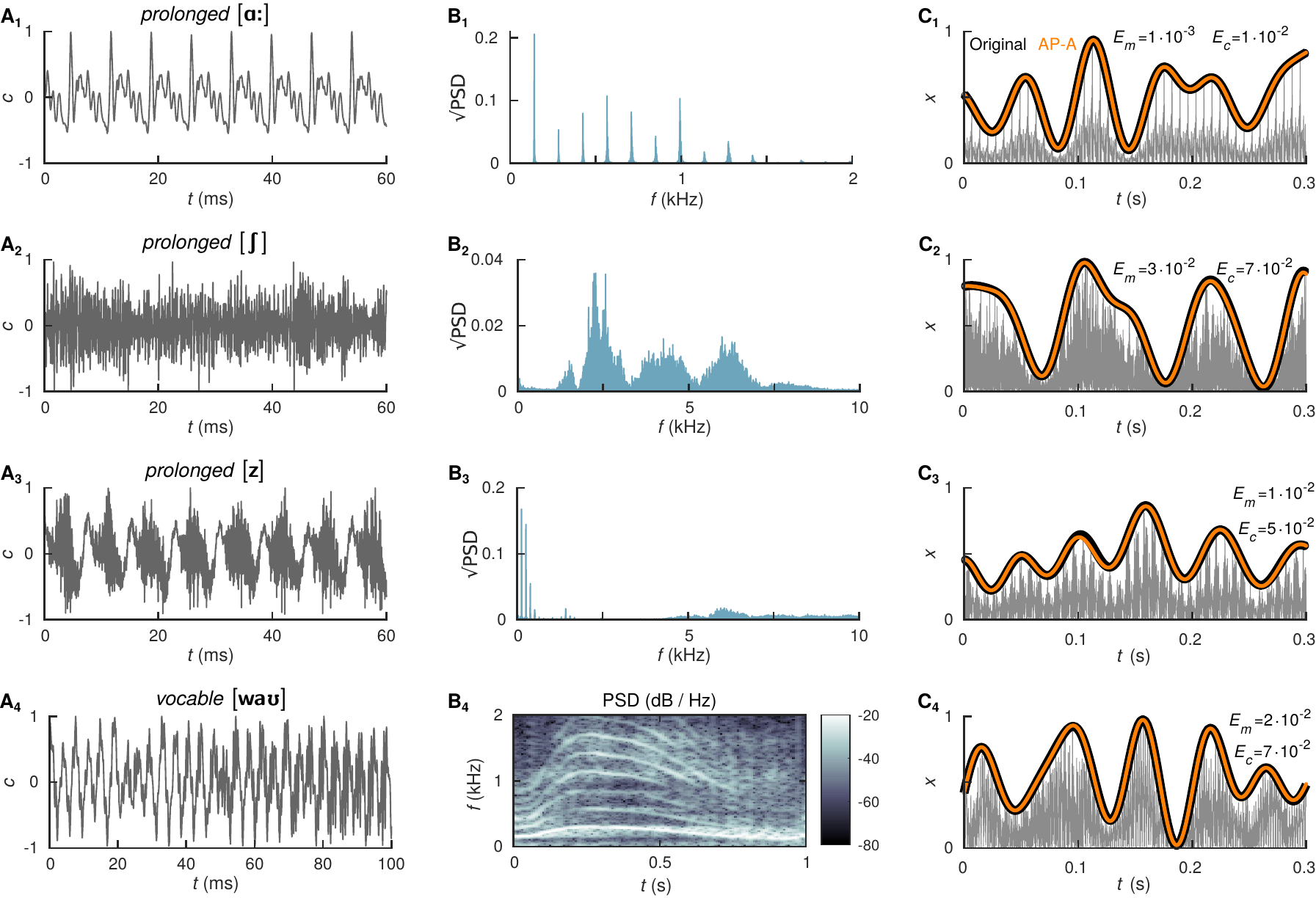}
\caption{\footnotesize Demodulation of signals built of synthetic modulators and natural speech carriers. $\mathbf{A_1}$--$\mathbf{A_4}$: fragments of carrier-signals of prolonged steady [$\upalpha$:] ($\mathrm{A_1}$), [{\scriptsize$\int$}] ($\mathrm{A_2}$), and [z] ($\mathrm{A_3}$), as well as an extended vocable [wa{\tiny$\mho$}] ($\mathrm{A_4}$). $\mathbf{B_1}$--$\mathbf{B_3}$: periodogram estimators of the power spectral densities of the carriers illustrated in $\mathrm{A_1}$--$\mathrm{A_3}$; $\mathbf{B_4}$: spectrogram of the vocable [wa{\tiny$\mho$}]. $\mathbf{C_1}$--$\mathbf{C_4}$: exemplary segments of amplitude-modulated carriers from panels $\mathrm{A_1}$--$\mathrm{A_4}$, their modulators (black), and modulator estimates obtained with the AP-A algorithm. Insets of panels $\mathrm{C_1}$--$\mathrm{C_4}$ display values of the modulator and carrier recovery errors.}
\label{fig:14}
\end{figure*}
%

To this end, we considered four speech-carriers generated by a male speaker uttering prolonged ($\sim1$\,s) [$\upalpha$:], [{\scriptsize$\int$}], and [z], as well as vocable [wa{\scriptsize$\mho$}] (see Fig.\,\ref{fig:14}\,$\mathrm{A_1}$\,--\,$\mathrm{A_4}$) at $f_0=120\,\mathrm{Hz}$ without noticeable amplitude modulation. The [$\upalpha$:] is predominantly harmonic, [{\scriptsize$\int$}] is quasi-random, [z] has a mixed wave-shape, and [wa{\scriptsize$\mho$}] is quasi-harmonic but with upward and downward frequency glides. These characteristics are further revealed by the power-spectral-density (PSD) plots for the [$\upalpha$:], [{\scriptsize$\int$}], and [z] (see Fig.\,\ref{fig:14}\,$\mathrm{B_1}$\,--\,$\mathrm{B_3}$), and a spectrogram for the [wa{\scriptsize$\mho$}] (Fig.\,\ref{fig:14}\,$\mathrm{B_4}$). We then created test signals as products of mentioned carriers and maximally-uniformly distributed synthetic modulators (see Section~\ref{sec:SMSynthetic}) with $\omega=25\,\mathrm{Hz}$ and $f_s=44.1\,\mathrm{kHz}$.

Demodulation of the considered test signals with the AP-A algorithm allowed us to achieve high-accuracy modulator recovery, as shown in Fig.\,\ref{fig:14}\,$\mathrm{C_1}$\,--\,$\mathrm{C_4}$ (note the insets with $E_m$ and $E_c$ values there). In more detail, we found that, for the [$\upalpha$:], [{\scriptsize$\int$}], [z], and [wa{\scriptsize$\mho$}], the distances between carrier points with absolute values of, respectively, $\geq 0.99$, $\geq 0.95$, $\geq 0.93$, and $\geq 0.93$ were below that needed to satisfy $\lceil n / d \rceil \geq 2\omega-1$ at $\omega=20\,\mathrm{Hz}$. For the [{\scriptsize$\int$}], [z], and [wa{\scriptsize$\mho$}], 80\,\% of the required points were $\geq 0.99$. The demodulation quality remained reasonably good when $\omega$ was increased to $50\,\mathrm{Hz}$, resulting in $E_m$ values of $1 \cdot 10^{-2}$, $5 \cdot 10^{-2}$, $5 \cdot 10^{-2}$, and $5 \cdot 10^{-2}$ for, respectively, [$\upalpha$:], [{\scriptsize$\int$}], [z], and [wa{\scriptsize$\mho$}] carriers.

\subsection*{Natural modulators \label{subsec:SMSpeech2}}

Next, we aimed to clarify whether the AP approach can properly separate $\mathbf{m}$ and $\mathbf{c}$ of speech signals using the dynamic range compression discussed in Section~VIII-B of the main text. For this purpose, we considered synthetic time series built of the four natural carriers [$\upalpha$:], [{\scriptsize$\int$}], [z], and [wa{\scriptsize$\mho$}] introduced above and modulator estimate $\mathbf{\hat{m}^*}$ of an utterance \textit{``\ldots protein which forms p \ldots''} from Section~VIII-B of the main text. We then demodulated the resulting signals by using the AP-A algorithm combined with the dynamic range compression. The obtained estimates $\mathbf{\Hat{\Hat{m}}^*}$ were in good agreement with the original, as shown in Fig.\,\ref{fig:17} (note the insets with $E_m$ and $E_c$ values there). The AP-B, AP-P, and LDC algorithms returned very similar modulator estimates but needed much longer computing times that mirror the performance analysis presented in Section~IV-C of the main text.

%
\begin{figure*}[ht]
\centering
\includegraphics[width=0.985\textwidth]{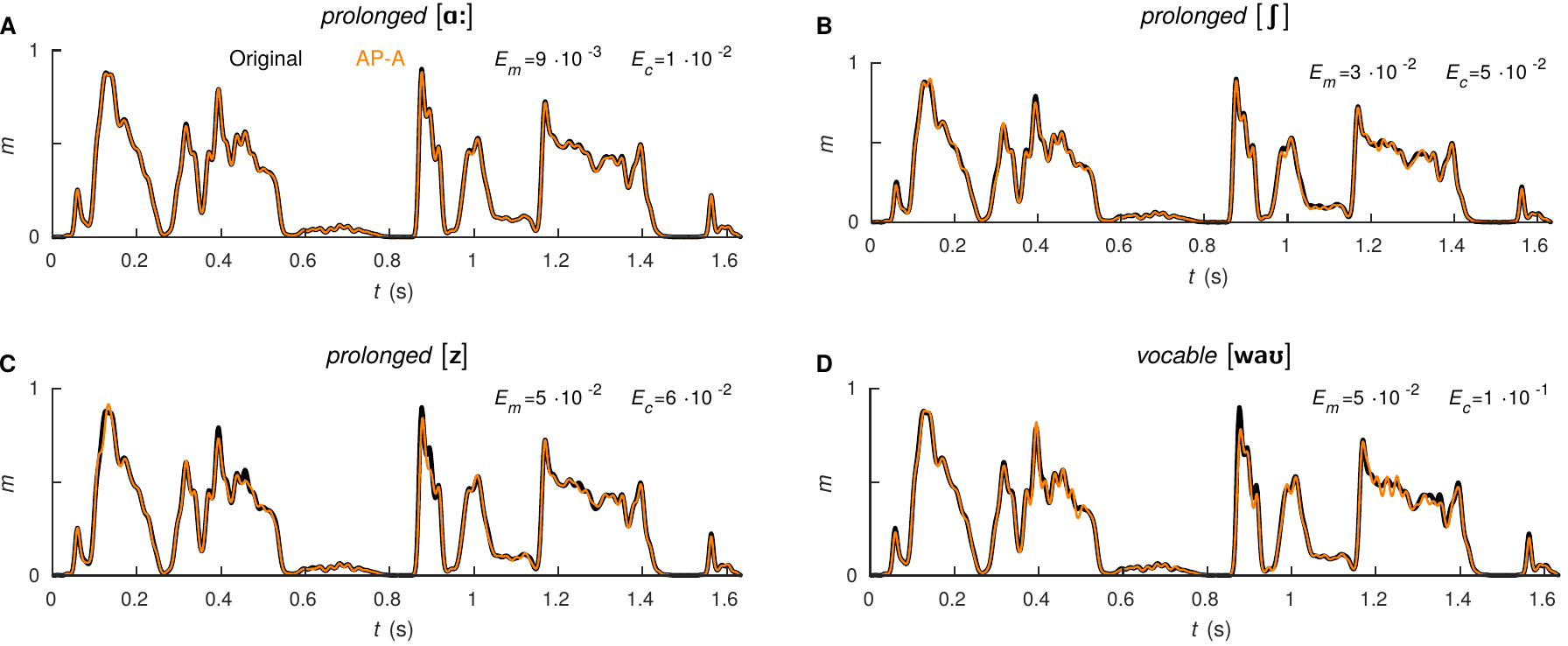}
\caption{\footnotesize Demodulation of signals built of natural speech modulators and carriers. $\mathbf{A}$--$\mathbf{D}$ compares original modulators $\mathbf{\hat{m}^*}$ (black) and their estimates $\mathbf{\Hat{\Hat{m}}^*}$ obtained by using the AP-A algorithm combined with the dynamic range compression (orange) for, respectively, [$\upalpha$:], [{\scriptsize$\int$}], [z], and [wa{\scriptsize$\mho$}] carriers. Insets display values of the modulator and carrier recovery errors.}
\label{fig:17}
\end{figure*}
%

\newpage
\phantomsection

\markboth{REFERENCES}%
{REFERENCES}

%
\bibliographystyle{IEEEtran}
\bibliography{supplement}
